\DeclareMathAlphabet{\pazocal}{OMS}{zplm}{m}{n}
\newcommand{\support}{\ensuremath{Support}}
\newtheorem*{lemma*}{Lemma}
\newtheorem*{theorem*}{Theorem}
\newtheorem*{disclaimer*}{Disclaimer}
\newcommand{\N}{\mathbb{N}}
\newcommand{\R}{\mathbb{R}}
\newcommand{\Z}{\mathbb{Z}}
\newcommand{\A}{\pazocal{A}}
\newcommand{\B}{\mathbb{B}}
\newcommand{\F}{\pazocal{F}}
\newcommand{\calC}{\pazocal{C}}
\newcommand{\calT}{\pazocal{T}}
\newcommand{\calL}{\pazocal{L}}
\newcommand{\calO}{\pazocal{O}}
\newcommand{\calP}{\pazocal{P}}
\newcommand{\pointcomplex}{\calP}
\newcommand{\MEC}{B}
\newcommand{\pointingVASScorrepsonding}[1]{\tilde{#1}}
\newcommand{\stateVASSone}{p}
\newcommand{\stratsMD}[1]{cMD(#1)}
\newcommand{\cMD}{cMD}
\newcommand{\ce}[1]{\left(#1 \right)}
\newcommand{\size}[1]{|\!|#1|\!|}
\newcommand{\RankEff}{\mathit{RankEff}}
\newcommand{\length}{\mathit{len}}
\newcommand{\len}{\textit{len}}
\newcommand{\prob}{\ensuremath{\mathbb{P}}}
\newcommand{\Prob}{\prob}
\newcommand{\E}{\mathbb{E}}
\newcommand{\states}{\ensuremath{Q}}
\newcommand{\bx}{\mathbf{x}}
\newcommand{\by}{\mathbf{y}}
\newcommand{\bz}{\mathbf{z}}
\newcommand{\bu}{\mathbf{u}}
\newcommand{\bv}{\mathbf{v}}
\newcommand{\bn}{\mathbf{n}}
\newcommand{\br}{\mathbf{r}}
\newcommand{\bs}{\mathbf{s}}
\newcommand{\realeffect}{E}
\newcommand{\Term}{\mathit{Term}}
\newcommand{\epsilont}{\ensuremath{\epsilon}}
\newcommand{\epsilonr}{\ensuremath{\epsilon}}
\newcommand{\Q}{\ensuremath{\mathbb{Q}}}
\newcommand{\Nset}{\ensuremath{\mathbb{N}}}
\newcommand{\bigO}{O}
\newcommand{\Exp}{\ensuremath{\mathbb{E}}}
\newcommand{\eexp}{\mathrm{exp}}
\newcommand{\tin}{\mathit{In}}
\newcommand{\tout}{\mathit{Out}}
\newcommand{\countersset}{\mathit{Count}}
\newcommand{\Aset}{\mathit{Cbounds}}
\newcommand{\Bset}{\mathit{Tbounds}}
\newcommand{\aelement}{\mathit{a}}
\newcommand{\belement}{\mathit{b}}
\newcommand{\variable}{\mathit{\tau}}
\newcommand{\variablet}{\mathit{\tau}}
\newcommand{\finitecomputation}{computation }
\newcommand{\M}{\ensuremath{\mathcal{M}}}
\tikzstyle{loop above}=[tran, to path={.. controls +(60:.5) 
\tikzstyle{loop below}=[tran, to path={.. controls +(240:.5) 
\tikzstyle{loop left}=[tran,  to path={.. controls +(150:.5) 
\tikzstyle{loop right}=[tran,  to path={.. controls +(330:.5) 
\tikzstyle{bigstoch}=[circle,draw,minimum size=8ex,inner sep=0pt,font=\Large, very thick,text centered, fill=blue!20, fill opacity=0.2, draw=black!80, text opacity=1]
\tikzstyle{stoch}=[circle,draw,minimum size=4ex,inner sep=0pt,font=\large, very thick,text centered, fill=blue!20, fill opacity=0.2, draw=black!80, text opacity=1]
\tikzstyle{squar}=[regular polygon,regular polygon sides=4,draw,minimum size=4ex,inner sep=0pt,font=\large, very thick,text centered, fill=blue!20, fill opacity=0.2, draw=black!80, text opacity=1]
\tikzstyle{bigstoch2}=[circle,draw,minimum size=8ex,inner sep=0pt,font=\Large, very thick,text centered, fill=blue!20, fill opacity=0.2, draw=black!80, text opacity=1, double]
\tikzstyle{bigmin}=[diamond,draw,minimum size=8ex,inner sep=0pt,font=\Large, very thick,text centered, fill=blue!20, fill opacity=0.2, draw=black!80, text opacity=1]
\tikzstyle{max}=[rectangle,draw,minimum size=4ex,inner sep=0pt,font=\large, very thick,text centered, fill=blue!20, fill opacity=0.2, draw=black!80, text opacity=1]
\tikzstyle{bigtran}=[very thick,draw,-angle 60,font=\scriptsize, inner sep = 6pt]
\tikzstyle{tran}=[thick,draw,font=\scriptsize, inner sep = 6pt,-stealth]
\title{Efficient Analysis of Polynomial Asymptotic Estimates for VASS MDPs} 
\titlerunning{asymptotic estimates for polynomial VASS MDPs} 
\author{Michal Ajdar{\'{o}}w}{Masaryk University, Czechia \and \url{https://www.muni.cz/lide/422654-michal-ajdarow/}}{xajdarow@fi.muni.cz}{https://orcid.org/0000-0003-0694-0944}{}
\authorrunning{M.~Ajdar{\'{o}}w} 
\keywords{Probabilistic programs, asymptotic complexity, vector addition systems, asymptotic estimates} 
\begin{document}

\maketitle

\begin{abstract}
	Markov decision process over vector addition system with states (VASS MDP) is a finite state model combining non-deterministic and probabilistic behavior, augmented with non-negative integer counters that can be incremented or decremented during each state transition. VASS MDPs can be used as abstractions of probabilistic programs with many decidable properties. In this paper, we develop techniques for analyzing the asymptotic behavior of VASS MDPs.
	That is, for every initial configuration of size \(n\), we consider the number of transitions needed to reach a configuration with some counter negative. We show
	that given a strongly connected VASS MDP there either exists an integer \(k\leq 2^d\cdot 3^{|T|} \), where \(d \) is the dimension and \(|T|\) the number of transitions of the VASS MDP, such that for all \(\epsilon>0 \) and all sufficiently large \(n\)  it holds that the complexity of the VASS MDP lies between \(n^{k-\epsilon} \) and \(n^{k+\epsilon} \) with probability at least \(1-\epsilon \), or it holds for all \(\epsilon>0 \) and all sufficiently large \(n\) that the complexity of the VASS MDP is at least \(2^{n^{\frac{1}{2}-\epsilon}} \) with probability at least \(1-\epsilon \).
	We show that it is decidable which case holds and the \(k\) is computable in time polynomial in the size of the considered VASS MDP. We also provide a full classification of asymptotic complexity for  VASS
	Markov chains.
\end{abstract}


\section{Introduction}
\label{sec-intro}

				
				 
	

Vector Addition Systems with States (VASS) \cite{HP:VASS-reachability-TCS} are a model for discrete systems with multiple unbounded resources expressively equivalent to Petri nets \cite{Petri:first-paper}. Intuitively, a VASS with $d \geq 1$ counters is a finite directed graph where the transitions are labeled by $d$-dimensional vectors of integers representing \emph{counter updates}. A computation starts in some state for some initial vector of non-negative counter values and proceeds by selecting transitions non-deterministically and performing the associated counter updates. The computation terminates whenever any of the counters were to become negative. 


In program analysis, VASS are used as abstractions for programs operating over unbounded integer variables. Input parameters are represented by initial counter values, and more complicated arithmetical functions, such as multiplication, are modeled by VASS gadgets computing these functions in a weak sense (see, e.g., \cite{LS:Petri-computer}). Branching constructs, such as \textbf{if-then-else}, are usually replaced with non-deterministic choice. VASS are particularly useful for evaluating the \emph{asymptotic complexity} of infinite-state programs, i.e., the dependency of the running time (and other complexity measures)  on the size of the program input \cite{SZV:amortized,SZV:difference-constraints}. Traditional VASS decision problems such as reachability, liveness, or boundedness are computationally hard \cite{CLLLM:VASS-reach-nonelem,Lipton:PN-Reachability,MM:containment-Petri}, and other verification problems such as equivalence-checking \cite{Jancar:PN-bisimilarity-TCS} or model-checking \cite{Esparza:ModelChecking-AI} are even undecidable. In contrast to this, decision problems related to the asymptotic growth of VASS complexity measures are solvable with low complexity and sometimes even in \emph{polynomial time} \cite{BCKNVZ:VASS-linear-termination,Zuleger:VASS-polynomial,KLV:VASS-Grzegorczyk,Leroux:Polynomial-termination-VASS,AK:VASS-polynomial-termination}; see \cite{Kucera:Asymptotic-VASS-Analysis-SIGLOG} for a recent overview. The complexity measures of VASS for which the asymptotic growth is usually considered are  \emph{termination complexity}, which can be seen as an analogy of time complexity, and  \emph{\(c\)-counter complexity}, which is an analogy of space complexity of a single variable \(c\).

The existing results about VASS asymptotic analysis are applicable to programs with non-determinism (in \emph{demonic} or \emph{angelic} form, see \cite{BW:nondet-languages}), but cannot be used to analyze the complexity of \emph{probabilistic programs}. This motivates the study of Markov decision process over VASS (VASS MDP) with both non-deterministic and probabilistic states, where transitions in probabilistic states are selected according to fixed probability distributions. Here, the problems of asymptotic complexity analysis become even more challenging because VASS MDPs subsume infinite-state stochastic models that are notoriously hard to analyze. So far, there are  only two existing results about asymptotic VASS MDP analysis. First is \cite{BCKNV:probVASS-linear-termination} where the linearity of expected termination complexity  is shown decidable in polynomial time for VASS MDPs with DAG-like  maximal end-component (MEC) decomposition, while showing that if the expected termination complexity is not linear then it is at least quadratic. The second is \cite{AKCONCUR23}, which  introduces a new notion of \emph{asymptotic estimates} for analyzing the asymptotic behavior of probabilistic systems which consists of a bound on all but \(\epsilon\)-ratio of runs for all \(\epsilon>0 \). Then it shows that we can fully classify one-dimensional VASS MDPs in terms of asymptotic estimates in polynomial time. Furthermore \cite{AKCONCUR23} also shows that for VASS MDPs with DAG-like MEC decomposition the termination/\(c\)-counter complexity has either a linear tight asymptotic estimate or a quadratic lower asymptotic estimate, and it is decidable in polynomial time which case holds.

Of special relevance to this paper is also  \cite{Zuleger:VASS-polynomial} which shows that given a strongly connected  \(d\)-dimensional VASS with demonic non-determinism the termination/\(c\)-counter complexity is either in \(\Theta(n^k) \) for some \(k\in\mathbb{N}, k\leq 2^{d} \) or in \(2^{\Omega(n)} \), while it is decidable in polynomial time which case holds, and  \(k\) can be computed in polynomial time, both  with respect to the size of the considered VASS.

\textbf{Our Contribution:} 
Our main result (Section~\ref{section-main}) can be seen as an extension of \cite{Zuleger:VASS-polynomial} onto VASS MDPs for asymptotic estimates. That is, we show that given a strongly connected \(d\)-dimensional VASS MDP with demonic non-determinism and \(|T|\) transitions, the termination/\(c\)-counter complexity   either has a tight asymptotic estimate of \(n^k\) for some \(k\in \mathbb{N}\), \(k\leq 2^d\cdot 3^{|T|} \) or a lower asymptotic estimate of \(2^{\sqrt{n}} \), while it is decidable in polynomial time which case holds, and  \(k\) can be computed in polynomial time, both with respect to the size of the considered VASS MDP.

Our next result (Section~\ref{section-markov-chains})  is a full classification of asymptotic complexity for strongly connected VASS
Markov chains. We show that for every VASS Markov chain the termination/\(c\)-counter complexity is either unbounded or one of the functions $n$, $n^2$ is its tight asymptotic estimate.


We also present an alternative, more intuitive definition of asymptotic estimates using a natural notion of fixed probability bounds.  (Section~\ref{sec-estimates})

\section{Preliminaries}
\label{sec-prelim}


We use $\N$, $\Z$, $\Q$, and $\R$ to denote the sets of positive integers,
integers, rational numbers, and real numbers, respectively. 
  We use $\N_\infty$ to denote the set $\N \cup \{\infty\}$ where $\infty$ is treated according to the standard conventions, and \(\N_0=\N\cup \{0\} \).
Given a function $f \colon \R \rightarrow \R$, we use  $\calO(f)$ and $\Omega(f)$ to denote the sets of all $g \colon \N \rightarrow \N$ such that $g(n) \leq a \cdot f(n)$ and $g(n) \geq b \cdot f(n)$ for all sufficiently large $n \in \N$, respectively, where $a,b$ are some positive constants.  If $h \in O(f)$ and $h \in \Omega(f)$, we write $h \in \Theta(f)$. Given a function \(f: A\rightarrow B \) and a set \(X\subseteq A \), we use \(f(X) \) to denote the set \(\{f(x)\mid x\in X \} \).

Let $A$ be a finite index set. The vectors of $\Q^A$ are denoted by bold letters such as $\bu,\bv,\bz,\ldots$. The component of $\bv$ of index $i\in A$ is denoted by $\bv(i)$. 
If the index set is of the form $A=\{1,2,\dots,d\}$ for some positive integer $d$, we write $\Q^d$ instead of $\Q^A$. For every $n \in \R$, we use $\vec{n}$ to denote the constant vector where all
components are equal to~$n$.
The other
standard operations and relations on $\Q$ such as
$+$, $\leq$, or $<$ are extended to $\Q^d$ in the component-wise way. In particular,  $\bv \leq  \bu$ if $\bv(i) \leq \bu(i)$ for every index $i$.

A \emph{probability distribution} over a finite set $A$ is a vector $\nu \in [0,1]^A$ such that $\sum_{a \in A}\nu(a) = 1$. We say that $\nu$ is  $\emph{Dirac}$ if $\nu(a) =1$ for some $a \in A$.

\vspace{-0.2cm}

\subsection{VASS Markov Decision Processes}

\begin{definition}
	\label{def-VASS} 
	
	Let $d \geq 1$. A \emph{VASS MDP} with counters \(\countersset \) is a tuple $\A = \ce{Q, (Q_n,Q_p),T,P}$, where 
	\begin{itemize}
		\item $Q \neq \emptyset$ is a finite set of \emph{states} split into two disjoint subsets $Q_n$ and $Q_p$ of \emph{nondeterministic} and \emph{probabilistic} states,
		\item $T \subseteq Q \times \Z^\countersset\times Q$ is a finite set of \emph{transitions} such that, for every $p \in Q$, the set $\tout(p) \subseteq T$ of all transitions of the form $(p,\bu,q)$ is non-empty.
		\item $P$ is a function assigning to each $t \in \tout(p)$ where $p \in Q_p$ a positive rational probability so that  $\sum_{t \in \tout(p)} P(t) =1$. 
	\end{itemize}
\end{definition}

A  VASS MDP $\A$ is a \emph{VASS Markov chain} if  \(Q_n=\emptyset \). $\A$ is a non-probabilistic VASS if \(Q_p=\emptyset \). We say \(\A\) is \emph{\(d\)-dimensional} if \(|\countersset|=d \). The encoding size of $\A$ is denoted by $\size{\A}$, where the integers representing counter updates are written in binary and probability values are written as fractions of binary numbers. For every $p \in Q$, we use $\tin(p) \subseteq T$ to denote the set of all transitions of the form $(q,\bu,p)$. The update vector $\bu$ of a transition $t = (p,\bu,q)$ is also denoted by $\bu_t$. Let \(C_1,\dots,C_n \) be subsets of \(\countersset \), we use \(\A^{C_1,\dots,C_n}\) to denote the VASS MDP obtained from \(\A \) by removing any counter not in \(\bigcup_{i=1}^n C_i \).

A \emph{configuration} of $\A$ is a pair $p\bv$, where $p \in Q$ and $\bv \in \Z^\countersset$. If some component of $\bv$ is negative, then $p\bv$ is \emph{terminal}.


A \emph{\finitecomputation}in $\A$ is a finite sequence of the form
$\alpha=p_0\bv_0, p_1 \bv_1,\ldots, p_n \bv_n$ where $(p_{i},\bv_{i+1}-\bv_i,p_{i+1}) \in T$  for all $i<n$.  The \emph{length} of $\alpha$ is defined as $\length(\alpha) = n$. 
An \emph{infinite computation} in $\A$ is an infinite sequence $\pi = p_0\bv_0, p_1 \bv_1,p_2 \bv_2,\ldots$ such that every finite prefix of $\pi$ is a \finitecomputation in~$\A$. Let $\Term(\pi)$ be the least $j$ such that $p_j\bv_j$ is terminal. If there is no such $j$, we put  $\Term(\pi) = \length(\pi)$. We say that a computation is \emph{terminal} if it contains a terminal configuration.

We say that $q$ is \emph{reachable} from~$p$ in \(\A\) if there exists a computation \(p\bv,\ldots, q \bv' \) in \(\A \). We say that \(\A \) is \emph{strongly connected} if for each \(p,q\in Q \) \(q\) is reachable from \(p\) in \(\A\).



 A \emph{strategy} of \(\A \) is a function $\sigma$ assigning to every \finitecomputation $p_0\bv_0,p_1 \bv_1,\ldots,p_n \bv_n$ such that $p_n \in Q_n$ a probability distribution over~$\tout(p_n)$. A strategy is \emph{counterless-memoryless (cM)} if it depends only on the last state $p_n$, and \emph{deterministic (D)} if it always returns a Dirac distribution. We denote by \(\stratsMD{\A} \) the set of all {\cMD}  strategies of \(\A\). 
Note that a {\cMD} strategy \(\sigma \) selects the same outgoing transition in every \(p\in Q_n \) every time \(p\) is visited, and hence we can  “apply” \(\sigma\) to \(\A \) by removing the other outgoing transitions, and declaring all states to be a probabilistic. The resulting VASS Markov chain is denoted by \(\A_\sigma \).  Every initial configuration $p\bv$ and every strategy $\sigma$ determine the probability space over computations initiated in $p\bv$ in the standard way.\footnote{See e.g. \cite{10.5555/1373322} for details on the ``standard way''.} We use $\prob^\sigma_{p\bv}$ to denote the associated probability measure.
 For a measurable function $X$ over computations, we use $\Exp^\sigma_{p \bv}[X]$ to denote the expected value of~$X$.


An \emph{end component (EC)} of $\A$ is a pair $(C,L)$ where $C \subseteq Q$ and  $L \subseteq T$ such that the following conditions are satisfied:
\begin{itemize}
	\item $C \neq \emptyset$;
	\item if $p \in C \cap Q_n$, then $\tout(p)\cap L\neq \emptyset$;
	\item if $p \in C \cap Q_p$, then $\tout(p)\subseteq L$;
	\item if $(p,\bu,q) \in L$, then $p,q \in C$;
	\item for all $p,q \in C$ we have that $q$ is reachable from $p$ and vice versa using only transitions from \(L\).
\end{itemize}

Note that if $(C,L)$ and $(C',L')$ are ECs such that $C \cap C' \neq \emptyset$, then $(C\cup C', L \cup L')$ is also an EC. Hence, every $p\in Q$ either belongs to a unique \emph{maximal end component} (MEC), or does not belong to any EC. Also observe that each MEC can be seen as a strongly connected VASS MDP.

A \emph{multi-component} of \(\A \) is a vector \(\bx\in \mathbb{Q}^{T} \) that satisfies all the following conditions:
\begin{itemize}
	\item  \(\bx \geq \vec{0}\);
	\item for each $p\in Q$ it holds \(
		\sum_{t\in \tout(p)}\bx(t)=\sum_{t \in \tin(p)} \bx(t)\);
	\item  for each \(p\in Q_p \), \(t\in \tout(p)\) it holds \(\bx(t)=P(t)\cdot \sum_{t'\in\tin(p)}\bx(t') \).
\end{itemize}
 \begin{remark}Intuitively,  \(\bx \) assigns non-negative flow to each transition, such that what flows into a state also flows out, while for each probabilistic state \(p\in Q_p\) the flow distribution on \(\tout(p) \)  has the the same ratios as the   probability distribution on \(\tout(p) \) given by \(P\).
	\end{remark}

 The \emph{effect} of a multi-component \(\bx \) is defined as  \(\Delta(\bx)=\sum_{t \in T} \bx(t)\cdot \bu_t \).  Each multi-component induces a VASS MDP \(\A_\bx \) created from \(\A \) by removing all transitions \(t\) with \(\bx(t)=0 \).  We say that \(\bx\) is \emph{centered} in a state \(p\) if \(\sum_{t\in \tout(p)}\bx(t)=1 \), and we use \(p_\bx \) to denote some state in which \(\bx \) is centered (if it exists). We say that \(\bx\) is a \emph{component} if \(\bx \) is centered in some state \(p\), and \(\A_\bx \) corresponds to a MEC of \(\A_\sigma \) for some \(\sigma\in \stratsMD{\A}\). 
 

Given a component \(\by \) centered in \(p\) let \(\realeffect_\by \) be the random variable representing the counters vector in the computation under the only strategy \( \sigma\) of \(\A_\by \) from initial configuration \(p\vec{0} \) upon revisiting \(p\) for the first time. We use \(\support(\by)=\{\bv\mid \prob_{p\vec{0}}^\sigma[\realeffect_\by=\bv]\neq 0 \} \) to denote the support of \(E_\by \), and \(\support^{C_1,\dots,C_n}(\by) \) to denote the restriction of \(\support(\by) \) onto the counters from \(\bigcup_{i=1}^nC_i \). Given a counter \(c \) we say that:
\begin{itemize}
	\item \(\by \) is \emph{increasing on \(c\)} if \(\E_{p\vec{0}}^\sigma[\realeffect_\by(c)]>0 \),
	\item \(\by \) is \emph{decreasing on \(c\)} if \(\E_{p\vec{0}}^\sigma[\realeffect_\by(c)]<0 \),
	\item \(\by \) is \emph{zero-bounded on \(c\)} if \(\E_{p\vec{0}}^\sigma[\realeffect_\by(c)]=0 \) and \(\prob_{p\vec{0}}^\sigma[\realeffect_\by(c)=0]=1 \),
		\item \(\by \) is\emph{ zero-unbounded on \(c\)} if \(\E_{p\vec{0}}^\sigma[\realeffect_\by(c)]=0 \) and \(\prob_{p\vec{0}}^\sigma[\realeffect_\by(c)=0]\neq 1 \).
\end{itemize}

Given a component \(\by \) of \(\A \) centered in \(p\), we use \(\hat{\by} \) to denote the component of \(\A_{\hat{\by}} \) centered in \(p\) where \(\A_{\hat{\by}} \) is created from \(\A_\by \) by replacing every transition \((q,\bu,p)\in \tin(p) \) with \((q,\bu-\E_{p\vec{0}}^\sigma[\realeffect_\by(c)],p) \).
 We use \(co-\hat{\by} \) to denote the component of \(\A_{co-\hat{\by}} \) centered in \(p\) where \(\A_{co-\hat{\by}} \) is created from \(\A_\by \) by replacing every transition \((q,\bu,r)\) with \((q,\E_{p\vec{0}}^\sigma[\realeffect_\by(c)],r) \) if \(r=p \) and with \((q,\vec{0},r) \) otherwise.  Note that  \(\hat\by \) is either zero-bounded or zero-unbounded on every \(c\), whereas \(co-\hat{\by} \) is never zero-unbounded on any \(c\).

\begin{remark}Components are a generalization of simple cycles from non-probabilistic systems onto MDPs, and multi-components are a generalization of multi-cycles. Just as multi-cycles can be seen  as a conical combination of simple cycles, so can multi-cycles be seen as a conical combination of components (see Appending~\ref{app-lemma-decompose-multicomponents-into-components}). Also, just as we can say ``we iterate a cycle \(k\) times'' we also say ``we iterate a component \(k\) times'', where a single iteration of a component \(\by \) represents a computation on \(\A_\by \) started from \(p_\by \) until the first time \(p_\by \) is revisited (hence \(E_\by \) represents the effect of a single iteration of \(\by \)). This is lifted onto multi-components in the same way as iterations of cycles are lifted onto multi-cycles.
	\end{remark}


%

%

\subsection{Asymptotic Complexity Measures for VASS MDPs}
\label{sec-asymptotic-measures}
\begin{figure}
	\parbox[c]{.5\textwidth}{
		\begin{tabbing}
			\hspace*{1em} \= \hspace*{2em} \= \hspace*{4em} \= \hspace*{4em} \= \kill
			\> \textbf{input} $N$\\
			\> \textbf{repeat}\\
			\>\> \textbf{random choice:}\\
			\>\>\> $0.5:\ \ N:=N+1;$\\
			\>\>\> $0.5:\ \ N:=N-1;$\\
			\> \textbf{until} $N=0$
	\end{tabbing}}
	\parbox[c]{.5\textwidth}{\hspace*{4em}
		\begin{tikzpicture}[x=2cm, y=2cm]
			\node [stoch,label={[shift={(0,-1)}]$\A$}] (r) at (3,0)  {$p$};
			\draw [tran,loop right] (r) to node[right] {$0.5: +1$} (r); 
			\draw [tran,loop left]  (r) to node[left] {$0.5: -1$} (r); 
		\end{tikzpicture}
	}
	\caption{A probabilistic program with infinite expected running time for every $N\geq 1$, and its $1$-dimensional VASS MDP model~$\A$.}
	\label{fig-prob-prg}
\end{figure}

Before we introduce asymptotic estimates, let us consider a  simple motivating example. Consider the simple probabilistic program of Fig.~\ref{fig-prob-prg}. The program inputs a positive integer $N$ and then repeatedly increments/decrements~$N$ with probability $0.5$ until $N=0$. One can easily show that for every $N \geq 1$, the program terminates with probability one, and the expected termination time is \emph{infinite}. Based on this, one may conclude that the execution takes a very long time, independently of the initial value of~$N$. However, this conclusion is \emph{not} consistent with practical experience gained from trial runs. The program tends to terminate ``relatively quickly'' for small~$N$, and the termination time \emph{does} depend on~$N$. 
Hence, the function assigning $\infty$ to every $N \geq 1$ is \emph{not} a faithful characterization of the asymptotic growth of termination time. \cite{AKCONCUR23} proposes an alternative characterization based on the observations that:
\begin{itemize}
	\item For every $\varepsilon >0$, the probability of all runs terminating after more than $N^{2+\varepsilon}$ steps approaches \emph{zero} as $N \to \infty$.
	\item For every $\varepsilon >0$, the probability of all runs terminating after more than $N^{2-\varepsilon}$ steps approaches \emph{one} as $N \to \infty$.
\end{itemize}
Since the execution time is ``squeezed'' between $N^{2-\varepsilon}$ and $N^{2+\varepsilon}$ for an arbitrarily small $\varepsilon > 0$ as $N \to \infty$, it can be characterized as ``asymptotically quadratic''. This analysis is in accordance with experimental outcomes (see e.g. \cite{AKN-expnotmeetexp}).

\subsection{Complexity of VASS Computations}
\label{sec-comp-VASS-runs}

Let $\A = \ce{Q, (Q_n,Q_p),T,P}$ be a VASS MDP with counters \(\countersset \), $c \in \countersset$, and $t \in T$. For every computation  $\pi = p_0 \bv_0,p_1 \bv_1,p_2 \bv_2,\ldots$, we put
\begin{eqnarray*}
	\calL_\A(\pi)      & = & \Term(\pi)\\
	\calC_\A[c](\pi) & = & \sup \{\bv_i(c) \mid 0 \leq i < \Term(\pi)\}\\
	\calT_\A[t](\pi) & = & \mbox{the total number of all $0 \leq i < \Term(\pi)$ such that $(p_i,\bv_{i{+}1}{-}\bv_i,p_{i+1}) = t$}
\end{eqnarray*}

We refer to the functions $\calL_\A$, $\calC_\A[c]$, and $\calT_\A[t]$ as \emph{termination}, \emph{$c$-counter}, and \emph{$t$-transition complexity}  respectively.

Note that $\calL_\A$, $\calC_\A[c]$, and $\calT_\A[t]$ are the complexity measures for VASS runs used in previous works \cite{BCKNVZ:VASS-linear-termination,Zuleger:VASS-polynomial,KLV:VASS-Grzegorczyk,Leroux:Polynomial-termination-VASS,AK:VASS-polynomial-termination,AKCONCUR23}. These functions can be seen as variants of the standard time/space complexities for Turing machines.

Let $\F$ be one of the complexity functions defined above. In VASS abstractions of computer programs, the input is represented by initial counter values, and the input size corresponds to the maximal initial counter value. The existing works on \emph{non-probabilistic} VASS concentrate on analyzing the asymptotic growth of the functions $\F_{\max} : \Nset \to \Nset_\infty$ where
\begin{eqnarray*}
	\F_{\max}(n)     & = & \sup\{\F(\pi) \mid \pi \mbox { is a computation initiated in $p\vec{n}$ where $p \in Q$}\}
\end{eqnarray*}
For VASS MDP, we can generalize $\F_{\max}$ into $\F_{\eexp}$ as follows:
\begin{eqnarray*}
	\F_{\eexp}(n)     & = & \sup\{\Exp_{p\vec{n}}^{\sigma}[\F] \mid \sigma \mbox { is a strategy of $\A$}, p \in Q\}
\end{eqnarray*}
Note that for non-probabilistic VASS, the values of $\F_{\max}(n)$ and $\F_{\eexp}(n)$ are the same. However, the function $\F_{\eexp}$ suffers from the deficiency illustrated in the motivating example at the beginning of Section~\ref{sec-asymptotic-measures}. To see this, consider the one-dimensional VASS MDP $\A$ modeling the simple probabilistic program (see Fig.~\ref{fig-prob-prg}). For every $n \geq 1$ and the only (trivial) strategy $\sigma$, we have that $\prob^\sigma_{p \bn}[\Term < \infty] = 1$ and $\calL_{\eexp}(n) = \infty$. However, the practical experience with trial runs of $\A$ is the same as with the original probabilistic program (see Section~\ref{sec-asymptotic-measures} above).

\subsection{Asymptotic Estimates}
\label{sec-estimates}

In this section, we introduce asymptotic  estimates allowing for a precise analysis of the asymptotic growth of the termination, $c$-counter, and $t$-transition complexity, especially when their expected values are infinite for a sufficiently large input.

\begin{definition}
	\label{def-estimates}
	Let \(\A \) be a VASS MDP,  $f : \mathbb{R} \to \mathbb{R}$, and $\F$ be one of $\calL_\A$, $\calC_\A[c]$ or $\calT_\A[t]$.
	
	We say that $f$ is a \emph{lower asymptotic estimate of $\F$} if for every $\varepsilon > 0$ there exists $p \in Q$ and a strategy $\sigma$ such that 
	\[
	\liminf_{n \to \infty} \ \prob^{\sigma}_{p \vec{n}}[\F \geq f(n^{1-\varepsilon}) ] \ = \ 1\,
	\]
	Similarly, we say that $f$ is an \emph{upper asymptotic estimate of $\F$} if for every $\varepsilon>0$, every $p \in Q$, and every strategy $\sigma$ it holds
	\[
	\limsup_{n \to \infty} 	\ \prob^{\sigma}_{p \vec{n}}[\F \geq f(n^{1+\varepsilon}) ] \ = \ 0
	\]	
	If there is no upper asymptotic estimate of $\F$, we say that $\F$ is \emph{unbounded from above}. If every \(f:\R\rightarrow \R \) is a lower asymptotic estimate of $\F$, we say that $\F$ is \emph{unbounded from below}. If $\F$ is unbounded from below as well as unbounded from above we say that $\F$ is \emph{unbounded}. Finally, we say that $f$ is a \emph{tight asymptotic estimate of $\F$} if it is both a lower asymptotic estimate and an upper asymptotic estimate of $\F$.
\end{definition}

The above definition is based on the one introduced in \cite{AKCONCUR23}.
An alternative, more intuitive, definition of asymptotic estimates can be obtained using another natural notion, inspired by \cite{AKN-expnotmeetexp}. Consider once again the motivating example from
Section~\ref{sec-asymptotic-measures}.  One can ask the following natural question:

\begin{quote}
	\emph{How many steps of the program in Fig.~\ref{fig-prob-prg} must be executed for a given initial value of $N$ so that the probability of termination is at least \(p\)?}
\end{quote}

This question makes sense for every fixed $p<1$.  Formally, consider the function \(f_p^\F:\mathbb{N}\rightarrow \mathbb{N}_{\infty} \) defined for each \(n\) as the smallest integer such that for every $q \in Q$, and every strategy $\sigma$ it holds \(\Prob_{q\vec{n}}^\sigma[\F\leq  f_p^\F(n)]\geq  p \) (or \(f_p^\F(n)=\infty \) if no such integer exists). We call \(f_p^\F \) \emph{fixed probability bound}. In the next theorem we show that fixed probability bounds are  closely tied with asymptotic estimates.

 \begin{theorem}\label{observation-f-p-f}
 	Let \(f:\mathbb{R}\rightarrow\mathbb{R} \) be such that \(\lim_{n\rightarrow\infty} \frac{f( n)}{f( n^{1+\epsilon})}=0\) for every \(\epsilon>0 \). Then:
 	\begin{itemize}
 		\item \(f\) is a lower asymptotic estimate of \(\F \) iff for every \(\epsilon>0 \) and every \(p<1 \) it holds \(f_p^\F\in \Omega(f(n^{1-\epsilon})) \);
 		\item \(f\) is an upper asymptotic estimate of \(\F \) iff for every \(\epsilon>0 \) and every \(p<1 \) it holds \(f_p^\F\in  \calO(f(n^{1+\epsilon})) \).
 	\end{itemize}
 \end{theorem}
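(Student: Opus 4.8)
The plan is to recast both conditions in terms of the worst-case quantities $\Phi_n(m):=\sup_{q\in Q}\sup_{\sigma}\prob^{\sigma}_{q\vec{n}}[\F\geq m]$ for $m\in\N$, and then transport everything through a routine reparametrisation of the exponent. First I would record the dictionary: unfolding the definition of the fixed probability bound, for an integer $m\geq 1$ one has $f_p^\F(n)\leq m$ iff $\Phi_n(m+1)\leq 1-p$, and hence $f_p^\F(n)\geq m$ iff $\Phi_n(m)>1-p$. The uniform quantification over states and strategies in the definition of $f_p^\F$ enters only through the trivial equivalence ``$\forall\sigma\colon x_\sigma\leq c$'' $\Leftrightarrow$ ``$\sup_\sigma x_\sigma\leq c$'', so no attainment of suprema is needed for this step. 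Second, since $n^{1-\epsilon'}$ is a fixed power greater than $1$ of $n^{1-\epsilon}$ (and $n^{1+\epsilon}$ a fixed power greater than $1$ of $n^{1+\epsilon'}$) whenever $0<\epsilon'<\epsilon<1$, the hypothesis $\lim_k f(k)/f(k^{1+\delta})=0$ gives $f(n^{1-\epsilon})=o(f(n^{1-\epsilon'}))$ and $f(n^{1+\epsilon'})=o(f(n^{1+\epsilon}))$; consequently any $\calO$- or $\Omega$-constant is absorbed by shrinking $\epsilon$ a little, and I will freely replace $n^{1\pm\epsilon}$ by $n^{1\pm\epsilon/2}$ below.

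Given this, the two ``$\Rightarrow$'' implications are short and need no attainment arguments, because they only ever produce a single bad/good strategy. If $f$ is a lower estimate, fix $\epsilon,p$ and take the state $q^*$ and strategy $\sigma^*$ witnessing the lower estimate for exponent $1-\epsilon/2$; then $\prob^{\sigma^*}_{q^*\vec{n}}[\F<f(n^{1-\epsilon/2})]\to 0<p$, so (as $\F$ is integer-valued) $f_p^\F(n)\geq\lceil f(n^{1-\epsilon/2})\rceil\geq f(n^{1-\epsilon})$ for all large $n$, i.e.\ $f_p^\F\in\Omega(f(n^{1-\epsilon}))$. If $f_p^\F\in\calO(f(n^{1+\epsilon/2}))$ for every $p<1$, fix $\epsilon,q,\sigma$; for each $p<1$ we have $f_p^\F(n)<f(n^{1+\epsilon})$ for large $n$, whence $\prob^{\sigma}_{q\vec{n}}[\F\geq f(n^{1+\epsilon})]\leq 1-\prob^{\sigma}_{q\vec{n}}[\F\leq f_p^\F(n)]\leq 1-p$, and letting $p\to 1$ gives $\limsup_n\prob^{\sigma}_{q\vec{n}}[\F\geq f(n^{1+\epsilon})]=0$, i.e.\ $f$ is an upper estimate.

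The remaining two implications must go the other way, from ``for each $n$ there is a good state/strategy'' to ``there is one good state/strategy for all $n$''. For the strategy this is handled by the fact that a strategy may depend on the \emph{initial} configuration: from any family $(\sigma^{(q,n)})_{q,n}$, with $\sigma^{(q,n)}$ being $\frac{1}{n}$-optimal for $\sup_\sigma\prob^{\sigma}_{q\vec{n}}[\F\geq m_n]$, one assembles a single strategy $\sigma^{\dagger}$ that, started in $q\vec{n}$, behaves like $\sigma^{(q,n)}$ (the pair $(q,n)$ being recovered from $q\vec{n}$). For ``$f$ an upper estimate $\Rightarrow$ $f_p^\F\in\calO(f(n^{1+\epsilon}))$'' one puts $m_n=\lceil f(n^{1+\epsilon/2})\rceil$ and applies the upper-estimate property to this single $\sigma^{\dagger}$ and to each of the finitely many $q\in Q$; this forces $\Phi_n(m_n)\to 0$, so $\Phi_n(m_n)<1-p$ eventually, and the dictionary gives $f_p^\F(n)\leq m_n-1\leq f(n^{1+\epsilon})$.

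The delicate implication is ``$f_p^\F\in\Omega(f(n^{1-\epsilon}))$ for all $\epsilon,p$ $\Rightarrow$ $f$ a lower estimate''. Fix $\epsilon$; from $f_p^\F\in\Omega(f(n^{1-\epsilon/2}))$ for each $p$ the dictionary yields $\Phi_n(\lceil f(n^{1-\epsilon})\rceil)>1-p$ for all large $n$, and since $p>0$ is arbitrary, $\Phi_n(\lceil f(n^{1-\epsilon})\rceil)\to 1$; feeding $\frac{1}{n}$-optimal strategies into the gluing device removes the strategy dependence and produces one $\sigma^{\dagger}$ with $\max_{q\in Q}\prob^{\sigma^{\dagger}}_{q\vec{n}}[\F\geq f(n^{1-\epsilon})]\to 1$. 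The remaining task — which I expect to be the main obstacle — is to pin down a \emph{single} state $q^*$ for which $\prob^{\sigma^{\dagger}}_{q^*\vec{n}}[\F\geq f(n^{1-\epsilon})]\to 1$ along \emph{all} $n$, even though the maximising state may a priori vary with $n$. I would close this by using that the complexity measures are robust to an $O(1)$ perturbation of the initial configuration: a VASS MDP has no zero tests, so prepending a fixed path of length $\leq|Q|$ from one state to another changes the termination/counter/transition complexity only by a bounded amount, and a bounded additive change of the starting counters is asymptotically harmless given the slack available (we may use $f_p^\F\in\Omega(f(n^{1-\epsilon/4}))$ to buy a constant factor). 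Hence if the optimum from one state tends to $1$, so does the optimum from every state, and by finiteness of $Q$ any fixed $q^*$ works; for strongly connected $\A$ (the case used in the main application) this is immediate. The rest is the exponent bookkeeping of the first paragraph.
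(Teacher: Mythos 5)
Your decomposition into four one‑directional implications matches the paper's, and the two easy directions (lower estimate $\Rightarrow$ $\Omega$, and $\calO$ $\Rightarrow$ upper estimate) coincide with the paper's arguments up to exponent bookkeeping. Where you genuinely diverge — and are more careful than the paper — is in the two implications that must compress an $n$‑indexed family of near‑optimal pairs $(q_n,\sigma_n)$ into one fixed pair. The paper's treatment is loose there: in the ``$\Omega\Rightarrow$ lower'' direction it asserts $\prob_{q\vec n}^{\sigma}[\F\leq f_p^\F(n)-1]<p$ \emph{for every} $q,\sigma$, whereas minimality of $f_p^\F(n)$ only yields such a bound for \emph{some} $(q,\sigma)$ (which may vary with $n$), and the single witness demanded by Definition~\ref{def-estimates} is never constructed; in the ``upper $\Rightarrow\calO$'' direction the threshold $n_\gamma$ is taken uniformly over $(q,\sigma)$ without justification. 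Your gluing device $\sigma^\dagger$ — reading off $(q,n)$ from the initial configuration and playing a $1/n$‑optimal $\sigma^{(q,n)}$ — is precisely the uniformity‑in‑$\sigma$ both of these directions need, and you correctly isolate the residual state‑aggregation issue as the real obstacle.

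Two small repairs to your closing step. First, ``bounded additive change'' understates the cost: steering from a fixed $q^*$ to the $n$‑dependent witness $q_n$ may pass through probabilistic states, so a single attempt succeeds only with probability bounded away from $0$ (not equal to $1$); one must repeat $\omega(1)$ times to drive the failure probability to $0$, costing $\omega(1)$ — though still $o(n^{\epsilon})$, so affordable — counter budget, not a bounded amount. Second, Theorem~\ref{observation-f-p-f} is stated without a strong‑connectivity hypothesis, and neither your robustness argument nor the paper's proof closes the state‑aggregation step in that generality: pigeonhole only gives a subsequence, and monotonicity in $n$ of $\sup_\sigma\prob_{q\vec n}^{\sigma}[\F\geq m]$ does not rescue a liminf when the subsequence has unbounded gaps and $m_n$ keeps growing. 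For the strongly connected case (the one actually used) your argument is sound after the multi‑attempt repair, and is more explicit than the paper about where the difficulty lies.
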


The proof of Theorem~\ref{observation-f-p-f} can be found in Appendix~\ref{app-observation-f-p-f}. Note that the restriction of \(\lim_{n\rightarrow\infty} \frac{f( n)}{f( n^{1+\epsilon})}=0\) for all \(\epsilon>0 \) is not particularly restrictive. For instance, it holds for any polynomial or exponential function.

\section{Strongly Connected VASS MDPs}
\label{section-main}
In this section we present the following main theorem of this paper.

\begin{theorem}\label{theorem-main}
		Let \(\A=\ce{Q, (Q_n,Q_p),T,P} \) be a strongly connected \(d\)-dimensional VASS MDP. Let \(c\) be a counter and \(t\) a transition of \(\A \). Then for each \(\F\in \{\calC_\A[c],\calT_\A[t],\calL_\A \} \) one of the following holds:
	\begin{itemize}
		\item there exists \(k\in \mathbb{N} \), \(k\leq 2^d\cdot 3^{|T|}\) such that  \(n^k \) is a tight asymptotic estimate of \(\F\);
		\item OR \(2^{\sqrt{n}} \) is a lower asymptotic estimate of \(\F \). 
	\end{itemize}
	Furthermore, it is decidable in time polynomial in \(\size{\A} \) which of these cases holds, and for the first case the value of \(k\) can be computed in time polynomial in \(\size{\A}\).
\end{theorem}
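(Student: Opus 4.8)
The plan is to lift the structural analysis of strongly connected non-probabilistic VASS of \cite{Zuleger:VASS-polynomial} to VASS MDPs, with \emph{components} and \emph{multi-components} playing the roles of simple cycles and multi-cycles, and with the random-walk phenomenon of Figure~\ref{fig-prob-prg} accounting for the probabilistic contribution. A first routine step is to reduce the three measures to one: in a strongly connected VASS MDP, $\calC_\A[c]$, $\calT_\A[t]$ and $\calL_\A$ are mutually polynomially related for the purpose of asymptotic estimates — $\calL_\A$ dominates $\calC_\A[c]$ and $\calT_\A[t]$ up to a fixed polynomial in $d$ and $|T|$, any counter pumped to $\Theta(n^{j})$ forces $\Omega(n^{j})$ traversals of some transition on the witnessing component, and a non-terminating run of length $\ell$ must pump some counter to $\ell/\mathrm{poly}$ — so the ``exponential'' alternative holds for one of them iff for all, and it suffices to treat $\calL_\A$.

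The core is a combinatorial characterisation of the degree $k$ by a saturation procedure over subsets of $\countersset$, generalising the rank computation of \cite{Zuleger:VASS-polynomial}. Maintaining a chain $\emptyset = S_0 \subsetneq S_1 \subsetneq \cdots$ with recorded degrees, the invariant at $S_i$ is that there is a strategy (built from {\cMD} strategies on subsystems) pumping each counter of $S_i$ to its recorded degree while behaving, off $S_i$, like a bounded-counter Markov chain; to extend the chain one searches — by linear-programming feasibility over $\Q^{T}$, using the conical decomposition of multi-components into components (Appendix~\ref{app-lemma-decompose-multicomponents-into-components}) — for a component $\by$ that is non-decreasing on every counter outside $S_i$, decreasing only on counters of $S_i$, and either increasing or zero-unbounded on some fresh counter $c$. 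Such a $\by$ can be iterated with the $S_i$-counters as fuel: in the increasing case this raises $c$ to one plus the minimal fuel degree, in the zero-unbounded case it additionally contributes a quadratic term to the run length (exactly as in Figure~\ref{fig-prob-prg}), and strong connectedness propagates the value to all of $S_{i+1}$. If the procedure saturates (every counter relevant to $\F$ receives a finite degree), we are in the polynomial case and read off $k$; a double-counting argument over the (counter-subset, component-type) configurations occurring along the chain — extending that of \cite{Zuleger:VASS-polynomial} to account for the three roles a transition can play in the $\hat{\by}$/$co$-$\hat{\by}$ split — yields $k \le 2^d\cdot 3^{|T|}$. If instead a candidate extension needs no fuel, or the only candidates form a cyclic fuel dependency (counter $c$ needs $c'$ and vice versa), the pumpable degree is unbounded and we are in the exponential case.

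It then remains to match the two bounds to this picture. For the \textbf{lower asymptotic estimate}, fixing $\epsilon$ I would compose the strategies along the chain into one strategy that on input $\vec{n}$ performs the nested pumping through all levels up to degree $k$; the delicate point is the ``probability $\to 1$'' requirement, which needs an Azuma--Hoeffding concentration bound showing that, over each of the polynomially many phases — each lasting only $O(n^{\delta})$ steps with $\delta$ strictly below its budget — the fuel counters (bounded expected drift) and the zero-unbounded counters (bounded-variance martingale walks) stay non-negative, combined via a union bound. In the exponential case the same nesting of two mutually-pumping components, repeated, forces $2^{\Omega(n)}$ steps with probability $\to 1$, so $2^n$ is a lower asymptotic estimate. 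For the \textbf{upper asymptotic estimate}, I would show that any strategy $\sigma$ and state $p$ induce on a length-$\ell$ prefix an approximate multi-component (the transition-use counts), decompose it conically into components, and bound the attainable value of every counter by the accumulated effect; this produces a recurrence controlled by the same sets $S_i$ that caps all reachable counter values — and hence $\calL_\A$ — at $n^{k}$ up to lower-order terms, i.e.\ at $n^{k+\epsilon}$ outside an $\epsilon$-fraction of runs, with the zero-unbounded components handled by an optional-stopping/anti-concentration estimate bounding the probability that a bounded-variance martingale started at $O(n^{k})$ reaches $n^{k+\epsilon}$.

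For \textbf{decidability and complexity}, the saturation procedure runs for $\mathrm{poly}(d,|T|)$ steps, each being a polynomial number of LP feasibility queries over $\Q^{T}$ (existence of a component with a prescribed counter sign pattern and support constraints), and the no-fuel / cyclic-dependency tests are likewise LP-detectable; recording degrees yields $k$ in binary in polynomial time (its value may be exponential, its encoding is not). The step I expect to be the main obstacle is the upper-bound direction in the presence of zero-unbounded components: one must rule out that a strategy exploits the $\epsilon$-slack to stitch together many near-independent random walks and atypical deviations into a super-$n^{k}$ pump, which requires the anti-concentration estimates to be uniform over all strategies and to compose correctly through the nested $S_i$-structure; a secondary difficulty is making the component-finding LPs faithfully encode the $\hat{\by}$/$co$-$\hat{\by}$ decomposition so that both the polynomial-time claim and the $2^d\cdot 3^{|T|}$ bound are rigorous.
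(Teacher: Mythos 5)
Your proposal shares the high-level skeleton with the paper — iterate over degrees/subsets, certify lower bounds by iteration schemes built from components, certify upper bounds by an LP-derived ranking function, use the $\hat{\by}$/$co$-$\hat{\by}$ split to separate probabilistic and deterministic behaviour, and detect the exponential case when the iteration scheme becomes fuel-free. But there are two genuine gaps and one smaller error worth flagging.

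First, the opening reduction is in the wrong direction and is false. You claim all three measures are mutually polynomially related for asymptotic-estimate purposes, and in particular that the exponential alternative holds for one iff for all, so it suffices to treat $\calL_\A$. Take the one-state one-counter VASS Markov chain with a single self-loop of effect $0$: every run is infinite, so $2^n$ is a lower asymptotic estimate of $\calL_\A$, but $\calC_\A[c]$ has the tight estimate $n$. Even within the polynomial regime, Figure~\ref{fig-prob-prg} already gives $\calL_\A \sim n^2$ while $\calC_\A[c] \sim n$, so different measures have different exponents. The paper reduces $\calL_\A$ to a fresh step-counter $\calC_\A[sc]$ and keeps $\calT_\A[t]$ separate; there is no reduction collapsing all three into $\calL_\A$, and the different counters/transitions must be classified individually (that is what the $\tilde{C}_i$ and $T_i$ partitions are for).

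Second, and more substantively, your treatment of the zero-unbounded case misses the central mechanism that makes the VASS MDP analysis harder than the non-probabilistic one: \emph{resets}. You describe the contribution of a zero-unbounded component as ``additionally contributing a quadratic term''; but iterating $\hat{\by}$ from $c=n$ lasts only $\Theta(n^2)$ steps before $c$ hits $0$. To push a zero-unbounded computation to degree $k+1$ one must interleave it with other multi-components whose cumulative effect cancels the drift of $\hat{\by}$ on $c$ — the ``reset'' — and one must show that the set of effect-vectors that can be reset often enough is a vector space (Lemma~\ref{lemma-Rbk-vector-space}) and that $\support(\hat{\by})$ lies inside it (the $R^{B,k}$ condition). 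This is exactly the Azuma/optional-stopping machinery you list as a ``delicate point'' for the lower bound, but the issue is structural, not merely concentration: without the reset structure your nested-pumping strategy simply cannot last $n^{k+1}$ steps, so there is no martingale to concentrate. Correspondingly, your exponential criterion (``no-fuel or cyclic fuel dependency'') is Zuleger's non-probabilistic condition; in the MDP setting the correct criterion is Definition~\ref{def-exp-scheme}: every MEC of the candidate multi-component must be \emph{zero-bounded} on all counters it does not strictly increase, i.e.\ decomposable into components $\by$ whose $\hat{\by}$ are zero-unbounded there. A merely cycle-structural criterion would misclassify instances where a component has the right sign pattern on expectations but cannot be iterated because some $\hat{\by}$ has unresettable support.

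Finally, a smaller but real gap: you assert the saturation runs for $\mathrm{poly}(d,|T|)$ steps, but the degree $k$ can be exponentially large, and the iterative classification is over $k$, not over subsets of counters. The polynomial-time claim requires the paper's extra argument that only polynomially many values $k \in X_1 \cup X_2$ (built from the additive/multiplicative closures of the sets $\Aset$, $\Bset$ of ``interesting'' exponents) can ever yield a new tight estimate, and that one may jump between them. Your subset-chain framing of length $\le d$ does not by itself bound the number of degree-levels that have to be examined.
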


%
%

%

Note that it suffices to prove Theorem~\ref{theorem-main} only for \(\calC_\A[c] \) and \(\calT_\A[t] \), as \(\calL_\A \) can be expressed by adding a new step-counter \(sc\) to \(\A \) which is increased by every transition of \(\A\). It then holds that \(\calL_\A(n)=\calC_\A[sc](n)-n \). While we could similarly replace  \(\calT_\A[t] \) with a \(t\)-transition counter that is increased only by \(t\), our approach requires us to analyze  \(\calT_\A[t] \) separately.

The proof of Theorem~\ref{theorem-main} is split as follows: In Section~\ref{sec-informal} we give an informal description of the core ideas used in our approach. In Section~\ref{sec-systems} we describe constraint systems \hyperref[fig-systems]{(I)} and \hyperref[fig-systems]{(II)} that are a key concept to analyzing the asymptotic behavior of VASSes. We then give a formal proof of Theorem~\ref{theorem-main} in Section~\ref{section-complexity}. 

\subsection{Informal Description}
\label{sec-informal}

We combine the approaches of \cite{Zuleger:VASS-polynomial} and \cite{AKCONCUR23} together with novel methods.
The main problem with applying the methods used in \cite{Zuleger:VASS-polynomial} for analyzing strongly connected non-probabilistic  VASS onto VASS MDPs is that this method requires that any cycle/component whose effect on some counter \(c\) is \(0 \) can be iterated sufficient number of times without decreasing \(c\) by more than a constant in total. While this is true for non-probabilistic VASS, on VASS MDPs this holds only for the class of VASS MDPs which  contain no component that is zero-unbounded on some counter. 

The main technique of \cite{AKCONCUR23} we utilize is then that of {\cMD} decomposition, which allows us to view any computation on a VASS MDP \(\A \) as an interweaving of a constant number (that depends only on \(\A \)) of computations on VASS Markov chains \(\A_\sigma \) for various \(\sigma\in \cMD(\A) \), while every strategy of \(\A \) can be seen as if instead of selecting the next transition it selects which one of these VASS Markov chains is to take a single computational step next.\footnote{We refer to \cite{AKCONCUR23} or Appendix~\ref{section-additional-definitions} for detailed description of the {\cMD} decomposition/pointing VASS.} These can in turn  be seen as an interweaving of a constant number of computations on various \(\A_\by \) for components \(\by\) of \(\A\). This allows us to view any computation on a strongly connected VASS MDP as simply ``switching'' between a constant number of entirely separate computations, each of which takes place on the VASS Markov chain \(\A_\by \) for some component \(\by \) of \(\A\). 


The main idea used in this paper is to split the computations on \(\A\) into two parallel parts by utilizing the {\cMD} decomposition. First, using the  {\cMD} decomposition we split the computation into the individual computations on \(\A_\by \) for components \(\by \) of \(\A \), and then we split each of these individual computations into two parallel parts. The computation on \(\A_\by \) is split into a ``probabilistic'' part representing the same computation on \(\A_{\hat{\by}} \), and a second ``deterministic'' part representing the same computation on \(\A_{co-\hat{\by}} \). Note that the sum of the effects of both parts is exactly the effect on \(\A_\by \). An important observation is that the ``probabilistic'' part is either zero-bounded or zero-unbounded on each counter, whereas the ``deterministic'' part can never be zero-unbounded on any counter. Hence we can apply the method  of \cite{AKCONCUR23} (with few modifications) to analyze the asymptotic behavior of the ``deterministic'' part, and it remains only to develop new techniques to analyze the asymptotic behavior of the ``probabilistic'' part.


We do this by analyzing each of the computations on \(\A_{\hat{\by}} \) separately. Since counters on which \(\hat{\by} \) is zero-bounded can never be decreased by more than a constant in \(\A_{\hat{\by}} \), it suffices to analyze the behavior for counters \(c\) on which \(\hat{\by} \) is zero-unbounded. The key notion behind analyzing the behavior of \(\A_{\hat{\by}} \) for such \(c\) is that of a reset. Note that a computation on \(\A_{\hat{\by}} \), when \(c\) is initialized to \(n\), reaches a negative value for \(c\) after roughly \(n^2 \) steps as per \cite{AKCONCUR23} (i.e., the number of steps has a tight asymptotic estimate of \(n^2\)). Hence if we wanted to iterate \(\A_{\hat{\by}}\) for say \(n^{k+2} \) steps we would have to reset \(c\) back to its initial value of \(n\) every time it reaches \(0\), and we would need roughly \(n^k\) such resets. A reset represents a computation whose effect on \(c\) (and potentially other counters) on all of the ``deterministic''  parts summed together  is exactly the opposite of the effect of the  ``probabilistic'' part \(\A_{\hat{\by}}\) since its last reset, hence the two cancel themselves out. We show that either \(\A_{\hat{\by}} \) has an upper asymptotic estimate of \(n^k\) on its length, or the set \(\support_{\hat{\by}} \) restricted only to  counters upper bounded by \(n^{\lfloor\frac{k}{2}\rfloor} \) (note that larger counters do not need resets yet as \((n^{\lfloor\frac{k}{2}\rfloor+1})^2\geq n^{k+1} \)) can be reset in \(\A \) sufficient number of times to iterate \(\A_{\hat{\by}} \) rooughly \(n^{k+1} \) times. We actually show that the set all the counter vectors that can be reset often enough forms a vector space, which allows for efficient computation.

Similarly to the approach of \cite{Zuleger:VASS-polynomial}, our algorithm iteratively classifies each counter and transition as either having an upper asymptotic estimate of \(n^{k}\) or a lower asymptotic estimate of \(n^{k+1}\) for its complexity, starting with \(k=1\) and iteratively classifying for higher and higher \(k\) until it either clasifies every counter and transition with a tight asymptotic estimate, or it obtains a lower asymptotic estimate of \(n^{2^d\cdot 3^{|T|}} \) for the remainder.  Then similarly to \cite{Zuleger:VASS-polynomial}, we show that there exists a  multi-component that is an \emph{exponential iteration scheme}, which is a VASS MDP analogy to an \emph{iteration scheme} for non-probabilistic VASS from \cite{Leroux:Polynomial-termination-VASS}, which can be used to iterate the transitions \(t\) and pump the counters \(c\) with a lower asymptotic estimate of \(n^{2^d\cdot 3^{|T|}} \) for \(\calT_\A[t] \) and \(\calC_\A[c] \) to \(2^{n^{\frac{1}{2}-\epsilon}}\) with high enough probability, thus obtaining the exponential lower asymptotic estimate.

\subsection{Constraint systems \hyperref[fig-systems]{(I)} and \hyperref[fig-systems]{(II)}}
\label{sec-systems}
The starting point of our analysis is the dual constraint system (Fig.~\ref{fig-systems}) used in \cite{Zuleger:VASS-polynomial} and adapted to VASS MDPs in \cite{AKCONCUR23}. 

\begin{figure}[ht]
	\begin{tabular}{|c|c|}
		\hline
		{\begin{minipage}[c]{0.38\textwidth}\small
				\vspace{-0.84cm}
				Constraint system~(I):
				
				\vspace{0.2cm}
				Find $\bx \in \mathbb{Z}^{T}$ such that
				\begin{align}
					\sum_{t \in T} \bx(t)\cdot  \bu_t  & \ge \vec{0} \nonumber\\
					\bx & \ge \vec{0} \nonumber
				\end{align}
				
				and for each $p\in Q$
				\begin{align}
					\sum_{t\in \tout(p)}\bx(t)&=\sum_{t \in \tin(p)} \bx(t)\nonumber
				\end{align}
				%
				%
				%
				and for each $p\in Q_p$, $t\in \tout(p)$
				\begin{align}
					\bx(t) & =P(t) \cdot \sum_{t'\in \tout(p)} \bx(t') \nonumber
				\end{align}
				
				\textbf{Objective:} \emph{Maximize}\\
				\begin{itemize}
					\item the number of valid inequalities of the form 
					\[\sum_{t \in T} \bx(t)\cdot \bu_t(c) > 0,\] where \(c\in \countersset\).
					\item the number of valid inequalities of the form $\bx(t) > 0$.
				\end{itemize}
				\vspace*{2em}
		\end{minipage}}
		&
		{\begin{minipage}[c]{0.55\textwidth}
					\vspace{0.2cm}
				Constraint system (II):
				
				\vspace{0.2cm}
				Find $\by \in \mathbb{Z}^\countersset,\bz \in \mathbb{Z}^{Q}$ such that
				\begin{align}
					\by & \ge  \vec{0} \nonumber\\
					\bz & \ge  \vec{0} \nonumber
				\end{align}
				and for each $(p,\bu,q)\in T$ where $p\in Q_n$ 			
				\[\bz(q)-\bz(p)+\sum_{c\in\countersset} \bu(c)\cdot \by(c) \leq 0\]
				and for each $p\in Q_p$  
				\[\sum_{t= (p,\bu,q) \in \tout(p)}P(t)\cdot \big(\bz(q)-\bz(p)+\sum_{c\in \countersset}  \bu_t(c)\cdot \by(c)\big)\leq 0 \]
				\vspace*{0.8em}
				
				\textbf{Objective:}
				\emph{Maximize}\\
				\begin{itemize}
					\item the number of valid inequalities of the form $\by(c) > 0$,
					\item the number of transitions $t=(p,\bu,q)\in T$ such that $p\in Q_n$ and \[\bz(q)-\bz(p)+\sum_{c\in\countersset} \bu(c)\cdot \by(c)<0,\]
					\item the number of states $p\in Q_p$ such that 
					\[\sum_{t = (p,\bu,q) \in \tout(p)}P(t)\cdot \big(\bz(q)-\bz(p)+\sum_{c\in \countersset}  \bu(c)\cdot \by(c)\big)< 0\,\]
				\end{itemize}
				\vspace*{1em}
		\end{minipage}}\\
		\hline
	\end{tabular}
	\caption{Constraint systems defined for a given VASS MDP $\A = \ce{Q, (Q_n,Q_p),T,P}$ with counters \(\countersset\).}
	\label{fig-systems}
\end{figure}

We observe that both~\hyperref[fig-systems]{(I)} and~\hyperref[fig-systems]{(II)} are always satisfiable (set all coefficients to zero) and that the solutions of both constraint systems are closed under addition.
Hence for both~\hyperref[fig-systems]{(I)} and~\hyperref[fig-systems]{(II)}, the set of inequalities for which the maximization objective is satisfied is the same for every optimal solution.
The maximization objectives can be implemented by suitable linear objective functions.
Thus both constraint systems can be solved
in PTIME over the integers as we can use linear programming over the rationales and then scale rational solutions to the integers by multiplying with the least common multiple of the denominators.

%
%
For clarity, let us first discuss an intuitive interpretation of solutions of \hyperref[fig-systems]{(I)} and~\hyperref[fig-systems]{(II)}, starting with simplified variants obtained for non-probabilistic VASS. 

In a non-probabilistic VASS, a solution of~\hyperref[fig-systems]{(I)} can be interpreted as a \emph{multi-cycle}, i.e., as a collection of simple cycles $M_1,\dots, M_k$ together with weights $a_1,\ldots ,a_k$ such that the total effect of the multi-cycle $\sum_{i=1}^k a_i \cdot \mathit{effect}(M_i)$ is non-negative on every counter, where $\mathit{effect}(M_i)$ is the effect of $M_i$ on the counters. The objective of~\hyperref[fig-systems]{(I)} ensures that this multi-cycle includes as many transitions as possible, and the total effect of the multi-cycle is positive on as many counters as possible. The VASS MDP analogy of a multi-cycle is that of a multi-component, and the $M_1,\dots, M_k$ should be interpreted as components \(\by_1,\dots,\by_k \) with $\mathit{effect}(\by_i)=\Delta(\by_i)$. The objective of~\hyperref[fig-systems]{(I)} then maximizes the number of transitions included in the multi-component, and the number of counters where the effect of the multi-component is positive.


A solution of~\hyperref[fig-systems]{(II)} for non-probabilistic VASS can be interpreted as a ranking function on configurations defined by $\mathit{rank}(p\bv)=\bz(p)+\sum_{c\in\countersset} \by(c)\cdot \bv(c)$, such that the value of $\mathit{rank}$ cannot increase when moving from a configuration $p\bv$ to a configuration $q\bu$ using a transition $t=(p,\bu-\bv,q)$ (i.e., effect of every transition on \(\mathit{rank} \) is non-positive). The objective of~\hyperref[fig-systems]{(II)} ensures that as many transitions as possible decrease the value of $\mathit{rank}$, and $\mathit{rank}$  depends on as many counters as possible (i.e., \(\by(c)>0 \) for as many counters \(c\) as possible). For VASS MDPs, this interpretation changes only for the outgoing transitions $t$ of probabilistic states. Instead of considering the change of $\mathit{rank}$ caused by such $t\in \tout(p)$,  we now consider the expected change of $\mathit{rank}$ caused by executing a single step from $p$. The objective ensures that $\mathit{rank}$ depends on as many counters as possible, the value of $\mathit{rank}$ is decreased by as many outgoing transitions of non-deterministic states as possible, and the expected change of $\mathit{rank}$ caused by performing a single step is negative in as many probabilistic states as possible.

The key tool for our analysis is the following dichotomy: 

\begin{lemma}[Cited  from~\cite{AKCONCUR23}]\label{lemma:dichotomy}
	Let $\bx$ be a maximal solution to the constraint system~\hyperref[fig-systems]{(I)} and $\by,\bz$ be a maximal solution to the constraint system~\hyperref[fig-systems]{(II)}. Then, for each counter $c\in \countersset$ we have that either $\by(c)>0$ or $\sum_{t \in T} \bx(t)\cdot \bu_t(c)>0$, and for each transition $t = (p,\bu,q)\in T$ we have that
	\begin{itemize}
		\item if $p\in Q_n$ then either $\bx(t)>0$ or $\bz(q)-\bz(p)+\sum_{c\in\countersset} \bu(c)\cdot \by(c)<0$;
		\item if $p\in Q_p$ then either  $\bx(t)>0$ or \[\sum_{t'=(p,\bu',q') \in \tout(p)}P(t')\cdot \big(\bz(q')-\bz(p)+\sum_{c\in\countersset}  \bu'(c)\cdot \by(c)\big)< 0\] 
	\end{itemize}
\end{lemma}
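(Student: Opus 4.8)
\emph{Proof idea.} Since this dichotomy is the probabilistic counterpart of the one in \cite{Zuleger:VASS-polynomial} and is attributed to \cite{AKCONCUR23}, the natural route is linear-programming duality: one shows that \hyperref[fig-systems]{(I)} and \hyperref[fig-systems]{(II)} form, up to the choice of objective, a dual pair of linear programs, and that the claimed dichotomy is an instance of strict complementary slackness. First I would pass to the rationals, which is harmless: as already observed, maximal integer solutions arise from maximal rational ones by clearing denominators, and the set of objective inequalities that become strict is the same for all optimal solutions. Next record the two structural facts the argument hinges on: the feasible region of \hyperref[fig-systems]{(I)} is a polyhedral cone (all its constraints are homogeneous with right-hand side $\vec{0}$) whose solutions are closed under addition, and likewise the solutions of \hyperref[fig-systems]{(II)} are closed under addition. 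Consequently, in the maximal solution of \hyperref[fig-systems]{(I)} a quantity $\sum_{t\in T}\bx(t)\bu_t(c)$ (resp.\ $\bx(t)$) is strictly positive \emph{iff} it is strictly positive in \emph{some} feasible solution, and symmetrically for the three families of objective inequalities of \hyperref[fig-systems]{(II)}.

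The core step is then a Farkas-type theorem-of-the-alternative argument, carried out once for counters and once for transitions. Fix a counter $c$ and suppose $\sum_{t\in T}\bx(t)\bu_t(c)=0$ for \emph{every} feasible $\bx$ of \hyperref[fig-systems]{(I)} (otherwise the maximal solution already witnesses $\sum_{t\in T}\bx(t)\bu_t(c)>0$). Then the linear functional $\bx\mapsto\sum_{t\in T}\bx(t)\bu_t(c)$ is non-positive on the cone defined by \hyperref[fig-systems]{(I)}, so by Farkas' lemma it lies in the dual cone, i.e.\ it is a non-negative combination of the defining functionals: the multipliers of the counter-inequalities give a vector $\by\ge\vec{0}$, the multipliers of the (sign-unrestricted) flow-conservation equalities give $\bz$, the multipliers of the probabilistic-proportion equalities are used to eliminate the variables $\bx(t)$ for $t\in\tout(p)$, $p\in Q_p$, folding the corresponding transition terms into one per-state term, and the multipliers of $\bx(t)\ge 0$ appear as the slacks. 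Reading this identity at the indicator vectors $\mathbf{1}_t$ shows that $(\by,\bz)$ satisfies exactly the inequalities of \hyperref[fig-systems]{(II)}, and since the functional being expressed is $\bx\mapsto\sum_{t\in T}\bx(t)\bu_t(c)$ itself, the unit coefficient it contributes on counter $c$ forces $\by(c)>0$ strictly; closure under addition then transfers $\by(c)>0$ to the maximal solution of \hyperref[fig-systems]{(II)}. The transition cases are identical in spirit: assuming $\bx(t)=0$ in every feasible solution of \hyperref[fig-systems]{(I)}, the same duality produces a $(\by,\bz)$ feasible for \hyperref[fig-systems]{(II)} in which the slack of the transition-$t$ inequality (for $p\in Q_n$), resp.\ of the state-$p$ inequality (for $p\in Q_p$), is strictly positive, and closure under addition moves this to the maximal solution.

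Two points need care, and I expect the probabilistic states to be the genuine obstacle. First, normalisation of $\bz$: the Farkas certificate gives a sign-unrestricted $\bz$, whereas \hyperref[fig-systems]{(II)} demands $\bz\ge\vec{0}$; this is harmless because every inequality of \hyperref[fig-systems]{(II)} depends on $\bz$ only through differences $\bz(q)-\bz(p)$, so adding a large enough constant to all components restores $\bz\ge\vec{0}$ without affecting feasibility (strong connectivity makes the relevant differences transparently bounded). Second, and more delicate, is the bookkeeping of the constraints $\bx(t)=P(t)\sum_{t'\in\tout(p)}\bx(t')$: these are precisely what collapse the $|\tout(p)|$ separate transition inequalities of the non-probabilistic picture into the single expectation inequality $\sum_{t=(p,\bu,q)\in\tout(p)}P(t)\bigl(\bz(q)-\bz(p)+\sum_{c\in\countersset}\bu_t(c)\by(c)\bigr)\le 0$ of \hyperref[fig-systems]{(II)}, and one must check that the dual multipliers of these equalities reassemble exactly this weighted sum with weights $P(t)$, rather than some other positive weights. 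Once the dual pairing is pinned down at this level, the lemma is the strict complementary slackness theorem for linear programs applied to \hyperref[fig-systems]{(I)}/\hyperref[fig-systems]{(II)} equipped with the linear objectives realising the two maximisation goals.
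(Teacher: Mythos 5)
The paper does not prove this lemma; it is cited verbatim from \cite{AKCONCUR23} (and is itself the probabilistic generalisation of the dichotomy in \cite{Zuleger:VASS-polynomial}), so there is no in-paper proof to compare against. Evaluating your sketch on its own terms: the Farkas/strict-complementary-slackness route is the right one, and the steps you flag as delicate are indeed the only non-routine ones, but both go through.

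The counter case: if $\sum_t\bx(t)\bu_t(c)=0$ on the whole feasible cone of \hyperref[fig-systems]{(I)}, then $-\bu_\bullet(c)$ lies in the dual cone, giving multipliers $\nu\ge\vec 0$ (for the counter-effect inequalities), $\lambda\ge\vec 0$ (for $\bx\ge\vec 0$), $\mu$ (for flow conservation) and $\rho$ (for the proportion equalities $\bx(t)=P(t)\sum_{t'\in\tout(p)}\bx(t')$). Setting $\by(c')=\nu_{c'}+[c'{=}c]$ and $\bz(p)=-\mu_p+\max_s\mu_s$ gives a feasible solution of \hyperref[fig-systems]{(II)} with $\by(c)\ge 1>0$; closure under addition then forces $\by(c)>0$ in the maximal solution. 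For the probabilistic constraint of \hyperref[fig-systems]{(II)} one takes the $P(t'')$-weighted sum of the Farkas identity over $t''\in\tout(p)$: the $\rho$-contribution is $\sum_{t''}P(t'')\rho_{p,t''}-\bigl(\sum_{t''}P(t'')\bigr)\sum_{s\in\tout(p)}P(s)\rho_{p,s}=0$ because $\sum_{t''}P(t'')=1$, so only $-\sum_{t''}P(t'')\lambda_{t''}\le 0$ survives, which is exactly the per-state expectation inequality. The same cancellation gives the strict case: if $\bx(t)=0$ on the cone for $t\in\tout(p)$, $p\in Q_p$, then after weighting and summing one gets $-P(t)-\sum_{t''}P(t'')\lambda_{t''}<0$. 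Your two "points needing care" (the shift to enforce $\bz\ge\vec 0$, harmless since \hyperref[fig-systems]{(II)} only sees differences of $\bz$; and the reassembly of the per-transition proportion equalities into a single per-state inequality) are precisely where the argument could fail and you identified them correctly. I would only tighten the phrase "eliminate the variables $\bx(t)$": rather than eliminating variables, what actually happens is that the dual multipliers $\rho$ of the proportion equalities cancel identically once one forms the $P(t)$-weighted combination, which is a cleaner way to see why \hyperref[fig-systems]{(II)} has one constraint per probabilistic state rather than one per outgoing transition.
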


\subsection{Formal Description}
\label{section-complexity}

In this section we formally describe the procedure that performs the analysis as per Theorem~\ref{theorem-main}.  For the rest of the section we fix a strongly connected VASS MDP $\A = \ce{Q, (Q_n,Q_p),T,P}$ with counters \(\countersset \).

We use \(\tilde{C}_i\subseteq  \countersset \) to  denote the set of all counters with tight asymptotic estimate of \(n^i \) for \(\calC_\A[c] \), \(\tilde{C}_{k+}\subseteq  \countersset \) to denote the set of all counters with lower asymptotic estimate of \(n^k\) for \(\calC_\A[c] \), and we use \(T_i\subseteq T \) to denote the set of all transitions with lower asymptotic estimate of \(n^i \) for \(\calT_\A[t] \). We denote by \(\A_{T_i} \) the VASS MDP \(\A\) restricted only to transitions from \(T_i \). 

Let VASS MDPs \(\A_1,\A_2,\dots \) be such that each \(\A_i \) is obtained from \(\A_{T_i} \) by ``creating local copies'' for each counter \(c\in \bigcup_{j=1}^{i-1}\tilde{C}_j\) in the following way: let \(1\leq j\leq  i-1 \), \(c\in \tilde{C}_j \), and let \(B_1,\dots,B_w \) be all the MECs of \(\A_{T_{i-j}} \). Then \(\A_i \) instead of the counter \(c\) contains counters \(c_1,\dots,c_w \) such that for every transition \(t=(p,\bu,q) \) of \(\A_{T_i} \), in \(\A_i \) this transition is changed into \((p,\bu_i,q) \) where \(\bu_i(c_x)=\bu(c) \) if \(t\) is a transition of \(B_x \), and \(\bu_i(c_x)=0 \) otherwise.  Formally, \(\A_i=\ce{Q, (Q_n,Q_p),T',P'} \) is a VASS MDP with counters \(\countersset_i \) where:
\begin{itemize}
	\item \(\countersset_i=\tilde{C}_{i+}\cup \bigcup_{j=1}^{i-1} \bigcup_{\MEC; \MEC \textit{ is a MEC of }\A_{i-j}} \bigcup_{c\in \tilde{C}_j} \{c_\MEC \}\),
	\item for each \((p,\bu,q)\in T \) it holds \((p,\bu_i,q)\in T' \) and \(P'((p,\bu_i,q))=P((p,\bu,q)) \) where \(\bu_i\) is defined as \(\bu_i(c)=\bu(c) \) for all \(c\in \tilde{C}_{i+} \), and for each \(1\leq j\leq i-1 \), each \(c\in \tilde{C}_{j} \), and each MEC \(\MEC \) of \(\A_{i-j} \) it holds \(\bu_i(c_\MEC)=\bu(c) \) if \(\MEC \) contains \((p,\bu,q)\), and \(\bu_i(c_\MEC)=0\) otherwise.
 	\item \(T'\) contains no transitions other than the ones defined by the previous step.
\end{itemize}  
Note that each transition of \(\A_i \) corresponds to a transition from \(\A\) and vice versa, hence we will use transitions of \(\A \) (\(\A_i\)) to also reference the corresponding transitions from \(\A_i \) (\(\A\)). Also note  that \(\A_1=\A \).

We use \(C_i\subseteq \bigcup_{i=1}^\infty \countersset_i \) to  denote the set of all counters with tight asymptotic estimate of \(n^i \) for \(\calC_{\A_{i}}[c] \) where \(i\in \mathbb{N}\), and \(C_{k+}\subseteq \bigcup_{i=1}^\infty \countersset_i \) to denote the set of all counters with lower asymptotic estimate of \(n^k\) for \(\calC_{\A_i}[c] \) where \(i\in \mathbb{N}\). We use \(\A^{C_1,\dots,C_j}_{i,T} \) to denote the VASS MDP \(\A_i \) restricted only to counters from \(\countersset_i\cap \bigcup_{l=1}^{j} C_l \) and only to transition from \(T \).

We say that we have \emph{classified} \(\A \) up to \( k\) if for each counter \(c\) (transition \(t\)) of \(\A \)  either we have that \(n^{a} \) is a tight asymptotic estimate of \(\calC_\A[c] \) (\(\calT_\A[t] \)) where \(1\leq a\leq k; a\in \mathbb{N} \) or we have a lower asymptotic estimate  \(n^{k+1} \) for \(\calC_\A[c] \) (\(\calT_\A[t] \)). Note that  classifying \(\A\) up to \(0\) is trivial since each \(\calC_\A[c] \) (\(\calT_\A[t] \)) has a trivial lower asymptotic estimate of \(n\). 

Let us begin with the following Lemma. \begin{lemma}\label{lemma-polynomial-for-one-k}
	If we have  classified \(\A \) up to \( k-1\), then  classifying \(\A\) up to \(k\) can be done in time polynomial in \(\size{\A} \). 
\end{lemma}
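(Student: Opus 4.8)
The plan is to proceed inductively in $k$, using the dual constraint systems \hyperref[fig-systems]{(I)} and \hyperref[fig-systems]{(II)} as the algorithmic engine, exactly as in the non-probabilistic case of \cite{Zuleger:VASS-polynomial} but lifted to the VASS MDP setting via the {\cMD} decomposition sketched in Section~\ref{sec-informal}. Assume $\A$ is classified up to $k-1$. First I would build, in polynomial time, the finite family of auxiliary VASS MDPs $\A_k$ (and the restricted systems $\A^{C_1,\dots,C_j}_{k,T}$) from the data $\tilde C_1,\dots,\tilde C_{k-1}$, $T_1,\dots,T_k$ computed in previous rounds; note each $\A_k$ has size polynomial in $\size{\A}$ since the number of MECs of any $\A_{k-j}$ is at most $|Q|$ and we only create $|\countersset|\cdot|Q|$-many local copies. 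Then on each relevant $\A_k$ I would solve the two linear programs \hyperref[fig-systems]{(I)} and \hyperref[fig-systems]{(II)}; as already observed in the excerpt, both are always satisfiable, their solution sets are closed under addition, the maximization objectives are realized by a single linear objective, and hence optimal solutions are computable in PTIME over $\Q$ and then scaled to $\Z$.

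Next I would invoke Lemma~\ref{lemma:dichotomy} to read off, for every counter $c$ and transition $t$ of $\A_k$, whether it falls on the "$\bx(t)>0$ / $\Delta(\bx)(c)>0$" side or on the "rank strictly decreases" side. The rank side gives, by the standard ranking-function argument, an upper asymptotic estimate of $n^k$ for $\calC_{\A_k}[c]$ (resp.\ $\calT_{\A_k}[t]$): a nonincreasing rank of magnitude $O(n)$ that strictly decreases along the relevant moves bounds the number of such moves, and composing this over the at most $k$ levels of the nested construction yields the $n^k$ bound on the original $\A$. The multi-component side, conversely, gives a component (or multi-component) whose effect is positive on those counters / uses those transitions; pumping it on top of the already-established $n^{k-1}$ lower bounds — here is where the "probabilistic part" $\A_{\hat\by}$ versus "deterministic part" $\A_{co-\hat\by}$ split and the reset machinery from Section~\ref{sec-informal} enter, together with the fact that the set of counter vectors admitting sufficiently many resets forms a vector space computable by linear algebra — yields a lower asymptotic estimate of $n^{k+1}$ for exactly the counters/transitions on that side with high enough probability. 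Thus after this single round every counter and transition of $\A$ is either assigned a tight estimate $n^a$ with $a\le k$ or a lower estimate $n^{k+1}$, i.e.\ $\A$ is classified up to $k$, and we update $\tilde C_k$ and $T_{k+1}$ accordingly. Every step — constructing the $\A_k$, solving the LPs, the vector-space computation for resets — is polynomial, and there are only a constant number (at most $k$, itself bounded by the construction) of sub-instances to handle in this round, so the total cost is polynomial in $\size{\A}$.

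The main obstacle, and where the real work beyond \cite{Zuleger:VASS-polynomial} lies, is the lower-bound (multi-component) direction in the probabilistic setting: one must show that a solution of \hyperref[fig-systems]{(I)} on $\A_k$ really can be realized as an actual pumping scheme on runs of $\A$ that drives the relevant counters/transition counts to $n^{k+1}$ with probability tending to $1$. This requires (i) decomposing the multi-component into genuine components $\by$ (the conical-combination argument referenced as Appendix~\ref{app-lemma-decompose-multicomponents-into-components}), (ii) for each $\by$ splitting into $\hat\by$ and $co\text{-}\hat\by$, handling the zero-unbounded counters of $\hat\by$ via the $n^2$-many-step / reset analysis inherited with modifications from \cite{AKCONCUR23}, and (iii) certifying that enough resets are available — i.e.\ that the relevant counter vectors lie in the computable "resettable" vector space — using precisely the inductive hypothesis that lower-order counters already have $n^{k-1}$ lower bounds to supply the resets. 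Getting the probability bounds to survive the interleaving of the constantly many parallel sub-computations, without the failure probabilities accumulating, is the delicate quantitative point; the $\epsilon$-slack built into the definition of asymptotic estimates (Definition~\ref{def-estimates}) and Theorem~\ref{observation-f-p-f} is what makes it go through. The upper-bound direction, by contrast, is comparatively routine once the nested construction $\A_k$ is set up correctly, since it is a direct ranking-function composition.
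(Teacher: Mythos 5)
Your plan captures the broad architecture but misses the step that actually makes the classification round work, and it has a second, independent polynomiality error.

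\textbf{The missing $T_k'$ pre-processing step.} You propose to solve \hyperref[fig-systems]{(I)} and \hyperref[fig-systems]{(II)} directly on $\A_k$ (or $\A^{C_1,\dots,C_j}_{k,T}$), read off the dichotomy via Lemma~\ref{lemma:dichotomy}, and then realize the multi-component side as an $n^{k+1}$ pumping scheme with the reset machinery. This fails, because the reset machinery only pumps a component $\by$ of $\A_k$ for $n^{k+1}$ iterations when its probabilistic part $\hat\by$ is resettable — i.e.\ $\support^{C_1,\dots,C_l}(\hat\by)\subseteq\Delta^{C_1,\dots,C_l}(X_{k-l,T_{k-l+1}})$ for all $l\le\lfloor k/2\rfloor$. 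If that fails for some $l$, Lemma~\ref{lemma-D-not-in-Rl-implies-B-zero-unbounded-rankl-dsada} shows $\hat\by$ is not zero-bounded on the rank $rank^{C_1,\dots,C_l}_{k-l,T_{k-l+1}}$, and then Lemma~\ref{lemma-hatB-zero-unbounded-rankl-give-upper-estimate-nk} gives an \emph{upper} estimate of $n^k$ for every transition in $\by$. So a transition $t$ that your solution of \hyperref[fig-systems]{(I)} on $\A_k$ assigns $\bx(t)>0$ can be exactly one you must \emph{not} claim an $n^{k+1}$ lower bound for. The paper's proof therefore runs Lemma~\ref{lemma-hatB-zero-unbounded-rankl-give-upper-estimate-nk} \emph{first}, collects $T_k'=\bigcup_{l=1}^{\lfloor k/2\rfloor}R^{C_1,\dots,C_l}_{k-l}$, and only then solves \hyperref[fig-systems]{(I)}/\hyperref[fig-systems]{(II)} on the restriction $\A_{k,\hat T_{k+1}}$ with $\hat T_{k+1}=T_k\setminus T_k'$. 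On that restricted system every component $\by$ has $\hat\by$ resettable (Appendix~\ref{app-sec-lowerest-nkplus1-hatby}), which is exactly what makes the multi-component side of the dichotomy sound. This is the genuinely new algorithmic step relative to \cite{Zuleger:VASS-polynomial}, not the ``delicate quantitative point'' you defer to the end.

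\textbf{The polynomiality claim is wrong as stated.} You assert that there are ``only a constant number (at most $k$, itself bounded by the construction) of sub-instances''. But $k$ can be as large as $2^d\cdot 3^{|T|}$, exponential in $\size{\A}$, so that bound is useless; and even for a single $k$ you need $R^{C_1,\dots,C_l}_{k-l}$ for all $1\le l\le\lfloor k/2\rfloor$, again up to exponentially many $l$. The paper must — and does — argue separately that $\A^{C_1,\dots,C_l}_{k-l,T_{k-l+1}}$ coincides with $\A^{C_1,\dots,C_{l-1}}_{k-l-1,T_{k-l}}$ except for $l$ in a polynomially-sized set built from $\Aset=\{a<k\mid C_a\neq\emptyset\}$ and $\Bset=\{b<k\mid T_b\setminus T_{b+1}\neq\emptyset\}$, so that only polynomially many distinct constraint systems need to be solved per round. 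Without this argument, your proof does not establish polynomial time. (A smaller inaccuracy: the upper bound from Lemma~\ref{lemma-linear-upper-bound} is $n^\tau$ for the case-defined $\tau\le k$, not always $n^k$; this is needed later to control when $\Aset$ and $\Bset$ actually grow.)
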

	First step towards  classifying \(\A \) up to \(k\) is the following Lemma.
	\begin{lemma}\label{lemma-hatB-zero-unbounded-rankl-give-upper-estimate-nk}
			Let \(1\leq l \leq  \lfloor\frac{k}{2}\rfloor \). Let  \(\by^{C_1,\dots,C_l}_{k-l,T_{k-l+1}},\bz^{C_1,\dots,C_l}_{k-l,T_{k-l+1}} \) be   a maximal solution of \hyperref[fig-systems]{(II)} for \(\A^{C_1,\dots,C_l}_{k-l,T_{k-l+1}} \), and let \(rank^{C_1,\dots,C_l}_{k-l,T_{k-l+1}} \) be the resulting ranking function defined by \(\by^{C_1,\dots,C_l}_{k-l,T_{k-l+1}},\bz^{C_1,\dots,C_l}_{k-l,T_{k-l+1}} \). There exists a set of transitions \(R^{C_1,\dots,C_l}_{k-l} \) of \(\A^{C_1,\dots,C_l}_{k-l,T_{k-l+1}} \) such that both of these hold:
		\begin{itemize}
			\item for each component  \(\by \) of \(\A_{k-l,T_{k-l+1}} \) it holds that \(\hat{\by} \) is zero-bounded on \(rank^{C_1,\dots,C_l}_{k-l,T_{k-l+1}} \) iff \( R^{C_1,\dots,C_l}_{k-l}\cap \{t\mid \by(t)>0 \}= \emptyset \);
			\item \(\calT_\A[t] \) has an upper asymptotic estimate of \(n^k\) for each \(t\in R^{C_1,\dots,C_l}_{k-l}\).	
		\end{itemize}
		Furthermore, assuming we have a classification of \(\A\) up to \(k-1\),  \(R_{k-l}\) can be computed in time polynomial in \(\size{\A} \).
	\end{lemma}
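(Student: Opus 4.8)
The plan is to define $R^{C_1,\dots,C_l}_{k-l}$ directly from the maximal solution of~\hyperref[fig-systems]{(II)} and then establish the two bullet points separately. Recall that a ranking function $\mathit{rank}$ obtained from a maximal solution $\by,\bz$ of~\hyperref[fig-systems]{(II)} has non-positive expected effect along every transition (in the probabilistic-state sense), and strictly negative expected effect on a maximal set of transitions/states. The natural candidate for $R^{C_1,\dots,C_l}_{k-l}$ is the set of transitions $t$ of $\A^{C_1,\dots,C_l}_{k-l,T_{k-l+1}}$ that witness a strict decrease of $\mathit{rank}^{C_1,\dots,C_l}_{k-l,T_{k-l+1}}$ (where for an outgoing transition of a probabilistic state $p$ ``witnessing a strict decrease'' means that $p$ itself has strictly negative expected change of $\mathit{rank}$, and for a nondeterministic state it is the usual $\bz(q)-\bz(p)+\sum_c \bu(c)\by(c)<0$). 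By Lemma~\ref{lemma:dichotomy} applied to $\A^{C_1,\dots,C_l}_{k-l,T_{k-l+1}}$, a transition is in $R^{C_1,\dots,C_l}_{k-l}$ iff it is \emph{not} in the support of the maximal solution $\bx$ of~\hyperref[fig-systems]{(I)} for that same VASS MDP; this immediately gives the polynomial-time computability once we have the classification of $\A$ up to $k-1$ (the VASS MDPs $\A_{k-l,T_{k-l+1}}$ and their counter-restricted versions are computable from that classification, and~\hyperref[fig-systems]{(I)},~\hyperref[fig-systems]{(II)} are solvable in PTIME).

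For the first bullet I would argue the equivalence via the characterization of $\mathit{rank}$-decreasing transitions as exactly those outside $\support(\bx)$. For a component $\by$ of $\A_{k-l,T_{k-l+1}}$, the vector $\bx$ restricted to $\{t : \by(t)>0\}$ (suitably scaled) is itself a solution of~\hyperref[fig-systems]{(I)} centered in $p_\by$, so if $\hat\by$ is zero-unbounded on $\mathit{rank}^{C_1,\dots,C_l}_{k-l,T_{k-l+1}}$ then the expected effect of $\by$ on $\mathit{rank}$ restricted to counters $C_1,\dots,C_l$ is $0$ yet the flow of $\by$ realizes a genuine zero-unbounded behavior, forcing $\by$ to use a $\mathit{rank}$-decreasing transition — otherwise the ranking would be a non-negative supermartingale that is constant on the cycle, contradicting zero-unboundedness. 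Conversely if $\hat\by$ is zero-bounded on $\mathit{rank}^{C_1,\dots,C_l}_{k-l,T_{k-l+1}}$, then along every iteration of $\by$ the value of $\mathit{rank}$ returns to its start with probability one and never strictly decreases in expectation; since every transition used by $\by$ has non-positive expected $\mathit{rank}$-effect and their flow-weighted sum is $0$, no transition with strictly negative expected effect can lie in $\{t : \by(t)>0\}$, i.e.\ $R^{C_1,\dots,C_l}_{k-l}\cap\{t:\by(t)>0\}=\emptyset$. I expect this direction to require a careful use of the component/multi-component decomposition (Appendix~\ref{app-lemma-decompose-multicomponents-into-components}) to pass between flows $\bx$ and components $\by$.

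For the second bullet — that $\calT_\A[t]$ has an upper asymptotic estimate of $n^k$ for every $t\in R^{C_1,\dots,C_l}_{k-l}$ — the idea is the standard ranking argument of~\cite{Zuleger:VASS-polynomial,AKCONCUR23} lifted through the two-part (``probabilistic''/``deterministic'') split described in Section~\ref{sec-informal}, and using the induction hypothesis that $\A$ is classified up to $k-1$. Concretely: $\mathit{rank}^{C_1,\dots,C_l}_{k-l,T_{k-l+1}}$ is built over counters in $C_1,\dots,C_l$, each of which has a tight asymptotic estimate of at most $n^l$ with $l\le\lfloor k/2\rfloor$, so the initial value of $\mathit{rank}$ on a configuration of size $n$ is $O(n^{l}\cdot n^{?})$ — more precisely, one shows the initial $\mathit{rank}$ value is $O(n^{\lfloor k/2\rfloor})$ after accounting for how often each counter is pumped along the relevant sub-MDPs $\A_{k-l}$; combined with the fact that the $c$-counter complexities feeding $\mathit{rank}$ are polynomially bounded and that $\mathit{rank}$ drops by at least a constant (in expectation) on each use of a transition in $R^{C_1,\dots,C_l}_{k-l}$, an optional-stopping / supermartingale argument bounds the expected number of such uses, and hence (after the $\epsilon$-ratio argument of Definition~\ref{def-estimates} and Theorem~\ref{observation-f-p-f}) the $n^k$ upper asymptotic estimate. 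The main obstacle — and the place where the novelty over~\cite{Zuleger:VASS-polynomial} sits — is controlling the \emph{probabilistic} part: the ranking $\mathit{rank}^{C_1,\dots,C_l}_{k-l,T_{k-l+1}}$ is only guaranteed to be a supermartingale, not to strictly decrease deterministically, so I must turn the ``$\mathit{rank}$ decreases in expectation on transitions in $R$'' statement into a high-probability polynomial bound on the number of occurrences of such transitions before $\mathit{rank}$ (equivalently some counter in $C_1,\dots,C_l$) would go negative, which is exactly where the reset machinery and the zero-bounded/zero-unbounded dichotomy for $\hat\by$ must be invoked to rule out the otherwise-possible super-polynomial blow-up.
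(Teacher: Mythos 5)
Your definition of $R^{C_1,\dots,C_l}_{k-l}$ is wrong, and this is a real gap, not a cosmetic one. You take $R$ to be the set of transitions that ``witness a strict decrease'' in the constraint-system~\hyperref[fig-systems]{(II)} sense --- equivalently, by Lemma~\ref{lemma:dichotomy}, the complement of the support of the maximal solution of~\hyperref[fig-systems]{(I)}. The paper instead defines $R^{C_1,\dots,C_l}_{k-l}$ as the set of transitions whose \emph{individual} effect on $\mathit{rank}^{C_1,\dots,C_l}_{k-l,T_{k-l+1}}$ is nonzero. These sets differ exactly at a probabilistic state $p\in Q_p$ where $\sum_{t'\in\tout(p)}P(t')\cdot\RankEff(t')=0$ but the individual effects $\RankEff(t')$ are not all $0$. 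In that case none of $\tout(p)$ are in your $R$, but the transitions with $\RankEff(t')\neq 0$ are in the paper's $R$. This breaks your first bullet: take a component $\by$ of $\A_{k-l,T_{k-l+1}}$ whose only rank-affecting transitions all sit at such a state $p$. Since $\by$ uses all of $\tout(p)$ (the EC closure condition), the realized rank-effect of one iteration of $\hat\by$ is genuinely stochastic, so $\hat\by$ is zero-\emph{un}bounded on $\mathit{rank}$; yet your $R\cap\{t\mid\by(t)>0\}=\emptyset$, contradicting the claimed ``iff.'' The paper's nonzero-effect criterion catches exactly this case, which is the whole point of the lemma --- it isolates components whose rank-effect has variance, not just drift.

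Your sketch for the second bullet inherits the same confusion. The phrase ``$\mathit{rank}$ drops by at least a constant (in expectation) on each use of a transition in $R^{C_1,\dots,C_l}_{k-l}$'' is only true for your $R$; for the paper's $R$ a transition can have positive individual rank-effect while its probabilistic state has expected effect exactly $0$, so your optional-stopping bound on the number of $R$-uses does not go through. The paper's actual route is Lemma~\ref{lemma-I-dont-lknooow-but-it-is-somehting}: for each component $\by$ with $\hat\by$ \emph{not} zero-bounded on $\mathit{rank}$, the pointing complexity $\calP_{\tilde{\A}}[\M_\by]$ has upper asymptotic estimate $n^{\max(s+r,2r)}=n^k$. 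This is proven by cutting the computation into segments that either move $\mathit{rank}$ by $\pm n^{r+\epsilon'}$ using the pointings at the relevant MECs or run too long, using the zero-unboundedness to show each segment is a near-unbiased martingale (hence $\approx\frac{1}{2}$ probability of being decreasing), applying a Chernoff bound to get a linear-in-segments number of decreasing segments with overwhelming probability, and contradicting the fact that $\mathit{rank}$ can only be increased by roughly $n^{s+\epsilon'}$ in total (from transitions outside these MECs, controlled by the classification up to $k-1$). The transition bound then follows from $\calT_\A[t]\leq\sum_{\by:\by(t)>0}\calP_{\tilde{\A}}[\M_\by]$. This variance/segmenting argument is exactly what is needed to handle the zero-unbounded case, which your supermartingale sketch cannot. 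Your intuition about the role of zero-boundedness and the main obstacle is on the right track, but the incorrect $R$ and the resulting incorrect ``strict decrease in expectation'' premise make the proposed proof unsalvageable as stated.
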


	Proof of Lemma~\ref{lemma-hatB-zero-unbounded-rankl-give-upper-estimate-nk} can be found in Appendix~\ref{section-proof-lemma-hatB-zero-unbounded-rankl-give-upper-estimate-nk}. The main idea behind this proof is that for each component  \(\by \) of \(\A^{C_1,\dots,C_l}_{k-l,T_{k-l+1}} \) we show  that \(\hat{\by} \) is zero-bounded on \(rank^{C_1,\dots,C_l}_{k-l,T_{k-l+1}}\) iff the effect of every transition of \(\by\) on \(rank^{C_1,\dots,C_l}_{k-l,T_{k-l+1}}\) is \(0\). For the second part, we show that \(\hat{\by} \) is not zero-bounded on \(rank^{C_1,\dots,C_l}_{k-l,T_{k-l+1}} \) iff the number of times \(\hat{\by}\) can be iterated has an upper asymptotic estimate of \(n^k \). Then we   divide the computations on \(\A\) into segments such that the transitions from \(R^{C_1,\dots,C_l}_{k-l}\) in  each segment either decrease or increase \(rank^{C_1,\dots,C_l}_{k-l,T_{k-l+1}} \) by it's maximal possible value. We then show that if 
	it were possible to iterate any component \(\by \) with  \( R^{C_1,\dots,C_l}_{k-l}\cap \{t\mid \by(t)>0 \}\neq \emptyset \) at least  \(n^{k+\epsilon} \) times in \(\A\) with sufficiently large probability, then the probability of all computations that contain ``too many'' segments that decrease \(rank^{C_1,\dots,C_l}_{k-l,T_{k-l+1}} \) is larger than the maximal possible probability of all such computations, obtained by a bound on how much \(rank^{C_1,\dots,C_l}_{k-l,T_{k-l+1}} \) can be increased in total with any non-negligible probability in \(\A\).\qed
	
	  Towards  classifying \(\A \) up to \(k\) we first compute for each \(1\leq l \leq \lfloor \frac{k}{2}\rfloor \) the sets \(R^{C_1,\dots,C_l}_{k-l}  \) using Lemma~\ref{lemma-hatB-zero-unbounded-rankl-give-upper-estimate-nk}, thus finding the set \(T_k'=\bigcup_{l=1}^{\lfloor \frac{k}{2}\rfloor} R^{C_1,\dots,C_l}_{k-l} \) of all transitions \(t\) for which Lemma~\ref{lemma-hatB-zero-unbounded-rankl-give-upper-estimate-nk} gives an upper asymptotic estimate of \(n^k \) for \(\calT_\A[t] \).  
	  
	Note that here \(k\) can be arbitrarily large, thus naively computing the sets \(R^{C_1,\dots,C_l}_{k-l} \) for  every \(1\leq l\leq \lfloor\frac{k}{2}\rfloor\)  may not finish in polynomial time. However, it suffices to compute \(R^{C_1,\dots,C_l}_{k-l}  \) only for polynomially many \(l \), as there are only polynomially many different VASS MDPs \(\A^{C_1,\dots,C_l}_{k-l,T_{k-l+1}} \). To see this, consider the sets \(\Aset=\{\aelement< k\mid C_\aelement\neq \emptyset \} \) and \(\Bset=\{\belement< k\mid T_b\setminus T_{b+1}\neq \emptyset  \} \). Notice that \(\A^{C_1,\dots,C_{l}}_{k-l,T_{k-l+1}}\) and \(\A^{C_1,\dots,C_{l-1}}_{k-l-1,T_{k-l}}\) differ only if one of these three holds:	\begin{itemize}
		\item  \(\bigcup_{i=1}^lC_i \neq \bigcup_{i=1}^{l-1}C_i  \) which is equivalent to \(l\in\Aset \);
		\item  or there exists a transition \(t\in T_{k-l}\setminus T_{k-l+1}\) which is equivalent to \(k-l\in\Bset \);
		\item or  there exists a counter \(c\in C_{a}\subseteq \bigcup_{i=1}^lC_i \) such that there exists a transition \(t\in T_{k-l-a}\setminus T_{k-l-a+1} \), which is equivalent to  \(k-l-a\in\Bset \) for some  \(a\in \Aset;a\leq l \).
	\end{itemize}  Thus \(\A^{C_1, \dots,  C_l}_{k-l,T_{k-l+1}}\) and \(\A^{C_1, \dots, C_l}_{k-l-1,T_{k-l}}\) can differ only for \(l\in X=\Aset\cup \{k-b\mid b\in \Bset \} \cup \{k-b-a \mid a\in \Aset, b\in \Bset \} \). Since both \(\Aset\) and \(\Bset\) are polynomially large, \(X\) is also polynomially large. Thus we can simply compute the set \(X\) and then compute \(R^{C_1,\dots,C_l}_{k-l}\) only for \(1\leq l \leq \lfloor \frac{k}{2}\rfloor \) such that \(l\in X\). This allows us to compute the set \(T_k'=\bigcup_{l=1}^{\lfloor \frac{k}{2}\rfloor} R^{C_1,\dots,C_l}_{k-l} \) in polynomial time.
	
	
	
	After this, we compute maximal solutions \(\bx_k,\by_k\), and \(\bz_k \) for the systems \hyperref[fig-systems]{(I)} and \hyperref[fig-systems]{(II)} for \(\A_{k,\hat{T}_{k+1}} \) where \(\hat{T}_{k+1}=T_{k}\setminus T_k' \). These can be computed in polynomial time as discussed in Section~\ref{sec-systems}. 
	\begin{remark}
		Note that \(\A_{k,\hat{T}_{k+1}} \) contains no component that is zero-unbounded on any counter \(c\in \bigcup_{i=1}^{\lfloor\frac{k}{2}\rfloor} C_i \). While this does not necessarily  mean that there is no component of \(\A_{k,\hat{T}_{k+1}} \) that is zero-unbounded on some counter, this does imply that for every component \(\by\) of \(\A_{k,\hat{T}_{k+1}} \), \(\hat{\by} \)  can  be iterated at least \( n^{k+1} \) times with high probability (see Appendix~\ref{app-sec-lowerest-nkplus1-hatby}). This is sufficient to apply a modification of the method used in \cite{Zuleger:VASS-polynomial} to obtain the remaining upper/lower asymptotic estimates towards  classifying \(\A\) up to \(k\). 
		\end{remark}
	
	We obtain the additional upper asymptotic estimates of \(n^k\) from the following Lemma.\begin{lemma}\label{lemma-linear-upper-bound}\label{lemma-upper-bound-T}
		Let \[\variablet=    \begin{cases}
			1 & \text{if } k=1  \\
			k & \text{if } k>1 \text{ and }T_k'\neq \emptyset \\
			\max\{a+b\mid a\in \Aset,b\in \Bset; a+b\leq k \} & \text{if } k>1 \text{ and }T_k'= \emptyset
		\end{cases}\]
	
	Then for every counter $c\in \countersset$ such that $\by_k(c)>0$ it holds that \(n^\variablet\) is an upper asymptotic estimate of \( \calC_\A[c]\). Furthermore each transition \(t=(p,\bu,q) \)  of \(\A_{k,\hat{T}_{k+1}} \) has an upper asymptotic estimate of \(n^\variablet \) for \(\calT_\A[t] \) if one of the following holds:
		\begin{itemize}
			\item  \(p\in Q_n \) and \(\bz_k(q)-\bz_k(p)+\sum_{c\in\countersset} \bu(c)\cdot \by_k(c)<0\),
			\item  \(p\in Q_p \) and \(\sum_{t' = (p',\bu',q') \in \tout(p)}P(t')\cdot \big(\bz_k(q')-\bz_k(p')+\sum_{c\in\countersset}  \bu'(c)\cdot \by_k(c)\big)< 0\).
		\end{itemize}  
	\end{lemma}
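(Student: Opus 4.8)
The plan is to adapt the ranking-function argument that underlies Constraint system~\hyperref[fig-systems]{(II)} to the probabilistic setting, working on the modified VASS MDP $\A_{k,\hat{T}_{k+1}}$. Recall that a solution $\by_k,\bz_k$ of~\hyperref[fig-systems]{(II)} induces a ranking function $\mathit{rank}(p\bv)=\bz_k(p)+\sum_{c}\by_k(c)\cdot\bv(c)$ whose expected one-step change is non-positive everywhere, and strictly negative at the non-deterministic transitions / probabilistic states singled out by the objective. First I would observe that along any computation of $\A_{k,\hat{T}_{k+1}}$ the process $M_i = \mathit{rank}(p_i\bv_i)$ is a supermartingale, and that each ``strictly decreasing'' transition (of the two kinds listed) contributes a fixed negative drift $-\delta<0$ whenever it is taken. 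From an initial configuration $p\vec{n}$ the rank starts at $O(n)$ (the counters on which $\by_k$ is supported hold value $n$, and $\bz_k$ contributes a constant), and $\mathit{rank}$ stays non-negative until termination, because termination happens when some counter goes negative and $\by_k\ge\vec 0$ — here one must be slightly careful, since a counter could go negative while $\mathit{rank}$ is still positive; but that only helps the upper bound, as termination can only come sooner. So by optional-stopping / Azuma-type concentration, with probability $\to 1$ as $n\to\infty$ the number of ``decreasing'' steps taken before $\mathit{rank}$ would have to go negative is $O(n)$, hence $O(n^{1+\epsilon})$, giving the linear ($\variablet=k=1$ or the ``$T_k'\neq\emptyset$'') case directly: each such transition $t$ has $\calT_\A[t]\in \calO(n)$, i.e.\ an upper asymptotic estimate of $n^1=n^\variablet$.

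The real work is the general $k>1$ case, where the claimed bound is $n^\variablet$ with $\variablet\le k$ rather than $n$. Here I would not argue about $\A$ directly but about $\A_{k,\hat{T}_{k+1}}$ together with the inductive classification: by hypothesis $\A$ is classified up to $k-1$, so every counter $c$ already carries either a tight estimate $n^a$ with $a\le k-1$ or a lower estimate $n^k$, and likewise for transitions via the sets $T_i$, $T_k'$. The point of passing to $\A_{k,\hat T_{k+1}}$ (restricting to $T_k=T_{k-1}\setminus T_k'\cdots$ — transitions not yet shown to be below $n^k$ and not handled by Lemma~\ref{lemma-hatB-zero-unbounded-rankl-give-upper-estimate-nk}) and to the ``local-copy'' counters $C_1,\dots$ is precisely so that, by the Remark, $\A_{k,\hat T_{k+1}}$ has no component zero-unbounded on any counter in $\bigcup_{i\le\lfloor k/2\rfloor}C_i$, hence every $\hat\by$ is iterable $\ge n^{k+1}$ times with high probability. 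I would then run the $\mathit{rank}$-argument relative to this structure: a ``decreasing'' transition $t$ can be taken at most as often as $\mathit{rank}$ can be replenished, and $\mathit{rank}$ is replenished only by iterating components whose effect on the $\by_k$-supported counters is positive. Those counters, by the classification, either have complexity $\le n^{k-1}$ (if in some $C_a$, $a\le k-1$) or the replenishing components use transitions from $T_b\setminus T_{b+1}$ with $b\in\Bset$, and each such transition fires at most $\calO(n^{b})$ times; combining with the budget available on a pumped counter $c\in C_a$ (value $\calO(n^{a})$) one gets that $\mathit{rank}$ can be increased in total only by $\calO(n^{\max\{a+b\}})=\calO(n^\variablet)$ with non-negligible probability. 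Dividing by the fixed drift $\delta$, every decreasing transition fires $\calO(n^\variablet)$ times, and every $\by_k$-supported counter stays $\calO(n^\variablet)$ throughout; that is the two conclusions of the Lemma.

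The main obstacle I expect is the bookkeeping in the last step — making rigorous the statement ``$\mathit{rank}$ can only be increased by $\calO(n^\variablet)$ in total with non-negligible probability'' — which is exactly where one must synthesize the supermartingale/concentration estimate for the probabilistic part with the combinatorial $\cMD$-decomposition accounting of how much each ``deterministic'' part (the $co\text{-}\hat\by$'s) and each pumped counter can contribute. In the non-probabilistic world this is the computation in \cite{Zuleger:VASS-polynomial} bounding increases of the rank by previously classified quantities; the subtlety here is that increases happen both along deterministic sub-computations and, with small probability, along zero-unbounded probabilistic fluctuations, so one needs the $\liminf/\limsup$ formulation of asymptotic estimates (Definition~\ref{def-estimates}) and a union bound over the constantly many components of the $\cMD$-decomposition to absorb all the $\epsilon$'s. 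I would handle the termination subtlety (counter negative while $\mathit{rank}>0$) by noting it can only shorten runs, and I would defer the detailed martingale concentration to the appendix, citing the analogous estimate used in the proof of Lemma~\ref{lemma-hatB-zero-unbounded-rankl-give-upper-estimate-nk}.
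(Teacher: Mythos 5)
Your high-level plan matches the paper's: extend the ranking function from constraint system (II) to a supermartingale, use the inductive classification of $\A$ up to $k-1$ to bound how much the rank can be replenished by already-classified transitions, and conclude $n^\variable$ bounds for counters by optional stopping and for transitions by bounding how many strictly-decreasing steps can fit in that budget. Your arithmetic for the $T_k'=\emptyset$, $k>1$ case — each transition in $T_b\setminus T_{b+1}$ fires $\calO(n^b)$ times and each firing shifts rank by $\calO(n^a)$ (because it moves between local copies of a counter in $C_a$), giving total replenishment $\calO(n^{\max\{a+b\}})$ — is exactly the accounting the paper carries out conditioned on the high-probability event $E_1$.

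There are two concrete gaps. First, you fold the $k>1,\ T_k'\neq\emptyset$ case into the ``linear'' case, claiming it gives $\calT_\A[t]\in\calO(n)$; this is wrong, since $\variable=k$ there, not $1$. The mechanism is the same accounting as above, but the transitions in $T_k'$ are only known to fire $\calO(n^{k+\epsilon})$ times (Lemma~\ref{lemma-hatB-zero-unbounded-rankl-give-upper-estimate-nk} gives only an $n^k$ upper estimate), each shifting rank by a constant, so the replenishment budget is $\calO(n^{k})$, not $\calO(n)$ — hence $\variable=k$. Your proposal as written would undershoot this case.

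Second, the ``fixed negative drift $-\delta$ whenever $t$ is taken, divide the budget by $\delta$'' argument is too crude for transitions. For $p\in Q_p$ the strictly negative drift is attached to \emph{visiting} $p$, not to taking the particular $t\in\tout(p)$, and even then the realized rank change fluctuates with a zero-mean component, so the number of visits to $p$ before the supermartingale exhausts its budget is not controlled by the deterministic ratio budget$/\delta$ — one needs a concentration argument, not just an expectation computation. You gesture at Azuma, but the paper takes a cleaner route: it packages the rank process into an auxiliary one-dimensional VASS MDP $\A'_R$ whose only counter tracks $R'_i$, observes that every MEC of $\A'_{R,\sigma}$ containing $t$ is decreasing and none is increasing, and then cites the one-dimensional classification of \cite{AKCONCUR23} to get a tight linear upper estimate on $\calT_{\A'_R}[t]$. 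Rescaling by the initial counter value $\calO(n^{\variable+\epsilon})$ yields the $n^\variable$ bound. This reduction is the missing piece that makes the transition bound rigorous; plain optional stopping alone, which suffices for the counter bound, does not give it.
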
	
	
%
%
	
	The proof of Lemma~\ref{lemma-linear-upper-bound} can be found in the Appendix~\ref{app-lemma-linear-upper-bound}. The idea is that we can design a supermartingale that is upper bounded by \(n^{\variablet+\epsilon} \), and which with very high probability upper bounds the ranking  function \(rank_k \) defined by \(\by_k,\bz_k \) and extended onto computations of \(\A\). Thus obtaining an upper asymptotic estimate of \(n^\variablet\) for counters \(rank_k \) considers, as well as for transitions that strictly decrease the supermartingale on average.  \qed
	
	We then obtain the remaining lower asymptotic estimates of \(n^{k+1} \) from the following Lemma.	
	\begin{lemma}\label{lemma-something-something-k+1-estiamtes}
		For every transition \(t\) with \(\bx_k(t)>0 \) and every counter \(c\in \countersset\) with \(\sum_{t' \in T} \bx_k(t')\cdot \bu_{t'}(c)>0 \) it holds that \(n^{k+1} \) is a lower asymptotic estimate of \(\calT_\A[t] \) and \(\calC_\A[c] \).
	\end{lemma}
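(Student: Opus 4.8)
The plan is to establish the lower asymptotic estimate $n^{k+1}$ for the transitions $t$ with $\bx_k(t)>0$ and counters $c$ with $(\Delta(\bx_k))(c)>0$ by exhibiting, for every $\epsilon>0$ and every sufficiently large $n$, a strategy $\sigma$ under which with probability at least $1-\epsilon$ the computation performs at least $\Omega(n^{k+1})$ occurrences of $t$ (resp. pumps $c$ to at least $\Omega(n^{k+1})$) before termination. The multi-component $\bx_k$, being an optimal solution of \hyperref[fig-systems]{(I)} for $\A_{k,\hat T_{k+1}}$, decomposes (via the decomposition of multi-components into components, Appendix~\ref{app-lemma-decompose-multicomponents-into-components}) into a conical combination of components $\by_1,\dots,\by_m$ of $\A_{k,\hat T_{k+1}}$, and the Remark preceding the statement guarantees that none of these is zero-unbounded on any counter in $\bigcup_{i\le\lfloor k/2\rfloor}C_i$, so that each $\hat{\by_j}$ can be iterated at least $n^{k+1}$ times with high probability (Appendix~\ref{app-sec-lowerest-nkplus1-hatby}).

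The first step is to set up the simulation: using the assumption that $\A$ has been classified up to $k-1$, every counter $c$ is either known to have a tight estimate $n^a$ for some $a\le k-1$, or has a lower estimate $n^k$; the ``local copy'' construction behind $\A_k$ ensures that the counters the components of $\bx_k$ may decrease are exactly those bounded (in the relevant MEC) by $n^{\lfloor k/2\rfloor}$, which upon iterating a component $n^{k+1}$ times can be decreased by at most $O(n^{k+1})$, and these counters admit resets (as developed in the informal description and in Lemma~\ref{lemma-hatB-zero-unbounded-rankl-give-upper-estimate-nk}), so that the ``probabilistic'' parts $\A_{\hat{\by_j}}$ never block the simulation. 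The second step is to combine the iterations of the individual components according to the weights of $\bx_k$ into a joint schedule realizing the multi-component $n^{k+1}$ times: since $\Delta(\bx_k)\ge\vec0$ and $\Delta(\bx_k)(c)>0$ on the counters in question, $n^{k+1}$ iterations of the multi-component keep all counters non-negative (they start at value $n$ and every large counter has a tight estimate $\ge n^k$, so the $O(n^{k+1})$ potential decreases on small counters are absorbed by resets, while the large ones are only helped), and pump each targeted $c$ by $\Omega(n^{k+1})$, each targeted $t$ being fired $\Omega(n^{k+1})$ times. The third step is the probability estimate: the simulation/reset machinery only fails with probability that tends to $0$, so a union bound over the constantly many components gives success probability at least $1-\epsilon$ for all large $n$, and replacing $n$ by $n^{1-\epsilon'}$ (which only shrinks the guaranteed bound to $\Omega(n^{(1-\epsilon')(k+1)})$, still $\ge n^{k+1-\epsilon}$ for suitable $\epsilon'$) yields exactly $\liminf_{n\to\infty}\prob^\sigma_{p\vec n}[\F\ge n^{k+1-\epsilon}]=1$, which is the definition of a lower asymptotic estimate.

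The main obstacle I expect is the bookkeeping that keeps all counters non-negative \emph{throughout} the $\Omega(n^{k+1})$-step simulation, not merely at its end: the deterministic parts $\A_{co-\hat{\by_j}}$ and the probabilistic parts $\A_{\hat{\by_j}}$ must be interleaved so that each ``probabilistic'' excursion is bracketed by a reset before the corresponding small counter hits $0$ (this is where the $n^2$ tight estimate of $\A_{\hat{\by}}$'s length and the ``roughly $n^k$ resets'' counting from the informal description enter), and simultaneously the counters that have already been classified with a lower-than-$n^{k+1}$ tight estimate $n^a$ must not be driven negative prematurely — which is exactly what the MEC-indexed local-copy construction of $\A_k$ was designed to control, so the argument must invoke that construction carefully rather than treating $\A$ and $\A_k$ interchangeably. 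A secondary subtlety is ensuring the strategy $\sigma$ is well-defined as a genuine strategy of $\A$ (a scheduler choosing which of the constantly many Markov-chain computations steps next, as in the {\cMD} decomposition of \cite{AKCONCUR23}) rather than merely a recipe on the $\A_\by$'s; this is routine given the decomposition but must be stated. I would isolate the ``simulate one component $n^{k+1}$ times with high probability while respecting all counter bounds'' claim as the quantitative core (drawing on Appendix~\ref{app-sec-lowerest-nkplus1-hatby}) and then present the multi-component assembly and the union bound as comparatively short wrap-up steps.
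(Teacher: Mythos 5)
Your high-level plan (decompose $\bx_k$ into components, drive the "probabilistic" parts $\hat{\by}$ for long via the reset machinery, recombine according to the weights, union-bound the failure probabilities) matches the shape of the paper's proof, and you point to the right supporting pieces (Lemma~\ref{lemma-decompose-multicomponents-into-components-mainbody}, Appendix~\ref{app-sec-lowerest-nkplus1-hatby}). But there is a genuine gap at the step you call the "second step": you say one can "combine the iterations of the individual components according to the weights of $\bx_k$ into a joint schedule realizing the multi-component $n^{k+1}$ times," and then simply assert that $\Delta(\bx_k)\ge\vec0$ plus resets keeps everything non-negative. That argument does not go through as written. The obstacle you flag in your last paragraph as a "bookkeeping" issue is in fact the crux, and you do not supply the construction that resolves it.

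The difficulty is quantitative and structural. The components $\by$ into which $\bx_k$ decomposes sit in various MECs $B_1,\dots,B_w$ of $\A_{k,\hat T_{k+1}}^{\bx_k}$, which intersect different MECs of $\A_{k-j}$ for each level $j$, and a counter $c\in C_j$ only has $\Delta(\bx_k)(c)=0$ \emph{relative to the local copy in a single MEC of $\A_{k-j}$}: i.e.\ the component effects on $c$ cancel only among those components that share the same MEC of $\A_{k-j}$. So to keep $c$ (which cannot exceed roughly $n^j$) non-negative, the iteration counts $\#_i$ of the various $B_i$'s within the same MEC of $\A_{k-j}$ must stay within an imbalance $O(n^j/n^{\epsilont})$ of each other \emph{at every instant}, while the total iteration count grows to $n^{k+1-\epsilon}$. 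A "flat" schedule iterating the multi-component $n^{k+1}$ times cannot guarantee this: to move from one $B_i$ to another you must leave a MEC, and uncontrolled wandering destroys the balance. The paper resolves this by an inductive construction (point~\ref{enum-main-5} of Lemma~\ref{lemma-main-for-all-k}): the strategy $\sigma_k^\epsilon$ simulates $\sigma_{k-1}^{\epsilont_1}$ on a $(k-1)$-tree bin and only interleaves the $\lfloor n^{\epsilont_4}\rfloor$-iteration-of-$B_i$-procedures at leaves of the resulting $(k,h)$-tree, so that the computation is $(k+1)$-level $h_\epsilon(n)$-cyclic. This hierarchical structure gives the key bound $|\#_i - \#_j| \le 2 h_\epsilon(n)\cdot(h_{\epsilont_1}(\cdot))^{\variable-1}$ for $B_i,B_j$ at level-distance $<\variable$, which is exactly what shows the main-bin counters from $C_\variable$ stay non-negative. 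Your proposal neither exhibits this nested schedule nor gives an alternative mechanism for keeping the per-level imbalances small, so the non-negativity claim in your second step is unsupported.

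Two smaller inaccuracies worth noting: (i) it is not true that "the counters the components of $\bx_k$ may decrease are exactly those bounded by $n^{\lfloor k/2\rfloor}$" --- components may decrease counters in every $C_j$ with $j\le k$, and the balancing argument has to run at every level up to $k$, not only up to $\lfloor k/2\rfloor$; the $\lfloor k/2\rfloor$ bound enters only in the reset/$R^{B,k}$ analysis for the probabilistic parts, not in the deterministic balancing. (ii) The reduction $\calT_\A[t]=\calC_\A[c_t]-n$ via a fresh transition counter (which the paper uses) is cleaner than trying to separately argue "each targeted $t$ is fired $\Omega(n^{k+1})$ times" from the schedule; your proposal would be simplified by adopting it explicitly.
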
	
	Proof of Lemma~\ref{lemma-something-something-k+1-estiamtes} can be found in Appendix~\ref{section-proof-of-lemma-something-something-k+1-estiamtes}. The idea here is that we design a strategy, which iterates \(\bx_k \)  for a total of \(n^{k+1-\epsilon} \) times with very high probability. This is achieved by splitting the computation into two parallel computations, the ``probabilistic'' part as well as the ``deterministic'' part (see Section~\ref{sec-informal}). We show that for every component \(\by \) of \(\A_{k,\hat{T}_{k+1}} \) we can iterate \(\hat{\by} \) at least \(n^{k+1-\epsilon} \) times with very high probability. This ensures that with high probability the ``probabilistic'' part cannot terminate before taking at least \(n^{k+1-\epsilon}\) steps in \(\A_{\hat{T}_{k+1}} \). This then allows us to use the same method used in \cite{Zuleger:VASS-polynomial} of ``switching'' between components of \(\A\) in such a way that   that the ``deterministic'' part is guaranteed to not terminate before at least \(n^{k+1-\epsilon} \) steps as well, while simultaneously iterating \(\bx_k \) roughly \(n^{k+1-\epsilon} \) times with very high probability.	\qed
	
	From Lemma~\ref{lemma:dichotomy} we have for each counter \(c\in C_{k+} \) that either $\by_k(c)>0$ or \(\sum_{t \in T} \bx(t)\cdot \bu_t(c)>0 \). In the former case we obtain from Lemma~\ref{lemma-upper-bound-T} an upper asymptotic estimate of \(n^\variablet \) for \(\calC_\A[c] \) and in the latter case we obtain from Lemma~\ref{lemma-something-something-k+1-estiamtes} a lower asymptotic estimate of \(n^{k+1}\) for \(\calC_\A[c] \). Similarly, from Lemma~\ref{lemma:dichotomy} we also have that for each transition \(t\in \hat{T}_{k+1} \) either Lemma~\ref{lemma-linear-upper-bound} gives an upper asymptotic estimate \(n^\variablet\) for \(\calT_\A[t] \) or Lemma~\ref{lemma-something-something-k+1-estiamtes} gives a lower asymptotic estimate \(n^{k+1} \) for \(\calT_\A[t] \). Thus  classifying \(\A \) up to \(k\) can be done in polynomial time assuming we already have  classified \(\A\) up to \(k-1 \). This finishes the proof of Lemma~\ref{lemma-polynomial-for-one-k}. \qed


Now we will show that it suffices to perform the classification for at most polynomially many \(k\).  To see this, notice for any counter/transition for which we do not yet have an upper asymptotic estimate we can obtain an upper asymptotic estimate of \(n^{k} \) only either from Lemma~\ref{lemma-hatB-zero-unbounded-rankl-give-upper-estimate-nk} or from Lemma~\ref{lemma-linear-upper-bound}, and if we do not obtain an upper asymptotic estimate this way then we obtain a lower asymptotic estimate of \(n^{k+1}\) from Lemma~\ref{lemma-something-something-k+1-estiamtes}. Furthermore, from  Lemma~\ref{lemma-linear-upper-bound} we can obtain an upper asymptotic estimate \(n^k \) only if either there exist \(a\in \Aset \) and \(b\in \Bset\) such that \(a+b=k \), or if \(T_k'\neq \emptyset \), or if \(k=1\). Similarly, from the Lemma~\ref{lemma-hatB-zero-unbounded-rankl-give-upper-estimate-nk} we can obtain an upper asymptotic estimate \(n^k \) only if \(R^{C_1,\dots,C_l}_{k-l} \neq \emptyset\) for some \(1\leq l\leq \lfloor\frac{k}{2}\rfloor \).

But for each \(r\in \mathbb{N} \) there exists \( a\in \Aset\) such that  \(C_1\cup \dots\cup C_r\) is the same as \(C_1\cup \dots\cup C_a \). Furthermore, \(\A_{s,T_{s+1}}^{C_1,\dots,C_r} \) differs from \(\A_{s-1,T_{s}}^{C_1,\dots,C_r} \) only if at least one of the following holds: \begin{itemize}
	\item there exists a transition \(t\in T_{s}\setminus T_{s+1} \); (different set of transitions)
	\item there exists a counter \(c\in C_a\subseteq C_{1}\cup \dots\cup C_{r} \) such that there exists a transition \(t\in T_{s-a}\setminus T_{s-a+1} \); (different local copies of \(c\))
\end{itemize}

  
  Therefore each of the  VASS MDPs \(\A_{s,T_{s+1}}^{C_1,\dots,C_r} \) is represented in the set \(M=\{\A_{s,T_{s+1}}^{C_1,\dots,C_r} \mid r\in \Aset, s\in S_r \} \) where \( S_r=\Bset\cup \{a+b \mid a\in \Aset, b\in \Bset, a\leq r \}\). 
  
  For each \(r\in \Aset, s\in S_r \) let \(k_{s,r} \) denote the smallest \(k\) such that we can obtain an upper asymptotic estimate of \( n^k\) from \(\A_{s,T_{s+1}}^{C_1,\dots,C_r} \) using the Lemma~\ref{lemma-hatB-zero-unbounded-rankl-give-upper-estimate-nk}. From the following Lemma it holds \(k_{s,r}=\max(s+r,2\cdot r) \).

  \begin{lemma}\label{lemma-I-dont-lknooow-but-it-is-somehtingmain}
  	We can only obtain an upper asymptotic estimate of \(n^k\) from Lemma~\ref{lemma-hatB-zero-unbounded-rankl-give-upper-estimate-nk} for values of  \(k \) satisfying \(k=\max(s+r,2\cdot r)\) where \(r\in \Aset\) and \( s\in S_r \).
  \end{lemma}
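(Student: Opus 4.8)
The plan is to pin down exactly which values of $k$ can arise as an upper asymptotic estimate produced by Lemma~\ref{lemma-hatB-zero-unbounded-rankl-give-upper-estimate-nk}. Recall that Lemma~\ref{lemma-hatB-zero-unbounded-rankl-give-upper-estimate-nk} yields an upper estimate of $n^k$ whenever, for some $1\leq l\leq \lfloor\frac{k}{2}\rfloor$, the set $R^{C_1,\dots,C_l}_{k-l}$ is non-empty; and $R^{C_1,\dots,C_l}_{k-l}$ depends only on the VASS MDP $\A^{C_1,\dots,C_l}_{k-l,T_{k-l+1}}$, together with its (internally computed) maximal solution of \hyperref[fig-systems]{(II)}. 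So the first step is to re-index: writing $r=l$ and $s=k-l$, the relevant object is $\A^{C_1,\dots,C_r}_{s,T_{s+1}}$, and the condition $1\leq l\leq\lfloor\frac{k}{2}\rfloor$ translates into $r\leq s$, i.e. $k=s+r\leq 2s$, equivalently $k\geq 2r$; hence whenever such an $R^{C_1,\dots,C_r}_{s}$ is non-empty it produces the estimate $k=s+r$ subject to $k\geq 2r$, which is exactly $k=\max(s+r,2r)$ once we also allow the boundary case.

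Next I would invoke the reduction already carried out in the surrounding text: each VASS MDP of the form $\A^{C_1,\dots,C_r}_{s,T_{s+1}}$ is, up to isomorphism, represented in the finite set $M=\{\A^{C_1,\dots,C_r}_{s,T_{s+1}}\mid r\in\Aset,\ s\in S_r\}$, where $S_r=\Bset\cup\{a+b\mid a\in\Aset,\ b\in\Bset,\ a\leq r\}$. Indeed, $\A^{C_1,\dots,C_r}_{s,T_{s+1}}$ changes as $s$ varies only when the transition set changes ($s\in\Bset$) or when some local copy of a counter in $C_a\subseteq C_1\cup\dots\cup C_r$ changes ($s-a\in\Bset$ for some $a\in\Aset$, $a\leq r$), and the set of active counters $C_1\cup\dots\cup C_r$ only ever changes at $r\in\Aset$. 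Therefore any $R^{C_1,\dots,C_r}_{s}$ that Lemma~\ref{lemma-hatB-zero-unbounded-rankl-give-upper-estimate-nk} could possibly certify as non-empty is already the $R$-set of some member of $M$, i.e. we may assume $r\in\Aset$ and $s\in S_r$. Combining this with the index translation of the previous paragraph gives: every $k$ obtainable from Lemma~\ref{lemma-hatB-zero-unbounded-rankl-give-upper-estimate-nk} satisfies $k=\max(s+r,2r)$ for some $r\in\Aset$, $s\in S_r$.

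It remains to verify that for such $r,s$ the candidate value $k=\max(s+r,2r)$ is in fact a value at which Lemma~\ref{lemma-hatB-zero-unbounded-rankl-give-upper-estimate-nk} applies, i.e. that the corresponding $l$ lies in the admissible range $1\leq l\leq\lfloor\frac{k}{2}\rfloor$: taking $l=r$ and $k-l=s$ we have $l=r\geq 1$ (since $r\in\Aset$, so $r\geq 1$), and $2l=2r\leq\max(s+r,2r)=k$, so $l\leq\lfloor\frac{k}{2}\rfloor$; thus the pair $(l,k-l)=(r,s)$ is admissible and $\A^{C_1,\dots,C_l}_{k-l,T_{k-l+1}}=\A^{C_1,\dots,C_r}_{s,T_{s+1}}\in M$ is exactly the VASS MDP whose $R$-set Lemma~\ref{lemma-hatB-zero-unbounded-rankl-give-upper-estimate-nk} inspects. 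Hence, when that $R$-set is non-empty, the estimate produced is precisely $n^{k}$ with $k=\max(s+r,2r)$, and conversely no other $k$ can arise. I expect the only delicate point to be the bookkeeping in the index translation $k\mapsto(l,k-l)$ versus $(r,s)$ and making sure the constraint $l\leq\lfloor\frac{k}{2}\rfloor$ corresponds cleanly to the $\max$ taking the value $2r$ rather than $s+r$; everything else is a direct appeal to the already-established finite representability of the family $\{\A^{C_1,\dots,C_r}_{s,T_{s+1}}\}$ by $M$.
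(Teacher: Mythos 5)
Your proof derives the formula $k=\max(s+r,2r)$ from the admissibility constraint $1\leq l\leq\lfloor k/2\rfloor$, but this is not the actual source of the $\max$ in the paper, and your derivation only works in one of the two cases. The paper's argument goes through Lemma~\ref{lemma-I-dont-lknooow-but-it-is-somehting}, whose statement already produces the estimate $n^{\max(s+r,2r)}$ directly from the probabilistic analysis of the ranking function of $\A^{C_1,\dots,C_r}_{s,T_{s+1}}$ (when $r>s$ the segment-and-martingale bookkeeping genuinely gives the \emph{weaker} estimate $n^{2r}>n^{s+r}$, not $n^{s+r}$), and Lemma~\ref{lemma-hatB-zero-unbounded-rankl-give-upper-estimate-nk} is proved by invoking it with $s=k-l$, $r=l$. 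Combined with the finite-representability of the family $\{\A^{C_1,\dots,C_r}_{s,T_{s+1}}\}$ by $M$, this is essentially all the paper's proof is.

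Two concrete gaps in your argument. First, after setting $r=l$, $s=k-l$ you assert that representability lets us ``assume $r\in\Aset$ and $s\in S_r$''. But representability only says the VASS MDP $\A^{C_1,\dots,C_l}_{k-l,T_{k-l+1}}$ coincides with some $\A^{C_1,\dots,C_{r'}}_{s',T_{s'+1}}$ with $r'\in\Aset$, $s'\in S_{r'}$; it does not preserve the integer $k=l+(k-l)$, so you cannot conclude $k=s'+r'$. (Indeed it generally will not: if $C_3=\emptyset$ and $T_6=T_7$ then $\A^{C_1,C_2,C_3}_{7,T_8}=\A^{C_1,C_2}_{6,T_7}$, but $3+7\neq 2+6$.) You would need the additional fact, supplied by Lemma~\ref{lemma-I-dont-lknooow-but-it-is-somehting}, that the estimate attached to a given representative $(r',s')$ is exactly $n^{\max(s'+r',2r')}$, independent of which admissible $(l,k)$ produced that representative. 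Second, in your verification paragraph you simultaneously set $k=\max(s+r,2r)$ and $k-l=s$ with $l=r$; these two are incompatible exactly when $r>s$, since then $\max(s+r,2r)=2r\neq r+s$. So your ``conversely no other $k$ can arise'' step silently assumes $r\leq s$, whereas the relevant set $S_r$ does contain values $s<r$.
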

  
  The proof of Lemma~\ref{lemma-I-dont-lknooow-but-it-is-somehtingmain} can be found in the Appendix~\ref{section-proof-lemma-hatB-zero-unbounded-rankl-give-upper-estimate-nk}.\qed

Thus it suffices to perform classification of \(\A \) for only the polynomially many values \(k\in X_1\cup X_2\) where \(X_1=\Bset\cup \{a+b\mid a\in \Aset,b\in \Bset \}\cup \{1\} \) and \(X_2=\{\max(s+r,2\cdot r)\mid r\in \Aset, s\in S_r  \} \) which can also be written as \(X_2= \{\max(s+r,2\cdot r)\mid r\in \Aset,a\in \Aset,b\in \Bset; \textit{ either }s=b \textit{ or }s=a+b\textit{ and }a\leq r \} \). 

Therefore we can perform the full analysis in polynomial time as follows: First we perform the classification of \(\A  \) up to \(k=1\), and each time we finish the classification of \(\A \) up to some \(k\) we recompute the sets \(\Aset,\Bset,X_1,X_2 \) and then perform  the  classification of \(\A  \) for the smallest \(k'>k \) with \(k'\in X_1\cup X_2 \). Note that we add new elements to \(\Aset,\Bset,X_1,X_2 \) only if we find a new upper estimate \(n^k \) for some \(\calC_\A[c] \) or \(\calT_\A[t] \), which can happen at most polynomially many times, and every time we add only polynomially many elements. We proceed this way until \(X_1\cup X_2 = \emptyset \) which happens in time polynomial in \(\size\A\) at which point the algorithm stops. 

Whenever we add a new element to \(X_1\cup X_2 \) the largest element of \(X_1\) (\(X_2\)) is at most double (triple) of the largest element of \(\Aset\cup \Bset \). Also we can add new elements to \(\Aset\) at most \(|\countersset|\) times and to \(\Bset\) at most \(|T| \) times. Furthermore, we can only obtain an upper asymptotic estimate of \(n^k\) for \(\calC_\A[c] \) if either \(k\in X_1 \) or there is a transition \(t\) with a tight asymptotic estimate of \(n^k\) for \(\calT_\A[t] \). Thus we obtain the following Lemma.

\begin{lemma}
 Given a counter \(c\in\countersset\) (a transition \(t\in T\)) either \(\calC_\A[c]\) (\(\calT_\A[t]\)) has an upper asymptotic estimate of \(n^{2^{|\countersset|}\cdot 3^{|T|}} \) or \(\calC_\A[c]\) (\(\calT_\A[t]\)) has a lower asymptotic estimate of \(n^k\) for every \(k\in \mathbb{N} \).
\end{lemma}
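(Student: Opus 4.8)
The plan is to formalize the accounting sketched just above the statement. First I would note that, by iterating Lemma~\ref{lemma-polynomial-for-one-k} starting from the trivial classification of $\A$ up to $0$, a classification of $\A$ up to $k$ exists for every $k\in\N$; the restriction of the algorithm to $k\in X_1\cup X_2$ matters only for the running time, and by the discussion preceding the statement no round at a level above $\max(X_1\cup X_2)$ produces a new tight estimate, so such rounds merely upgrade each still-open lower estimate from $n^k$ to $n^{k+1}$. Fix a counter $c$; the transition case is symmetric. By the definition of ``classified up to $k$'', exactly one of the following holds: either some round assigns $c$ a tight estimate $n^a$ (never revised afterwards), or for every $k$ the classification up to $k$ leaves $c$ with the lower estimate $n^{k+1}$. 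In the latter case $n^{k+1}$ is a lower asymptotic estimate of $\calC_\A[c]$ for every $k$, which together with the trivial lower estimate $n$ yields a lower estimate $n^k$ for every $k\in\N$, the second alternative of the statement.

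It remains to bound $a$ in the first case, and since a larger function is still an upper estimate it suffices to show $a\le 2^{|\countersset|}\cdot 3^{|T|}$. I would track the quantity $M:=\max(\Aset\cup\Bset)$ as rounds progress, noting $M\le 1$ once the round $k=1$ has been processed. Each counter or transition that first receives a tight estimate does so at a level $k\in X_1\cup X_2$, its upper part coming from Lemma~\ref{lemma-hatB-zero-unbounded-rankl-give-upper-estimate-nk} or Lemma~\ref{lemma-linear-upper-bound} and its lower part from Lemma~\ref{lemma:dichotomy} via Lemma~\ref{lemma-something-something-k+1-estiamtes}. From the definitions of $X_1$, $X_2$ and $S_r$ one reads off $\max X_1\le \max\Aset+\max\Bset\le 2M$ and $\max X_2\le 2\max\Aset+\max\Bset\le 3M$. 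Moreover, as observed in Section~\ref{section-complexity}, a counter receives an upper estimate $n^k$ only when $k\in X_1$ or when some transition simultaneously receives a tight estimate $n^k$; inspecting the three cases of $\variablet$ in Lemma~\ref{lemma-linear-upper-bound} (it equals $1$, or equals $k$ only when $T_k'\neq\emptyset$, or equals $\max\{a+b: a\in\Aset, b\in\Bset, a+b\le k\}\le 2M$) makes this precise. Hence every step in which $M$ strictly increases is a ``counter step'' multiplying $M$ by at most $2$ or a ``transition step'' multiplying $M$ by at most $3$ (a shared step being charged to the transition); since $\Aset$ gains at most $|\countersset|$ new elements and $\Bset$ at most $|T|$, there are at most $|\countersset|$ counter steps and at most $|T|$ transition steps, so $M$, and hence every tight-estimate exponent including $a$, never exceeds $2^{|\countersset|}\cdot 3^{|T|}$ (the bound is in fact slightly generous because the round $k=1$ only sets $M$ to $1$ rather than multiplying).

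The main obstacle I expect is making this last multiplicative accounting airtight: the crux is that counter classifications never independently trigger a tripling but only ever double $M$ or coincide with a transition classification that is already charged to a transition step, since this is exactly what separates the claimed bound $2^{|\countersset|}\cdot 3^{|T|}$ from the much weaker $3^{|\countersset|+|T|}$ a crude argument would give. This needs a careful case analysis of $\variablet$ in Lemma~\ref{lemma-linear-upper-bound} and of the shapes of $X_2$ and $S_r$, together with bookkeeping for the fact that additions to $\Aset$ and to $\Bset$ may interleave arbitrarily across rounds. A secondary point to verify is that the two alternatives are exhaustive and exclusive, i.e. that each round places every counter and transition into exactly one of the tight/lower branches and that a tight estimate, once assigned, persists in all later rounds, which is immediate from the definition of ``classified up to $k$'' and the monotonicity of the sets $T_i$ and $\tilde{C}_i$.
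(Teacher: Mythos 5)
Your proof is correct and takes essentially the same approach as the paper: the lemma is stated in the paper as a direct consequence of the immediately preceding paragraph, which makes exactly the accounting you spell out ($\max X_1 \le 2M$, $\max X_2 \le 3M$, at most $|\countersset|$ additions to $\Aset$ and $|T|$ to $\Bset$, and the observation that a counter gets an upper estimate $n^k$ only when $k\in X_1$ or a transition already has a tight estimate $n^k$). Your version merely makes the charging scheme explicit — in particular the point that shared steps are billed to the transition, which is what separates $2^{|\countersset|}\cdot 3^{|T|}$ from the cruder $3^{|\countersset|+|T|}$ — whereas the paper leaves this implicit in the phrase ``Thus we obtain the following Lemma.''
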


We now show that the counters/transitions for which we do not obtain an upper asymptotic estimate using the above have a lower asymptotic estimate \(2^{\sqrt{n}} \).

\begin{lemma}
	Given a counter \(c\in\countersset\) (a transition \(t\in T\)) if \(\calC_\A[c]\) (\(\calT_\A[t]\)) has a lower asymptotic estimate of \(n^{k} \) for all \(k\in \N\) then \(\calC_\A[c]\) (\(\calT_\A[t]\)) has a lower asymptotic estimate \(2^{\sqrt{n}} \).
\end{lemma}

\section{VASS Markov Chains}\label{section-markov-chains}

In this section, we give a full and effective classification of \(\calL_\A\), \(\calC_\A[c]\), and
\(\calT_\A[t]\) for VASS Markov chains. More precisely, we have the following:

\begin{theorem}
	\label{thm-VASS-Markov-chain-analysis}

	%
	Let $\A$ be a strongly connected VASS Markov chain. Let \(c\) be a counter and \(t\) a transition of \(\A \).  Then for each \(\F\in \{\calC_\A[c],\calT_\A[t],\calL_\A \} \) exactly one of the following holds:	
	\begin{itemize}
		\item $\F$ is unbounded.
		\item $n^2$ is a tight asymptotic estimate of $\F$.
		\item $n$ is a tight asymptotic estimate of $\F$.
	\end{itemize}
	It is decidable in time polynomial in \(\size{\A} \) which of the above cases holds.
\end{theorem}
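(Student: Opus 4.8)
Here is a plan for proving Theorem~\ref{thm-VASS-Markov-chain-analysis}, reducing it to Theorem~\ref{theorem-main} via the rigid structure of strongly connected VASS Markov chains.

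Since $Q_n=\emptyset$, the multi-component constraints force every multi-component $\bx$ of $\A$ to be a non-negative scalar multiple of a single ``stationary flow'' $\by$: flow conservation together with $\bx(t)=P(t)\sum_{t'\in\tin(p)}\bx(t')$ shows that $\bx$ is determined by the flow through each state, which must be an invariant measure of the underlying finite Markov chain on $Q$, unique up to scaling and strictly positive because $\A$ is strongly connected; hence $\by(t)>0$ for all $t$, $\A_\by=\A$, and $\A$ has (up to centering and scaling) a single component $\by$ with a well-defined drift $\Delta:=\Delta(\by)\in\Q^\countersset$. Moreover, for a strongly connected graph $\by$ is zero-bounded on a counter $c$ iff $c$ is a \emph{potential counter}, i.e.\ there is $\psi_c\colon Q\to\Z$ with $\bu_t(c)=\psi_c(q)-\psi_c(p)$ for all $t=(p,\bu,q)\in T$; $\by$, $\Delta$, and the potential tests are computable by linear algebra in polynomial time. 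The plan is then to show that for each $\F\in\{\calC_\A[c],\calT_\A[t],\calL_\A\}$ the quantity $\prob^\sigma_{p\vec{n}}[\F=\infty]$ (for the unique strategy $\sigma$ and any $p$) tends either to $1$ or to $0$, and that these two regimes match the two alternatives of Theorem~\ref{theorem-main}.

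In the first regime, $\prob^\sigma_{p\vec{n}}[\F=\infty]\to1$, it is immediate that every $f$ is a lower asymptotic estimate of $\F$ and that no $f$ can be an upper asymptotic estimate (since $\prob^\sigma_{p\vec{n}}[\F\ge f(n^{1\pm\varepsilon})]\ge\prob^\sigma_{p\vec{n}}[\F=\infty]\to1$), so $\F$ is unbounded; and because a polynomial cannot be an upper estimate of such an $\F$, Theorem~\ref{theorem-main} must be in the case where $2^n$ is a lower estimate. I would show this regime occurs precisely when the computation fails to terminate with probability $\to1$ and $\F\in\{\calL_\A,\calT_\A[t]\}$, or $\F=\calC_\A[c]$ with $\Delta(c)>0$; and that the computation fails to terminate with probability $\to1$ exactly when $\Delta\ge\vec{0}$ and every zero-drift counter is a potential counter. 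Indeed, writing $\bv_i(c)=n+S_i^{(c)}$ with $S^{(c)}$ the Markov-additive process traced by $c$, one has $\{\text{never terminate}\}=\bigcap_c\{\inf_i S_i^{(c)}\ge-n\}$; a potential counter has $\inf_i S_i^{(c)}\ge-2\|\psi_c\|_\infty$, while a counter with $\Delta(c)>0$ satisfies $\prob[\inf_i S_i^{(c)}<-n]\le Ce^{-\lambda n}$ by a standard exponential-supermartingale estimate — whereas a counter with $\Delta(c)<0$ has $S_i^{(c)}\to-\infty$ a.s.\ and a zero-drift zero-unbounded counter has $\liminf_i S_i^{(c)}=-\infty$ a.s., forcing almost-sure termination.

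In the complementary regime, $\prob^\sigma_{p\vec{n}}[\F=\infty]\to0$, I would establish a polynomial upper asymptotic estimate of degree at most $2$ for $\F$, whence by Theorem~\ref{theorem-main} $n^k$ is a tight estimate for some $k\in\mathbb{N}$, $k\le 2^d3^{|T|}$, and comparing with the upper estimate forces $k\in\{1,2\}$. Here the computation terminates with probability $\to1$, which by the dichotomy above means some counter has negative drift or some zero-drift counter is zero-unbounded, so the one-dimensional Markov-modulated walk it traces (negative drift, resp.\ mean zero and positive variance) reaches $-1$ within $O(n)$, resp.\ $O(n^2)$, steps with probability $\to1$ (Azuma's inequality, resp.\ the classical persistence estimate $\prob[\text{no hit of }-1\text{ within }m\text{ steps}]=O(n/\sqrt{m})$); thus $\calL_\A$, and hence $\calT_\A[t]\le\calL_\A$ and $\calC_\A[c]\le n+M\cdot\calL_\A$ with $M=\max_t\|\bu_t\|_\infty$, has $n$ or $n^2$ as an upper estimate — the remaining $\calC_\A[c]$ sub-cases being handled by noting that a potential counter stays at $n+O(1)$, a negative-drift counter overshoots $n$ only by $O(n^\varepsilon)$, and a zero-unbounded counter reaches at most $n^{1+\varepsilon}$ before hitting $-1$, all with probability $\to1$. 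Finally, the three alternatives are mutually exclusive (a tight estimate $n^k$ is an upper estimate, excluding unboundedness; and ``$n$ tight'' together with ``$n^2$ tight'' would require both $\limsup_n\prob^\sigma_{p\vec{n}}[\F\ge n^{1+\varepsilon}]=0$ and $\liminf_n\prob^\sigma_{p\vec{n}}[\F\ge n^{2(1-\varepsilon)}]=1$, impossible for small $\varepsilon$), and decidability in polynomial time is inherited from Theorem~\ref{theorem-main}, whose procedure returns either $k$ (shown to lie in $\{1,2\}$) or the $2^n$-verdict (shown to coincide with unboundedness), or one decides directly from $\by$, the signs of $\Delta$, and the potential tests. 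The main obstacle I foresee is the probabilistic core of the two regimes: pinning down the absorption time at $-1$ of the ``tightest'' one-dimensional Markov-additive random walk as $\Theta(n)$, $\Theta(n^2)$, or $+\infty$ with probability $\to1$, and confirming that positive-drift and potential counters cannot force termination with non-negligible probability; these are classical facts about Markov-additive processes, largely implicit in the analysis behind Lemma~\ref{lemma-exponential-scheme} and the one-dimensional classification of~\cite{AKCONCUR23}, and what remains is assembling them together with the ``zero-bounded $\Leftrightarrow$ potential counter'' equivalence.
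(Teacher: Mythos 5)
Your plan, as written, is circular: Theorem~\ref{theorem-main} cannot be used as a black box to derive Theorem~\ref{thm-VASS-Markov-chain-analysis}, because the paper's proof of Theorem~\ref{theorem-main} itself invokes Theorem~\ref{thm-VASS-Markov-chain-analysis}. Concretely, the base case $k=0$ of the inductive Lemma~\ref{lemma-main-for-all-k} appeals to Theorem~\ref{thm-VASS-Markov-chain-analysis} to obtain the linear lower estimates for all transitions, and the proof of the exponential iterative scheme (Lemma~\ref{lemma-exponential-scheme}) uses the $n^2$ classification for the ``hat'' Markov chains $\A_{\hat{\by}}$ proved in Appendix~\ref{app-markov-chains}. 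So the step ``whence by Theorem~\ref{theorem-main} $n^k$ is a tight estimate \dots\ and comparing forces $k\in\{1,2\}$'' has to go: you must establish both the lower and upper estimates for the $n$ and $n^2$ regimes directly. You in fact sketch most of what is needed (Azuma for the negative-drift case, the persistence bound $\prob[\text{no hit of }{-1}\text{ within }m\text{ steps}]=\Theta(\min(1,n/\sqrt{m}))$ for the mean-zero zero-unbounded case giving both the upper and lower $n^2$ estimate), but you would need to carry those through in the Markov-modulated setting rather than for an i.i.d.\ walk, which is exactly the content of the $1$-dimensional classification of~\cite{AKCONCUR23} that the paper cites.

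Modulo that circularity, your decomposition is correct and essentially parallels the paper's argument, with one genuine and pleasant simplification: you observe that since $Q_n=\emptyset$ there is a single component $\by$ up to scaling with $\A_\by=\A$, and that ``$\by$ zero-bounded on $c$'' is equivalent to ``$c$ admits a potential $\psi_c:Q\to\Z$ with $\bu_t(c)=\psi_c(q)-\psi_c(p)$,'' which gives the deterministic bound $\calC_\A[c]\le n+2\|\psi_c\|_\infty$ for free and makes the polynomial-time decision procedure transparent by linear algebra. Your case split (never terminate w.h.p.\ vs.\ terminate w.h.p., with the first occurring iff $\Delta(\by)\ge\vec{0}$ and every zero-drift counter is a potential counter) matches the paper's three cases: case~1 of the paper is your ``$c\notin C_+$ or $\Delta(\by)\not\ge\vec{0}$,'' case~2 is your non-terminating regime, case~3 is your terminating-in-$\Theta(n^2)$ regime. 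The paper's route, by contrast, reduces each $\F$ to a $c$-counter question, projects to the $1$-dimensional chain $\A_c$, quotes~\cite{AKCONCUR23} for the $1$-dimensional asymptotics, and proves the $n^2$ lower estimate for $\calC_\A[c]$ by a bin-splitting argument that separately tracks the deterministic drift contribution and the zero-mean fluctuation; you would need an analogue of that bin argument to turn your persistence bound into a lower estimate for $\calC_\A[c]$ (not just for $\calL_\A$) when $\Delta(\by)(c)>0$ and some other counter is zero-unbounded, because $\calC_\A[c]$ need not equal the number of steps.
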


The proof of Theorem~\ref{thm-VASS-Markov-chain-analysis} can be found in the Appendix~\ref{app-markov-chains}.  The main idea is to analyze for each individual counter \(c\) the \(1\)-dimensional VASS Markov chain restricted only to \(c\), by utilizing the results about \(1\)-dimensional VASS MDPs from \cite{AKCONCUR23}, and then combine these into analysis of \(\A\). \qed

Note that in general (i.e., not strongly connected) VASS Markov chains, the computation ``very quickly'' falls into a MEC. As the expected effect of the computation before reaching a MEC is upper bounded by a constant, the effect of the computation before reaching a MEC is asymptotically negligible with very high probability.\footnote{The probability of not yet being in a MEC decreases exponentially fast with the number of steps. Therefore the probability of the effect of these steps being over \(n^\epsilon \) on any counter decreases exponentially fast as \(n\) increases.} Hence Theorem~\ref{thm-VASS-Markov-chain-analysis} can be extended onto general VASS Markov chains by analyzing all of its MECs individually, with the only difference being that for transitions \(t\) that are not part of any MEC, \(\calT_\A[t] \) would be ``asymptotically  constant'' (note that this does not mean that \(\calT_\A[t] \) would have a constant function as its upper asymptotic estimate, however the expectation  of \(\calT_\A[t] \) would in \(\Theta(1) \)).

\section{Conclusions}

We presented a precise complexity classification for polynomial asymptotic estimates on strongly connected VASS MDPs, and we have shown that  on this sub-class polynomial tight asymptotic estimates can be computed efficiently. We also presented full classification of asymptotic complexity  for strongly connected VASS Markov chains, and showed an alternative definition of  asymptotic estimates using a natural notion of fixed probability bounds. These results are especially encouraging given that the study of multi-dimensional VASS MDPs is notoriously difficult.

While our main result is only for strongly connected VASS MDPs, we hypothesize that it can can be extended onto general VASS MDPs by utilizing the notion of types introduced in \cite{AKCONCUR23} combined with the methods used in \cite{AK:VASS-polynomial-termination} for extending results about analysis of strongly connected non-probabilistic VASS  onto general non-probabilistic VASS and VASS games. This approach might also be able to extend our results onto VASS MDP games which combine probabilistic states with both angelic and demonic non-determinism.

\section{Conclusions}

We presented a precise complexity classification for polynomial asymptotic estimates on strongly connected VASS MDPs, and we have shown that  on this sub-class polynomial tight asymptotic estimates can be computed efficiently. We also presented full classification of asymptotic complexity  for strongly connected VASS Markov chains, and showed an alternative definition of  asymptotic estimates using a natural notion of fixed probability bounds. These results are especially encouraging given that the study of multi-dimensional VASS MDPs is notoriously difficult.

While our main result is only for strongly connected VASS MDPs, we hypothesize that it can can be extended onto general VASS MDPs by utilizing the notion of types introduced in \cite{AKCONCUR23} combined with the methods used in \cite{AK:VASS-polynomial-termination} for extending results about analysis of strongly connected non-probabilistic VASS  onto general non-probabilistic VASS and VASS games. This approach might also be able to extend our results onto VASS MDP games which combine probabilistic states with both angelic and demonic non-determinism.

\bibliography{str-long,concur}
 \appendix
 \newpage
 
 \section{Additional Definitions}
\label{section-additional-definitions}
In this Section we introduce additional definitions that are used thorough the appendix.

Given subsets \(C_1,\dots,C_n \) of \(\countersset\) and a multi-component \(\bx\) we use \(\Delta^{C_1,\dots,C_n}(\bx) \) to denote the vector \(\Delta(\bx)\) restricted only to the counters from \(\bigcup_{i=1}^n C_i \).

\textbf{Concatenation:} 
Given two computations \(\alpha=p_0\bv_0, p_1 \bv_1,\dots, p_a \bv_a\) and \(\beta=p_0'\bv_0', p_1' \bv_1',\dots, p_b' \bv_b' \) we define  their \emph{concatenation} \(\alpha\cdot\beta  \) as \(\alpha\cdot\beta= p_0\bv_0,\dots, p_a \bv_a,p_1' \bv_1',\dots, p_b' \bv_b'\). Note that if \(p_a\bv_a=p_0'\bv_0' \) then \(\alpha\cdot \beta \) is also a computation.

We use the symbol \(\epsilon \) to denote an empty computation.

\textbf{Pointing VASS:} 
A \emph{pointing pair} is a pair \((\M,p) \) where \(\M \) is a VASS Markov chain and \(p\) is a state of \(\M\).  A \emph{pointing VASS} with counters \(\countersset\) is defined as \(\B=\big((\M_1,p_1),\dots,(\M_n,p_n)\big) \) where each \((\M_i,p_i) \) is a pointing pair such that \(\M_i \) is a  VASS Markov chain with counters \(\countersset\). Note that the VASS Markov chains \(\M_1,\dots,\M_n \) may share states. A \emph{pointing configuration} of \(\B \) is a tuple \((q_1,\dots,q_n)a\bv  \) where each \(q_i \) is a state of \(\M_i \), \(a\in \{1,\dots,n \} \) and \(\bv\in \mathbb{Z}^\countersset \). We say that in the pointing configuration \((q_1,\dots,q_n)a\bv  \), each VASS Markov chain  \(\M_i \) is in the state \(q_i \), \(\M_a \) is the VASS Markov chain that was pointed to last, and \(\bv \) is the counters vector. The dynamics of a pointing VASS are such that in the pointing configuration \((q_1,\dots,q_n)a\bv  \) the allowed actions are \(\{b\in \{1,\dots,n\}\mid q_a=q_{b} \} \). Upon taking the action \(b\) from \((q_1,\dots,q_n)a\bv  \), the probability of the next pointing configuration being \((q_1,\dots,q_{b-1},q_b',q_{b+1},\dots,q_n)b\bv'  \) such that \(\bv'=\bv+\bu \) is equal to the probability of the transition \((q_b,\bu,q_b') \) in \(\M_b \) (and any  pointing configuration that is not assigned a probability this way has probability \(0\)).  

If some component of \(\bv \) is negative then \((q_1,\dots,q_n)a\bv  \) is \emph{terminal}.  A \emph{pointing computation} is a sequence of pointing configurations \(\pi=(p_1,\dots,p_n)a_0\bv_0,(q_1^1,\dots,q_n^1)a_1\bv_1,(q_1^2,\dots,q_n^2)a_2\bv_2,\dots \). Let $\Term(\pi)$ be the least $j$ such that $(q_1^j,\dots,q_n^j)a_j\bv_j$ is terminal.  


	A \emph{pointing strategy} is a function $\sigma$ assigning to every finite pointing computation $(p_1,\dots,p_n)a_0\bv_0,(q_1^1,\dots,q_n^1)a_1\bv_1,\dots,(q_1^m,\dots,q_n^m)a_m\bv_m$  a probability distribution over  \(\{b\in \{1,\dots,n\}\mid q_{b}^m=q_{a_m}^m \} \). Every initial counter vector $\bv$, every \(a\in\{1,\dots,n \} \), and every pointing strategy $\sigma$ determine the probability space over pointing computations initiated in $(p_1,\dots,p_n)a\bv$ in the standard way. We use $\prob^\sigma_{p_a\bv}$ to denote the associated probability measure.
	For a measurable function $X$ over computations, we use $\Exp^\sigma_{p_a \bv}[X]$ to denote the expected value of~$X$.

	Note that from \cite[Lemma~31 of the full paper]{AKCONCUR23}, for each VASS MDP \(\A \) there exists a pointing VASS \(\pointingVASScorrepsonding{\A} \) that is bisimulation equivalent to \(\A \), and such that  \(\pointingVASScorrepsonding{\A} \)  contains all the possible pointing pairs \((\A_\sigma,p) \) where \(p \) is a state of \(\A \) and \(\sigma\in \stratsMD{\A} \). Thus to each computation \(\pi \) on \(\A \) we can assign a corresponding  pointing computation \(\tilde{\pi} \) on \(\tilde{\A} \) and vice versa. Especially, this allows us to assign to each computation \(\pi=p_0\bv_0,\ldots, p_n \bv_n\) on \(\A\) and a transition \(t\in \tout(p_n) \) a unique pointing pair \((\M,pone) \) that is being pointed at in \(\pointingVASScorrepsonding{\A} \) at this step.\footnote{Note that in general this pointing pair assignment is not necessarily unique. However as discussed in \cite{AKCONCUR23} we can fix some function assigning the pointing pair and under this function the assignment is unique. Thorough this paper we assume we have fixed some such function and thus we assume that all such assignments are unique.} We say that this step points to a MEC \(\MEC\) of \(\A_\sigma \) for \(\sigma\in \stratsMD{\A} \) if \(\MEC\) is entirely contained in \(\M \) and \(p\) is a state of \(\MEC\). Note that since \(\M \) is a VASS Markov chain equal to \(\A_\sigma \) for some \(\sigma\in \cMD(\A) \), it either holds that no such \(\MEC \) exists (if \(p\) is not a state of any MEC of \(\M \)) or \(\MEC \) is determined uniquely.\footnote{We refer to \cite{AKCONCUR23} for more detailed  description of pointing VASS.}

Given a VASS MDP \(\A \), as for a component \(\by \) of \(\A\) it holds that \(\A_\by \) is a VASS Markov chain, each component \(\by \) of \(\A\) defines uniquely a pointing pair \((\A_\by,p_\by) \).  Given a component \(\by \) of \(\A\) we use \(\A_{+\by} =\big( (\M_1,p_1),\dots,(\M_w,p_w),(\A_\by,p_\by) \big) \) to denote the pointing VASS where \(\tilde{\A}=\big( (\M_1,p_1),\dots,(\M_w,p_w) \big)  \).

\textbf{Pointing complexity:}
Let  $\B = \{(\M_1,p_1),\dots,(\M_n,p_n)\} $ be a pointing VASS, and \(a\in\{1,\dots,m \} \). For every pointing computation \(\pi=(q_1^0,\dots,q_n^0)a_0\bv_0,(q_1^1,\dots,q_n^1)a_1\bv_1,\dots \), we put
\begin{eqnarray*}
	\calP_{\B}[\M_b](\pi) & = & \mbox{the total number of all $0 \leq i < \Term(\pi)$ such that $a_i = b$}
\end{eqnarray*}

We refer to the function \(\calP_{\B}[\M_a] \) as \emph{\(\M_a\)-pointing complexity}.

We extend the notion of lower/upper/tight asymptotic estimates to \(\calP_{\B}[\M_a] \) in the natural way.

	\textbf{Bin:} We often use the term \emph{bin} to simplify the definition of strategies. Intuitively, a bin contains a counters vector that is assigned to it (virtually) by the strategy. Note that this assignment is always deterministic, the strategy \(\sigma \) can always compute the counters vector it assigned to each bin for any finite computation generated by \(\sigma \), hence no memory is needed for \(\sigma \) to ``remember'' the bins. This allows us to define the behavior of \(\sigma \) based on the current counter vectors assigned to the bins. Note that we always ensure that the actual counters vector is always greater or equal than the sum of all the (virtual) counter vectors stored in all of the bins. Unless stated otherwise, we assume that if any counter becomes negative in any of the bins considered by \(\sigma\), then from that point on \(\sigma \) is undefined. Note that if all the counter vectors assigned to all of the bins are positive then this implies that also the actual counters vector is positive. We assign names to the bins to simplify referencing the individual bins.

When we say that \(\sigma \) plays/simulates some strategy \(\pi \) on the simulation-bin whose current counters vector is \(\bv' \), what we actually mean, is that \(\sigma \) plays exactly the same as the strategy \(\pi\) initiated from the initial history \(p\bv' \), where \(p\) is the current state. Furthermore, the effect on counters when simulating \(\pi\) on the simulation-bin is always added fully to the simulation-bin. That is, say after \(10\) steps the strategy \(\pi \) initiated in \(p\bv' \) would have reached the configuration \(q\bu \), then after \(\sigma\) simulates \(\pi \) on the simulation-bin as above for \(10\) steps, the counters vector in the simulation-bin will become \(\bu \) (and all the other bins will have their counter vectors untouched). If we say that \(\sigma\) pauses the simulation of \(\pi \) on the simulation-bin, we mean that \(\sigma \) actually remembers (again, it can be computed from the history) the computation \(\alpha\) taken by \(\pi \) during this simulation so far, and if at any point in the future we say that \(\sigma \) unpauses/resumes  the simulation of \(\pi \) on the simulation-bin then \(\sigma \) ``resumes'' by playing as \(\pi\) for the computation \(\alpha \) as above, as if no pause happened. Note that we will never modify the ``paused'' bin during the pause (except potentially if the strategy being simulated on this bin is \(\cMD \)).

\section{Technical Lemmas}

In this Section we prove several technical Lemmas we need thorough the paper.
First we shall prove that all counters can be pumped ``almost'' to their lower asymptotic estimate simultaneously with very high probability.

\begin{lemma}\label{counters-pumpable-all-at-once}
	Let \(\A \) be a strongly connected VASS MDP. Let \(c_1,\dots,c_d \) be all the counters of \(\A \) and let \(k_1,\dots,k_d \) be such that \(n^{k_i} \) is a lower asymptotic estimate of \( \calC_\A[c_i]\) for all \(1\leq i\leq d\). Then for each \(\epsilon>0\) there exists a strategy \(\sigma \) which from any initial configuration with counter values \(\vec{n} \) reaches with probability \(p_n\), such that \(\lim_{n\rightarrow\infty} p_n=1 \), a configuration with counters vector \(\bv_n \) with \(\bv_n(c_i)\geq n^{k_i-\epsilon} \) for each \(1\leq i \leq d\).
\end{lemma}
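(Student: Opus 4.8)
The plan is to prove Lemma~\ref{counters-pumpable-all-at-once} by combining single-counter pumping strategies for the individual counters $c_1, \dots, c_d$ into one strategy that pumps all of them (almost) simultaneously. Fix $\epsilon > 0$, and let $\epsilon' = \epsilon/2$. For each counter $c_i$, since $n^{k_i}$ is a lower asymptotic estimate of $\calC_\A[c_i]$, Definition~\ref{def-estimates} gives a state $p_i$ and a strategy $\sigma_i$ with $\liminf_{n\to\infty}\prob^{\sigma_i}_{p_i\vec{n}}[\calC_\A[c_i]\geq (n^{1-\epsilon'})^{k_i}]=1$; that is, with probability tending to $1$ a run under $\sigma_i$ started from $p_i\vec{n}$ reaches a configuration whose $c_i$-value is at least $n^{k_i-\epsilon' k_i}\geq n^{k_i - \epsilon}$ (absorbing the $\epsilon' k_i$ slack into $\epsilon$ after adjusting constants, or more cleanly: take $\epsilon'$ small enough that $\epsilon' k_i \le \epsilon$ for all $i$, which is possible since there are finitely many counters). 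The idea is then to run these $d$ strategies one after another, using strong connectedness to ``travel'' between the relevant states, and using the bin machinery described in the Additional Definitions to keep the pumped value of each $c_j$ safe while we are busy pumping $c_i$ for $i \ne j$.

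Concretely, I would have $\sigma$ maintain $d$ bins, one per counter, plus a small ``transit'' reserve. The construction proceeds in $d$ phases. In phase $i$, $\sigma$ first walks from the current state to $p_i$ along a fixed shortest path $\rho_i$ (which exists by strong connectedness and has length bounded by $|Q|$, hence affects every counter by at most a constant $M$); it then simulates $\sigma_i$ on bin $i$ started from $p_i$ with the counter value that bin $i$ currently holds, which is $\vec{n}$ minus at most the constant losses incurred so far. It runs this simulation until the event ``$c_i \ge n^{k_i-\epsilon}$ in bin $i$'' occurs, then freezes bin $i$ and moves on to phase $i+1$. The only subtlety is that $\sigma_i$ is a strategy of $\A$, not of a one-dimensional projection, so while it pumps $c_i$ it may decrease other counters; but those other counters are being tracked in their own (still-to-be-processed or already-frozen) bins, and we need the actual counter vector to dominate the sum of bin contents. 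This is handled exactly as in the bin convention: we only ever credit effects to the simulation-bin, and we reserve a polynomial additive buffer $\vec{d}$ (say $n^{1-\epsilon}$ split evenly) to absorb the bounded crossing costs and the transitions of $\rho_i$; if any bin ever goes negative, $\sigma$ is undefined, but we will show this happens only with vanishing probability.

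The probability bookkeeping is then a union bound. Let $A_i$ be the event that phase $i$ succeeds, i.e.\ the simulation of $\sigma_i$ on bin $i$ indeed reaches $c_i\ge n^{k_i-\epsilon}$ before bin $i$ (or the buffer) goes negative, started from an initial value that is $\vec{n'}$ with $n' = n - O(1) \ge n^{1-\epsilon}$ for large $n$. By the defining property of $\sigma_i$ and monotonicity, $\prob[A_i] \ge q_n^{(i)}$ with $q_n^{(i)}\to 1$; and conditioned on $A_1,\dots,A_{i-1}$ the state at the start of phase $i$ is deterministic (end of $\rho_{i-1}$ after a successful freeze), so these estimates chain. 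Setting $p_n = \prob[\bigcap_{i=1}^d A_i] \ge 1 - \sum_{i=1}^d (1 - q_n^{(i)})$ gives $\lim_{n\to\infty}p_n = 1$. On the event $\bigcap_i A_i$, after phase $d$ every bin $j$ holds a value with $\bv(c_j)\ge n^{k_j-\epsilon}$, so the actual configuration $\bv_n$ (which dominates the bin sum) satisfies $\bv_n(c_j)\ge n^{k_j-\epsilon}$ for all $j$, as required. Note the final bin $d$ is still ``active'' at the moment of success, which is fine — we simply stop (or let $\sigma$ behave arbitrarily afterwards).

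The main obstacle I expect is the interference between counters during a phase: nothing in the hypothesis forbids $\sigma_i$ from driving $c_j$ ($j\ne i$) far negative while it pumps $c_i$. The resolution has to be that we do \emph{not} run $\sigma_i$ on the true counters but on a private bin seeded with $\vec{n}$; the true counter $c_j$ then equals the (frozen or untouched) value in bin $j$ plus the current value in bin $i$ plus the buffer, and it is the value \emph{in bin $i$} for coordinate $c_j$ that may go negative — which is exactly the ``bin becomes negative'' failure mode, and we need $\sigma_i$'s success event to already be inside the sub-event where bin $i$ stays non-negative on \emph{every} coordinate until the target for $c_i$ is hit. This is guaranteed because $\sigma_i$ is a strategy of $\A$: a run of $\sigma_i$ that makes $\calC_\A[c_i]$ large is by definition a run that has \emph{not} yet terminated, i.e.\ all counters (including the bin-$i$ copies of $c_j$) are still non-negative at that point. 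So the freeze happens before bin $i$ can go negative, and the argument closes; the only genuinely quantitative content is the choice of the $O(1)$ crossing losses and the polynomial buffer, both routine.
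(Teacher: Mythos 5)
Your construction is essentially the same as the paper's: divide the counter vector into bins, run the single-counter pumping strategies $\sigma_i$ on separate bins, use strong connectedness for travel, and union-bound the failure probabilities. Your observation that a run under $\sigma_i$ which achieves high $c_i$-value must at that moment have \emph{every} counter non-negative (because $\calC_\A[c_i]$ is by definition the supremum over pre-termination configurations) is exactly the insight that makes the bin decomposition sound, and the paper relies on it too.

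However, there is a concrete arithmetic error in the bin seeding that undermines the quantitative conclusion. You state that bin $i$ is ``seeded with $\vec{n}$'' and later that the simulation on bin $i$ is ``started from an initial value that is $\vec{n'}$ with $n' = n - O(1)$.'' That cannot be: all $d$ bins (plus the transit buffer) coexist, and the standing bin invariant requires the sum of all bin contents to be dominated by the actual counter vector $\vec{n}$. If every bin held roughly $\vec{n}$, the sum would be about $d\cdot\vec{n}$, which is impossible. The bins must therefore share the budget, e.g.\ each seeded with $\lfloor\vec{n}/(d+1)\rfloor$ as in the paper. Once you seed bin $i$ with only $\lfloor n/(d+1)\rfloor$, the achievable target for $c_i$ from that bin under $\sigma_i$ is $\big(\lfloor n/(d+1)\rfloor\big)^{k_i-\epsilon'}$, \emph{not} $n^{k_i-\epsilon}$, and you still owe the verification that $\big(\lfloor n/(d+1)\rfloor\big)^{k_i-\epsilon'} \ge n^{k_i-\epsilon}$ for all sufficiently large $n$ whenever $0<\epsilon'<\epsilon$. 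That final comparison — absorbing the $(d+1)^{-(k_i-\epsilon')}$ constant and the exponent slack into the polynomial gap between $\epsilon'$ and $\epsilon$ — is the main quantitative content of the lemma and is what your write-up currently omits. The rest of the argument (the shortest-path travel with bounded expected length, the Markov bound on the transit bin, the union bound over phases) is fine and matches the paper.
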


\begin{proof}

 For each \(n\in\mathbb{N}\) and \(1\leq i\leq d \), let \(\sigma_n^i \) the strategy which from initial configuration \(q^i_n\vec{n}\) for some state \(q^i_n\) reaches the counter value \(n^{k_i-\epsilon'} \) on counter \(c_i \) with probability \(p^i_n \) such that \(\lim_{n\rightarrow\infty} p^i_n=1 \), where \(\epsilon>\epsilon'>0 \). These strategies  exist from \(  \calC_\A[c_i]\) having a lower asymptotic estimate of \(n^{k_i} \). 
 
 We will now give a high level description of \(\sigma \). When started in an  initial configuration \(q\vec{n} \) the computation under \(\sigma \) behaves as follows. First, \(\sigma\) virtually divides the counters vector \(\vec{n} \) into \(d+1 \) equally sized "bins", each ``bin'' therefore contains the vector \(\lfloor \frac{\vec{n}}{d+1} \rfloor \). Then  \(\sigma\) repeats the following for each \(1\leq i\leq d \): Use the \((d+1)\)-st ``bin'' to move to configuration \(q^i_{\lfloor \frac{n}{d+1} \rfloor } \) using a strategy that minimizes the expected number of steps needed to reach it, and then run on the \(i\)-th ``bin'' a computation under \(\sigma^i_{\lfloor \frac{n}{d+1} \rfloor} \) for initial configuration \(q^i_{\lfloor \frac{n}{d+1} \rfloor }\vec{\lfloor \frac{n}{d+1} \rfloor }\) 
 until this computation reaches at least the value \( (\lfloor \frac{n}{d+1} \rfloor)^{k_i-\epsilon'} \) 
in the counter \(c_i\). Once this value is reached (or exceeded), the computation for this \(i\) ends, and the computation for the next \(i\) starts.

 Computation in each ``bin'' never touches the counter values from other ``bins'' unless the computation in the given ``bin'' terminated, in which case \(\sigma\) behaves arbitrarily. After \(\sigma \) performs the above for all  \(1\leq i\leq d\), unless one of the computations on the ``bin'' terminated we are guaranteed to reach a configuration where we have at least \((\lfloor \frac{n}{d+1} \rfloor )^{k_i-\epsilon'} \) for the value of counter \(c_i\) in the \(i \)-th ``bin''. The probability of \(\sigma \) reaching  such configuration is at least \(p_n=1-p_{d+1}-\sum_{i=1}^{d} (1-p^i_{\lfloor \frac{n}{d+1} \rfloor }) \), where \(p_{d+1} \) is the probability the \((d+1)\)-st bin becomes negative on any counter at any point. Note that \(\lim_{n\rightarrow\infty} \sum_{i=1}^{d}(1-p^i_{\lfloor \frac{n}{d+1} \rfloor })=0 \), and since \(\A\) is strongly connected, the only way \((d+1)\)-st ``bin'' gets depleted is if any of the \(d\) computations taking place on the \((d+1)\)-st ``bin'' take at least \(\frac{\lfloor \frac{n}{d+1} \rfloor}{d\cdot |u|}\)  steps, where \(u\) is the maximal counter decrease per single transition in \(\A\). But the expected number of steps for each of these computations is constant. Therefore from Markov inequality we obtain that \(p_{d+1}\leq \frac{a\cdot d\cdot|u|}{\lfloor \frac{n}{d+1} \rfloor} \) for some constant \(a\). Thus \(\lim_{n \to \infty}p_{d+1}=0 \). This means that \(\lim_{n\rightarrow\infty} p_n =1 \).

All that remains is to show that \((\lfloor \frac{n}{d+1} \rfloor )^{k_i-\epsilon'} \geq n^{k_i-\epsilon} \) for each \(1\leq i\leq d \). As \(k_i-\epsilon<k_i-\epsilon' \) the left side of the inequality is dominated by the term \(n^{k_i-\epsilon'} \) which grows asymptotically faster than the right side term \(n^{k_i-\epsilon} \). Therefore for all sufficiently large \(n\) it holds \((\lfloor \frac{n}{d+1} \rfloor )^{k_i-\epsilon'} \geq n^{k_i-\epsilon} \).
\end{proof}

\subsection{Operations on Multi-components}
\label{app-lemma-decompose-multicomponents-into-components}
In this section we show basic results about arithmetic operations on multi-components used in this paper. 

We begin by showing that the following operations on multi-components of \(\A \) produce another multi-component of \(\A \): addition (Lemma~\ref{lemma-addition-multicomponents}), multiplication by non-negative number (Lemma~\ref{lemma-multiplication-multicomponents}), and subtraction assuming the result is non-negative on every component (Lemma~\ref{lemma-subtraction-multicomponents}). Then we show that multi-components can be decomposed into a conical sum of components (Lemma~\ref{lemma-decompose-multicomponents-into-components}), and finally that components can be turned into a component centered in any other state using multiplication by constant (Lemma~\ref{lemma-component-recentering}).

\begin{lemma}\label{lemma-addition-multicomponents}
	Let \(\bx_1,\bx_2 \) be multi-components of \(\A\). Then \(\bx'=\bx_1+\bx_2 \) is also a multi-component of \(\A\).
\end{lemma}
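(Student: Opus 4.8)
The plan is to verify directly that $\bx' = \bx_1 + \bx_2$ satisfies all three defining conditions of a multi-component, using only the fact that these conditions are linear (or conical) in $\bx$. First I would check non-negativity: since $\bx_1 \geq \vec{0}$ and $\bx_2 \geq \vec{0}$, componentwise addition gives $\bx' \geq \vec{0}$ immediately. Next I would check the flow-conservation (Kirchhoff) condition at every state $p \in Q$: for each of $\bx_1, \bx_2$ we have $\sum_{t \in \tout(p)} \bx_i(t) = \sum_{t \in \tin(p)} \bx_i(t)$, and summing these two equalities and using that $\tout(p)$ and $\tin(p)$ are fixed index sets yields $\sum_{t \in \tout(p)} \bx'(t) = \sum_{t \in \tin(p)} \bx'(t)$.

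Then I would check the probabilistic-ratio condition: for each $p \in Q_p$ and $t \in \tout(p)$ we know $\bx_i(t) = P(t) \cdot \sum_{t' \in \tin(p)} \bx_i(t')$ for $i = 1, 2$; adding these and factoring out $P(t)$ gives $\bx'(t) = P(t) \cdot \sum_{t' \in \tin(p)} \bx'(t')$, as required. Since all three conditions hold, $\bx'$ is a multi-component of $\A$.

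There is no real obstacle here — the statement is essentially the observation that the defining constraints of a multi-component form a (pointed) convex cone, so the set of multi-components is closed under addition. The only point requiring a sentence of care is making sure the index sets $\tin(p)$, $\tout(p)$ do not depend on $\bx$, so that summation commutes with the constraint equalities; this is immediate from Definition~\ref{def-VASS}. I would keep the proof to three or four lines.
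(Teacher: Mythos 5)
Your proof is correct and follows essentially the same route as the paper's: verify each of the three defining conditions componentwise, using linearity of the constraints. The only cosmetic difference is that you phrase the probabilistic-ratio condition with $\sum_{t'\in\tin(p)}\bx_i(t')$ (matching the definition given in the preliminaries), whereas the paper's proof writes $\sum_{t'\in\tout(p)}\bx_i(t')$; these are equivalent once flow conservation is in place, so nothing is affected.
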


\begin{proof}
	Since both \(\bx_1 \) and \(\bx_2\) are multi-components, it holds that \(\bx_1\geq \vec{0} \) and  \(\bx_2\geq \vec{0}\). Hence also \(\bx'=\bx_1+\bx_2\geq \vec{0} \).
	
	Since both \(\bx_1 \) and \(\bx_2\) are multi-components, for each \(p\in \states(\A) \) both of these hold \[\sum_{t\in \tout(p)}\bx_1(t)= \sum_{t \in \tin(p)} \bx_1(t)\] \[\sum_{t\in \tout(p)}\bx_2(t)= \sum_{t \in \tin(p)} \bx_2(t)\] Therefore 
	\begin{gather*}
		\sum_{t\in \tout(p)}\bx'(t) = \sum_{t\in \tout(p)}\big(\bx_1(t)+\bx_2(t)\big) = \sum_{t\in \tout(p)}\bx_1(t)+\sum_{t\in \tout(p)}\bx_2(t)=\\ \sum_{t \in \tin(p)} \bx_1(t)+\sum_{t \in \tin(p)} \bx_2(t)=\sum_{t \in \tin(p)} \big(\bx_1(t)+ \bx_2(t)\big)=\sum_{t \in \tin(p)} \bx'(t)
	\end{gather*}
	Hence \(
	\sum_{t\in \tout(p)}\bx'(t)= \sum_{t \in \tin(p)} \bx'(t)\) for each \(p\in \states(\A) \).
	
	Since both \(\bx_1 \) and \(\bx_2\) are multi-components, it holds for each \(p\in \states_p(\A) \) and each \(t\in\tout(p)\) that  both of these hold \[\bx_1(t)=P(t)\cdot \sum_{t'\in\tout(p)}\bx_1(t') \] \[\bx_2(t)=P(t)\cdot \sum_{t'\in\tout(p)}\bx_2(t') \] Therefore 	\begin{gather*}\bx'(t)=\bx_1(t)+\bx_2(t) =P(t)\cdot \sum_{t'\in\tout(p)}\bx_1(t') + P(t)\cdot \sum_{t'\in\tout(p)}\bx_2(t') =\\
		P(t)\cdot \sum_{t'\in\tout(p)}\big(\bx_1(t') + \bx_2(t')\big)=P(t)\cdot \sum_{t'\in\tout(p)}\bx'(t') \end{gather*} Hence \(
	\bx'(t)=P(t)\cdot \sum_{t'\in\tout(p)}\bx'(t')\)  for each \(p\in \states_p(\A) \), \(t\in\tout(p)\).
	
	\(\bx' \) is a multi-component of \(\A \).\qed
\end{proof}

\begin{lemma}\label{lemma-multiplication-multicomponents}
	Let \(\bx \) be a multi-component of \(\A\) and let \(a\geq 0 \). Then \(\bx'=a\cdot \bx \) is also a multi-component of \(\A\).
\end{lemma}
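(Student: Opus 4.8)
## Proof proposal for Lemma~\ref{lemma-multiplication-multicomponents}

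The plan is to verify directly that $\bx' = a \cdot \bx$ satisfies each of the three defining conditions of a multi-component, exactly in the style of the proof of Lemma~\ref{lemma-addition-multicomponents}. Since $a \geq 0$ and $\bx$ is a multi-component, the argument is entirely a matter of factoring the scalar $a$ through each (linear) constraint, so no real obstacle is expected; the only thing to be careful about is that $a \geq 0$ is genuinely used for the non-negativity condition, whereas the two equalities would in fact hold for any scalar.

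First I would observe that since $\bx \geq \vec{0}$ and $a \geq 0$, we have $\bx' = a \cdot \bx \geq \vec{0}$, which is the first condition. Next, for the flow-conservation condition: for each $p \in Q$ we have $\sum_{t \in \tout(p)} \bx(t) = \sum_{t \in \tin(p)} \bx(t)$ because $\bx$ is a multi-component, hence
\[
\sum_{t \in \tout(p)} \bx'(t) = \sum_{t \in \tout(p)} a \cdot \bx(t) = a \cdot \sum_{t \in \tout(p)} \bx(t) = a \cdot \sum_{t \in \tin(p)} \bx(t) = \sum_{t \in \tin(p)} a \cdot \bx(t) = \sum_{t \in \tin(p)} \bx'(t).
\]
Finally, for the probabilistic-ratio condition: for each $p \in Q_p$ and each $t \in \tout(p)$ we have $\bx(t) = P(t) \cdot \sum_{t' \in \tout(p)} \bx(t')$, so
\[
\bx'(t) = a \cdot \bx(t) = a \cdot P(t) \cdot \sum_{t' \in \tout(p)} \bx(t') = P(t) \cdot \sum_{t' \in \tout(p)} a \cdot \bx(t') = P(t) \cdot \sum_{t' \in \tout(p)} \bx'(t').
\]
Having checked all three conditions, I would conclude that $\bx'$ is a multi-component of $\A$. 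The proof is a routine computation; the ``hard part'', if any, is merely noticing that non-negativity of $a$ is the one place where the hypothesis $a \geq 0$ is essential, and that the rest of the verification is just linearity of the constraint system.
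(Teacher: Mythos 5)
Your proof is correct and matches the paper's proof essentially verbatim: both verify the three defining conditions of a multi-component by factoring the scalar $a$ through each linear constraint. Your added remark that $a \geq 0$ is only needed for the non-negativity condition is a sound observation, though the paper doesn't spell it out.
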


\begin{proof}
	Since \(\bx \) is a multi-component, it holds \(\bx\geq \vec{0} \), therefore also \(\bx'=a\cdot \bx\geq 0 \).
	
	Since \(\bx \) is a multi-component, for each \(p\in \states(\A) \) it holds \[\sum_{t\in \tout(p)}\bx(t)= \sum_{t \in \tin(p)} \bx(t)\]  Therefore 
	\begin{gather*}
		\sum_{t\in \tout(p)}\bx'(t) 
		=
		\sum_{t\in \tout(p)}a\cdot \bx(t) 
		=
		a\cdot \sum_{t\in\tout(p)} \bx(t)
		=\\ a\cdot\sum_{t\in \tin(p)} \bx(t) = \sum_{t\in \tin(p)} a\cdot\bx(t)=\sum_{t \in \tin(p)} \bx'(t)
	\end{gather*}
	Hence \(
	\sum_{t\in \tout(p)}\bx'(t)= \sum_{t \in \tin(p)} \bx'(t)\) for each \(p\in \states(\A) \).

	Since \(\bx \) is a multi-component, it holds each \(p\in \states_p(\A) \) and each \(t\in\tout(p)\) that  \[\bx(t)=P(t)\cdot \sum_{t'\in\tout(p)}\bx(t') \]  Therefore 	\begin{gather*}\bx'(t)=a\cdot \bx(t)=a\cdot P(t)\cdot \sum_{t'\in\tout(p)}\bx(t') 
		=\\
		P(t)\cdot \sum_{t'\in\tout(p)}a\cdot \bx(t')=P(t)\cdot \sum_{t'\in\tout(p)}\bx'(t') \end{gather*} Hence \(
	\bx'(t)=P(t)\cdot \sum_{t'\in\tout(p)}\bx'(t')\)  for each \(p\in \states_p(\A) \), \(t\in\tout(p)\).
	
	\(\bx' \) is a multi-component of \(\A \).\qed
\end{proof}

\begin{lemma}\label{lemma-subtraction-multicomponents}
	Let \(\bx_1,\bx_2 \) be multi-components of \(\A\) such that \(\bx_1-\bx_2\geq \vec{0} \). Then \(\bx'=\bx_1-\bx_2 \) is also a multi-component of \(\A\).
\end{lemma}

\begin{proof}
	We have \(\bx'\geq \vec{0} \) straight from the definition of \(\bx_1 \) and \(\bx_2 \).
	
	Since both \(\bx_1 \) and \(\bx_2\) are multi-components, it holds that \(\bx_1\geq \vec{0} \) and  \(\bx_2\geq \vec{0}\). Hence also \(\bx'=\bx_1+\bx_2\geq \vec{0} \).
	
	Since both \(\bx_1 \) and \(\bx_2\) are multi-components, for each \(p\in \states(\A) \) both of these hold \[\sum_{t\in \tout(p)}\bx_1(t)= \sum_{t \in \tin(p)} \bx_1(t)\] \[\sum_{t\in \tout(p)}\bx_2(t)= \sum_{t \in \tin(p)} \bx_2(t)\] Therefore 
	\begin{gather*}
		\sum_{t\in \tout(p)}\bx'(t) = \sum_{t\in \tout(p)}\big(\bx_1(t)-\bx_2(t)\big) = \sum_{t\in \tout(p)}\bx_1(t)-\sum_{t\in \tout(p)}\bx_2(t)=\\ \sum_{t \in \tin(p)} \bx_1(t)-\sum_{t \in \tin(p)} \bx_2(t)=\sum_{t \in \tin(p)} \big(\bx_1(t)- \bx_2(t)\big)=\sum_{t \in \tin(p)} \bx'(t)
	\end{gather*}
	Hence \(
	\sum_{t\in \tout(p)}\bx'(t)= \sum_{t \in \tin(p)} \bx'(t)\) for each \(p\in \states(\A) \).
	
	Since both \(\bx_1 \) and \(\bx_2\) are multi-components, it holds for each \(p\in \states_p(\A) \) and each \(t\in\tout(p)\) that  both of these hold \[\bx_1(t)=P(t)\cdot \sum_{t'\in\tout(p)}\bx_1(t') \] \[\bx_2(t)=P(t)\cdot \sum_{t'\in\tout(p)}\bx_2(t') \] Therefore 	\begin{gather*}\bx'(t)=\bx_1(t)-\bx_2(t) =P(t)\cdot \sum_{t'\in\tout(p)}\bx_1(t') - P(t)\cdot \sum_{t'\in\tout(p)}\bx_2(t') =\\
		P(t)\cdot \sum_{t'\in\tout(p)}\big(\bx_1(t') - \bx_2(t')\big)=P(t)\cdot \sum_{t'\in\tout(p)}\bx'(t') \end{gather*} Hence \(
	\bx'(t)=P(t)\cdot \sum_{t'\in\tout(p)}\bx'(t')\)  for each \(p\in \states_p(\A) \), \(t\in\tout(p)\).
	
	\(\bx' \) is a multi-component of \(\A \).\qed
\end{proof}

\begin{lemma}\label{lemma-decompose-multicomponents-into-components}
	A multi-component \(\bx\) of \(\A \) can be decomposed into a conical sum of components, that is \(\bx=\sum_{\by} a_\by \cdot\by\) where \(\by \) ranges over all components of \(\A \) and \(a_\by\geq 0 \) for all \(\by \). Furthermore, the decomposition can be done in such a way that \(a_\by>0 \) iff \(\by \) is a component of \(\A_\bx \).
\end{lemma}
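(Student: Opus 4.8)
The plan is a flow‑decomposition induction, with two points of care: the pieces peeled off must be genuine \emph{components} (i.e. correspond to MECs of $\cMD$ strategies), and in the end we want all and only the components of $\A_\bx$ to occur with positive coefficient.

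\emph{Two structural observations.} Write $\mathrm{supp}(\bx):=\{t\in T\mid\bx(t)>0\}$ for a multi-component $\bx$. First, if $p\in Q_p$ carries positive $\bx$-flow on some outgoing transition, then $\sum_{t\in\tin(p)}\bx(t)=\sum_{t\in\tout(p)}\bx(t)>0$, so, as every $P(t)>0$, $\bx$ is positive on \emph{all} of $\tout(p)$; in particular $\A_\bx$ contains all of $\tout(p)$, and $\bx$ is a multi-component of $\A_\bx$ as well. Second — an auxiliary fact I would isolate as a separate statement — for any VASS MDP $\M$ and any multi-component $\bz$ of $\M$, every component of $\M_\bz$ is also a component of $\M$: a $\cMD$ strategy $\tau$ of $\M_\bz$ witnessing a MEC $B$ extends to a $\cMD$ strategy of $\M$, and $B$ stays a MEC there because its probabilistic states keep exactly their $\M_\bz$-outgoing transitions (first observation) and its nondeterministic states keep $\tau(p)\in B$; the corresponding flow is unchanged, hence a component of $\M$.

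\emph{Peeling off one component.} I would prove by induction on $|\mathrm{supp}(\bx)|$ the statement ``$\bx=\sum_\by a_\by\,\by$ with $a_\by\ge 0$ and $a_\by>0$ only for components of $\A_\bx$'', which, via the auxiliary fact, is the claimed decomposition over components of $\A$. The base case $\bx=\vec0$ is the empty sum. If $\bx\neq\vec0$, let $S$ be the set of states with positive $\bx$-in-flow (equivalently, out-flow), and fix a $\cMD$ strategy $\sigma$ of $\A_\bx$ selecting at each state of $S\cap Q_n$ an outgoing transition of positive $\bx$-flow. For $p\in S$ every successor of $p$ in the Markov chain $(\A_\bx)_\sigma$ lies again in $S$ (for $p\in Q_n$ by the choice of $\sigma$, for $p\in Q_p$ by the first observation), so $S$ is closed; being finite and nonempty it contains a bottom strongly connected component $B$ of $(\A_\bx)_\sigma$, i.e. a MEC of $(\A_\bx)_\sigma$, all of whose transitions lie in $\mathrm{supp}(\bx)$. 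Let $\by$ be the corresponding component of $\A_\bx$ (centered arbitrarily in $B$); its support is exactly the transition set of $B$, since the stationary flow of an irreducible Markov chain is strictly positive. Put $a:=\min_{t:\by(t)>0}\bx(t)/\by(t)>0$ and $\bx':=\bx-a\by$. Then $\bx'\ge\vec0$, so by Lemma~\ref{lemma-subtraction-multicomponents} $\bx'$ is a multi-component, and $\mathrm{supp}(\bx')\subsetneq\mathrm{supp}(\bx)$ since the minimizing transition drops out. Applying the induction hypothesis to $\bx'$ and noting that $\A_{\bx'}$ is a restriction of $\A_\bx$ while $\bx'$ is a multi-component of $\A_\bx$, the auxiliary fact turns its components into components of $\A_\bx$, so $\bx=a\by+\sum_\by a'_\by\,\by$ is as required; the recursion terminates because $|\mathrm{supp}|$ strictly decreases.

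\emph{Making every component of $\A_\bx$ appear.} For the remaining ``if'' direction, the components $\by_1,\dots,\by_m$ of $\A_\bx$ are finitely many (each determined by a MEC of some $(\A_\bx)_\tau$ and a centering state) and, by the auxiliary fact, are components of $\A$. Pick $\epsilon_1,\dots,\epsilon_m>0$ small enough that $\bx-\sum_i\epsilon_i\by_i>\vec0$ on all of $\mathrm{supp}(\bx)$; iterating Lemma~\ref{lemma-subtraction-multicomponents}, $\bx'':=\bx-\sum_i\epsilon_i\by_i$ is a multi-component with $\mathrm{supp}(\bx'')=\mathrm{supp}(\bx)$. Decomposing $\bx''$ by the procedure above and adding back $\sum_i\epsilon_i\by_i$ gives a conical decomposition of $\bx$ in which every component of $\A_\bx$ has coefficient at least its $\epsilon_i>0$, while no component outside $\A_\bx$ acquires a positive coefficient. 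The only genuinely non-routine step is the peeling: that is precisely where one uses both that a multi-component is a nonnegative circulation and that probabilistic states retain all outgoing transitions, to guarantee that after fixing a $\cMD$ strategy the positively-weighted transitions still contain a MEC; everything else is bookkeeping built on Lemmas~\ref{lemma-addition-multicomponents}–\ref{lemma-subtraction-multicomponents}.
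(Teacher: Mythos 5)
Your proof is correct and rests on the same core idea as the paper --- iterated subtraction of components until nothing remains --- but the bookkeeping is noticeably cleaner, so a short comparison is in order. Where the paper fixes an arbitrary enumeration $\by_1,\dots,\by_l$ of all components of $\A$, subtracts them in that order, and shows by contradiction that the residual $\bx_l$ must vanish (otherwise $\A_{\bx_l}$ contains a MEC corresponding to some $\by_i$, contradicting the maximality of $a_i$), you instead locate a component to peel constructively: fix a $\cMD$ strategy of $\A_\bx$, take a bottom SCC $B$ of the resulting Markov chain, and use the strict positivity of the stationary flow to get a component with support exactly $B$. Running the induction on $|\{t : \bx(t)>0\}|$, which drops strictly each step because the minimizing transition is zeroed, gives immediate termination without the enumeration. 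You also isolate as an explicit auxiliary fact that every component of $\M_\bz$ is a component of $\M$ --- the paper uses this implicitly when it says the residual MEC ``is included also in $\A$'' --- and your first observation (probabilistic states with positive flow retain their \emph{entire} out-distribution in $\A_\bx$) is exactly what makes $B$ a MEC of $\A_{\sigma'}$ for the extended strategy, hence a genuine component of $\A$; making this explicit is an improvement. For the ``iff'' clause, your one-shot $\epsilon$-reservation over all components of $\A_\bx$ simultaneously replaces the paper's iterated ``promote one missing component per step'' argument; both are valid, and yours is marginally shorter. Nothing is missing.
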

\begin{proof}
	Let \(\by_1,\dots,\by_l \) be all the components of \(\A \). Let \(\bx_0=\bx \). For each \(1\leq i \leq l \) let \(\bx_i=\bx_{i-1}-a_i\cdot\by_i \) where \(a_i=\min_t a_i^t \) where \(t\) ranges over all transitions of \(\A_{\by_i}\) and \(a_i^t=\frac{\bx_{i-1}(t)}{\by_i(t)} \). 
	
	First let us prove that \(\bx_i\geq \vec{0} \) and \(a_i\geq 0 \) for all \(1\leq i\leq l\). We will show this by induction on \(i \). For base case, consider \(i=0 \), and let us set \(a_0=0 \). Then it holds \(\bx_0=\bx\geq \vec{0} \) from \(\bx\) being a multi-component, and \(a_0\geq 0 \) by our definition. Now assume the induction holds up \(i-1\), and we want to show it works for \(i\). From induction assumption we have that \(\bx_{i-1}\geq \vec{0} \), and since \(\by_i \) is a component it also holds \(\by\geq \vec{0} \). Hence it holds \(a_i^t\geq 0 \) for each transition \(t\) of \(\A_{\by_i} \), implying \(a_i\geq 0 \). Assume now towards contradiction that \(\bx_i(t)<0 \) for some transition \(t\) of \(\A \). If \(t \) is not a transition of \(\A_{\by_i} \), then \(\bx_i(t)=\bx_{i-1}(t)\geq 0 \). Hence \(t\) is a transition of \(\A_{\by_i} \). Notice that \(\bx_{i-1}(t)-a_i^t\cdot \by_i(t) = \bx_{i-1}(t)-\frac{\bx_{i-1}(t)}{\by_i(t)}\cdot \by_i(t)=0  \). Since both \(\bx_{i-1}(t)\geq 0 \) and \(\by_i(t)\geq 0 \), the only way for \(\bx_i(t)<0 \) to hold is if \(a_i> a_i^t  \). But this is a contradiction with our definition of \(a_i \). Therefore \(\bx_i\geq \vec{0}\).  
	
	Note that from from Lemma~\ref{lemma-multiplication-multicomponents} we have that \(a_i\cdot \by_i \) is a multi-component of \(\A \), thus from Lemma~\ref{lemma-subtraction-multicomponents} we then have that \(\bx_i\) is also a multi-component of \(\A\).
	
	Therefore, if \(\bx_l=\vec{0} \) then it holds \(\bx=\sum_{i=1}^{l}a_i\cdot\by_i \) with \(a_i\geq 0  \) for all \(1\leq i\leq l\). Assume therefore that \(\bx_l\neq \vec{0} \).  Since \(\bx_l\) is a multi-component of \(\A\) it induces a VASS Markov chain \(\A_{\bx_l} \) which contains at least one MEC \(B \) of \(\A_\sigma \) for some \(\sigma\in \cMD \). Since \(B\) is included also in \(\A \) there exists \(1\leq i\leq l\) such that \(\A_{\by_i} \) corresponds to \(B\), and therefore \(\by_i(t)>0 \) implies \(\bx_l(t)>0 \) for each \(t\). But since \(\bx_i\geq \bx_l \) this also means that \(\by_i(t)>0 \) implies \(\bx_{i}(t)>0 \) for each \(t\). Hence it holds for each transition \(t\) of \(\A_{\by_i} \) that \(a_i<a_i^t \), which is a contradiction with how \(a_i \) is defined. Hence it must hold that \(\bx_l=\vec{0} \).
	
	It remains to show that we can do this decomposition in such a way that \(a_\by>0 \) iff \(\by \) is a component of \(\A_\bx \). It holds trivially that \(a_\by=0 \) if \(\by \) is not a component of \(\A_\bx \) as that means there exists a transition \(t\) such that \(\by(t)>0\) while \(\bx(t)=0 \). Hence it suffices to show that we can do this decomposition in such a way that \(a_\by>0 \) for every component \(\by \) of \(\A_\bx \).  We will show that if there exists a decomposition \(\bx=\sum_{\by} a_\by\cdot \by\) for which there are exactly \(i\geq 1 \) components \(\by \) of \(\A_\bx \) with \(a_\by=0 \) then there also exists a decomposition \(\bx=\sum_{\by} a_\by' \cdot\by\) such that there are at most \(i-1 \) components \(\by \) of \(\A_\bx \) with \(a_\by'=0 \).
	
	Note that this would finish the proof of the Lemma, as there are only finitely many components in \(\A_\bx \) and we have already shown that at least one decomposition does exist.
	
	Let \(\bx=\sum_{\by} a_\by\cdot \by\) be such that there are exactly \(i\geq 1 \) components \(\by \) of \(\A_\bx \) with \(a_\by=0 \). Let \(\by' \) be a component of \(\A_{\bx} \) such that \(a_{\by'}=0 \).
	
	From Lemma~\ref{lemma-multiplication-multicomponents} it holds that \(\frac{\bx}{2} \) is also a multi-component of \(\A \), and clearly one possible decomposition of \(\frac{\bx}{2} \) is \(\frac{\bx}{2}=\sum_{\by} \frac{a_\by}{2}\cdot \by \). Furthermore, Since \(\by' \) is a component of \(\A_{\bx} \) it must hold for every transition \(t\) of \(\A\) that \(\by'(t)>0 \) implies \(\bx(t)>0 \) which implies  \(\frac{\bx}{2}(t)>0 \). Hence there exists \(b>0 \) such that \(\frac{\bx}{2}-b\cdot \by'\geq \vec{0}  \). From Lemma~\ref{lemma-subtraction-multicomponents} it holds that \(\frac{\bx}{2}-b\cdot \by' \) is a multi-component of \(\A \). Therefore we can apply the first part of this Lemma to obtain a decomposition \(\frac{\bx}{2}-b\cdot \by' =\sum_{\by} b_\by \cdot\by \) of \(\frac{\bx}{2}-b\cdot \by'  \) with \(b_\by\geq 0 \) for all \(\by \). We can thus decompose  \(\bx \) as \(\bx= \frac{\bx}{2}+(\frac{\bx}{2}-b\cdot \by') + b\cdot \by'=\sum_{\by} \frac{a_\by}{2}\cdot \by +\sum_{\by} b_\by\cdot  \by  + b\cdot \by'=\sum_{\by} a_\by'\cdot \by\) where \(a_\by'=\frac{a_\by}{2}+b_\by \) if \(\by\neq \by' \) and \(a_{\by'}'=\frac{a_{\by'}}{2}+b_{\by'}+b \). Clearly \(a_\by>0 \) implies \(a_{\by}'>0 \), while \(a_{\by'}'>0 \). Hence there are at most \(i-1\) components \(\by \) of \(\A_\bx \) for which it holds \(a_{\by}'=0 \). Lemma Holds. 
\end{proof}
  
 \begin{lemma}\label{lemma-component-recentering}
 	Let \(\by \) be a component of \(\A \) centered in \(p \). Let \(q \) be a state of the MEC corresponding to \(\by \). Then there exists \(a>0 \) such that \(a\cdot \by \) is a component of \(\A \) centered in \(q \). 
 \end{lemma}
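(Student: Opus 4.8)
The plan is essentially to rescale $\by$ and do nothing else. For any $a>0$, Lemma~\ref{lemma-multiplication-multicomponents} already tells us that $a\cdot\by$ is again a multi-component of $\A$, and since multiplying by a strictly positive constant turns no coordinate of $\by$ from zero into nonzero or back, we have $\A_{a\cdot\by}=\A_\by$; in particular $\A_{a\cdot\by}$ still corresponds to the same MEC of $\A_\sigma$ (for the same $\sigma\in\stratsMD{\A}$) that $\A_\by$ does. Hence the only way $a\cdot\by$ could fail to be a component is if it is centered in no state, so it suffices to pick $a>0$ so that $\sum_{t\in\tout(q)}(a\cdot\by)(t)=1$, i.e.\ to set $a=\bigl(\sum_{t\in\tout(q)}\by(t)\bigr)^{-1}$.

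The one thing that needs checking — and the only step I would call a genuine (if mild) obstacle — is that this normalizing constant is well defined, i.e.\ $\sum_{t\in\tout(q)}\by(t)>0$. This is where the hypothesis that $q$ is a state of the MEC corresponding to $\by$ is used: that MEC is strongly connected and contains $q$, so $q$ has an outgoing transition inside $\A_\by$, i.e.\ some $t\in\tout(q)$ with $\by(t)>0$; alternatively, $q$ has an incoming transition of $\A_\by$, so $\sum_{t\in\tin(q)}\by(t)>0$, and flow conservation of the multi-component $\by$ at $q$ then gives $\sum_{t\in\tout(q)}\by(t)=\sum_{t\in\tin(q)}\by(t)>0$. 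Either way the denominator is strictly positive, so $a$ is a well-defined positive rational.

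Putting the pieces together: with $a$ chosen as above, $a\cdot\by$ is a multi-component of $\A$ (Lemma~\ref{lemma-multiplication-multicomponents}), $\A_{a\cdot\by}=\A_\by$ corresponds to a MEC of $\A_\sigma$ for some $\sigma\in\cMD(\A)$, and by construction $a\cdot\by$ is centered in $q$; hence $a\cdot\by$ is a component of $\A$ centered in $q$, which is exactly the claim. No nontrivial calculation is required — the whole argument is unwinding the definitions of \emph{multi-component}, \emph{centered}, and \emph{component}, with the sole subtlety being the positivity of the outgoing flow at $q$ noted above.
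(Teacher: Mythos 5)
Your proof is correct and essentially identical to the paper's: both set $a=\bigl(\sum_{t\in\tout(q)}\by(t)\bigr)^{-1}$, invoke Lemma~\ref{lemma-multiplication-multicomponents} for the multi-component property, note that scaling preserves the support and hence the MEC, and justify positivity of the denominator from $q$ lying in the MEC corresponding to $\by$. The only addition is your (harmless) alternative argument via flow conservation at $q$.
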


 \begin{proof}
 	Let \(a=\frac{1}{\sum_{t \in \tout(q)} \by(t)}\). Note that \(a>0\) as \(\sum_{t \in \tout(q)} \by(t)>0 \) from the definitions of \(\by \) and \(q\).  By Lemma~\ref{lemma-multiplication-multicomponents} we have that \(a\cdot \by \) 
 	is a multi-component of \(\A \). It holds \(\sum_{t \in \tout(q)} a\cdot \by(t) =a\cdot \sum_{t \in \tout(q)}  \by(t) =\frac{1}{\sum_{t \in \tout(q)} \by(t)}\cdot \sum_{t \in \tout(q)} \by(t)=1 \), hence \(a\cdot \by \) is 
 	centered in \(q \). Since \(\by(t)>0 \) iff \(a\cdot \by(t)>0 \), the MECs corresponding to \(\by \) and \(a\cdot \by \) are the same. Hence \(a\cdot \by \) is a component of \(\A\) centered in \(q\).
 \end{proof}

 \section{Proof for VASS Markov Chains}
\label{app-markov-chains}

In this section we prove the Theorem~\ref{thm-VASS-Markov-chain-analysis} from Section~\ref{section-markov-chains}. We begin by restating the theorem.

\begin{theorem*}[\textbf{\ref{thm-VASS-Markov-chain-analysis}}]
	Let $\A$ be a strongly connected VASS Markov chain. Let \(c\) be a counter and \(t\) a transition of \(\A \).  Then for each \(\F\in \{\calC_\A[c],\calT_\A[t],\calL_\A \} \) exactly one of the following holds:	
	\begin{itemize}
		\item $\F$ is unbounded.
		\item $n^2$ is a tight asymptotic estimate of $\F$.
		\item $n$ is a tight asymptotic estimate of $\F$.
	\end{itemize}
	It is decidable in time polynomial in \(\size{\A} \) which of the above cases holds.
\end{theorem*}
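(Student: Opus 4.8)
The plan is to reduce the analysis of the $d$-dimensional chain to $d$ essentially independent one-dimensional analyses. Since $\A$ is a strongly connected (hence irreducible) finite Markov chain, it has a unique stationary distribution $\mu$, which yields a unique (up to scaling) multi-component $\bx^{*}$ of $\A$ of full support, namely $\bx^{*}(t)=\mu(p)\cdot P(t)$ for $t\in\tout(p)$; its effect $\Delta(\bx^{*})$ is the mean drift of the counters per step. First I would classify each counter $c$ into four kinds, all in polynomial time: \emph{decreasing} ($\Delta(\bx^{*})(c)<0$), \emph{increasing} ($\Delta(\bx^{*})(c)>0$), \emph{constant} ($\Delta(\bx^{*})(c)=0$ and every cycle of $\A$ has zero effect on $c$, checked by assigning potentials along a spanning tree), and \emph{fluctuating} ($\Delta(\bx^{*})(c)=0$ but some cycle has nonzero effect on $c$). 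This partitions the instances into three mutually exclusive cases: (A) some counter is decreasing; (B) no counter is decreasing and no counter is fluctuating; (C) no counter is decreasing but some counter is fluctuating.

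Next I would pin down the termination time using the identity $\Term_\A(\pi)=\min_{c}\Term_{\A^{\{c\}}}(\pi^{(c)})$, where $\pi^{(c)}$ is the projection of $\pi$ onto counter $c$, a computation of the one-dimensional VASS Markov chain $\A^{\{c\}}$. Applying the classification of one-dimensional VASS MDPs of \cite{AKCONCUR23} to each $\A^{\{c\}}$ — where decreasing / constant / fluctuating correspond, at the level of the bounded-increment partial-sum process $\bv_i(c)$, to negative drift (linear termination, tight estimate $n$), bounded fluctuation around the constant $n$ (non-termination with probability $\to 1$), and genuine zero-mean diffusive behaviour (quadratic termination, tight estimate $n^2$) — and combining via the $\min$: in case (A), $\Term_\A\le\Term_{\A^{\{c\}}}\in\calO(n^{1+\eps})$ with probability $\to 1$ for a decreasing $c$, while $\Term_\A> n/u$ deterministically ($u$ the maximal absolute counter change), so $n$ is a tight estimate of $\calL_\A$; in case (C), $\Term_\A\le\Term_{\A^{\{c_0\}}}\in\calO(n^{2+\eps})$ with probability $\to 1$ for a fluctuating $c_0$, while a union bound over the finitely many counters (each non-fluctuating counter stays nonnegative forever with probability $\to 1$, each fluctuating counter survives at least $n^{2-\eps}$ steps with probability $\to 1$) gives $\Term_\A\ge n^{2-\eps}$ with probability $\to 1$, so $n^2$ is a tight estimate of $\calL_\A$; in case (B), every increasing counter stays nonnegative forever with probability $\to 1$ (the all-time minimum of a positive-drift bounded-increment walk started at $n$ drops below $0$ with probability $\to 0$) and every constant counter is nonnegative forever, so $\prob^{\sigma}_{p\vec n}[\Term_\A=\infty]\to 1$; by Definition~\ref{def-estimates} this means $\calL_\A$ has no upper asymptotic estimate while every $f$ is a lower asymptotic estimate, i.e.\ $\calL_\A$ is unbounded.

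For transitions, strong connectivity gives $\bx^{*}(t)>0$ for every $t$, and the ergodic theorem gives that $\calT_\A[t]$ and $\Term_\A$ are within constant factors with probability $\to 1$ (in case (B), on the probability-$\to 1$ event $\{\Term_\A=\infty\}$ every transition is taken infinitely often almost surely, by comparison with the unconstrained irreducible chain), so $\calT_\A[t]$ obeys exactly the same trichotomy as $\calL_\A$. For a counter $c$ we always have $\calC_\A[c]\ge n$ (so $n$ is a lower estimate) and $\calC_\A[c]\le\calC_{\A^{\{c\}}}[c]$ (so the one-dimensional classification of \cite{AKCONCUR23} bounds it from above): a constant $c$ gives $\calC_\A[c]=n$; a fluctuating $c$ gives $\calC_\A[c]\in\Theta(n)$ (the maximum of a zero-mean walk before hitting $0$ is $\calO(n^{1+\eps})$ with probability $\to 1$); a decreasing $c$ occurs only in case (A), where $\calC_\A[c]\le n+u\cdot\Term_\A\in\calO(n^{1+\eps})$ with probability $\to 1$, so $n$ is tight; an increasing $c$ gives $n$ tight in case (A) (same bound), $n^2$ tight in case (C) (upper bound $n+u\cdot\Term_\A\in\calO(n^{2+\eps})$ with probability $\to 1$, and at time $\lfloor n^{2-\eps}\rfloor<\Term_\A$, which holds with probability $\to 1$, the ergodic theorem forces $\bv_i(c)\ge n^{2-2\eps}$ with probability $\to 1$), and unboundedness in case (B) (with probability $\to 1$ the run is infinite and $c\to\infty$). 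Since the three classes unbounded / $\Theta(n^2)$ / $\Theta(n)$ are pairwise incompatible, each $\F\in\{\calC_\A[c],\calT_\A[t],\calL_\A\}$ lands in exactly one of them, determined by $\bx^{*}$ and the cycle-effect test, all computable in time polynomial in $\size{\A}$.

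The routine ingredient is the ergodic / law-of-large-numbers bookkeeping for additive functionals of a finite irreducible Markov chain, which I would make rigorous by working throughout with the \emph{unconstrained} chain (always infinite) and intersecting a ``typical behaviour'' event with a ``no early termination'' event before union-bounding over the finitely many counters. The genuinely substantive input is the one-dimensional classification of \cite{AKCONCUR23}: it supplies the quadratic behaviour of a zero-drift counter together with the matching high-probability upper and lower bounds. The step I expect to need the most care is the \emph{lower} bound on $\Term_\A$ in cases (B) and (C) — that no counter terminates the run prematurely — since it is the only place where all $d$ counters must be controlled simultaneously rather than one at a time.
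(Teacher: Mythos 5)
Your proposal is correct and follows essentially the same route as the paper's proof: classify each counter by its drift under the unique stationary flow and by whether all cycles annihilate it (the paper's zero-bounded/zero-unbounded dichotomy for the component $\by$), invoke the one-dimensional classification of \cite{AKCONCUR23} on each projection $\A^{\{c\}}$, and combine via the identity $\Term_\A=\min_c\Term_{\A^{\{c\}}}$ together with union bounds over the finitely many counters. The only differences are cosmetic — you organize by the three global cases rather than by the outcome for a single counter, and you treat $\calL_\A$ and $\calT_\A[t]$ directly via ergodic concentration where the paper reduces them to $\calC_\A[\cdot]$ by adding a step counter and a transition counter — so the substance is the same.
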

\begin{proof}
	Note that for analyzing \(\calL_\A \), we can simply add a new step-counter \(sc\) to \(\A \), that is a counter that is increased by \(+1\) by every single transition of \(\A \). Then it holds \(\calL_\A=\calC_\A[sc]-n \), while \(\calL_\A\) has a trivial lower asymptotic estimate of \(n\). As such the asymptotic behavior of \(\calL_\A\) is the same as that of \(\calC_\A[sc] \). Similarly, we can express \(\calT_\A[t] \) as \(\calC_\A[c_t]-n \) where \(c_t\) is a fresh transition counter which is increased only by \(t\) and unchanged by any transition other than \(t\). It thus suffices to analyze \(\calC_\A[c] \).
	
	Let \(\countersset \) be the set of counters of \(\A \).	Given a counter \(c\in \countersset\) we denote by \(\A_c \) the \(1\)-dim VASS Markov chain obtained from \(\A \) by removing all counters but \(c\). Let \(\by \) be some component corresponding to \(\A \).\footnote{Note that \(\by \) exists since there is only a single strategy in \(\A\) which is in \(\cMD \). Thus every multi-component of \(\A\) that is centered in some state is a component.} Let \(\sigma\) be the only strategy on \(\A \).



	Let \(C_+=\{c\mid \Delta(\by)(c)>0 \} \). There are three possibilities.
	
	\begin{itemize}
		\item If \(c\notin C_+\) or \(\Delta(\by)\ngeq \vec{0} \): in the former case we get from \cite[Theorem~11]{AKCONCUR23}  that \(n\) is a tight asymptotic estimate of \(\calC_{\A_{c}}[c] \). In the latter case we get that there exists a counter \(c'\) with \(\Delta(\by)(c')<0 \) which gives us from \cite[Theorem~11]{AKCONCUR23}  that \(n\) is a tight asymptotic estimate of \(\calL_{\A_{c'}}\). Thus in both cases  \(n\) is a tight asymptotic estimate of \(\calC_{\A}[c] \), as it holds \(\calC_{\A}[c]\leq \calC_{\A_c}[c] \) 		as well as \(\calC_{\A}[c]\leq n+u\cdot \calL_{\A_{c'}}\) where \(u \) is the maximal counter change per single transition in \(\A \).
		\item If \(c\in C_+\), \(\Delta(\by)\geq \vec{0} \), and \(\by \) is zero-bounded on every counter \(c'\notin C_+\):  then each \(\A_{c'} \) is either increasing or zero-bounded. Thus from \cite[Theorem~11]{AKCONCUR23} we have that \(\calL_{\A_{c'}} \) is unbounded for every counter \(c'\). Furthermore, as \(\A_{c} \) is increasing, from \cite[Theorem~11]{AKCONCUR23} we have that also \(\calC_{\A_{c}}[c] \) is unbounded. 
		
			Assume that there exists \(f:\R\rightarrow\R \) such that \(f\) is not lower asymptotic estimate of \(\calC_\A[c] \).   Let \(g:\N\rightarrow\N \) be such that			 \(\lim_{n\rightarrow\infty} \prob_{p\vec{n}}^\sigma[\calC_\A[c]\geq f(n^{1-\epsilon})\textit{ and this happens within at 				most  }g(n) \textit{ steps}]=1\). Note that the existence of \(g(n)\) follows from \(\calC_{\A_{c}}[c] \) being unbounded. Then it holds
			 \begin{gather*}
			 \lim_{n\rightarrow\infty} \prob_{p\vec{n}}^\sigma[\calC_\A[c]\geq f(n^{1-\epsilon})]
			 \geq\\
			  \lim_{n\rightarrow\infty} \prob_{p\vec{n}}^\sigma[\calC_\A[c]\geq f(n^{1-\epsilon})\mid \textit{ for each } c'\in \countersset,  \calL_{\A_{c'}}\geq g(n)]\cdot \Pi_{c'\in \countersset} \prob_{p\vec{n}}^\sigma[\calL_{\A_{c'}}\geq g(n)]=\\=1
			 \end{gather*}

		Hence \(\calC_\A[c] \) is also unbounded.
		
		\item If \(c\in C_+\), \(\Delta(\by)\geq \vec{0} \), and there exists a counter \(c'\) such that \(\by \) is zero-unbounded on \(c'\): then from \cite{AKCONCUR23} we have a tight asymptotic estimate of \(n^2\) for \(\calL_{\A_{c'}} \). This gives us an upper asymptotic estimate of \(n^2 \) for \(\calC_\A[c] \).
		
		
			
			Furthermore, for each counter \(r\) we have from \cite{AKCONCUR23} that  \(\calL_{\A_{r}} \) has a lower asymptotic estimate of \(n^2\). Thus it holds that \(\lim_{n\rightarrow\infty} \prob_{p_\by\vec{n}}^\sigma(\calL_\A\geq n^{2-\epsilon})=1 \). Let \(\epsilon_1>0 \), and let \(R_{\epsilon_1} \) be the set of all computations on \(\A \) that visit \(p_\by \) at 		least once every \(n^{\epsilon_1} \) steps within  the first \(n^2 \) steps.  Let \(X \) be the random variable denoting the number of sub-paths, within the prefix of length \(n^2\) of the computation, of length \(n^{\epsilon_1}\) that do not contain \(p_\by\).
			Since every step there is a positive probability \(\kappa\) 		 of reaching \(p_\by \) within the next \(a\) steps, it holds \(\E_{p_\by\vec{n}}^\sigma(X)\leq  n^2\cdot (1-\kappa)^{n^{\epsilon_1}} \). From Markov inequality we obtain that			 	 \(\prob_{p_\by\vec{n}}^\sigma(X\geq 1)\leq n^2\cdot (1-\kappa)^{n^{\epsilon_1}}  \). Thus \(\lim_{n\rightarrow\infty}\prob_{p_\by\vec{n}}^\sigma(R_{\epsilon_1})=1 \). 
			
			Therefore \(\lim_{n\rightarrow\infty} \prob_{p_\by\vec{n}}^\sigma(\calL_\A\geq n^{2-\epsilon}\textit{ and }R_{\epsilon_1})=1 \). 
			Thus we can consider a (virtual) strategy \(\sigma' \) which  on \(\A \)  first splits the counter vector into three bins equally, and then performs the computation on \(\A \) in such way that for each counter other than \(c\) it puts the effect into 
			the first bin, while the effect on the counter \(c\) is split between the second and third bin such that the effect of \(\hat{\by} \) on \(c\) is put into the second bin, while every time the computation revisits \( p_\by\) it adds \(\Delta(\by)(c) \) to the third bin (note that the sum of all the three bins always sums up to the actual counters vector). Since \(\hat{\by} \) is either
			zero-bounded or zero-unbounded on \(c\), from \cite{AKCONCUR23} we have in both cases a lower asymptotic estimate of  \((\lfloor\frac{n}{3}\rfloor)^2\) on the number of steps before the second bin depletes \(c\). Furthermore, for any computation in \(R_{\epsilon_1} \) the third bin reaches the value of at least \(n^{2-\epsilon_1}\cdot \Delta(\by_i)(c) \) for the value of \(c\). From \(\lim_{n\rightarrow\infty} \prob_{p_\by\vec{n}}^{\sigma'}(\calL_\A\geq n^{2-\epsilon}\textit{ and }R_{\epsilon_1})=1 \) it therefore holds  \(\lim_{n\rightarrow\infty} \prob_{p_\by\vec{n}}^{\sigma'}(\calC_\A[c]\geq n^{2-\epsilon-\epsilon_1-\epsilon_2})=1 \) for some \(\epsilon_2>0 \). Thus \(n^2 \) is lower asymptotic estimate of \(\calC_\A[c] \). Combined with the upper asymptotic estimate form above this gives a tight asymptotic estimate of  \(n^2\) for  \(\calC_\A[c] \).
	\end{itemize}

	\textbf{The decidability in polynomial time:} 
	When classifying the asymptotic estimate of \(\calC_\A[c] \) for the counter \(c\), we can do so in polynomial time as follows: First we compute some component \(\by\). Then we ask if \(\Delta(\by)(c)>0 \), if not then \(\calC_\A[c] \) has a tight asymptotic estimate of \(n\). If \(\Delta(\by)(c)>0 \) then we ask whether \(\by \) is either increasing or zero-bounded on every counter, and if yes then \(\calC_\A[c] \) is unbounded, otherwise \(\calC_\A[c] \) has a tight asymptotic estimate of \(n^2\). Note that deciding whether \(\by \) is increasing or zero-bounded can be done in polynomial time as per \cite{AKCONCUR23}.
\end{proof}

 \section{Proofs for Fixed Probability Bounds}
\label{app-observation-f-p-f}

In this section we prove the Theorem~\ref{observation-f-p-f} from Section~\ref{sec-comp-VASS-runs}. Let us start by restating the theorem.

\begin{theorem*}[\textbf{\ref{observation-f-p-f}}]

Let \(f:\mathbb{R}\rightarrow\mathbb{R} \) be such that \(\lim_{n\rightarrow\infty} \frac{f( n)}{f( n^{1+\epsilon})}=0\) for every \(\epsilon>0 \). Then:
\begin{itemize}
	\item \(f\) is a lower asymptotic estimate of \(\F \) iff for every \(\epsilon>0 \) and every \(p<1 \) it holds \(f_p^\F\in \Omega(f(n^{1-\epsilon})) \);
	\item \(f\) is an upper asymptotic estimate of \(\F \) iff for every \(\epsilon>0 \) and every \(p<1 \) it holds \(f_p^\F\in \calO(f(n^{1+\epsilon})) \).
\end{itemize}
\end{theorem*}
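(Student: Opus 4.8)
I would prove the two ``iff''s as four separate implications, relying on three elementary facts throughout. (i) The \emph{collapse inequality}: the hypothesis $\lim_{n\to\infty}f(n)/f(n^{1+\epsilon})=0$ for all $\epsilon>0$ forces $f$ to be unbounded and gives, for any constant $c>0$ and any $0<\epsilon_1<\epsilon_2$, that $c\cdot f(n^{1+\epsilon_1})\le f(n^{1+\epsilon_2})$ and $c\cdot f(n^{1-\epsilon_2})\le f(n^{1-\epsilon_1})$ for all large $n$ (substitute $m=n^{1+\epsilon_2}$, resp.\ $m=n^{1-\epsilon_1}$, and note the two exponents differ by a factor $>1$); this lets us freely trade any $\epsilon$ for a slightly smaller one and absorb additive and multiplicative constants. (ii) A strategy may inspect its initial configuration, so a family $(\sigma_n)_n$ with $\sigma_n$ tailored to initial counter vector $\vec n$ merges into one strategy $\sigma$ with $\prob^\sigma_{q\vec n}=\prob^{\sigma_n}_{q\vec n}$ for every $n$. (iii) Each $\F\in\{\calL_\A,\calC_\A[c],\calT_\A[t]\}$ is integer-valued and monotone non-decreasing in the initial counter vector, so $\prob^\sigma_{q\vec m}[\F\ge k]\ge\prob^\sigma_{q\vec n}[\F\ge k]$ whenever $\vec m\ge\vec n$; I also use that the relevant $f$ are eventually non-decreasing.

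\textbf{The upper-estimate equivalence and the lower ``only if''.} For ``$f$ upper estimate $\Rightarrow f_p^\F\in\calO(f(n^{1+\epsilon}))$'' fix $\epsilon,p$ and $\epsilon'\in(0,\epsilon)$. I first upgrade the pointwise hypothesis $\limsup_n\prob^\sigma_{q\vec n}[\F\ge f(n^{1+\epsilon'})]=0$ to a uniform bound ``$\exists N\,\forall n\ge N\,\forall q,\sigma:\ \prob^\sigma_{q\vec n}[\F\ge f(n^{1+\epsilon'})]<1-p$'': if it failed, a sequence $n_1<n_2<\cdots$ with witnessing $q_j,\tau_j$ would, after pigeonholing on the finite set $Q$ and merging the $\tau_j$ via (ii), produce a single $q,\tau$ with $\limsup_n\prob^\tau_{q\vec n}[\F\ge f(n^{1+\epsilon'})]\ge1-p>0$, contradicting the hypothesis. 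The uniform bound says $f_p^\F(n)\le\lceil f(n^{1+\epsilon'})\rceil$ for $n\ge N$, and (i) turns this into $f_p^\F\in\calO(f(n^{1+\epsilon}))$. Conversely, if $f_p^\F\in\calO(f(n^{1+\epsilon'}))$ for all $p<1$ and some $\epsilon'\in(0,\epsilon)$, then $\prob^\sigma_{q\vec n}[\F>f_p^\F(n)]<1-p$ by definition of $f_p^\F$; pushing $p\to1$ and using $a\cdot f(n^{1+\epsilon'})<f(n^{1+\epsilon})$ eventually (by (i)) gives $\limsup_n\prob^\sigma_{q\vec n}[\F\ge f(n^{1+\epsilon})]=0$ for all $q,\sigma$, i.e.\ $f$ is an upper estimate. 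The lower ``only if'' is the symmetric, easier, computation: from a witness $q_0,\sigma$ with $\liminf_n\prob^\sigma_{q_0\vec n}[\F\ge f(n^{1-\epsilon'})]=1$ we get $\prob^\sigma_{q_0\vec n}[\F\le\lceil f(n^{1-\epsilon'})\rceil-1]<p$ for large $n$, hence $f_p^\F(n)\ge f(n^{1-\epsilon'})$, and (i) gives $f_p^\F\in\Omega(f(n^{1-\epsilon}))$.

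\textbf{The lower ``if'' --- the main obstacle.} Assume $f_p^\F\in\Omega(f(n^{1-\epsilon^\ast}))$ for all $p<1,\epsilon^\ast>0$; fix the target $\epsilon>0$ and take $\epsilon^\ast=\epsilon/4$. Reading the ``only if'' computation backwards, $f_{1/m}^\F\in\Omega(f(n^{1-\epsilon/4}))$ together with (i) yields a threshold $N_m$ so that for every $n\ge N_m$ there are $q,\sigma$ with $\prob^\sigma_{q\vec n}[\F\ge f(n^{1-\epsilon/2})]>1-1/m$. Using monotonicity (iii) and the slack between $\epsilon/2$ and $\epsilon$ I \emph{propagate} such a witness across a whole multiplicative block: if $q$ achieves $\F\ge f(n^{1-\epsilon/2})$ from $q\vec n$ with probability $>1-1/m$, then running the same strategy from $q\vec{n'}$ with $n\le n'\le n^{1+\delta}$, $\delta=\frac{\epsilon/2}{1-\epsilon}$, still succeeds with probability $>1-1/m$ and, since $f(n^{1-\epsilon/2})\ge f((n')^{1-\epsilon})$ on the block, this witnesses $\F\ge f((n')^{1-\epsilon})$. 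It remains to realise all these per-input witnesses from \emph{one} fixed initial state and \emph{one} merged strategy: I would start in a source strongly connected component of $\A$, chosen so the relevant good states are reachable, and from $q^\ast\vec n$ first walk (in $\le|Q|$ steps, costing a constant $c_0$ per counter) to the good state for input $\vec{n-c_0}$, then play its tailored strategy from counters $\ge\vec{n-c_0}$; as $f((n-c_0)^{1-\epsilon/2})\ge f(n^{1-\epsilon})$ eventually, this gives $\prob^{\sigma}_{q^\ast\vec n}[\F\ge f(n^{1-\epsilon})]\to1$, hence $\liminf=1$. I expect exactly this last step to be the delicate one: making a \emph{single} state and strategy succeed with probability tending to $1$ for \emph{all} large $n$, not merely along a subsequence --- the routing, the absorption of the constant routing cost by the $\epsilon/2$-slack, and the block propagation must be combined carefully here (whereas the upper direction avoided this because it only needed a contradiction with a ``for all $q,\sigma$'' statement). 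The routine leftovers --- the degenerate ranges $\epsilon\ge1$, where $f(n^{1\pm\epsilon})$ is eventually constant and every clause is trivial, and checking that the hypothesis indeed forces $f\to\infty$ --- I would dispatch at the end.
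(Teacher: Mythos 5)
Three of the four implications you give are correct, and in the upper ``only if'' direction your treatment is actually \emph{more} careful than the paper's own proof: the paper passes directly from the pointwise hypothesis ``for every $q,\sigma$: $\limsup_{n}\prob^\sigma_{q\vec n}[\F\ge f(n^{1+\epsilon})]=0$'' to a single threshold $n_\gamma$ valid simultaneously for all $q$ and all $\sigma$, which does not follow without an argument; your pigeonhole-on-$Q$ plus merge-strategies-over-$n$ step supplies exactly that missing uniformity. The lower ``only if'' and the upper ``if'' match the paper's computations, with your observation~(i) playing the role of the paper's implicit absorption of the constant $a_{p,\epsilon}$ into a slightly larger exponent, and both you and the paper tacitly use that $f$ is eventually non-decreasing (not part of the hypotheses, but fine for the intended $f$).

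The genuine gap is, as you suspect, the lower ``if'', and your instinct that this is the delicate step is confirmed by the fact that the paper's own proof of it contains a quantifier error at exactly the point you identify. After deriving $f_p^\F(n)\ge a_{p,\epsilon}f(n^{1-\epsilon})$, the paper asserts ``it holds for every $q\in Q$ and every strategy $\sigma$ that $\prob^\sigma_{q\vec n}[\F\le f_p^\F(n)-1]<p$''; but $f_p^\F(n)$ is the \emph{smallest} integer $k$ with $\prob^\sigma_{q\vec n}[\F\le k]\ge p$ for all $q,\sigma$, so $k-1$ fails only for \emph{some} witness pair $q_n,\sigma_n$, which varies with $n$. (Had the universal version been true, one would obtain the much stronger, and false, conclusion that $\liminf_{n}\prob^\sigma_{q\vec n}[\F\ge f(n^{1-\epsilon})]=1$ for \emph{every} $q,\sigma$, which already fails in the paper's own Fig.~1 example once one adds a second, decoupled state with a fast self-loop.) This is precisely the $(q,\sigma,n)$-dependence you name as the main obstacle. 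Your proposed repair assembles the right ingredients, but two points in it need to be made precise: pigeonholing alone yields a fixed $q^\ast$ only along an infinite but possibly arbitrarily sparse subsequence of $n$, so block propagation from that subsequence cannot by itself cover every large $n$ --- what saves you is the routing step, which goes from $q^\ast\vec n$ to the $n$-specific witness $q_n$ for \emph{each} $n$ and uses the multiplicative slack $(n-c_0)^{1-\epsilon/2}\ge n^{1-\epsilon}$; and that routing, with cost $o(n)$ with probability tending to $1$, presupposes a reachability hypothesis (e.g.\ strong connectivity, or at least that every $q_n$ lies in one SCC reachable from a common source with uniformly bounded counter cost) that neither the theorem statement nor the paper's proof imposes. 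Restricted to that setting, and with the diagonal choice $m(n)\to\infty$ worked out, your sketch closes the gap; in full generality this direction remains open in both your proposal and the paper.
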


This follows from the following four Lemmas, each proving one direction of one of the bullet points. 

\begin{lemma}
	Let \(f:\mathbb{R}\rightarrow\mathbb{R} \) be such that \(\lim_{n\rightarrow\infty} \frac{f( n)}{f( n^{1+\epsilon})}=0\) for every \(\epsilon>0 \).
	If for every \(\epsilon>0 \) and every \(p<1 \) it holds \(f_p^\F\in \calO(f(n^{1+\epsilon})) \) then \(f\) is an upper asymptotic estimate of \(\F \).
\end{lemma}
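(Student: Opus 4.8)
The plan is to unfold both definitions and let the growth hypothesis on $f$ absorb the multiplicative constant coming from the $\calO$-bound. To show that $f$ is an upper asymptotic estimate I would fix $\varepsilon>0$, a state $q$, and a strategy $\sigma$, and aim to prove $\limsup_{n\to\infty}\prob^\sigma_{q\vec n}[\F\geq f(n^{1+\varepsilon})]=0$; equivalently, for every $\delta>0$ the quantity $\prob^\sigma_{q\vec n}[\F\geq f(n^{1+\varepsilon})]$ should be eventually below $\delta$. So I would fix $\delta>0$, set $p:=1-\delta<1$ and $\epsilon':=\varepsilon/2>0$, and instantiate the hypothesis at this $p$ and this $\epsilon'$: this gives a constant $a>0$ with $f_p^\F(n)\leq a\cdot f(n^{1+\epsilon'})$ for all large $n$ (in particular $f_p^\F(n)<\infty$ for such $n$), while by the very definition of the fixed probability bound $\prob^\sigma_{q\vec n}[\F\leq f_p^\F(n)]\geq p$ holds for all $n$.

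The key remaining step is to compare $a\cdot f(n^{1+\epsilon'})$ with $f(n^{1+\varepsilon})$. Since $\epsilon'<\varepsilon$ I would write $1+\varepsilon=(1+\epsilon')(1+\eta)$ with $\eta:=\frac{\varepsilon-\epsilon'}{1+\epsilon'}>0$, so that $f(n^{1+\varepsilon})=f\bigl((n^{1+\epsilon'})^{1+\eta}\bigr)$; substituting $m=n^{1+\epsilon'}$ into the hypothesis $\lim_{m\to\infty}\frac{f(m)}{f(m^{1+\eta})}=0$ (here I use that $f$ is eventually positive, as is the case for the functions of interest and as is needed for the ratios to be well defined) yields $\lim_{n\to\infty}\frac{f(n^{1+\epsilon'})}{f(n^{1+\varepsilon})}=0$, hence $a\cdot f(n^{1+\epsilon'})<f(n^{1+\varepsilon})$ for all large $n$. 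Combining this with $f_p^\F(n)\leq a\cdot f(n^{1+\epsilon'})$ gives $f_p^\F(n)<f(n^{1+\varepsilon})$ eventually, so $\{\F\leq f_p^\F(n)\}\subseteq\{\F<f(n^{1+\varepsilon})\}$ and therefore
\[
\prob^\sigma_{q\vec n}\bigl[\F< f(n^{1+\varepsilon})\bigr]\ \geq\ \prob^\sigma_{q\vec n}\bigl[\F\leq f_p^\F(n)\bigr]\ \geq\ p\ =\ 1-\delta
\]
for all sufficiently large $n$. Thus $\prob^\sigma_{q\vec n}[\F\geq f(n^{1+\varepsilon})]\leq\delta$ eventually; as $\delta>0$ was arbitrary and probabilities are non-negative, the $\limsup$ is $0$, which is exactly the upper-estimate condition.

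I do not anticipate a genuine obstacle — the argument is essentially bookkeeping with quantifiers — but two points need care. First, one must know $f_p^\F(n)$ is finite before using it, which is automatic from the $\calO$-bound. Second, and more importantly, the inequality $a\cdot f(n^{1+\epsilon'})<f(n^{1+\varepsilon})$ must be squeezed out of the stated growth property of $f$ alone; this is precisely why I would spend only half of the exponent budget, taking $\epsilon'=\varepsilon/2$ rather than $\epsilon'=\varepsilon$, so that the leftover slack $\eta>0$ lets the $\lim=0$ hypothesis swallow the constant $a$. Getting the order of the choices right — first $\delta$, then $p$ and $\epsilon'$ chosen in terms of $\delta$ and $\varepsilon$ — is really the whole content of the proof.
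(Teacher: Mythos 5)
Your argument is correct and essentially identical to the paper's: both use the $\mathcal{O}$-bound on $f_p^{\mathcal{F}}$ at a smaller exponent, absorb the multiplicative constant into the growth hypothesis on $f$ by re-substituting $m=n^{1+\epsilon'}$, and then pass to the $\limsup$ by letting $p\to 1$. The only cosmetic difference is the exponent bookkeeping (the paper steps $\epsilon\to 2\epsilon\to 3\epsilon$ and leaves the final renaming implicit, while you allocate $\epsilon'=\varepsilon/2$ up front and land exactly on $f(n^{1+\varepsilon})$).
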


\begin{proof}
 In such case, for each \(p<1 \) and each \(\epsilon>0 \) there exists  \(a_{p,\epsilon}\in \mathbb{N} \) and \(n_{p,\epsilon}\in \mathbb{N}\) such that \(f_p^\F(n)\leq a_{p,\epsilon}\cdot  f( n^{1+\epsilon} )\) for all \(n\geq n_{p,\epsilon} \). Hence for all \(n\geq n_{p,\epsilon} \), every $q \in Q$, and every strategy $\sigma$ it holds \(\Prob_{q\vec{n}}^\sigma[\F\leq  a_{p,\epsilon}\cdot  f( n^{1+\epsilon} )] \geq \Prob_n[\F\leq  f_p^\F(n)]\geq  p\). Since \(\lim_{n\rightarrow\infty} \frac{f( n)}{f( n^{1+\epsilon})}=0\) for every \(\epsilon>0 \), it holds for each \(p<1 \) and each \(\epsilon>0 \) that \(f( n^{1+2\epsilon} )\geq a_{p,\epsilon}\cdot f( n^{1+\epsilon} ) \) for all sufficiently large \(n\). Thus for each \(p<1 \) and each \(\epsilon>0 \) there exists \(n_{p,\epsilon}'\in \mathbb{N} \) such that for each \(n>n_{p,\epsilon}'\), every $q \in Q$, and every strategy $\sigma$ it holds \( \Prob_{q\vec{n}}^\sigma[\F\leq  f( n^{1+2\epsilon} )]  \geq  \Prob_{q\vec{n}}^\sigma[\F\leq  a_{p,\epsilon}\cdot  f( n^{1+\epsilon} )] \geq  p \).
	
	Hence 
	\begin{multline*}	
	\limsup_{n\rightarrow\infty} \Prob_{q\vec{n}}^\sigma[\F\geq f( n^{1+3\epsilon} )] = \limsup_{n\rightarrow\infty} 1-\Prob_{q\vec{n}}^\sigma[\F< f( n^{1+3\epsilon} )] \leq\\
	\limsup_{n\rightarrow\infty} 1-\Prob_{q\vec{n}}^\sigma[\F\leq f( n^{1+2\epsilon} )]\leq  \limsup_{n\rightarrow\infty} 1-\max \{p<1 \mid n> n_{p,\epsilon}' \}=0 \end{multline*}
	
	Hence \(f\) is an upper asymptotic estimate of \(\F \).
\end{proof}

\begin{lemma}
	
		Let \(f:\mathbb{R}\rightarrow\mathbb{R} \) be such that \(\lim_{n\rightarrow\infty} \frac{f( n)}{f( n^{1+\epsilon})}=0\) for every \(\epsilon>0 \).
	If for every \(\epsilon>0 \) and every \(p<1 \) it holds \(f_p^\F\in \Omega(f(n^{1-\epsilon})) \) then \(f\) is a lower asymptotic estimate of \(\F \).
	
\end{lemma}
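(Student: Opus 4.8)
The plan is to turn the hypothesis into a family of per‑input witnessing strategies and then glue them into one strategy, using two structural facts. First, unfolding the quantifier over states in the definition of $f_p^\F$, one has $f_p^\F = \max_{q\in Q} g_{p,q}^\F$, where $g_{p,q}^\F(n)$ is the per‑state analogue (the least $m$ with $\prob^\sigma_{q\vec n}[\F\leq m]\geq p$ for \emph{every} strategy $\sigma$). Second, $\F$ is monotone in the initial counter vector: for $\bv\geq\bv'$ any strategy from $q\bv'$ can be mimicked transition‑by‑transition from $q\bv$, and under the resulting coupling every counter stays exactly $\bv-\bv'$ higher, so the $q\bv$‑run does not terminate earlier and, for each of $\calL_\A$, $\calC_\A[c]$, $\calT_\A[t]$, its $\F$‑value pathwise dominates that of the $q\bv'$‑run. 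Hence $\prob^{\sigma}_{q\bv}[\F\geq x]\geq\prob^{\sigma'}_{q\bv'}[\F\geq x]$ for a suitable mimicking $\sigma'$; consequently each $g_{p,q}^\F$, and so $f_p^\F$, is non‑decreasing in $n$, and any strategy achieving "$\F\geq x$ with probability $>1-p$ from $q\vec m$'' also achieves it from $q\vec n$ for all $n\geq m$.

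\textbf{Extracting witnesses.} Fix the target $\varepsilon>0$ and put $\epsilon:=\varepsilon/2$. For each $j\in\N$ apply the hypothesis with threshold $1/j$ and this $\epsilon$: there are $b_j>0$ and $N_j$ with $f_{1/j}^\F(n)\geq b_j\,f(n^{1-\epsilon})$ for $n\geq N_j$. Arguing with $\lim_{m\to\infty} f(m)/f(m^{1+\delta})=0$ exactly as in the proof of the preceding lemma (choosing $\delta$ from $\epsilon$ and $\varepsilon$) gives $b_j f(n^{1-\epsilon})\geq f(n^{1-\varepsilon})$ for all large $n$, hence $f_{1/j}^\F(n)\geq f(n^{1-\varepsilon})$ for all large $n$. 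Since $f(n^{1-\varepsilon})-1<f_{1/j}^\F(n)$, the defining "for all $q$, all $\sigma$'' condition fails at level $f(n^{1-\varepsilon})-1$, so there is a state and a strategy $\sigma$ with $\prob^\sigma_{q\vec n}[\F\geq f(n^{1-\varepsilon})]>1-1/j$; equivalently $\max_{q}g_{1/j,q}^\F(n)\geq f(n^{1-\varepsilon})$ for all large $n$.

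\textbf{Gluing.} Because $Q$ is finite, for each $j$ some state $q^*_j$ realises this maximum for infinitely many $n$; as $g_{1/j,q^*_j}^\F$ is non‑decreasing, it stays $\geq f(m^{1-\varepsilon})$ with $m=m(n)\to\infty$ for \emph{all} large $n$ (take $m$ the largest realising index $\leq n$), and one recovers $g_{1/j,q^*_j}^\F(n)\geq f(n^{1-\varepsilon})$ for all large $n$ by absorbing the residual loss into a further arbitrarily small bump of the exponent via the growth hypothesis. A second pigeonhole over $j$ (or taking the state of maximal growth) lets us take $q^*_j\equiv p^*$ for all $j$ in an infinite set $J$. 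Now pick any $n\mapsto j(n)\in J$ with $j(n)\to\infty$ and let $\sigma^*$, which recognises $n$ from the initial configuration $p^*\vec n$, play a strategy witnessing $g_{1/j(n),p^*}^\F(n)\geq f(n^{1-\varepsilon})$. Then $\prob^{\sigma^*}_{p^*\vec n}[\F\geq f(n^{1-\varepsilon})]>1-1/j(n)$ for all large $n$, so $\liminf_{n\to\infty}\prob^{\sigma^*}_{p^*\vec n}[\F\geq f(n^{1-\varepsilon})]=1$; since $\varepsilon>0$ was arbitrary, $f$ is a lower asymptotic estimate of $\F$.

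\textbf{Main obstacle.} The delicate point is the gluing: converting a family of witnesses that may live at different states and hold only along infinite (possibly sparse) subsequences of inputs into a \emph{single} state and a \emph{single} strategy valid for all large $n$. This is precisely where monotonicity of $\F$ in the initial counters (hence of $f_p^\F$ and of every $g_{p,q}^\F$ in $n$) does the work, and where care is needed when $f$ itself is not monotone — every constant factor and every gap between consecutive realising indices must be swallowed by an arbitrarily small enlargement of the exponent, which is exactly what $\lim_{n\to\infty}f(n)/f(n^{1+\delta})=0$ provides. The purely probabilistic steps (passing between $f_{1/j}^\F(n)\geq f(n^{1-\varepsilon})$ and a high‑probability lower bound on $\F$) are routine, as in the previous lemma.
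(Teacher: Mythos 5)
Your proof follows the same overall strategy as the paper's: extract per-$n$ witnesses $(q_n,\sigma_n)$ from the hypothesis on $f_p^\F$ and assemble them into a single initial state and strategy via monotonicity. You are in fact more careful than the paper's text on one point — the definition of $f^\F_p(n)$ yields only that \emph{some} pair $(q,\sigma)$ witnesses $\prob^\sigma_{q\vec n}[\F\le f^\F_p(n)-1]<p$, and you handle this correctly as an existential, whereas the paper's displayed chain of inequalities asserts it "for every $q\in Q$ and every strategy $\sigma$", which does not follow from the definition and would render the state pigeonhole unnecessary.

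However, your gluing step as written has a genuine gap. After the pigeonhole you have $g^\F_{1/j,q^*_j}(m)\geq f(m^{1-\varepsilon})$ only for $m$ in an infinite set $N^*_j$ (the indices at which $q^*_j$ realises the max), and monotonicity gives $g^\F_{1/j,q^*_j}(n)\geq f(m(n)^{1-\varepsilon})$ with $m(n)=\max\{m\in N^*_j:m\le n\}$. You then claim the residual loss from $m(n)$ to $n$ is "swallowed by an arbitrarily small enlargement of the exponent" via $\lim_{n\to\infty} f(n)/f(n^{1+\delta})=0$. This does not follow: nothing bounds the gaps of $N^*_j$, which may be super-polynomial (e.g.\ $N^*_j=\{2^{2^k}\}$ gives $m(n)$ as small as $\sqrt n$, and sparser sets give $m(n)=n^{o(1)}$). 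The growth hypothesis only compares $f$ at arguments $n$ and $n^{1+\delta}$ for a \emph{fixed} $\delta>0$; it does not yield $f(m^{1-\varepsilon})\geq f(n^{1-\varepsilon'})$ when $n/m$ is unbounded, and no single $\varepsilon'<1$ works uniformly over such gaps. So the state pigeonhole does not close using only monotonicity and the growth hypothesis; either additional structure on the model (for instance strong connectivity, under which the achievable $\F$ from different states at the same $n$ differ only by a constant counter offset, so the realising set can be taken cofinite) or a different argument for fixing a single initial state is needed.
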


\begin{proof}
	In such case, for each \(p<1 \) and each \(\epsilon>0 \) there exists  \(a_{p,\epsilon}\in \mathbb{N} \) and \(n_{p,\epsilon}\in \mathbb{N}\) such that \(f_p^\F(n)\geq a_{p,\epsilon} \cdot f( n^{1-\epsilon} )\) for all \(n\geq n_{p,\epsilon} \).

	Hence for each \(p<1\) and each \(\epsilon>0 \), as long as \(f^\F_p(n)\neq \infty \), it holds for every $q \in Q$, and every strategy $\sigma$ that \( \Prob_{q\vec{n}}^\sigma[\F\leq a_{p,\epsilon}\cdot  f( n^{1-\epsilon} )-1]\leq \Prob_{q\vec{n}}^\sigma[\F\leq  f^\F_p(n)-1]<  p\) for all \(n\geq n_{p,\epsilon}\). From \(\lim_{n\rightarrow\infty} \frac{f( n)}{f( n^{1+\epsilon})}=0\) we get that for all sufficiently large \(n\) it holds \(a_{p,\epsilon}\cdot  f( n^{1-\epsilon} )-1\geq  f( n^{1-2\epsilon} ) \).

	Let \(p'<1 \) be the largest value such that \(f_{p'}^\F\neq \infty \).
	
	Then for each \(p<p' \), each \(\epsilon>0 \), every $q \in Q$, and every strategy $\sigma$  there exists \(n_{p,\epsilon}'\in\mathbb{N} \) such that for each \(n>n_{p,\epsilon}'\) it holds \(p\geq  \Prob_{q\vec{n}}^\sigma[\F\leq a_{p,\epsilon}\cdot  f( n^{1-\epsilon} )-1]  \geq \Prob_{q\vec{n}}^\sigma[\F\leq f( n^{1-2\epsilon} )] \).
	
	
	
	If \(p'=0 \) then it holds \(\Prob_{q\vec{n}}^\sigma[\F\geq f( n^{1-2\epsilon} )]=0\) trivially from the definition of \(f_{p'}^\F \). 
	
	If \(p'\neq 0 \) then it holds
	\begin{multline*}
	\liminf_{n\rightarrow\infty} \Prob_n[\F\geq f( n^{1-2\epsilon} )]=
	\liminf_{n\rightarrow\infty} 1-\Prob_n[\F< f( n^{1-2\epsilon} )]\geq 
	\liminf_{n\rightarrow\infty} 1-\min \{p<p'\mid n> n_{p,\epsilon}' \} = 1
		\end{multline*}
	
	Hence \(f\) is a lower asymptotic estimate of \(\F \).
\end{proof}

\begin{lemma}
	Let \(f \) be an upper asymptotic  estimate of \(\F\). Then for each \(\epsilon>0 \) and \(p<1\) it holds \(f^\F_p(n)\in \calO(f( n^{1+\epsilon} )) \).
\end{lemma}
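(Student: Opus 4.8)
The plan is to argue by contradiction, using the fact that a strategy may read off the initial counter values and therefore adapt to the input size $n$; this is exactly what lets us upgrade the per-strategy statement in the definition of an upper asymptotic estimate to the uniform-over-strategies statement we need to bound $f_p^\F$. So fix $\epsilon>0$ and $p<1$, and suppose for contradiction that $f_p^\F\notin\calO(f(n^{1+\epsilon}))$; applying the negation with the constant $1$, there are infinitely many $n$ with $f_p^\F(n)>f(n^{1+\epsilon})$ (this includes every $n$ with $f_p^\F(n)=\infty$). For each such $n$ put $m_n:=\lceil f(n^{1+\epsilon})\rceil-1$, an integer with $m_n<f(n^{1+\epsilon})\le m_n+1$ and $m_n<f_p^\F(n)$. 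Since $m_n$ is an integer strictly below $f_p^\F(n)$, and $f_p^\F(n)$ is by definition the least integer $m$ for which $\prob^{\sigma'}_{q'\vec{n}}[\F\le m]\ge p$ holds for every state $q'$ and every strategy $\sigma'$, the value $m_n$ fails this condition: there are $q_n\in Q$ and a strategy $\sigma_n$ with $\prob^{\sigma_n}_{q_n\vec{n}}[\F\le m_n]<p$. As $\F$ is integer-valued (possibly $\infty$), this is the same as $\prob^{\sigma_n}_{q_n\vec{n}}[\F\ge m_n+1]>1-p$, and since $m_n+1=\lceil f(n^{1+\epsilon})\rceil\ge f(n^{1+\epsilon})$ we get $\prob^{\sigma_n}_{q_n\vec{n}}[\F\ge f(n^{1+\epsilon})]>1-p$.

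Since $Q$ is finite, the pigeonhole principle gives a single state $q\in Q$ and an infinite set $N\subseteq\N$ of the $n$ above with $q_n=q$ for all $n\in N$. Now I define one strategy $\sigma^\ast$ of $\A$: on any finite computation whose first configuration is $q\vec{n}$ with $n\in N$, let $\sigma^\ast$ behave as $\sigma_n$; on all other finite computations, let $\sigma^\ast$ behave arbitrarily. This is well defined, because the first configuration of a computation determines $n$ and the configurations $q\vec{n}$ for $n\in N$ are pairwise distinct, so the prescriptions never conflict; and since $\prob^{\sigma}_{q\vec{n}}$ depends only on the values of $\sigma$ on computations starting in $q\vec{n}$, we have $\prob^{\sigma^\ast}_{q\vec{n}}[\F\ge f(n^{1+\epsilon})]=\prob^{\sigma_n}_{q\vec{n}}[\F\ge f(n^{1+\epsilon})]>1-p$ for all $n\in N$. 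As $N$ is infinite,
\[
\limsup_{n\to\infty}\ \prob^{\sigma^\ast}_{q\vec{n}}[\F\ge f(n^{1+\epsilon})]\ \ge\ 1-p\ >\ 0,
\]
which contradicts the assumption that $f$ is an upper asymptotic estimate of $\F$ (apply Definition~\ref{def-estimates} with $\varepsilon:=\epsilon$, the state $q$, and the strategy $\sigma^\ast$). Hence the supposition is false: $f_p^\F(n)\le f(n^{1+\epsilon})$ for all sufficiently large $n$, so $f_p^\F\in\calO(f(n^{1+\epsilon}))$.

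The only step that is not routine bookkeeping is the construction of the merged strategy $\sigma^\ast$, and this is precisely where the fact that strategies see the initial configuration is used; the remaining ingredients — the floor/ceiling comparison $m_n<f(n^{1+\epsilon})\le m_n+1$, the passage from $\prob[\F\le m_n]<p$ to $\prob[\F\ge f(n^{1+\epsilon})]>1-p$ via integrality of $\F$, and the pigeonhole over the finite set $Q$ — are elementary. I also note that no growth or monotonicity assumption on $f$ is needed in this direction (the two occurrences of $f$ use the same exponent $1+\epsilon$), so the hypothesis $\lim_{n\to\infty} f(n)/f(n^{1+\epsilon})=0$ of Theorem~\ref{observation-f-p-f} plays no role here.
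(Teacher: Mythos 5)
Your proof is correct, and it takes a genuinely different — and more careful — route than the paper's. The paper's own argument reads the definition of an upper asymptotic estimate as if it gave, for each $\gamma>0$, a \emph{single} threshold $n_\gamma$ beyond which $\prob^{\sigma}_{q\vec{n}}[\F\ge f(n^{1+\epsilon})]<\gamma$ holds simultaneously for every state $q$ and every strategy $\sigma$; but Definition~\ref{def-estimates} only asserts a per-$(q,\sigma)$ limit, so this uniformity is not immediate (the rate of convergence may a priori depend on $\sigma$). Your diagonalization — pigeonhole over the finite state set to fix a state $q$, then glue the countably many witnessing strategies $\sigma_n$ into a single $\sigma^\ast$ by letting it read off the initial counter vector $\vec{n}$ — is exactly the ingredient that extracts this uniformity from the pointwise hypothesis, and so your version makes explicit a step that the paper's terser proof leaves unaddressed. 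The surrounding bookkeeping is sound: $m_n=\lceil f(n^{1+\epsilon})\rceil-1$ with $m_n<f(n^{1+\epsilon})\le m_n+1$, upward-closedness of the set of integers $m$ satisfying $\prob^{\sigma}_{q\vec{n}}[\F\le m]\ge p$ for all $(q,\sigma)$ so that any integer below $f_p^\F(n)$ fails, the passage to $\prob[\F\ge m_n+1]>1-p$ via integrality of $\F$, and the final $\limsup\ge 1-p>0$ contradiction. Your closing remark that no hypothesis of the form $f(n)/f(n^{1+\epsilon})\to 0$ is needed for this implication also matches the paper, which invokes that hypothesis only in the other directions of Theorem~\ref{observation-f-p-f}.
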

\begin{proof}
	Let us fix an \(\epsilon>0\) and \(p<1 \).
	From the definition of the upper asymptotic estimate, we obtain \(\limsup_{n\rightarrow\infty} \Prob_{q\vec{n}}^\sigma[\F\geq f( n^{1+\epsilon} )]=0 \) for any \(q\in Q\) and any strategy \(\sigma \). Therefore, for each \(\gamma>0  \) there exists \(n_\gamma\in \mathbb{N} \) such that for all \(n\geq n_\gamma \) it holds \(\Prob_{q\vec{n}}^\sigma[\F\geq f( n^{1+\epsilon} )]<\gamma \). Thus, for \(\gamma=1-p\), we obtain \(\Prob_{q\vec{n}}^\sigma[\F\geq f( n^{1+\epsilon})]<1-p \) which implies \(\Prob_{q\vec{n}}^\sigma[\F< f( n^{1+\epsilon})]\geq p \), for all \(n\geq n_{1-p} \), any \(q\in Q\), and any strategy \(\sigma\). Hence, for all \(n\geq n_{1-p} \),  it holds \(f_p^\F(n)\leq \lceil f( n^{1+\epsilon} ) \rceil\) which implies \(f^\F_p(n)\in \calO(f( n^{1+\epsilon} )) \).\qed
\end{proof}

\begin{lemma}
	Let \(f:\mathbb{R}\rightarrow\mathbb{R} \) be such that \(\lim_{n\rightarrow\infty} \frac{f( n)}{f( n^{1+\epsilon})}=0\), and let \(f\) be a lower asymptotic estimate of \(\F\). Then for each \(\epsilon>0 \) and \(p<1\) it holds \(f^\F_p(n)\in \Omega( f( n^{1-\epsilon} )) \).
\end{lemma}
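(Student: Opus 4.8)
The plan is to instantiate the definition of lower asymptotic estimate with $\varepsilon$ equal to the given $\epsilon$ itself, so that no manipulation relating different powers of $n$ is needed; in particular, just as in the proof of the third Lemma above, the growth hypothesis $\lim_{n\to\infty} f(n)/f(n^{1+\epsilon})=0$ will not actually be used for this direction.

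First I would fix $\epsilon>0$ and $p<1$ (we may assume $p>0$, since for $p\le 0$ there is nothing to prove). Since $f$ is a lower asymptotic estimate of $\F$, Definition~\ref{def-estimates} applied with $\varepsilon:=\epsilon$ provides a state $q\in Q$ and a strategy $\sigma$ with $\liminf_{n\to\infty}\prob^\sigma_{q\vec n}[\F\ge f(n^{1-\epsilon})]=1$, equivalently $\limsup_{n\to\infty}\prob^\sigma_{q\vec n}[\F< f(n^{1-\epsilon})]=0$. Hence there is $n_0$ such that $\prob^\sigma_{q\vec n}[\F< f(n^{1-\epsilon})]<p$ for all $n\ge n_0$.

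Next I would use that $\F$ is integer-valued ($\F\in\N_0\cup\{\infty\}$), so the event $\{\F< f(n^{1-\epsilon})\}$ equals $\{\F\le \lceil f(n^{1-\epsilon})\rceil-1\}$. Writing $m_n=\lceil f(n^{1-\epsilon})\rceil-1$, this gives $\prob^\sigma_{q\vec n}[\F\le m_n]<p$ for $n\ge n_0$, and by monotonicity also $\prob^\sigma_{q\vec n}[\F\le m]<p$ for every integer $m\le m_n$. Thus $m_n$ fails the universally quantified requirement in the definition of $f_p^\F(n)$ — it is already violated by $q$ and $\sigma$ — so $f_p^\F(n)\ge m_n+1=\lceil f(n^{1-\epsilon})\rceil\ge f(n^{1-\epsilon})$ for all $n\ge n_0$. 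This is precisely $f_p^\F\in\Omega(f(n^{1-\epsilon}))$, with multiplicative constant $1$ (and the claim is trivial for those $n$ where $f(n^{1-\epsilon})\le 0$, since always $f_p^\F(n)\ge 0$).

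I do not anticipate a real obstacle; the two places that demand a little care are the quantifier structure of $f_p^\F$ (to lower-bound it, it suffices to exhibit a single pair $(q,\sigma)$ for which its defining inequality fails, and monotonicity of $m\mapsto\prob^\sigma_{q\vec n}[\F\le m]$ reduces this to checking the one value $m=m_n$) and the harmless rounding arising because $\F$ is integer-valued whereas $f(n^{1-\epsilon})$ need not be. Combined with the three preceding Lemmas, this closes both equivalences asserted in Theorem~\ref{observation-f-p-f}.
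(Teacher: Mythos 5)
Your proof is correct, and it takes a mildly but genuinely different route from the paper's. The paper instantiates the lower-asymptotic-estimate definition with $\varepsilon := \epsilon/2$, producing $\liminf_n\prob^\sigma_{q\vec n}[\F\ge f(n^{1-\epsilon/2})]=1$, and then converts the non-strict bound $\F\ge f(n^{1-\epsilon/2})$ into the strict bound $\F> f(n^{1-\epsilon})$ by invoking the eventual strict monotonicity $f(n^{1-\epsilon/2})>f(n^{1-\epsilon})$, which is where the growth hypothesis $\lim_n f(n)/f(n^{1+\delta})=0$ gets (implicitly) used under the label ``$f$ is increasing''; the strict event is then complementary to $\{\F\le f(n^{1-\epsilon})\}$ and the conclusion follows directly. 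You instead instantiate with $\varepsilon:=\epsilon$ and deal with the non-strict/strict mismatch by exploiting the fact that $\F$ takes values in $\N_0\cup\{\infty\}$, replacing $\{\F<f(n^{1-\epsilon})\}$ by $\{\F\le \lceil f(n^{1-\epsilon})\rceil-1\}$. This is a clean observation: it makes the rounding explicit, avoids the factor-of-two buffer entirely, and shows that the growth hypothesis is superfluous for this direction (you state this, and it is right). The paper's route buys a one-line argument at the price of the (unstated but needed) monotonicity fact; yours buys a hypothesis-free argument at the price of one careful line about ceilings, and your handling of the quantifier structure of $f_p^\F$ (lower-bounding via a single witnessing pair $(q,\sigma)$ plus monotonicity in $k$) is exactly what the paper leaves implicit. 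Both proofs are sound; I would accept either.
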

\begin{proof}
	Let us fix an \(\epsilon>0\) and \(p<1\).
	From the definition of the lower asymptotic estimate, we obtain that there exists \(q\in Q\) and a strategy \(\sigma\)  such that \(\liminf_{n\rightarrow\infty} \Prob_{q\vec{n}}^\sigma[\F\geq f( n^{1-\nicefrac{\epsilon}{2}} )]=1 \). Hence, for each \(\gamma>0  \) there exists \(n_\gamma\in \mathbb{N} \) such that for all \(n\geq n_\gamma \) it holds \(\Prob_{q\vec{n}}^\sigma[\F\geq f( n^{1-\nicefrac{\epsilon}{2}} )]>1-\gamma \) which implies \(\Prob_{q\vec{n}}^\sigma[\F> f( n^{1-\epsilon} )]>1-\gamma \) for all sufficiently large \(n\) as \(f\) is increasing. This can be rewritten as \(\Prob_{q\vec{n}}^\sigma[\F\leq  f( n^{1-\epsilon} )]\leq \gamma \). Therefore, if we put \(\gamma<p \), we obtain
	\(\Prob_{q\vec{n}}^\sigma[\F\leq  f( n^{1-\epsilon} )]<\gamma<p \) for all sufficiently large \(n\),  which implies \(f^\F_p(n)>f( n^{1-\epsilon}  )\).
	
	 Hence \(f^\F_p(n)\in \Omega( f( n^{1-\epsilon} )) \).
\end{proof}

\section{Proof of Lemma~\ref{lemma-linear-upper-bound}}
\label{app-lemma-linear-upper-bound}
\begin{lemma*}[\textbf{\ref{lemma-linear-upper-bound}}]
		Let \[\variablet=    \begin{cases}
	1 & \text{if } k=1  \\
	k & \text{if } k>1 \text{ and }T_k'\neq \emptyset \\
	\max\{a+b\mid a\in \Aset,b\in \Bset; a+b\leq k \} & \text{if } k>1 \text{ and }T_k'= \emptyset
\end{cases}\]

Then for every counter $c\in \countersset$ such that $\by_k(c)>0$ it holds that \(n^\variablet\) is an upper asymptotic estimate of \( \calC_\A[c]\). Furthermore each transition \(t=(p,\bu,q) \)  of \(\A_{k,\hat{T}_{k+1}} \) has an upper asymptotic estimate of \(n^\variablet \) for \(\calT_\A[t] \) if one of the following holds:
\begin{itemize}
	\item  \(p\in Q_n \) and \(\bz_k(q)-\bz_k(p)+\sum_{c\in\countersset} \bu(c)\cdot \by_k(c)<0\),
	\item  \(p\in Q_p \) and \(\sum_{t' = (p',\bu',q') \in \tout(p)}P(t')\cdot \big(\bz_k(q')-\bz_k(p')+\sum_{c\in\countersset}  \bu'(c)\cdot \by_k(c)\big)< 0\).
\end{itemize}  
\end{lemma*}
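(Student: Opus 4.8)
The plan is to convert the ranking function determined by $\by_k,\bz_k$ into a supermartingale on the computations of $\A$ that starts at value $O(n)$, and which with probability tending to $1$ dominates the value of every counter $c$ with $\by_k(c)>0$ together with the number of firings of every transition singled out in the statement; a maximal-inequality (optional stopping) argument then caps all these quantities below $n^{\variablet+\epsilon}$ with probability $\to 1$, which is precisely what it means for $n^{\variablet}$ to be an upper asymptotic estimate.

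I would first record the one-step behaviour of $\mathit{rank}_k(p\bv):=\bz_k(p)+\sum_{c}\by_k(c)\,\bv(c)$. The inequalities defining~\hyperref[fig-systems]{(II)} for $\A_{k,\hat{T}_{k+1}}$ say exactly that along any transition of $\A_{k,\hat{T}_{k+1}}$ out of a non-deterministic state $\mathit{rank}_k$ does not increase, and that a single step out of a probabilistic state changes $\mathit{rank}_k$ by a non-positive amount in expectation; for the transitions resp. states listed in the statement this (expected) change is strictly negative, hence, after scaling $\by_k,\bz_k$ to integers, at most $-\delta$ for a fixed $\delta>0$. Moreover $\mathit{rank}_k\ge 0$ before termination (as $\by_k,\bz_k\ge\vec 0$) and one step changes it by at most a constant $c_0$. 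The difficulty is that a run of $\A$ uses the original counters of $\countersset$ and may use transitions outside $\hat{T}_{k+1}$, whereas the above only controls transitions of $\A_{k,\hat{T}_{k+1}}$ and counters of $\countersset_k$. To bridge this I would evaluate $\mathit{rank}_k$ along a run $\pi$ of $\A$ on a \emph{lifted configuration}: using the {\cMD} decomposition / pointing VASS $\tilde{\A}$, every step of $\pi$ points into a concrete MEC $B$ of some $\A_{\sigma'}$ with $\sigma'\in\cMD(\A)$, so each local copy $c_B$ of a level-$j$ counter $c\in\tilde{C}_j$ is assigned the effect on $c$ accumulated by the steps pointing into $B$ since $B$ was last (re-)entered, while counters of $\A$ that persist unchanged into $\countersset_k$ keep their real value. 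Write $\widetilde{\mathit{rank}}_i$ for $\mathit{rank}_k$ evaluated on this lifted configuration after $i$ steps.

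Next I would isolate the ``error'': the steps that can make $\widetilde{\mathit{rank}}$ increase in expectation are of three kinds, and the classification of $\A$ up to $k-1$ bounds each. Firings of a transition of $T_k'$ occur only when $\variablet=k$, and by Lemma~\ref{lemma-hatB-zero-unbounded-rankl-give-upper-estimate-nk} there are at most $n^{k+\epsilon}=n^{\variablet+\epsilon}$ of them with probability $\to 1$. A transition $t\in T\setminus T_k$ has, by the classification up to $k-1$, a tight estimate $n^{b}$ with $b\in\Bset$, so it is fired at most $n^{\max\Bset+\epsilon}\le n^{\variablet+\epsilon}$ times and each firing shifts $\widetilde{\mathit{rank}}$ by $O(1)$ through the persisting counters. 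Finally, re-entering a MEC $B$ that hosts a local copy $c_B$ of a level-$a$ counter ($a\in\Aset$) shifts $\widetilde{\mathit{rank}}$ by at most the current value of $c$, which is $O(n^{a+\epsilon})$ with probability $\to1$; such a re-entry is triggered by a transition of tight estimate $n^b$ with $a+b<k$, hence occurs $O(n^{b+\epsilon})$ times, for a total of $O(n^{a+b+\epsilon})\le O(n^{\variablet+\epsilon})$. Letting $N_i$ count the steps among the first $i$ that are of one of these three kinds, the above gives an event $G$ of probability $\to 1$ on which $N_i\le n^{\variablet+\epsilon}$ for all $i$. Now fix a strictly-decreasing transition $t$ (or a strictly-decreasing probabilistic state $p$), let $D_i$ count the firings of $t$ (resp. visits of $p$) among the first $i$ steps, and put $M_i:=\widetilde{\mathit{rank}}_i+\delta D_i-c_0 N_i$; for the counter bound take $D_i\equiv0$. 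A short case check using the $\le 0$ / $\le-\delta$ one-step bounds on the good steps and the crude $\widetilde{\mathit{rank}}$-change $\le c_0$ offset by $-c_0$ on the three bad kinds of steps shows that $(M_i)$ is a supermartingale with $M_0=\mathit{rank}_k(p\vec n)=O(n)$ and bounded increments.

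Finally I would run the stopping argument. Let $\tau$ be the first time $\widetilde{\mathit{rank}}_i\ge n^{\variablet+2\epsilon}$ (resp. $D_i\ge n^{\variablet+2\epsilon}$) and $\tau_G$ the first time $N_i>n^{\variablet+\epsilon}$; stop $M$ at $\min(\tau,\tau_G,\Term)$. On $\{\tau<\min(\tau_G,\Term)\}$ one has $M_\tau\ge n^{\variablet+2\epsilon}-c_0 n^{\variablet+\epsilon}-O(1)\ge n^{\variablet+2\epsilon}/2$ for large $n$, while in every other outcome $M\ge -c_0 n^{\variablet+\epsilon}-O(1)=-O(n^{\variablet+\epsilon})$ (using $\widetilde{\mathit{rank}}\ge-O(1)$, $D\ge 0$, and $N\le n^{\variablet+\epsilon}+1$ up to $\tau_G$); optional stopping together with $\E[M_{\min(\tau,\tau_G,\Term)}]\le M_0=O(n)$ then forces $\Pr[\tau<\min(\tau_G,\Term)]=O(n^{-2\epsilon})\to 0$, and since $\Pr[\tau_G<\Term]\to 0$ by the previous paragraph, $\Pr[\,\sup_i\widetilde{\mathit{rank}}_i\ge n^{\variablet+2\epsilon}\,]\to 0$ (resp. the same with $D_i$). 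Because $\widetilde{\mathit{rank}}_i\ge\by_k(c)\cdot(\text{current value of }c)$ for every $c$ persisting into $\countersset_k$, and $\calT_\A[t]\le D_{\Term}$, rescaling the threshold by the constants $\by_k(c)$ resp. $\delta$ yields the desired upper asymptotic estimates $n^{\variablet}$ for $\calC_\A[c]$ and $\calT_\A[t]$. The main obstacle I expect is the middle step: making the lifting of $\mathit{rank}_k$ along runs of $\A$ precise enough that the expected-non-increase property of~\hyperref[fig-systems]{(II)} genuinely survives the local-copy bookkeeping, and pinning down that the error term is controlled by exactly $n^{\variablet}$ — in particular that the combination ``level-$a$ counter copy, re-entered $\le n^{b}$ times'' is the exact source of the $\max\{a+b\}$ exponent and nothing larger slips in.
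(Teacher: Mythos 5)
Your proposal shares the paper's core mechanism: extend $\mathit{rank}_k$ to computations of $\A$ via the local-copy/pointing bookkeeping, turn it into a supermartingale by correcting for the steps that constraint system~\hyperref[fig-systems]{(II)} on $\A_{k,\hat{T}_{k+1}}$ does not control, and close with optional stopping. There are two things to flag.

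The genuine gap is the supermartingale verification. As written, $M_i=\widetilde{\mathit{rank}}_i+\delta D_i-c_0 N_i$ with a \emph{constant} $c_0$ and an \emph{unweighted} count $N_i$ is not a supermartingale, and your own error accounting already shows why: a step that re-enters a MEC hosting a local copy of a level-$a$ counter can shift $\widetilde{\mathit{rank}}$ by up to the current value of that counter, i.e.\ $\Theta(n^{a+\epsilon})$ with probability tending to $1$, not by $O(1)$. On such a step $\E[M_{i+1}-M_i]\le \Theta(n^{a+\epsilon})-c_0$, which is positive, so the "crude $\widetilde{\mathit{rank}}$-change $\le c_0$ offset by $-c_0$" argument fails for the third kind of bad step (and the "bounded increments" claim fails with it). The repair is to subtract instead the \emph{cumulative} positive rank-change over bad steps, $E_i$, rather than $c_0 N_i$; the estimates you derived show $E_i=O(n^{\variablet+\epsilon})$ on your event $G$, and $\widetilde{\mathit{rank}}_i+\delta D_i-E_i$ really is a supermartingale, after which the optional-stopping computation goes through exactly as you intended. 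This is in fact the role of the offset $\sum_{t\in\hat{T}_{k+1}}u\cdot n^{\variablet+\epsilon_2}$ and the "frozen" process $R'_i$ in the paper's proof: $R'_i$ simply does not react to bad steps, and the offset is shown (conditionally on $E_1$) to dominate the total discrepancy $|S_i-S'_i|$.

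A comparison on the transition bound: your $+\delta D_i$ device is a legitimate, and arguably more self-contained, alternative to what the paper does there. The paper instead packages $R'_i$ as the single counter of a one-dimensional VASS MDP $\A_R'$, observes that no MEC of $\A_R'$ is increasing while every MEC containing the singled-out transition (or probabilistic state) is decreasing, and imports the one-dimensional classification of \cite{AKCONCUR23} to get a linear upper estimate for $\calT_{\A_R'}[t]$, which it then lifts back to $\A$. Your route keeps the whole argument inside a single optional-stopping computation at the cost of one extra case in the martingale check; the paper's route modularizes onto known $1$-dimensional results. Once the supermartingale term is corrected as above, either route yields the stated $n^{\variablet}$ upper asymptotic estimate.
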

%
%

%
%
%
	%

Thorough this Section let us fix some strategy \(\sigma \) on \(\A\) and \(\epsilon>0  \). Note that if \(\hat{T}_{k+1}=\emptyset \) the Lemma holds trivially, hence in the rest of this section we will assume \(\hat{T}_{k+1}\neq\emptyset \). Note that we can also wlog. assume that \(\by_k(c) \) is either \(0\) or at least \(1\) for each counter \(c\).

 Let \(rank_{k,\hat{T}_{k+1}}(p\bv')=\bz_k(p)+\sum_{c\in \countersset_k} \bv'(c)\cdot \by_k(c) \) be a ranking function given by \(\by_k,\bz_k \) on \(\A_{k,\hat{T}_{k+1}} \), where \(\countersset_k\) are the counters and  \(p\bv' \) a configuration of \(\A_{k,\hat{T}_{k+1}} \). We will now extend \(rank_{k,\hat{T}_{k+1}} \) onto \(\A \) as follows: For a configuration \(p\bv\) of \(\A \), let \(rank_k(p\bv)=\bz_k(p)+\sum_{c\in \countersset_k} \bv_{k}(c)\cdot \by_k(c) \) where \(\bv_k\in \mathbb{Q}^{d_k}\) is a counters vector on \(\A_{k,\hat{T}_{k+1}} \) such that:
 \begin{itemize}
 	\item if \(c\in C_{k+}\) then \(\bv_k(c)=\bv(c) \),
 	\item if \(c\in \bigcup_{i=1}^{k-1} C_i\) and \(c_p \) is the local copy of \(c\) in \(\A_k \) in the MEC containing \(p\) then \(\bv_k'(c_p)=\bv(c) \),
 	\item otherwise \(\bv_k'(c)=0\).
 \end{itemize} (intuitively \(rank_k\) corresponds to the value of \(rank_{k,\hat{T}_{k+1}} \) but it accounts for the fact that the counter \(c\) of \(\A \) might have different weights associated to it in different states, as \(\A_k \) may contain multiple copies of \(c\) and each copy \(c_k' \) may have different value of \(\by_k(c_k') \). Thus \(rank_k\) uses in configuration \(p\bv \) for \(c\) the weight that is associated with the copy of \(c\) in \(\A_k \) in the state \(p \))

Let \(P_0V_0,P_1V_1,P_2V_2,\dots \) be the random variables encoding the computation under  \(\sigma \) in \(\A\), and let \(\bar{T}_1,\bar{T}_2,\dots \) be the random variables encoding the transitions taken (i.e. \(P_iV_i \) represents the configuration at \(i\)-th step of the computation and \(\bar{T}_i \) the \(i\)-th transition taken by this computation). Let \(R_0,R_1,R_2,\dots \) represent the value of \(rank_k \) at \(i-\)th step (i.e. \(R_i=rank_k(P_iV_i) \)).

Let us define technical constants \(0<\epsilon_1,\epsilont_2,\dots  \). As their exact values are not important we leave the assignment of their exact values to Table~\ref{Table-app-lemma-linear-upper-bound} at the end of the section where we also show that our assignment satisfies all the assumptions we make on \(\epsilon_1,\epsilon_2,\dots\) thorough this section.

Let \(E_{1}\) be the set of all computations \(\pi \) on \(\A \), such that:
\begin{itemize}
	\item \(\calT_\A[t](\pi)\leq n^{i+\epsilont_1}  \) for each transition \(t\in T_{i}\setminus T_{i+1}\) where \(i\in \{1,\dots,k-1 \} \),
	\item AND \(\calT_\A[t](\pi)\leq n^{k+\epsilont_1}  \) for each transition \(t\in T_k' \),
	\item AND \(\calC_\A[c](\pi)\leq n^{i+\epsilont_1}  \) for each counter \(c\in C_i \) where \(i\in \{1,\dots,k-1 \} \).
\end{itemize} 
Note that \(\lim_{n\rightarrow\infty} \prob_{p\vec{n}}^\sigma(E_1)=1 \) as every counter and transition we restrict in \(E_{1} \) has a corresponding upper asymptotic estimate for \(\calC_\A[c] \) and \(\calT_\A[t] \).

Let \(R_0',R_1',R_2',\dots \) be random variables defined as follows: \[R_0'=R_0+\sum_{t\in \hat{T}_{k+1}}u\cdot n^{l+\epsilont_2} \]
\[R_i'=    \begin{cases}
	R_{i-1}'-R_{i-1}+R_i & \text{if } \bar{T}_i\in \hat{T}_{k+1} \text{ and } P_0V_0,P_1V_1,\dots,P_iV_i \in E_{1} \\
	R_{i-1}' & \text{else }
\end{cases}\]
where \(u\) is the maximal change of a counter per single transition in \(\A \).

Furthermore, let \(X_1,X_2,\dots\) and \(X_1',X_2',\dots\) be random variables defined by \( R_i= R_{i-1}+X_i\) and \(R_{i}'=R_{i-1}'+X_{i}'\). That is \(X_i=R_i-R_{i+1} \) and \(X_i'=R_i'-R_{i+1}' \). Also let \(S_0,S_1,\dots \) and \(S_0',S_1',\dots \) be defined as \(S_0=S_0'=0 \), \(S_i=S_{i-1}+X_i \), and \(S_i'=S_{i-1}'+X_i' \). Note that it holds \(R_i=R_0+S_i \) and \(R_i'=R_0'+S_i' \).

First let us show that 
\(R_0',R_1',\dots \) is a supermartingale.

\begin{lemma}\label{lemma-Rprimes-supermartinagel}
	\(R_0',R_1',\dots \)  is a supermartingale.
\end{lemma}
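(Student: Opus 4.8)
The plan is to verify the defining inequality $\Exp^{\sigma}[R_i'\mid\calH_{i-1}]\le R_{i-1}'$ by a one-step conditional computation, where $\calH_{i-1}$ is the $\sigma$-algebra generated by $P_0V_0,\dots,P_{i-1}V_{i-1}$ (equivalently by $V_0,\bar T_1,\dots,\bar T_{i-1}$). Since $R_i'=R_0'+\sum_{j\le i}X_j'$ and the offset $R_0'-R_0=\sum_{t\in\hat{T}_{k+1}}u\cdot n^{l+\epsilont_2}$ is a deterministic constant, it suffices to show $\Exp^{\sigma}[X_i'\mid\calH_{i-1}]\le 0$ for every $i\ge 1$. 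Fixing the realised history, write $p=P_{i-1}$, $\bv=V_{i-1}$. If the prefix $P_0V_0,\dots,P_{i-1}V_{i-1}$ already fails one of the counting bounds of $E_{1}$, every one-step extension fails it too, so $X_i'=0$ deterministically and there is nothing to prove. Otherwise the prefix lies in $E_{1}$, and by definition $X_i'=0$ unless $\bar T_i=(p,\bu,q)\in\hat{T}_{k+1}$ and the extended prefix is still in $E_{1}$, in which case $X_i'=X_i=rank_k(q(\bv+\bu))-rank_k(p\bv)$.

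The computation reduces to constraint system~\hyperref[fig-systems]{(II)} via two structural observations. (a) For $t=(p,\bu,q)\in\hat{T}_{k+1}\subseteq T_k$ the endpoints $p,q$ lie in the same MEC of each of the lower-level systems $\A_{k-j}$ used to form the local copies in $\A_k$; hence for every counter of $\A$ the local copy that $rank_k$ uses at $p$ coincides with the one it uses at $q$, no re-weighting occurs along $t$, and $rank_k(q(\bv+\bu))-rank_k(p\bv)$ is exactly the increment $\bz_k(q)-\bz_k(p)+\sum_{c\in\countersset_k}\bu_k(c)\,\by_k(c)$ of the ranking function of $\A_{k,\hat{T}_{k+1}}$ along the transition corresponding to $t$. (b) A transition of $\hat{T}_{k+1}=T_k\setminus T_k'$ never increments a transition counter restricted in $E_1$: $\hat{T}_{k+1}\subseteq T_k\subseteq T_{i+1}$ for every $i\le k-1$, so $\hat{T}_{k+1}$ is disjoint from each $T_i\setminus T_{i+1}$, and it is disjoint from $T_k'$; moreover, because $T_k$ and $T_k'$ are defined through (multi-)components, which respect the probabilistic branching ratios, $\hat{T}_{k+1}=T_k\setminus T_k'$ is ``all or nothing'' on $\tout(p)$ for every probabilistic state $p$.

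Now conclude. If $p\in Q_n$: by the first inequality of~\hyperref[fig-systems]{(II)} for $\A_{k,\hat{T}_{k+1}}$, every $t=(p,\bu,q)\in\hat{T}_{k+1}$ satisfies $X_i=\bz_k(q)-\bz_k(p)+\sum_c\bu_k(c)\by_k(c)\le 0$, so $\Exp^{\sigma}[X_i'\mid\calH_{i-1}]$, being a non-negatively weighted sum of non-positive terms (over those $t\in\hat{T}_{k+1}$ whose one-step extension stays in $E_1$), is $\le 0$. If $p\in Q_p$: by (b) either no $t\in\tout(p)$ lies in $\hat{T}_{k+1}$ (so $X_i'=0$) or all do, and in the latter case
\begin{gather*}
\Exp^{\sigma}[X_i'\mid\calH_{i-1}]=\sum_{\substack{t=(p,\bu,q)\in\tout(p)\\ \text{extension in }E_1}}P(t)\big(\bz_k(q)-\bz_k(p)+\textstyle\sum_c\bu_k(c)\by_k(c)\big)\le\\
\sum_{t=(p,\bu,q)\in\tout(p)}P(t)\big(\bz_k(q)-\bz_k(p)+\textstyle\sum_c\bu_k(c)\by_k(c)\big)\le 0,
\end{gather*}
where the last step is the second inequality of~\hyperref[fig-systems]{(II)} and the middle step holds once one checks that the outgoing branches omitted from the sum contribute non-negatively: such a branch leaves $E_1$ only by driving an already-classified counter over its $E_1$-capacity, which forces $\bu(c)>0$ and, together with the upper estimate previously established for $c$ and the estimate governing $t$, precludes $t$ from strictly decreasing $rank_k$. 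In all cases $\Exp^{\sigma}[R_i'\mid\calH_{i-1}]=R_{i-1}'+\Exp^{\sigma}[X_i'\mid\calH_{i-1}]\le R_{i-1}'$, which is the supermartingale property.

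The main obstacle is the probabilistic case: whereas for a nondeterministic state~\hyperref[fig-systems]{(II)} bounds the ranking increment of each individual outgoing transition, so that discarding the branches which exit $E_1$ can only lower the conditional expectation, a probabilistic state fires all of $\tout(p)$ with the prescribed probabilities, and one must argue that freezing $R_i'$ on the branches that exit $E_1$ does not, in aggregate, raise the conditional expectation --- this is exactly where observations (a) and (b) are combined with the upper estimates obtained earlier in the classification. Establishing (a) precisely --- that $\hat{T}_{k+1}$-transitions never cross between MECs of the lower-level systems, so that the local-copy bookkeeping of $rank_k$ is transparent along them --- is the other delicate point; everything else (the reduction to $\Exp^{\sigma}[X_i'\mid\calH_{i-1}]\le 0$, the ``already outside $E_1$'' case, and the irrelevance of the constant offset) is routine.
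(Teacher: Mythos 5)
Your overall route coincides with the paper's: reduce to showing $\E_{p\vec n}^{\sigma}[X_i'\mid\text{history up to step }i{-}1]\le 0$, use that for $t\in\hat T_{k+1}$ the increment of $rank_k$ along $t$ equals the increment of the ranking function of~\hyperref[fig-systems]{(II)} for $\A_{k,\hat{T}_{k+1}}$, then split on $Q_n$ versus $Q_p$ and apply the inequalities of~\hyperref[fig-systems]{(II)}. Two of your auxiliary claims do not hold.

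Observation~(b), that $\hat T_{k+1}$ is ``all or nothing'' on $\tout(p)$ for probabilistic $p$, is asserted without argument and is false in general: $T_k$ and $T_{k+1}$ are supports of multi-components, whose flow constraint at a probabilistic $p$ forces either all or none of $\tout(p)$ to carry flow, but $T_k'$ is a union of sets $R^{C_1,\dots,C_l}_{k-l}$ consisting of the \emph{individual} transitions whose effect on a ranking function is nonzero, and two siblings in $\tout(p)$ can disagree on that, so $\hat T_{k+1}=T_k\setminus T_k'$ can cut a probabilistic fan properly. (It is also unneeded: \hyperref[fig-systems]{(II)} is solved on $\A_{k,\hat{T}_{k+1}}$, so its probabilistic-state constraint already sums over $\tout(p)\cap\hat T_{k+1}$.) The real gap is the treatment of the branches that leave $E_1$. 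You correctly flag that at a probabilistic state some summands of $\sum_{t\in\tout(p)\cap\hat T_{k+1}}P(t)\cdot\RankEff(t)\le 0$ get \emph{dropped} when $R'$ is frozen, and that this is sound only if the dropped terms are $\ge 0$; but the justification ``$\bu_t(c)>0$ for the over-capped counter, together with the upper estimates, precludes $t$ from strictly decreasing $rank_k$'' is a non sequitur: $\RankEff(t)=\bz_k(q)-\bz_k(p)+\sum_{c'}\bu_k(c')\by_k(c')$ has a state-offset term and contributions from every counter, so one positive coordinate $\bu_t(c)$ says nothing about its sign, and the asymptotic upper estimates constrain computations, not a single transition. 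A clean repair is to make the freezing condition measurable one step earlier, i.e.\ set $X_i'=X_i$ iff $\bar{T}_i\in\hat T_{k+1}$ and the length-$(i-1)$ prefix (not the length-$i$ prefix) is still in $E_1$; the indicator is then determined by the history before step $i$, the conditional expectation at $p\in Q_p$ is exactly the~\hyperref[fig-systems]{(II)}-sum with nothing dropped, and the two definitions of $R_i'$ agree on every computation in $E_1$, so Lemma~\ref{lemma-R-lessthan-Rprime-cond-Reps2} and the optional-stopping argument are unaffected.
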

\begin{proof}
		It holds \(\Exp^\sigma_{p\vec{n}}(R_{i}'|R_{i-1}')=\Exp^\sigma_{p\vec{n}}(R_{i-1}'|R_{i-1}')+\Exp^\sigma_{p\vec{n}}(X_{i}'|R_{i-1}')=R_{i-1}'+\Exp^\sigma_{p\vec{n}}(X_{i}'|R_{i-1}') \). Thus we want to show that \(\Exp^\sigma_{p\vec{n}}(X_{i}'|R_{i-1}')\leq 0 \).	It holds
	
	\(X_{i}'=\begin{cases}
		-R_{i-1}+R_i &\text{if } \bar{T}_i\in \hat{T}_{k+1} \text{ and } P_0V_0,P_1V_1,\dots,P_iV_i \in E_{1}  \\
		0 & \text{else }
	\end{cases} \)
	
	Thus it holds \begin{gather*}
	\Exp^\sigma_{p\vec{n}}(X_{i}'|R_{i-1}')
	= \sum_{t\in \hat{T}_{k+1}} \prob_{p\vec{n}}^\sigma(\bar{T}_{i}=t \textit{ and } P_0V_0,P_1V_1,\dots,P_iV_i \in E_{1}|R_{i-1}')\cdot\RankEff(t)+\\+ \sum_{t\in \hat{T}_{k+1}} \prob_{p\vec{n}}^\sigma(\bar{T}_{i}=t \textit{ and } P_0V_0,P_1V_1,\dots,P_iV_i \notin E_{1} |R_{i-1}')  \cdot 0 + \sum_{t\notin \hat{T}_{k+1}} \prob_{p\vec{n}}^\sigma(\bar{T}_{i}=t|R_{i-1}')\cdot 0  
	=\\= \sum_{t\in \hat{T}_{k+1}} \prob_{p\vec{n}}^\sigma(\bar{T}_{i}=t \textit{ and } P_0V_0,P_1V_1,\dots,P_iV_i \in E_{1}|R_{i-1}')\cdot\RankEff(t) 
	 \end{gather*}
	
	where \(\RankEff(t) \) represents the effect of \(t\)  on \(rank_k\).

	Let \(T_n=\hat{T}_{k+1}\cap\bigcup_{p\in Q_n} \tout(p)\) and \(T_p=\hat{T}_{k+1}\cap \bigcup_{p\in Q_p} \tout(p) \). We can write		
	\begin{gather*}
		\Exp^\sigma_{p\vec{n}}(X_{i}'|R_{i-1}')
		=
		\sum_{t\in \hat{T}_{k+1}} \prob_{p\vec{n}}^\sigma(\bar{T}_{i}=t \textit{ and } P_0V_0,P_1V_1,\dots,P_iV_i \in E_{1}|R_{i-1}')\cdot\RankEff(t)
		=\\=
	\sum_{t\in T_n} \prob_{p\vec{n}}^\sigma(\bar{T}_{i}=t\textit{ and } P_0V_0,P_1V_1,\dots,P_iV_i \in E_{1}|R_{i-1}')\cdot \RankEff(t)+\\+\sum_{t\in T_p} \prob_{p\vec{n}}^\sigma(\bar{T}_{i}=t\textit{ and } P_0V_0,P_1V_1,\dots,P_iV_i \in E_{1}|R_{i-1}')\cdot \RankEff(t)
	\end{gather*}

	Note that for each \(t\in \hat{T}_{k+1}\)  the effect of \(t \) on \(rank_k\) is the same as the effect of \(t \) on \(rank_{k,\hat{T}_{k+1}} \).  Hence for each \(t=(p,\bu,q)\in \hat{T}_{k+1} \) it holds that \(\RankEff(t)=\bz_k(q)-\bz_k(p)+\sum_{c\in \countersset_k} \by_k(c)\cdot \bu_k(c)\). Thus from \( \by_k,\bz_k\) being a solution of \hyperref[fig-systems]{(II)} for \(\A_k' \) we have for each \(t\in T_n \) that  \(\RankEff(t)\leq 0 \),  and for each \(p\in Q_n \) that \(\sum_{t\in \tout(p)} P(t)\cdot \RankEff(t) \leq 0 \) (See Section~\ref{sec-systems} for details). Therefore it holds
	\[\sum_{t\in T_n} \prob_{p\vec{n}}^\sigma(\bar{T}_{i}=t\textit{ and } P_0V_0,P_1V_1,\dots,P_iV_i \in E_{1}|R_{i-1}')\cdot \RankEff(t)\leq 0
	\]
	and 
	\begin{gather*}
	 \sum_{t\in T_p} \prob_{p\vec{n}}^\sigma(\bar{T}_{i}=t\textit{ and } P_0V_0,P_1V_1,\dots,P_iV_i \in E_{1}|R_{i-1}')\cdot \RankEff(t)
	=\\=
	\sum_{p\in Q_p} \prob_{p\vec{n}}^\sigma(P_{i-1}=p\textit{ and } P_0V_0,P_1V_1,\dots,P_iV_i \in E_{1}\mid R_{i-1}') \cdot  \sum_{t\in \tout(p)} P(t)\cdot \RankEff(t) 
	\leq 0		
\end{gather*}
	Thus \(E_{p\vec{n}}^\sigma[X_{i}'|R_{i-1}']\leq 0\) and so \(R_0',R_1',\dots \) is a supermartingale.
\end{proof}

Now let us prove the following Lemma.

\begin{lemma}\label{lemma-R-lessthan-Rprime-cond-Reps2}
	For all sufficiently large \(n\) it holds \(\prob_{p\vec{n}}^\sigma(R_i> R_i'|E_{1})=0 \) for all \(i\in \mathbb{N}_0 \).
\end{lemma}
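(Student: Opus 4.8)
The plan is to strip the conditioning by exploiting a monotonicity property of $E_{1}$, and thereby reduce the claim to a purely combinatorial estimate valid on every computation in $E_{1}$.

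First I would observe that the constraints defining $E_{1}$ — the bounds $\calT_\A[t](\pi)\le n^{\epsilon_1}\cdot n^{i}$ for $t\in T_i\setminus T_{i+1}$ with $i<k$, the bounds $\calT_\A[t](\pi)\le n^{k+\epsilon_1}$ for $t\in T_k'$, and the bounds $\calC_\A[c](\pi)\le n^{i+\epsilon_1}$ for $c\in C_i$ with $i<k$ — are all monotone along prefixes: if a computation lies in $E_{1}$, then every finite prefix already respects each of them. Hence, conditioned on $E_{1}$, the guard ``$P_0V_0,\dots,P_iV_i\in E_{1}$'' in the definition of $R_i'$ holds for every $i$ below termination, so that $X_i'=X_i$ when $\bar{T}_i\in\hat{T}_{k+1}$ and $X_i'=0$ otherwise. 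Using $R_i=R_0+\sum_{j\le i}X_j$, $R_i'=R_0'+\sum_{j\le i}X_j'$ and $R_0'=R_0+\sum_{t\in\hat{T}_{k+1}}u\cdot n^{l+\epsilon_2}$, this gives, on $E_{1}$,
\[
R_i-R_i'\;=\;\Big(\sum_{j\le i,\ \bar{T}_j\notin\hat{T}_{k+1}}X_j\Big)\;-\;\sum_{t\in\hat{T}_{k+1}}u\cdot n^{l+\epsilon_2}.
\]
So the lemma reduces to a deterministic statement: on every $\pi\in E_{1}$ and for every $i$ below termination, the net change of $rank_k$ accumulated along the steps that do \emph{not} use a transition of $\hat{T}_{k+1}$ is at most $\sum_{t\in\hat{T}_{k+1}}u\cdot n^{l+\epsilon_2}$, once $n$ is large enough.

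Second I would bound this net change combinatorially. Group the relevant steps by their transition; a transition $t\in T\setminus\hat{T}_{k+1}$ lies in some $T_i\setminus T_{i+1}$ with $i<k$ or in $T_k'$, so on $\pi\in E_{1}$ it is used at most $n^{i+\epsilon_1}$, resp.\ $n^{k+\epsilon_1}$, times. For a single occurrence of $t=(p,\bu,q)$ the change of $rank_k$ equals $\bz_k(q)-\bz_k(p)$ plus $\sum_{c\in\countersset_k}(\bv'_k(c)-\bv_k(c))\cdot\by_k(c)$, where $\bv_k,\bv'_k$ are the re-indexed counter vectors from the definition of $rank_k$ at $p\bv$ and $q\bv'$. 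The contributions of the $\bz_k$-term and of the counters $c\in\tilde{C}_{k+}$ are bounded by a constant depending only on $\by_k,\bz_k,u$; the only possibly large contribution comes from a counter $c\in\tilde{C}_j$ with $j<k$ for which $p$ and $q$ lie in distinct MECs (or outside all MECs) of $\A_{k-j}$, so that the local copy of $c$ carrying the $rank_k$-weight is switched — this contributes $-\by_k(c_{B_1})\cdot\bv(c)+\by_k(c_{B_2})\cdot\bv'(c)$, which is $O(n^{j+\epsilon_1})$ because $\calC_\A[c]\le n^{j+\epsilon_1}$ on $E_{1}$. Multiplying the per-occurrence bounds by the numbers of occurrences and summing over the finitely many transitions of $T\setminus\hat{T}_{k+1}$ and over $j$ yields an estimate which, by the choice of the constants $\epsilon_1,\epsilon_2,\dots$ recorded in Table~\ref{Table-app-lemma-linear-upper-bound}, is at most $\sum_{t\in\hat{T}_{k+1}}u\cdot n^{l+\epsilon_2}$ for all sufficiently large $n$. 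Substituting into the displayed identity gives $R_i\le R_i'$ on $E_{1}$, i.e.\ $\prob_{p\vec{n}}^\sigma(R_i>R_i'\mid E_{1})=0$.

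The main obstacle is precisely this combinatorial bound: controlling the jumps of $rank_k$ that occur when the computation crosses between MECs of the auxiliary systems $\A_{k-j}$ — where the local copy of a weighted counter is reset in one MEC and re-initialised in another — and then verifying that the product of ``how often such a crossing transition can be taken'' (controlled by the transition-count bounds in $E_{1}$) with ``how large the crossed counter can be'' (controlled by the counter-magnitude bounds in $E_{1}$) stays below the offset $\sum_{t\in\hat{T}_{k+1}}u\cdot n^{l+\epsilon_2}$ built into $R_0'$. Pinning down the required relations among $\epsilon_1,\epsilon_2,\dots$ is exactly what Table~\ref{Table-app-lemma-linear-upper-bound} is for, and checking those relations is the bulk of the remaining bookkeeping.
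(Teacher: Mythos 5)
Your proposal is correct and follows essentially the same route as the paper: you reduce the conditional probability to a deterministic bound on the discrepancy $R_i-R_i'$ valid on every $\pi\in E_1$, exploit the prefix-monotonicity of $E_1$ to identify the deviation with the cumulative $rank_k$-change along steps outside $\hat{T}_{k+1}$, and bound that combinatorially by pairing transition-count caps with the two per-occurrence contributions (a constant direct effect plus the $O(n^{j+\epsilon_1})$ jump from switching a local copy of a counter $c\in\tilde{C}_j$). The only stylistic difference is that the paper bounds $|S_i-S_i'|$ and then works case by case through the three definitions of $\variablet$ to verify dominance by $\sum_{t\in\hat{T}_{k+1}}u\cdot n^{l+\epsilont_2}$, whereas you correctly observe only the one-sided bound $R_i\le R_i'$ is needed and defer the $l$-dependent arithmetic to Table~\ref{Table-app-lemma-linear-upper-bound}.
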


From \(R_0=R_0'-\sum_{t\in \hat{T}_{k+1}}u\cdot n^{l+\epsilont_2} \) we have  \(\prob_{p\vec{n}}^\sigma(R_i> R_i'|E_{1})=\prob_{p\vec{n}}^\sigma(S_i- S_i'>\sum_{t\in \hat{T}_{k+1}}u\cdot n^{l+\epsilont_2}|E_{1}) \). 

Therefore Lemma~\ref{lemma-R-lessthan-Rprime-cond-Reps2} follows from the following Lemma.

\begin{lemma}
	For all sufficiently large \(n\), conditioned on \(E_1 \) it holds  it holds \(|S_i- S_i'|<\sum_{t\in \hat{T}_{k+1}}u\cdot n^{l+\epsilont_2} \) for all \(i\in \mathbb{N}_0 \).
\end{lemma}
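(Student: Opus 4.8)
The inequality is pathwise, so the plan is to prove that for \emph{every} computation $\pi$ in $E_1$ and every $i\in\mathbb{N}_0$ we have $|S_i-S_i'|<\sum_{t\in\hat{T}_{k+1}}u\cdot n^{l+\epsilont_2}$, with a threshold on $n$ depending only on $\A$ and the already-fixed data ($\sigma,k,\epsilon$), not on $\pi$ or $i$; the version ``conditioned on $E_1$'' is then immediate.

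First I would unfold the definitions of $R_i'$ and $X_i'$. By the case split defining $R_i'$, one gets $X_i'=R_i'-R_{i-1}'=R_i-R_{i-1}=X_i$ whenever $\bar{T}_i\in\hat{T}_{k+1}$ \emph{and} the prefix $P_0V_0,\dots,P_iV_i$ still lies in $E_1$, while $X_i'=0$ otherwise; hence $X_i-X_i'$ is $0$ in the first case and $X_i$ in the second. Every constraint in the definition of $E_1$ is an upper bound on some $\calT_\A[t](\pi)$ or $\calC_\A[c](\pi)$, and both quantities are nondecreasing along prefixes, so if $\pi\in E_1$ then every prefix of $\pi$ is again in $E_1$. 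Thus, for $\pi\in E_1$ the ``prefix in $E_1$'' clause is always satisfied, so $X_i-X_i'=X_i$ exactly when $\bar{T}_i\notin\hat{T}_{k+1}$ and $=0$ otherwise, and summing yields $S_i-S_i'=\sum_{j\le i,\ \bar{T}_j\notin\hat{T}_{k+1}}X_j$ for every $\pi\in E_1$.

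Next I would bound $\sum_{j\le i,\ \bar{T}_j\notin\hat{T}_{k+1}}|X_j|$ using $\pi\in E_1$. The ``bad'' transitions form $T\setminus\hat{T}_{k+1}=(T\setminus T_k)\cup T_k'=\big(\bigcup_{b\in\Bset}(T_b\setminus T_{b+1})\big)\cup T_k'$. Each $X_j$ is the effect of $\bar{T}_j$ on $rank_k$: for a transition staying inside a single MEC of every relevant $\A_{k-j'}$ this effect is bounded by a constant $W$ fixed by $\by_k,\bz_k$; the finitely many transitions crossing between two local copies of a counter $c$ are handled separately, their effect being bounded by the $E_1$-estimate on $c$ times a constant. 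Combining this with the occurrence bounds from $E_1$ — each $t\in T_b\setminus T_{b+1}$ ($b<k$) occurs at most $n^{b+\epsilont_1}$ times along $\pi$, each $t\in T_k'$ at most $n^{k+\epsilont_1}$ times — the whole sum is bounded, uniformly in $i$ and in $\pi\in E_1$, by $C\cdot n^{e}$ for a constant $C$ and an exponent $e$ computed case by case: if $k=1$ there are no bad transitions at all ($T_1'=\emptyset$, $\hat{T}_2=T$), so $S_i=S_i'$ identically; if $T_k'\neq\emptyset$ then $\variablet=l=k$ and $e=k+\epsilont_1$; and if $T_k'=\emptyset$ with $k>1$, the fact that $k\in X_1\cup X_2$ at the moment this lemma is invoked forces $\variablet=l\ge\max\Bset$, which dominates the occurrence exponents $b+\epsilont_1$.

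Finally I would conclude: since $\hat{T}_{k+1}\neq\emptyset$, $\sum_{t\in\hat{T}_{k+1}}u\cdot n^{l+\epsilont_2}\ge u\cdot n^{l+\epsilont_2}$, and by the choice of the constants in Table~\ref{Table-app-lemma-linear-upper-bound} — in particular $l+\epsilont_2$ strictly exceeds the exponent $e$ above — we get $u\cdot n^{l+\epsilont_2}>C\cdot n^{e}$ for all sufficiently large $n$, hence $|S_i-S_i'|<\sum_{t\in\hat{T}_{k+1}}u\cdot n^{l+\epsilont_2}$ for all $i$ at once. I expect the main obstacle to be the exponent bookkeeping of the third step — pinning down $e$ and checking $e<l+\epsilont_2$ in each case from the scheduling of $k$ and the structure of the $\A_j$'s — together with the careful treatment of the state-dependent weights in $rank_k$ for transitions crossing between local copies of a counter, so that their contribution remains polynomially controlled by the $E_1$-bounds.
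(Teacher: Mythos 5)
Your plan follows the same route as the paper's proof: on $E_1$ the terms $X_j$ and $X_j'$ differ exactly when $\bar{T}_j\notin\hat{T}_{k+1}$, you sum the resulting $|X_j|$, bound the multiplicities via $E_1$, bound each $|\RankEff(t)|$, and case-split on the definition of $l$. Your explicit observation that $E_1$ is downward-closed along prefixes (so the ``prefix in $E_1$'' clause in the definition of $R_i'$ is automatic once the whole run is conditioned on $E_1$) is the same fact the paper uses silently, and your note that $S_i=S_i'$ identically when $k=1$ (because $T_1'=\emptyset$ and $T_1=T$, so there are no bad transitions) is cleaner than the paper, which carries a vacuous $|T_k'|\cdot n^{k+\epsilont_3}$ term through that case.

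The gap is in your step $4$ bookkeeping for the last two cases, and specifically for $T_k'=\emptyset$, $k>1$. The invariant you invoke, ``$l\geq\max\Bset$, which dominates the occurrence exponents $b+\epsilont_1$'', is neither generally true nor what is needed. The dominant part of $|\RankEff(t)|$ for $t\in T_b\setminus T_{b+1}$ is not the constant but the weight-change contribution: $t$ can relabel the local copies of every counter $c\in C_1\cup\dots\cup C_{k-b}$, whose value on $E_1$ is at most $n^{\max(\Aset\cap\{1,\dots,k-b\})+\epsilont_1}$, so $|\RankEff(t)|$ is of order $n^{\max(\Aset\cap\{1,\dots,k-b\})+\epsilont_1}$, not $O(1)$. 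Multiplying by the multiplicity bound $n^{b+\epsilont_1}$, the exponent for that family is $b+\max(\Aset\cap\{1,\dots,k-b\})+2\epsilont_1$. The reason this stays below $l+\epsilont_2$ is that $b+\max(\Aset\cap\{1,\dots,k-b\})$ is itself an element of $\{a+b':a\in\Aset,\,b'\in\Bset,\,a+b'\leq k\}$, the set whose maximum defines $l$ in that case, hence $\leq l$; no comparison with $\max\Bset$ is needed, and such a comparison can fail (e.g.\ if no $a\in\Aset$ has $a+\max\Bset\leq k$). The $T_k'\neq\emptyset$ case has the same unaccounted contributions from $T_b\setminus T_{b+1}$; there one uses $b+\max(\Aset\cap\{1,\dots,k-b\})\leq b+(k-b)=k=l$ instead. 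You rightly flag this bookkeeping as the expected obstacle, but as written the key inequality $e<l+\epsilont_2$ has not actually been established in those two cases.
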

\begin{proof}
	Note that conditioned on \(E_{1}\) it holds that \(X_i\neq X_i' \) iff \(\bar{T}_i\notin \hat{T}_{k+1}  \). And since there is an upper limit on how many times any transition \(t \notin \hat{T}_{k+1}\) can appear in \(E_{1} \), conditioned on \(E_{1}\) it holds that  \(|S_i-S_i'|\leq   \sum_{t \notin \hat{T}_{k+1}} \max_t(E_{1})\cdot |\RankEff(t)| \) where \(\max_t(E_{1})\) is the maximal number of times \(t\) can appear along any computation from \(E_{1} \). We can rewrite this as 
	\begin{gather*}
		|S_i-S_i'|
		\leq 
		\sum_{t\in T_k'}\max_t(E_{1})\cdot |\RankEff(t)| + \sum_{i=1}^{k-1}\sum_{t \in T_{i}\setminus T_{i+1}} \max_t(E_{1})\cdot |\RankEff(t)|
		\leq\\
		\sum_{t\in T_k'}n^{k+\epsilont_1}\cdot r + \sum_{i=1}^{k-1}\sum_{t \in T_{i}\setminus T_{i+1}} n^{i+\epsilont_1}\cdot |\RankEff(t)|
	\end{gather*}
	for some constant \(r\).
	Notice that for any \(t \in T_{i}\setminus T_{i+1}\) the value of \(\RankEff(t)\) consists of two parts, the first is the actual effect of the transition on the counters + the change of rank from changing the state, which can be bounded by a constant, and the second part consists in potentially changing the weights assigned to individual counters in \(rank_k \). But \(t \in T_{i}\setminus T_{i+1}\) can change the weights only for counters from \(C_1,\dots,C_{k-i} \) which in \(E_{1} \) are all upper bounded by \(n^{max(A\cap \{1,\dots,k-i \})+\epsilont_1} \). Thus  \(|\RankEff(t)|\leq r\cdot (1+n^{max(A\cap \{1,\dots,k-i \})+\epsilont_1}) \).
	
	
	Therefore assuming 
	\begin{equation}\label{equation-epsilon-bounds-something1}
		\epsilont_2>2\cdot \epsilont_3> \epsilont_3>2\cdot \epsilont_1
	\end{equation}
 we can write for all sufficiently large \(n\)
	\begin{gather*}
		\sum_{t\in T_k'}n^{k+\epsilont_1}\cdot r + \sum_{i=1}^{k-1}\sum_{t \in T_{i}\setminus T_{i+1}} n^{i+\epsilont_1}\cdot |\RankEff(t)|
		\leq \\
		\sum_{t\in T_k'}n^{k+\epsilont_1}\cdot r  + \sum_{i=1}^{k-1}\sum_{t \in T_{i}\setminus T_{i+1}} n^{i+\epsilont_1}\cdot r\cdot (1+n^{max(A\cap \{1,\dots,k-i \})+\epsilont_1})
		\leq \\
		|T_k'|\cdot n^{k+\epsilont_3} + \sum_{i=1}^{k-1}|T_i\setminus T_{i+1}|\cdot  n^{i+\epsilont_3+max(A\cap \{1,\dots,k-i \})}
	\end{gather*}
	%
%

Consider now the three cases considered in the definition of \(l\) in Lemma~\ref{lemma-linear-upper-bound}.

In the case that \(k=1 \) it holds 
\begin{gather*}
	|T_k'|\cdot n^{k+\epsilont_3} + \sum_{i=1}^{k-1}|T_i\setminus T_{i+1}|\cdot  n^{i+\epsilont_3+max(A\cap \{1,\dots,k-i \})}
	= 
	|T_k'|\cdot n^{k+\epsilont_3} 
	\leq \\
	n^{k+2\cdot \epsilont_3} < n^{k+\epsilont_2} \leq 	\sum_{t\in \hat{T}_{k+1}}u\cdot n^{k+\epsilont_2} = \sum_{t\in \hat{T}_{k+1}}u\cdot n^{l+\epsilont_2}
\end{gather*}

In the case that \(k> 1 \) and \(T_k'\neq \emptyset \) it holds 
\begin{gather*}
	|T_k'|\cdot n^{k+\epsilont_3} + \sum_{i=1}^{k-1}|T_i\setminus T_{i+1}|\cdot  n^{i+\epsilont_3+max(A\cap \{1,\dots,k-i \})}
	\leq \\
	|T_k'|\cdot n^{k+\epsilont_3}  + \sum_{i=1}^{k-1}|T_i\setminus T_{i+1}|\cdot  n^{i+\epsilont_3+k-i}
	=
	|T_k'|\cdot n^{k+\epsilont_3}  + \sum_{i=1}^{k-1}|T_i\setminus T_{i+1}|\cdot  n^{k+\epsilont_3}
	\leq \\ 
	n^{k+2\cdot \epsilon_3} 
	 < n^{k+\epsilont_2} \leq 	\sum_{t\in \hat{T}_{k+1}}u\cdot n^{k+\epsilont_2} = \sum_{t\in \hat{T}_{k+1}}u\cdot n^{l+\epsilont_2}
\end{gather*}
 we have \(l=\max\{a+b\mid a\in A,b\in B; a+b\leq k \}\) then
And in the case that \(k>1 \) and \(T_k'=\emptyset \) it holds 
\begin{gather*}
	|T_k'|\cdot n^{k+\epsilont_3} + \sum_{i=1}^{k-1}|T_i\setminus T_{i+1}|\cdot  n^{i+\epsilont_3+max(A\cap \{1,\dots,k-i \})}
	=\\
	0\cdot n^{k+\epsilont_3} + \sum_{i\in B\cap \{1,\dots,k-1 \}} |T_i\setminus T_{i+1}|\cdot  n^{i+\epsilont_3+max(A\cap \{1,\dots,k-i \})} +\\+ \sum_{i\in  \{1,\dots,k-1 \}\setminus B} |T_i\setminus T_{i+1}|\cdot  n^{i+\epsilont_3+max(A\cap \{1,\dots,k-i \})}
	=\\
	\sum_{i\in B\cap \{1,\dots,k-1 \}} |T_i\setminus T_{i+1}|\cdot  n^{i+\epsilont_3+max(A\cap \{1,\dots,k-i \})} + \sum_{i\in  \{1,\dots,k-1 \}\setminus B} 0\cdot  n^{i+\epsilont_3+max(A\cap \{1,\dots,k-i \})}
	=\\
	\sum_{i\in B\cap \{1,\dots,k-1 \}} |T_i\setminus T_{i+1}|\cdot  n^{i+\epsilont_3+max(A\cap \{1,\dots,k-i \})} 
	\leq \\
	n^{2\cdot \epsilont_3+\max\{a+b\mid a\in A,b\in B; a+b\leq k \}}
	= 
	n^{l+2\cdot \epsilon_3}
	< n^{l+\epsilont_2} \leq 	\sum_{t\in \hat{T}_{k+1}}u\cdot n^{l+\epsilont_2}
\end{gather*}

Thus conditioned on \(E_{1} \) in all three cases it holds for all sufficiently large \(n\) that \(|S_i-S_i'|<    \sum_{t\in \hat{T}_{k+1}}u\cdot n^{l+\epsilont_2} \). 
\end{proof}

Now let us consider the stopping rule \(\tau \) that stops when either any counter becomes negative, or any counter \(c\in C_{k+}\) with \(\by_k(c)>0 \) becomes larger then \(n^{l+\epsilon} \) for the first time.	(i.e. either \(V_\tau(c')<0 \) for any counter \(c' \), or \(V_\tau(c)\geq n^{l+\epsilon} \) for any \(c\in C_{k+} \) with \(\by_k(c)>0 \)). It holds for all \(i\in \mathbb{N}_0\) that \[R_{min(i,\tau)} \leq max_{p\in Q} \bz_k(p) + max_{c; \by_k(c)>0} \by_k(c) \cdot d_k\cdot  (n^{l+\epsilon}+u)  \]


Furthermore, conditioned on \(E_{1} \) we can write 
\begin{gather*}
R_i'
=
R_0'+S_i'
=
R_0+\sum_{t\in \hat{T}_{k+1}}u\cdot n^{l+\epsilont_2} +S_i'
=
R_0+\sum_{t\in \hat{T}_{k+1}}u\cdot n^{l+\epsilont_2} +S_i'+S_i-S_i
= \\
R_i+\sum_{t\in \hat{T}_{k+1}}u\cdot n^{l+\epsilont_2} +S_i'-S_i
\leq 
R_i+\sum_{t\in \hat{T}_{k+1}}u\cdot n^{l+\epsilont_2} +|S_i'-S_i|
<
R_i+2\cdot \sum_{t\in \hat{T}_{k+1}}u\cdot n^{l+\epsilont_2} 
\end{gather*} Thus conditioned on \(E_{1} \) we also have \[R_{min(i,\tau)}' \leq max_{p\in Q} \bz_k(p) + max_{c; \by_k(c)>0} \by_k(c) \cdot d_k\cdot  (n^{l+\epsilon}+u) +2\cdot \sum_{t\in \hat{T}_{k+1}}u\cdot n^{l+\epsilont_2} \]
Since \(P_0V_0,P_1,V_1,\dots,P_iV_i\notin E_1 \) implies that \(R_i'=R_{i+1}'=\dots \), the above holds also without the conditioning on \(E_1\). And as from Lemma~\ref{lemma-Rprimes-supermartinagel} \(R_0',R_1',\dots \) is a supermartingale, we can apply the optional stopping theorem on \(R_0',R_1',\dots \) to obtain: 
\begin{gather*}
max_{p\in Q} \bz_k(p) + max_{c; \by_k(c)>0} \by_k(c) \cdot d_k\cdot  (n+u) + \sum_{t\in \hat{T}_{k+1}}u\cdot n^{l+\epsilont_2}
 \geq\\
\E_{p\vec{n}}^\sigma(R_0')
\geq
 \E_{p\vec{n}}^\sigma(R_{\tau}')
\geq   p\cdot X_{n^{l+\epsilon}}+(1-p)\cdot X_0 
\end{gather*}
where \(X_{n^{l+\epsilon}} \) represents the minimal possible value of \(R_\tau' \) if any counter \(c\in C_{k+}\) with \(\by_k(c)>0 \) has  \(V_\tau(c)\geq n^{l+\epsilon} \), \(p \) is the probability of any such counter being at least \(n^{l+\epsilon} \) upon stopping, and \(X_0 \) represents the minimal value of \(R_\tau' \) if no such counter reached \(n^{l+\epsilon} \).

Assuming \begin{equation}\label{equation-epsilons-fsdfsgh}
	\epsilont_4>\epsilont_2
\end{equation}  We can simplify this for all sufficiently large \(n\) as
\[max_{p\in Q} \bz_k(p) + max_{c; \by_k(c)>0} \by_k(c) \cdot d_k\cdot  (n+u) + \sum_{t\in \hat{T}_{k+1}}u\cdot n^{l+\epsilont_2}  \geq   p\cdot X_{n^{l+\epsilon}}+(1-p)\cdot X_0 \]
\[max_{p\in Q} \bz_k(p) + max_{c; \by_k(c)>0} \by_k(c) \cdot d_k\cdot  (n+u) + \sum_{t\in \hat{T}_{k+1}}u\cdot n^{l+\epsilont_2}   \geq   p\cdot X_{n^{l+\epsilon}}-(1-p)\cdot max_{c\in\{1,\dots,d_k\}} \by_k(c)\cdot u\cdot d_k \]
\[max_{p\in Q} \bz_k(p) + max_{c; \by_k(c)>0} \by_k(c) \cdot d_k\cdot  (n+u) + \sum_{t\in \hat{T}_{k+1}}u\cdot n^{l+\epsilont_2} +(1-p)\cdot max_{c\in\{1,\dots,d_k\}} \by_k(c)\cdot u\cdot d_k  \geq   p\cdot X_{n^{l+\epsilon}} \]
\[n^{l+\epsilont_4} \geq   p\cdot X_{n^{l+\epsilon}} \]
\[n^{l+\epsilont_4} \geq   p\cdot n^{l+\epsilon} \]
\[n^{\epsilont_4-\epsilon} \geq   p \]
and assuming \begin{equation}\label{equation-epsilons-somewheresomiubibh}\epsilon>\epsilont_4\end{equation} it holds \(\lim_{n\rightarrow\infty}n^{\epsilon_4-\epsilon}=0 \).

Thus for any  counter \(c\in C_{k+} \) with \(\by_k(c)>0 \) it holds for any \(\epsilon>0 \), any strategy \(\sigma \), any state \(p \), and all sufficiently large \(n\) that  \(\prob_{p\vec{n}}^\sigma[\calC_\A[c]\geq n^{l+\epsilon}]\leq p + (1-\prob_{p\vec{n}}^\sigma[E_{1}]) \) and thus \(\lim_{n\rightarrow\infty}\prob_{p\vec{n}}^\sigma[\calC_\A[c]\geq n^{l+\epsilon}]\leq  \lim_{n\rightarrow\infty}p + (1-\prob_{p\vec{n}}^\sigma[E_{1}])= 0 \). Thus \(\calC_\A[c] \) has an upper asymptotic estimate of \(n^l \). This finishes the proof of Lemma~\ref{lemma-linear-upper-bound} for counters. We proceed to extend the proof onto transitions as well.

Consider the \(1\)-dim VASS MDP \(\A_R' \) created from \(\A \) by replacing all counters with a single counter which ``almost'' corresponds to \(R_i' \). That is \(\A_R' \) has only one counter, the same set of states as \(\A \), and each transition \(t=(p,\bu,q) \) of \(\A \) is in \(\A_R' \) replaced with \(t'=(p,\bu',q) \) where \[\bu'(1)=\begin{cases}
\RankEff(t) & \text{if } t\in \hat{T}_{k+1} \\
0 & \text{else }
\end{cases}\]

Note that conditioned on \(E_{1} \) the counter of \(\A_R' \), when initialized in \(rank_k(p\vec{n})+\sum_{t\in \hat{T}_{k+1}}u\cdot n^{l+\epsilont_2}\), is after \(i\)-steps equal to the value of \(R_i' \) (assuming the equivalent computation on \(\A \)). And since \(R_i< 0 \) implies the computation on \(\A \) has already terminated (since the only way to add a negative value to \(rank_k\) is if some counter is negative) we get from Lemma~\ref{lemma-R-lessthan-Rprime-cond-Reps2} that if \(\A_R' \) terminated then the equivalent computation on \(\A \) either also terminated or it is not in \(E_{1} \). 

Let \(t=(p,\bu,q)\) be a transition of \(\A_{k,\hat{T}_{k+1}}\) such that one of the following holds:
\begin{itemize}
\item \(p\in Q_n \) and \(\bz_k(q)-\bz_k(p)+\sum_{c\in \countersset_k} \bu(c)\cdot \by_k(c)<0\),
\item \(p\in Q_p \) and \(\sum_{t' = (p',\bu',q') \in \tout(p)}P(t')\cdot \big(\bz_k(q')-\bz_k(p')+\sum_{c\in \countersset_k}  \bu'(c)\cdot \by_k(c)\big)< 0\).
\end{itemize} 
Notice that from the way \hyperref[fig-systems]{(II)} is designed, it holds that every single MEC of \(\A'_{R,\sigma} \) with \(\sigma\in\stratsMD{\A} \) that contains \(t\) is decreasing (this is since such \(t\) either decreases \(rank_k\) if \(p\in Q_n \) or if \(p\in Q_p \) then \(rank_k\) decreases on average in a single computation step from \(p\), and this ``average decrease'' can never be compensated in \(\A_{R}'\)). Also, there is no MEC of \(\A'_{R,\sigma} \) with \(\sigma\in\stratsMD{\A} \) that is increasing (again, this is due to there being no transitions that ``increase \(rank_k \) on average'' in \(\A_{k,\hat{T}_{k+1}} \) and transitions not in \(\hat{T}_{k+1} \) have effect \(0\) in \(\A_{R}' \), we refer to Section~\ref{sec-systems} for more details). Thus from the results about \(1\)-dim VASS MDPs from \cite{AKCONCUR23} it holds that \(\calT_{\A_R'}[t] \) has an upper asymptotic estimate of \(n\).  Therefore \[\prob_{p\vec{n}}^\sigma[\calT_\A[t]\geq n^{l+\epsilon}]\leq (1-\prob_{p\vec{n}}^\sigma[E_{1}])+\prob_{p\vec{1}\cdot (rank_k(p\vec{n})+\sum_{t\in \hat{T}_{k+1}}u\cdot n^{l+\epsilont_2})}^\sigma[\calT_{\A_R'}[t]\geq n^{l+\epsilon}] \]

But  assuming \begin{equation}\label{equation-epsilon-khviviguo}
	\epsilont_5>\epsilont_2
\end{equation} it holds for all sufficiently large \(n\) that \(rank_k(p\vec{n})+\sum_{t\in \hat{T}_{k+1}}u\cdot n^{l+\epsilont_2}\leq n^{l+\epsilont_5}\). Thus assuming  \begin{equation}\label{equation-epsilon-khvivifsdfsguo}
l+\epsilon>(l+\epsilont_5)(1+\epsilont_6) \end{equation} it holds


\begin{gather*}
\limsup_{n\rightarrow\infty} \prob_{p\vec{n}}^\sigma[\calT_\A[t]\geq n^{l+\epsilon}] 
\leq\\
\limsup_{n\rightarrow\infty}(1-\prob_{p\vec{n}}^\sigma[E_{1}])+\prob_{p\vec{1}\cdot (rank(p\vec{n})+\sum_{t\in \hat{T}_{k+1}}u\cdot n^{l+\epsilont_2})}^\sigma[\calT_{\A_R'}[t]\geq	 n^{l+\epsilon}] 
\leq   \\
\limsup_{n\rightarrow\infty}(1-\prob_{p\vec{n}}^\sigma[E_{1}])+\prob_{p\vec{1}\cdot n^{l+\epsilont_5}}^\sigma[\calT_{\A_R'}[t]\geq (n^{l+\epsilont_5})^{1+\epsilont_6}] 
= 	0 \end{gather*}

 Thus \(\calT_\A[t] \) has an upper asymptotic estimate of \(n^l \) proving Lemma~\ref{lemma-linear-upper-bound} for transitions. 

It remains to show there exist values for \(\epsilon_1,\epsilon_2,\dots \) that satisfy all of our assumptions. We do this in Table~\ref{Table-app-lemma-linear-upper-bound}.

\begin{table*}[h]
	\caption{Values of \(\epsilon_1,\epsilon_2,\dots \) for Section~\ref{app-lemma-linear-upper-bound}}
	\centering
		\begin{tabular}{|l|| c c| c|} 
			\hline
			\(\epsilon\) assignment &  \multicolumn{2}{|c|}{restrictions}  & After substitution \\ 
			\hline\hline
			\(\epsilon_1=\nicefrac{\epsilon}{10000} \) & \(0<\epsilon_1,\epsilon_2,\dots  \) & &  \\ 
			\hline
			\(\epsilon_2=\nicefrac{\epsilon}{100} \) & \(\epsilont_2>2\cdot \epsilont_3> \epsilont_3>2\cdot \epsilont_1 \) & \eqref{equation-epsilon-bounds-something1} & \(\nicefrac{\epsilon}{100}>2\cdot \nicefrac{\epsilon}{1000}> \nicefrac{\epsilon}{1000}>2\cdot \nicefrac{\epsilon}{10000}  \) \\
			\hline
			\(\epsilon_{3}=\nicefrac{\epsilon}{1000} \) & 	\(\epsilont_4>\epsilont_2\) & \eqref{equation-epsilons-fsdfsgh} & \(\nicefrac{\epsilon}{10}>\nicefrac{\epsilon}{100}\) \\
			\hline
			\(\epsilon_{4}=\nicefrac{\epsilon}{10} \) & \(\epsilon>\epsilont_4\) & \eqref{equation-epsilons-somewheresomiubibh} & \(\epsilon>\nicefrac{\epsilon}{10}\) \\
			\hline
			\(\epsilon_{5}=\nicefrac{\epsilon}{90} \) & \(\epsilont_5>\epsilont_2 \) & \eqref{equation-epsilon-khviviguo} & \(\nicefrac{\epsilon}{90}>\nicefrac{\epsilon}{100} \)  \\ 
			\hline
			\(\epsilon_{6}=\nicefrac{\min(\epsilon,1)}{(2\cdot l)} \) & \(l+\epsilon>(l+\epsilont_5)(1+\epsilont_6) \) & \eqref{equation-epsilon-khvivifsdfsguo} & \(l+\epsilon>(l+\nicefrac{\epsilon}{90})(1+\nicefrac{\min(\epsilon,1)}{(2\cdot l)}) \) \\
			\hline
		\end{tabular}
		\par
		\label{Table-app-lemma-linear-upper-bound}	
\end{table*}

\section{Lemma~\ref{lemma-main-for-all-k}}


Let us begin by stating the following technical definition:

\begin{definition}
	We say that a computation \(\alpha\) is \(k\)-level \(m\)-cyclic if there exists a tree graph \(G=(V,E) \) along with a function \(f \) that assings to each vertex of \(G\) a path on \(\A\), and a function \(g\) that assigns to each vertex of \(G\) a MEC of some \(\A_i \) such that:
	\begin{itemize}
		\item the root \(v_0 \) of \(G\) has \(f(v_0)=\alpha \) and \(g(v_0)=\A \);
		\item for each vertex \(v\) of \(G \), let \(v_1,\dots,v_i \) be all the children of \(v\), then it holds that \(f(v)=f(v_1)\cdot \ldots \cdot f(v_i) \);
		\item every leaf \(v\) of \(G\) has distance exactly \(k\) from the root and \(\length(f(v))>1 \).  (root has distance \(0\) to itself)
		\item every non-leaf vertex \(v \) whose distance from root is \(i \) has exactly \(m\cdot l_i  \) children \(v_1,\dots,v_{m\cdot l_i } \), where \(l_i \) is the number of MECs in \(\A_{i}^{g(v)} \), where \(\A_{i}^{g(v)} \) is the VASS MDP obtained by restricting \(\A_{i} \) only to the transitions included in  \(g(v) \). For each each MEC \(B \) of \(\A_{i}^{g(v)} \) there exist exactly \(m \) children \(v_j\) of \(v \) such that \(f(v_j) \) contains every single state of \(B \) at least once (but not counting the very first state of \(f(v_j)\)), and such that \(g(v_j)=B \). 
	\end{itemize}
	
	We call a tuple \((V,E,f,g) \) that satisfies the above conditions a \((k,m)\)-tree of \(\alpha \).
\end{definition}

Note that if there exists a \((k,m)\)-tree of \(\alpha \) then  we can decompose \(\alpha \) into "levels" such that at the \(i\)-th level we are guaranteed to cycle between all the states of any given MEC of \(A_i \) at least \(m^i \) times, and we do so in such a way that we ``interweave'' these cycles at different levels so that for every \(i<j \) we can only count at most \(m^{j-i}\) cycles on a MEC \(B \) of \(\A_{j} \) within a single cycle on a MEC \(B' \) of \(\A_i \) where \(B \) is contained in \(B'\).\footnote{Cycle as in visit them in order, if we were to say visit \(p,q,q,q,q,p,p,q,p \) then this are only two cycles as we starting from \(p\) reach \(q\) before returning to \(p\) only twice.}

Most importantly, if \(\alpha \) has a \((k,n^{1-\epsilon})\)-tree, then \(\length(\alpha)\geq n^{k-k\cdot \epsilon} \), and as we will show later, we will be able to use the \((k,n^{1-\epsilon})\)-tree to define a new strategy that with high enough probability produces a computation with a \((k+1,n^{1-\epsilon})\)-tree.

Let \(\alpha_{p\vec{n}}^{\sigma} \) be the random variable that represents the computation under the strategy \(\sigma \) in \(\A \) from initial configuration \(p\vec{n} \), let \(\alpha_{p\vec{n},..i}^{\sigma} \) be the prefix of \(\alpha_{p\vec{n}}^{\sigma}\) of length \(i\), let  \(\alpha_{p\vec{n},i..j}^{\sigma} \) be the suffix of \(\alpha_{p\vec{n},..j}^{\sigma} \)  of length \(j-i\) (and if \(j-i<0 \) then we put \(\alpha_{p\vec{n},..j}^{\sigma}=\epsilon \)), and let \(\alpha_{p\vec{n},i..}^{\sigma} \) be the suffix of \(\alpha_{p\vec{n}}^{\sigma}\) obtained by removing the first \(i \) configurations from \(\alpha_{p\vec{n}}^{\sigma}\).

We will prove the following technical Lemma.

\begin{lemma}\label{lemma-main-for-all-k}
	Let \(\A \) be a strongly connected VASS MDP, then  for each \(k\in \mathbb{N}_0 \) all of the following hold:
	\begin{enumerate}
		\item \label{enum-main-1} Each transition \(t\) of \(\A\) has either an upper asymptotic estimate of \( n^{k}\) or a lower asymptotic estimate of \(n^{k+1} \) for \(\calT_\A[t] \);
		\item \label{enum-main-2} Each counter \(c\) of \(\A\)  has either an upper asymptotic estimate of \( n^{k}\) or a lower asymptotic estimate of \(n^{k+1} \) for \(\calC_\A[c] \);
		\item \label{enum-main-3} There exists a multi-component \(\bx \) on \( \A_k \) satisfying  \(\Delta^{C_1,\dots,C_k}(\bx)=\vec{0} \) as well as \(\bx(t)>0 \) iff \(t\in T_{k+1} \);
		\item \label{enum-main-4} For each component \(\by\) of \(\A\), either there exists a transition \(t\) with \(\by(t)>0 \) and with an upper asymptotic estimate of \(n^{k} \) for \(\calT_\A[t] \), or \(\calP_{\A_{+\hat\by}}[\M_{\hat{\by}}] \) has a lower asymptotic estimate of \(n^{k+1}\);

		\item \label{enum-main-5}  Let \(1>\epsilon>0 \) and let \(h_{\epsilon}:\mathbb{N}\rightarrow\mathbb{N}_0 \) be such that \(h_{\epsilon}(n)\leq n^{1-\epsilon} \) for all \(n\in \mathbb{N} \) and \(\lim_{n\rightarrow\infty}h_{\epsilon}(n)=\infty \). Then  there exist
		\begin{itemize}
			\item a strategy \(\sigma_k^\epsilon \) on \(\A\),
			\item a function \(g^{k,\epsilon}\)  which assigns to each  \(\alpha_{p\vec{n},..i}^{\sigma_k^\epsilon}  \) either a MEC of \(\A_{k+1} \) or the symbol \(\bot\),
			\item functions \(
			r_1^{k,\epsilon},\dots,r_{k+1}^{k,\epsilon}\) that to each  each \(\alpha_{p\vec{n},..i}^{\sigma_k^\epsilon}  \) assign either \(NEXT\) or \(SAME\), where \(0\leq i\leq \length(\alpha_{p\vec{n}}^{\sigma_k^\epsilon} )\)
		\end{itemize}
		 Which satisfy the following conditions.		
		\begin{itemize}
			\item \(r_{1}^{k,\epsilon}(\alpha_{p\vec{n},..0}^{\sigma_k^\epsilon} )=NEXT \);
			\item for all \(1\leq i\leq k+1\) and all \(0\leq j\leq \length(\alpha_{p\vec{n}}^{\sigma_k^\epsilon} )\): if \(r_i^{k,\epsilon}(\alpha_{p\vec{n},..j}^{\sigma_k^\epsilon} )=NEXT \) then \(r_{l}^{k,\epsilon}(\alpha_{p\vec{n},..j}^{\sigma_k^\epsilon} )=NEXT \) for all \(i\leq l\leq k+1 \);
			\item for all \(0\leq i< \length(\alpha_{p\vec{n}}^{\sigma_k^\epsilon} )\): if \(r_{k+1}^{k,\epsilon}(\alpha_{p\vec{n},..i}^{\sigma_k^\epsilon} )=SAME \) and  \(g^{k,\epsilon}(\alpha_{p\vec{n},..i}^{\sigma_k^\epsilon})\neq \bot \) then \(g^{k,\epsilon}(\alpha_{p\vec{n},..i+1}^{\sigma_k^\epsilon})= g^{k,\epsilon}(\alpha_{p\vec{n},..i}^{\sigma_k^\epsilon}) \)			
		\end{itemize}

		
		Furthermore with probability \(p_{p\vec{n}}^{\sigma_k^\epsilon}\) such that \(\lim_{n\rightarrow\infty}p_{p\vec{n}}^{\sigma_{k}^\epsilon}=1 \)  there exists a  \((k+1,h_\epsilon(n))\)-tree \(G=(V,E,f,g) \) of \(\alpha_{p\vec{n}}^{\sigma_k^\epsilon} \)  such that
			\begin{itemize}
			\item for each \(1\leq l\leq k+1 \): let \(v_1,\dots,v_{a} \) be all the vertices of \(G\) at distance \(l \) from root, and let \(i_0<i_1<i_2<\dots<i_b \) be all the indexes \(i \) such that \(r_l(\alpha_{p\vec{n},..i}^{\sigma_{k}^\epsilon})=NEXT \). Then
			\begin{itemize}
\item \(a=b\),
				\item \(f(v_j)=\alpha_{p\vec{n},i_{j-1}\dots i_j}^{\sigma_{k}^\epsilon} \) for all \(0<j<a \),
				\item \(f(v_a)=\alpha_{p\vec{n},i_b..}^{\sigma_{k}^\epsilon} \).
			\end{itemize} 
		\item Let \(v_1,\dots,v_a \) be all the leaves of \(G\), and let \(i_0<i_1<i_2<\dots<i_b \) be all the indexes \(i \) such that \(r_{k+1}(\alpha_{p\vec{n},..i}^{\sigma_{k}^\epsilon})=NEXT \). Then for each \(0\leq x \leq  b  \) there exists \(j_x\) such that \( i_{x} \leq j_x<i_{x+1}\) (here we put  \(i_{b+1}=n^{k+2} \)) such that
		\begin{itemize}
			\item \(g(v_x)=g^{k,\epsilon}(\alpha_{p\vec{n},..j_x}^{\sigma_k^\epsilon})\neq \bot \),
			\item \(\alpha_{p\vec{n},j_x..i_{x+1}}^{\sigma_k^\epsilon}\) contains every state of \(g(v_x) \) at least once,
			\item for all \(j_x\leq y< i_{x+1} \) it holds \(g^{k,\epsilon}(\alpha_{p\vec{n},..y}^{\sigma_k^\epsilon})=g^{k,\epsilon}(\alpha_{p\vec{n},..j_x}^{\sigma_k^\epsilon}) \).
		\end{itemize}  
		\end{itemize}


	\end{enumerate}
	
\end{lemma}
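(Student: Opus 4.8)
The plan is to prove Lemma~\ref{lemma-main-for-all-k} by one simultaneous induction on $k$, so that the argument for a fixed $k$ may invoke parts~\eqref{enum-main-1}--\eqref{enum-main-4} for that same $k$ and all five parts for $k-1$; throughout I would use as black boxes the lemmas of Section~\ref{section-complexity} (Lemmas~\ref{lemma-polynomial-for-one-k}, \ref{lemma-hatB-zero-unbounded-rankl-give-upper-estimate-nk}, \ref{lemma-linear-upper-bound}, \ref{lemma-something-something-k+1-estiamtes}, \ref{lemma:dichotomy}) together with Lemmas~\ref{counters-pumpable-all-at-once} and~\ref{lemma-decompose-multicomponents-into-components}.

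\textbf{Base case $k=0$.} I would observe that parts~\eqref{enum-main-1} and~\eqref{enum-main-2} are immediate: the initial configuration $p\vec{n}$ makes $n=n^{0+1}$ a trivial lower asymptotic estimate of every $\calC_\A[c]$, and, combining strong connectivity with Lemma~\ref{counters-pumpable-all-at-once} applied with all exponents equal to $1$, also of every $\calT_\A[t]$. In part~\eqref{enum-main-3} the condition $\Delta^{C_1,\dots,C_0}(\bx)=\vec{0}$ is vacuous ($\bigcup_{i=1}^{0}C_i=\emptyset$) and $T_1=T$, so it suffices to take $\bx=\sum_{t\in T}\by_t$ with $\by_t$ a component of $\A$ satisfying $\by_t(t)>0$ (one exists by strong connectivity), which is a multi-component with full support by Lemma~\ref{lemma-addition-multicomponents}. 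Part~\eqref{enum-main-4} holds because $\calP_{\A_{+\hat\by}}[\M_{\hat\by}]$ again has the trivial lower estimate $n$. For part~\eqref{enum-main-5} I would use Lemma~\ref{counters-pumpable-all-at-once} once more to reach, with probability tending to $1$, a configuration with every counter at least $n^{1-\epsilon'}$ for a fixed $\epsilon'\in(0,\epsilon)$, and then let $\sigma_0^\epsilon$ repeatedly follow a fixed closed walk visiting every state of $\A$; since one walk changes the counters by at most a constant, at least $h_\epsilon(n)\le n^{1-\epsilon}$ walks complete for large $n$, which gives the required $(1,h_\epsilon(n))$-tree (root $\alpha$; one MEC at level $1$; $h_\epsilon(n)$ children, each a single walk of length $>1$), with $g^{0,\epsilon}$ returning $\A$ inside a walk and $\bot$ between walks, and $r_1^{0,\epsilon}$ flagging the start of each walk.

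\textbf{Inductive step, parts~\eqref{enum-main-1}--\eqref{enum-main-4}.} Fix $k\ge 1$; by induction parts~\eqref{enum-main-1},\eqref{enum-main-2} for $k-1$ amount to a classification of $\A$ up to $k-1$, so $\A_k$, $T_k'$, $\hat{T}_{k+1}$ and maximal solutions $\bx_k$ of~\hyperref[fig-systems]{(I)} and $\by_k,\bz_k$ of~\hyperref[fig-systems]{(II)} for $\A_{k,\hat{T}_{k+1}}$ are defined. Parts~\eqref{enum-main-1},\eqref{enum-main-2} for $k$ are then Lemma~\ref{lemma-polynomial-for-one-k}: Lemma~\ref{lemma-hatB-zero-unbounded-rankl-give-upper-estimate-nk} yields $T_k'$ and upper estimates $n^k$, Lemma~\ref{lemma-linear-upper-bound} yields further upper estimates $n^{\variablet}\le n^k$, Lemma~\ref{lemma-something-something-k+1-estiamtes} yields the lower estimates $n^{k+1}$, and the dichotomy Lemma~\ref{lemma:dichotomy} guarantees every counter and transition is covered (a newly appearing upper estimate is automatically tight, since it must match the lower estimate forced at the previous level). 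For part~\eqref{enum-main-3} I would take $\bx=\bx_k$: Lemma~\ref{lemma-something-something-k+1-estiamtes} gives $\{t:\bx_k(t)>0\}\subseteq T_{k+1}$, while Lemmas~\ref{lemma-linear-upper-bound} and~\ref{lemma:dichotomy} give every transition of $\A_{k,\hat{T}_{k+1}}$ with $\bx_k(t)=0$, and every transition outside $\hat{T}_{k+1}$, an upper estimate $n^{\variablet}\le n^k$, so $\{t:\bx_k(t)>0\}=T_{k+1}$; moreover $\Delta(\bx_k)\ge\vec{0}$ (multi-component) and cannot be strictly positive on a counter in some $C_j$ with $j\le k$ (that counter has a tight estimate $\le n^k$, whereas Lemma~\ref{lemma-something-something-k+1-estiamtes} would force $n^{k+1}$), hence $\Delta^{C_1,\dots,C_k}(\bx_k)=\vec{0}$. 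For part~\eqref{enum-main-4}: if the component $\by$ has no transition with an upper estimate $n^k$, then all its transitions lie in $T_{k+1}$, so $\by$ is a component of $\A_{k,\hat{T}_{k+1}}$; by the remark following Lemma~\ref{lemma-something-something-k+1-estiamtes} (Appendix~\ref{app-sec-lowerest-nkplus1-hatby}) $\hat\by$ is then iterable at least $n^{k+1-\epsilon}$ times with probability tending to $1$, i.e.\ $\calP_{\A_{+\hat\by}}[\M_{\hat\by}]$ has lower asymptotic estimate $n^{k+1}$.

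\textbf{Inductive step, part~\eqref{enum-main-5} — the main obstacle.} Here I would build $\sigma_k^\epsilon$ by the ``probabilistic part / deterministic part'' scheme of Section~\ref{sec-informal}: decompose $\bx_k$ via Lemma~\ref{lemma-decompose-multicomponents-into-components} into components of $\A_{k,\hat{T}_{k+1}}$, grouped by the MECs $B_1,\dots,B_w$ of $\A_{T_{k+1}}=\A_{\bx_k}$ (these are the MECs of $\A_{k+1}$ up to the local-copy renaming). Using bins, $\sigma_k^\epsilon$ runs the $(k,h_\epsilon(n))$-tree skeleton supplied by $\sigma_{k-1}^\epsilon$ on a family of ``deterministic'' bins (into which it pours the effects of the $co\text{-}\hat\by$ parts, which are never zero-unbounded, so the method of \cite{Zuleger:VASS-polynomial} — already embodied in $\sigma_{k-1}^\epsilon$ — controls their depletion), while on a separate family of ``probabilistic'' bins it iterates the $\hat\by$ parts; each time the skeleton is about to close a level-$k$ cell, $\sigma_k^\epsilon$ pauses it, inserts an $h_\epsilon(n)$-fold round of iterations of $\bx_k$ restricted to the current $B_j$, and so attaches the new $(k{+}1)$-st level of the tree. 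Part~\eqref{enum-main-4} for $k$ keeps the probabilistic bins alive for at least $n^{k+1-\epsilon}$ pointing steps with probability tending to $1$, and a supermartingale bound on the deterministic bins of the kind used in the proof of Lemma~\ref{lemma-linear-upper-bound} does the same for them, so with probability tending to $1$ no bin is exhausted before the schedule finishes; defining $g^{k,\epsilon}$ to output the MEC of $\A_{k+1}$ currently being iterated (and $\bot$ otherwise) and $r_1^{k,\epsilon},\dots,r_{k+1}^{k,\epsilon}$ to flag completion of level-$i$ rounds (inheriting $r_i^{k-1,\epsilon}$ for $i\le k$, with $r_{k+1}^{k,\epsilon}$ driving the new level) makes the monotonicity and $g$-consistency clauses immediate, and the alignment of $g^{k,\epsilon}$ and the $r_i^{k,\epsilon}$ with the resulting $(k+1,h_\epsilon(n))$-tree is exactly the layout of the schedule. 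The hard part will be organizing this interleaving so that all $k+1$ tree invariants hold simultaneously and every bin's depletion probability is controlled at once; once the invariants and the bin decomposition are fixed, the remaining estimates are the common refinement of those behind Lemmas~\ref{lemma-something-something-k+1-estiamtes} and~\ref{lemma-linear-upper-bound} and are routine.
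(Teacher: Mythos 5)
Your proposal follows the paper's proof structure closely — one simultaneous induction on $k$, with parts~\eqref{enum-main-1}--\eqref{enum-main-4} of the inductive step obtained from Lemmas~\ref{lemma-hatB-zero-unbounded-rankl-give-upper-estimate-nk}, \ref{lemma-linear-upper-bound}, \ref{lemma-something-something-k+1-estiamtes}, \ref{lemma:dichotomy} and the machinery of Appendix~\ref{sec-pointingupperbound-app} used as black boxes — and these parts track the paper's argument essentially correctly.

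There is, however, a genuine gap in your base case for part~\eqref{enum-main-5}. You propose that $\sigma_0^\epsilon$ should ``repeatedly follow a fixed closed walk visiting every state of $\A$'' and estimate that ``one walk changes the counters by at most a constant''. This fails because $\A$ is a VASS \emph{MDP}: a strategy only controls the outgoing transitions of nondeterministic states, while at probabilistic states the next transition is drawn at random. No fixed closed walk can be enforced, and a single attempt to visit all states in some prescribed order has a random, \emph{unbounded} length (only its expectation is constant), so the claimed bound on the counter change of one ``walk'' is false. The paper's base case plays uniformly at random, decomposes the prefix $\alpha_{p\vec{n},..n^2}^{\sigma_0^\epsilon}$ into stretches between consecutive visits of a fixed cyclic sequence $p_1,\dots,p_w$ of states, and then uses Markov's inequality: the expected total length of the required $w\cdot h_\epsilon(n)$ stretches is $O(h_\epsilon(n))\le O(n^{1-\epsilon})$, while any computation must last at least $\lfloor n/u\rfloor$ steps, so the probability that the $(1,h_\epsilon(n))$-tree fails to form tends to~$0$. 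Your argument needs exactly this replacement. A smaller citation issue in the same base case: you invoke Lemma~\ref{counters-pumpable-all-at-once} for the linear lower estimate on $\calT_\A[t]$, but that lemma only concerns $\calC_\A[c]$; the paper obtains the transition estimate from Theorem~\ref{thm-VASS-Markov-chain-analysis} applied to the Markov chain induced by the uniform-random strategy.

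For the inductive step of part~\eqref{enum-main-5} your decomposition is the right one (run $\sigma_{k-1}^{\epsilont_1}$ on a $(k{-}1)$-tree bin, pause it to insert $B_i$-cycling procedures that graft the $(k{+}1)$-st level, inherit $r_1^{k,\epsilon},\dots,r_k^{k,\epsilon}$ from the inner strategy and let the cycling level drive $r_{k+1}^{k,\epsilon}$ and $g^{k,\epsilon}$). But note that keeping the main bin alive is \emph{not} done via a supermartingale ``of the kind used in the proof of Lemma~\ref{lemma-linear-upper-bound}''; the paper's Lemma~\ref{lemma-something-something-k+1} counts the $\lfloor n^{\epsilont_4}\rfloor$-iteration-of-$B_i$-procedures deterministically given the inner tree structure, and exploits both $\Delta(\bx_k)(c)\ge 0$ and a bound $|\#_i-\#_j|\le 2\,h_\epsilon(n)\,(h_{\epsilont_1}(\cdot))^{\variable-1}$ for MECs at level-distance $<\variable$ to bound the counter below. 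You flagged this as ``the hard part'', and indeed it is a different — and considerably longer — calculation than your sketch anticipates.
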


The point \ref{enum-main-5} deserves an explanation. Essentially it says that for each \(\epsilon>0 \) and each function \(h_\epsilon \) with \(h_\epsilon(n)\leq n^{1-\epsilon} \) there exists a strategy \(\sigma_k^\epsilon \) on \(\A \) that with high enough probability produces a computation that is \((k+1)\)-level \(h_\epsilon(n)\)-cyclic, and furthermore, a \( (k+1,h_\epsilon(n))\)-tree \(G \) of this computation can be created ``along the computation'', that is at every step of the computation we know exactly to which leaf of \(G \) the next step will belong along with the MEC that corresponds to this leaf (except at some steps we are allowed to not know the MEC, but the states visited by these steps do not count toward the given leaf visiting each state of its MEC, i.e., the leaf is required to visit each state of the given MEC after we know the MEC). And we can do this in a such way that at the end of the computation the \(G\)  created this way is with high enough probability a valid \( (k+1,h_\epsilon(n))\)-tree of the resulting computation. 

\begin{proof}

	We prove Lemma~\ref{lemma-main-for-all-k} by induction over \(k\).
	
	\textbf{The base case \(k=0\):} 
	\begin{enumerate}
	
		\item Since \(\A \) is strongly connected, the strategy that at each step chooses the next transition uniformly at random induces a strongly connected VASS Markov chain \(\M \) that has single MEC that contains all transitions of \(\A\). Thus from Theorem~\ref{thm-VASS-Markov-chain-analysis}  \(\calT_\M[t] \) has a lower asymptotic estimate of \(n\) for every \(t\). Thus also \(\calT_\A[t] \) has a lower asymptotic estimate of \(n\) for every transition \(t\).
			\item  \(\calC_\A[c] \) has a trivial lower asymptotic estimate of \(n\) for each counter \(c\).
		
		\item We define \(\A_0=\A \). Let \(\bx \) be an arbitrary multi-component on \(\A \) that contains every transition of \(\A \) (for example we can take one that includes every single component of \(\A \) which exists from Lemma~\ref{lemma-decompose-multicomponents-into-components}). Then this \(\bx\) satisfies both \(\Delta^{C_0}(\bx)=\vec{0} \) and \(\bx(t)>0 \) iff \(t\in T_{1} \). This is due to \(C_0=\emptyset \) and \(T_1=T \).
		\item For any component \(\by \) we can see \(\M_{\hat{\by}} \) as a strongly connected VASS MDP, and thus we have a trivial lower asymptotic estimate of \(n\) for \(\calL_{\M_{\hat\by}} \). As the pointing strategy  on \(\A_{+\hat{\by}} \) can simply ignore any VASS Markov chain other than \(\M_{\hat{\by}} \) (i.e., it always outputs \(\hat{\by} \)) this gives us a lower asymptotic estimate of \(n\) for \(\calP_{\A_{+\hat{\by}}}[\M_{\hat\by}] \) for every component \(\by\) of \(\A\).	
		\item Let us have \(1>\epsilon>0 \) and \(h_\epsilon:\mathbb{N}\rightarrow\mathbb{N}_0 \) with \(h_\epsilon(n)\leq n^{1-\epsilon} \) and \(\lim_{n\rightarrow\infty}h_{\epsilon}(n)=\infty \). Consider the strategy \(\sigma^0_{\epsilon} \) on \(\A \) which at each step chooses the next transition uniformly at random.  
		
		Let \(p_1,\dots,p_w \) (and let \(p_0=p_w \)) be all the states of \(\A \), and assume that \(\alpha_{p\vec{n},..n^2}^{\sigma_0^\epsilon} \) can be divided into \(w\cdot h_\epsilon(n) \) sub-computations \(\alpha_{p\vec{n},..n^2}^{\sigma_0^\epsilon}=\alpha_{1}^1\cdot \ldots\cdot \alpha_1^w\cdot \alpha_{2}^1\cdot \ldots\cdot \alpha_2^w\cdot  \ldots\cdot \alpha_{h_\epsilon(n)}^1\cdot \ldots\cdot \alpha_{h_\epsilon(n)}^w  \) such that for all \(i,j\) it holds that \(\alpha_{i}^j \) starts in \(p_{j-1} \), ends in \(p_{j} \), and does not otherwise contain \(p_{j-1} \), with the only exceptions being that \(\alpha_{1}^1 \) is not required to start in \(p_0 \) and  \(\alpha_{h_\epsilon(n)}^w\) is not required to end in \(p_w \) while being permitted to contain multiple instances of \(p_w \) (it is still required to contain at least one \(p_w \)).
		If such division exists then  \(\alpha_{p\vec{n}}^{\sigma_0^\epsilon} \) is a \(1\)-level \(h_\epsilon(n)\)-cyclic computation, as we could simply take \(G=(V,E,f,g) \) with \(V=\{v_{root},v_1,\dots,v_{h_\epsilon(n)}\}\), \(E=\{(v_{root},v_1),\dots,(v_{root},v_{h_\epsilon(n)}) \} \) such that \(g(v)=\A \) for all \(v\in V \) and \(f(v_{root})=\alpha_{p\vec{n}}^{\sigma_1^\epsilon} \), \(f(v_i)=\alpha_{i}^1\cdot \ldots \cdot \alpha_{i}^w  \) for \(i<h_\epsilon(n)\), and \(f(v_{h_\epsilon(n)})=\alpha_{h_\epsilon(n)}^1\cdot \ldots \cdot \alpha_{h_\epsilon(n)}^w \cdot \alpha_{p\vec{n},n^3..}^{\sigma_0^\epsilon} \). 
		Furthermore, the division into the sub-paths is unique and given \(\alpha_{p\vec{n},..i}^{\sigma_0^\epsilon}\) we can compute the division of this prefix \(\alpha_{p\vec{n},..i}^{\sigma_0^\epsilon}=\alpha_{1,..i}^1\cdot \ldots\cdot\alpha_{1,..i}^w\cdot\alpha_{2,..i}^1\cdot\ldots\cdot\alpha_{2,..i}^w\cdot\ldots\cdot\alpha_{h_\epsilon(n),..i}^1\cdot\ldots\cdot\alpha_{h_\epsilon(n),..i}^w  \)  such that \(\alpha_{j,..n^2}^l=\alpha_{j}^l\) in the following way: 
		
		Let \(q_i\bu_i \) be the last configuration of \(\alpha_{p\vec{n},..i}^{\sigma_0^\epsilon}\), let \(t_i\) be the last transition of \(\alpha_{p\vec{n},..i+1}^{\sigma_0^\epsilon} \), and let \(j_i,l_i \) be indexes such that \(\alpha_{j_i,..i}^{l_i}\neq \epsilon \) is the last non-empty computation of the division of \(\alpha_{p\vec{n},..i}^{\sigma_0^\epsilon}\) (i.e., the smallest  \(j_i\leq h_\epsilon(n) \) such that there exists a smallest value of \(l_i\leq w \) for which it holds \[\alpha_{j_i+\lfloor\frac{l_i}{w}\rfloor,..i}^{(l_i\mod w)+1}=\epsilon \]). Then we can define:
		\[\alpha_{j,..i+1}^l=\begin{cases}
			\alpha_{j,..i}^l & \text{if } j<j_i \\
			\alpha_{j,..i}^l & \text{if } j=j_i \textit{ and }l<l_i \\
			\alpha_{j,..i}^l, q_{i+1}\bu_{i+1} & \text{if } j=j_i \textit{ and }l=l_i\textit{ and } q_i\neq p_{l_i} \\
			\alpha_{j,..i} & \text{if } j=j_i \textit{ and }l=l_i\textit{ and } q_i= p_{l_i} \\
			q_{i}\bu_i,q_{i+1}\bu_{i+1	} & \text{if } j=j_i+\lfloor\frac{l_i}{w}\rfloor \textit{ and }l=(l_i \mod w)+1\textit{ and } q_i= p_{l_i} \\
			\epsilon & \text{else }
		\end{cases}\]
		
		Thus we can define
		\[r_1^{0,\epsilon}(\alpha_{p\vec{n},..i}^{\sigma_0^\epsilon})=\begin{cases}
			NEXT & \text{if }   i=0\\
			NEXT & \text{if }   q_i= p_{l_i} \textit{ and }(l_i,j_i)\neq (w,h_\epsilon(n))\\
			SAME & \text{else }
		\end{cases} \]

		\[g^{0,\epsilon}(\alpha_{p\vec{n},..i}^{\sigma_0^\epsilon})=\A \]
		
		Let \(E \) denote the event that \(\alpha_{p\vec{n},..n^2}^{\sigma_0^\epsilon}\) can be divided into sub-computations as described above, and let \(\pi_{p\vec{n}}^{\sigma_0^\epsilon}  \) be the random variable denoting the length of the shortest prefix of \(\alpha_{p\vec{n},..n^2}^{\sigma_0^\epsilon}\) that can be divided in such a way (or \(\pi_{p\vec{n}}^{\sigma_0^\epsilon}=n^2\) if no such prefix exists). It remains to show that \(\lim_{n\rightarrow\infty}\prob_{p\vec{n}}^{\sigma_0^\epsilon}[E]=1 \). To see this, notice that the expected length of \(\alpha_i^j \) is a constant (as the sub-path ends the moment \(p_j \) is reached, which happens in expected constant time), therefore \(\E_{p\vec{n}}^{\sigma_0^\epsilon}[\pi_{p\vec{n}}^{\sigma_0^\epsilon}]\leq a\cdot w\cdot h_\epsilon(n) \) for some constant \(a\). We can thus apply Markov inequality to obtain \(\prob_{p\vec{n}}^{\sigma_0^\epsilon}[\pi_{p\vec{n}}^{\sigma_0^\epsilon}\geq n^{1-\epsilon_1} ]\leq \frac{a\cdot w\cdot h_\epsilon(n)}{n^{1-\epsilon_1} }\leq \frac{a\cdot w\cdot n^{1-\epsilon}}{n^{1-\epsilon_1} }  \) and for \(0<\epsilon_1<\epsilon \) it holds \(\lim_{n\rightarrow\infty} \frac{a\cdot w\cdot n^{1-\epsilon}}{n^{1-\epsilon_1} } =0\). But since every transition of \(\A \) can decrease the counters by at most a constant \(u\) it holds \(\length(\alpha_{p\vec{n}}^{\sigma_0^\epsilon})\geq \lfloor\frac{n}{u}\rfloor\geq n^{1-\epsilon_1} \)  (for all sufficiently large \(n\)). 
		Thus it holds \(\lim_{n\rightarrow\infty}\prob_{p\vec{n}}^{\sigma_0^\epsilon}[E]
		=1 \). 
		
		
		Therefore the base case for case \ref{enum-main-5} holds.\qed
	\end{enumerate}

		\textbf{	The induction step:}  Assume Lemma~\ref{lemma-main-for-all-k} holds for all \(k'<k\), we will now prove it holds for for \(k\) as well.
	
	\begin{enumerate}
		\item  From Lemma~\ref{lemma-hatB-zero-unbounded-rankl-give-upper-estimate-nk} we obtain a set \(T_{k}' \) of transition with upper asymptotic estimate of \(n^k\) for \(\calT_A[t] \) for each \(t\in T_{k}'\). Let \(\bx_k,\by_k,\bz_k \) be maximal solutions of systems \hyperref[fig-systems]{(I)} and \hyperref[fig-systems]{(II)} for \(\A_{k,\hat{T}_{k+1}} \). For every transition \(t=(p,\bu,q)\in \hat{T}_{k+1}=T_{k}\setminus T_{k}' \) 		 we get from Lemma~\ref{lemma-something-something-k+1-estiamtes} that $\bx_k(t)>0$ implies a lower asymptotic estimate of \(n^{k+1}\) for \(\calT_\A[t] \), while Lemma~\ref{lemma-upper-bound-T} together with Lemma~\ref{lemma:dichotomy} gives that \(\bx_k(t)=0 \) implies an upper asymptotic estimate of \(n^k \) for \(\calT_\A[t] \).

		\item  Let \(\bx_k,\by_k,\bz_k \) be maximal solutions of systems \hyperref[fig-systems]{(I)} and \hyperref[fig-systems]{(II)} for \(\A_{k,\hat{T}_{k+1}} \). Lemma~\ref{lemma-upper-bound-T} gives an upper asymptotic estimate of \(n^k \) for \(\calC_\A[c] \) for each counter \(c\) with \(\by_k(c)>0 \) which from Lemma~\ref{lemma:dichotomy} implies \(\Delta(\bx_k)(c)=0 \), while Lemma~\ref{lemma-something-something-k+1-estiamtes} gives   a lower asymptotic estimate of \(n^{k+1} \) for \(\calC_\A[c] \) for each counter \(c\) with \(\Delta(\bx_k)(c)>0 \).

		\item Let \(\bx_k \) be a maximal solution of system \hyperref[fig-systems]{(I)} for \(\A_{k,\hat{T}_{k+1}} \).  From point~\ref{enum-main-1} for \(k\) we have that \(t\in T_{k+1} \) iff \(\bx_k(t)>0 \) and from point~\ref{enum-main-2} for \(k\) we have \(c\in C_{1}\cup\dots\cup C_{k} \) iff \(\Delta(\bx_k)(c)=0 \). Thus \(\bx_k \) has the desired property.

		\item 
		Let \(\br \) be a component of \( \A\). From Lemma~\ref{lemma-D-in-Rl-for-all-l-imply-D-in-R} we have that if for each \(1\leq l<\frac{k}{2} \) it holds \(\support^{C_1,\dots,C_l}(\hat{\br})\subseteq \Delta^{C_1,\dots,C_l}(X_{k-l,T_{k+1-l}}) \) then \(\support^{C_1,\dots,C_{\lfloor \frac{k}{2}\rfloor}}(\hat{\br})\subseteq R^{B_\bx,k}  \), which from Lemma~\ref{lemma-D-in-R-lower-estimate-nk+1-for-hatB} implies a lower asymptotic estimate of \(n^{k+1} \) for \(\calP_{\A_{+\hat{\br}}}[\M_{\hat{\br}}] \).
		
		If on the other hand there exists \(1\leq l<\frac{k}{2} \) such that \(\support^{C_1,\dots,C_l}(\hat{\br})\nsubseteq \Delta^{C_1,\dots,C_l}(X_{k-l,T_{k+1-l}}) \) then from Lemma~\ref{lemma-D-not-in-Rl-implies-B-zero-unbounded-rankl-dsada} we have that \(\hat{\br} \) is not zero-bounded on \(rank_{k-l,T_{k-l+1}}^{C_1,\dots,C_l} \), which then gives us from Lemma~\ref{lemma-hatB-zero-unbounded-rankl-give-upper-estimate-nk} that there exists a transition \(t\) with \(\bx(t)>0 \) such that \(\calT_\A[t] \) has an upper asymptotic estimate of \(n^k\).

		\item   This is proven in the next Section. Specifically in  Lemma~\ref{lemma-something-something-k+1}.	
	\end{enumerate}
\end{proof}

\subsection{Lower Asymptotic Estimates \(n^{k+1} \)}
\label{section-proof-of-lemma-something-something-k+1-estiamtes}

\begin{lemma*}[\textbf{\ref{lemma-something-something-k+1-estiamtes}}]
	For every transition \(t\) with \(\bx_k(t)>0 \) and every counter \(c\in \countersset\) with \(\sum_{t' \in T} \bx_k(t')\cdot \bu_{t'}(c)>0 \) it holds that \(n^{k+1} \) is a lower asymptotic estimate of \(\calT_\A[t] \) and \(\calC_\A[c] \).
\end{lemma*}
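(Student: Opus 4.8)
The plan is to construct, for every $\varepsilon > 0$, a strategy $\sigma$ and a state $p$ such that from $p\vec{n}$ the computation iterates the multi-component $\bx_k$ roughly $n^{k+1-\varepsilon}$ times with probability tending to $1$ as $n\to\infty$; since $\bx_k(t)>0$ means $t$ appears a fixed positive fraction of the time among iterations of $\bx_k$, and since $\sum_{t'}\bx_k(t')\cdot\bu_{t'}(c) = \Delta(\bx_k)(c)>0$ means counter $c$ strictly increases per iteration of $\bx_k$, this gives $\calT_\A[t]\geq n^{k+1-\varepsilon}$ and $\calC_\A[c]\geq n^{k+1-\varepsilon}$ with high probability, i.e.\ the desired lower asymptotic estimate of $n^{k+1}$. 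The construction follows the ``two parallel parts'' idea from Section~\ref{sec-informal}: I split the computation on $\A_{k,\hat T_{k+1}}$ into a \emph{probabilistic part} living on the $\A_{\hat\by}$'s and a \emph{deterministic part} living on the $\A_{co\text{-}\hat\by}$'s, whose effects sum to the effect on the original $\A_\by$.

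First I would invoke Lemma~\ref{lemma-decompose-multicomponents-into-components-mainbody} to write $\bx_k = \sum_{\by} a_\by\cdot\by$ as a conical sum of components of $\A_{k,\hat T_{k+1}}$, with $a_\by>0$ exactly for the components $\by$ of $\A_{\bx_k}$. It suffices to iterate each such $\by$ roughly $a_\by\cdot n^{k+1-\varepsilon}$ times ``in lockstep'' (switching between them to keep the combined flow proportional to $\bx_k$, exactly as multi-cycles are iterated in \cite{Zuleger:VASS-polynomial}); the effect bookkeeping then guarantees that after $m$ rounds the counters have changed by roughly $m\cdot\Delta(\bx_k)$, which is non-negative on every counter and strictly positive on $c$. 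The key enabling fact, already flagged in the Remark after Lemma~\ref{lemma-something-something-k+1-estiamtes} and proved in Appendix~\ref{app-sec-lowerest-nkplus1-hatby}, is that for every component $\by$ of $\A_{k,\hat T_{k+1}}$ the ``probabilistic part'' $\hat\by$ can be iterated at least $n^{k+1-\varepsilon}$ times with high probability: because $\hat\by$ is zero-bounded or zero-unbounded on each counter but \emph{zero-unbounded only on counters that have already been classified with a lower estimate} $n^{\geq k+1}$ (those in $\bigcup_{i\le\lfloor k/2\rfloor}C_i$ are excluded by construction of $\hat T_{k+1}$, and on zero-unbounded counters the $n^2$-scaling of \cite{AKCONCUR23} combined with resets via Lemma~\ref{counters-pumpable-all-at-once} suffices). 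So the probabilistic part cannot terminate before $n^{k+1-\varepsilon}$ steps w.h.p.

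Concretely the strategy $\sigma$ uses the ``bin'' mechanism from Appendix~\ref{section-additional-definitions}: it splits $\vec{n}$ into a constant number of bins, runs the probabilistic simulation of each $\hat\by$ on its own bin (safe by the previous paragraph), and uses the remaining bins to absorb the deterministic parts $co\text{-}\hat\by$ — which are never zero-unbounded, hence handled exactly by the \cite{Zuleger:VASS-polynomial} ``switching'' argument, guaranteeing the deterministic part also survives $n^{k+1-\varepsilon}$ steps w.h.p.\ by an inductive appeal to the lower estimates on lower-index counters (point~\ref{enum-main-4}/\ref{enum-main-5} of Lemma~\ref{lemma-main-for-all-k}, i.e.\ the $(k,h_\varepsilon(n))$-tree apparatus). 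Combining the two survival events by a union bound, with probability $\to 1$ the overall computation performs $\Theta(n^{k+1-\varepsilon})$ iterations of $\bx_k$ before any counter goes negative, yielding the claimed estimates for both $\calT_\A[t]$ and $\calC_\A[c]$.

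The main obstacle is the synchronization of the two parts: the deterministic part's ability to be iterated $n^{k+1-\varepsilon}$ times relies on the counters it touches still being ``large enough,'' but those counters are simultaneously being decreased (in expectation by zero, but with fluctuations) by the probabilistic part; making the bin separation genuinely maintain the invariant ``actual counter vector $\geq$ sum of bin vectors'' while both sub-simulations run — and in particular handling the zero-unbounded counters of $\hat\by$, where one must periodically perform ``resets'' (in the sense of Section~\ref{sec-informal}) using the fact that those counters are precisely the ones already carrying an $n^{k+1}$ lower estimate, so they can be pumped back up via Lemma~\ref{counters-pumpable-all-at-once} without spending more than a negligible fraction of the step budget — is the delicate part. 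This is exactly where the inductive hypothesis of Lemma~\ref{lemma-main-for-all-k} and the $(k+1,h_\varepsilon(n))$-tree construction do the heavy lifting, which is why the full argument is deferred to Appendix~\ref{section-proof-of-lemma-something-something-k+1-estiamtes}.
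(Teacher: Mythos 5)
Your proof follows the same high-level approach as the paper: decompose $\bx_k$ into a conical sum of components via Lemma~\ref{lemma-decompose-multicomponents-into-components-mainbody}, split each component's computation into the probabilistic part on $\A_{\hat\by}$ and the deterministic part, run them on separate bins, and use the $(k{-}1,h_\epsilon(n))$-tree machinery from the inductive hypothesis (point~\ref{enum-main-5} of Lemma~\ref{lemma-main-for-all-k}) to switch between MECs in a Zuleger-style lockstep; the key black box is that $\calP_{\A_{+\hat\by}}[\M_{\hat\by}]$ has lower asymptotic estimate $n^{k+1}$ (point~\ref{enum-main-4} of Lemma~\ref{lemma-main-for-all-k}), which you cite correctly and which the paper's Appendix~\ref{section-proof-of-lemma-something-something-k+1-estiamtes} indeed invokes in exactly this way. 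Two remarks on your informal justification of that black box, though. First, it is not the case that $\hat\by$ is zero-unbounded ``only on counters that have already been classified with a lower estimate $n^{\geq k+1}$''; the construction only guarantees that $\hat\by$ is not zero-unbounded on any $c \in \bigcup_{i \leq \lfloor k/2\rfloor} C_i$, and it can still be zero-unbounded on counters whose tight estimate is $n^j$ for $\lfloor k/2\rfloor < j \leq k$ — those survive because $2j \geq k+1$, so initializing them to $\approx n^{j-\epsilon}$ already buys $n^{2j-O(\epsilon)} \geq n^{k+1-O(\epsilon)}$ steps under the quadratic random-walk bound of \cite{AKCONCUR23}. Second, the ``resets'' of Section~\ref{sec-informal} are not implemented by Lemma~\ref{counters-pumpable-all-at-once}: that lemma is used once to initialize the bins, whereas a reset must cancel the \emph{specific} random effect vector $\bv \in \support^{C_1,\dots,C_{\lfloor k/2\rfloor}}(\hat\by)$ accumulated since the last reset, which is achieved by the $\bs$-reset-procedures of Appendix~\ref{sec-pointingupperbound-app} playing carefully chosen multi-components from the vector space $R^{B_\by,k}$, not by pumping the counter back to a fixed value.
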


In the following, we first describe for any \(\epsilon>0 \) and function \(h_\epsilon:\mathbb{N}\rightarrow\mathbb{N}_0 \) with \(h_\epsilon(n)<n^{1-\epsilon} \) a strategy \(\sigma^\epsilon_k \) and then in  Lemma~\ref{lemma-something-something-k+1} we prove this strategy iterates any transition \(t\) with \(\bx_k(t)>0 \) at least \((h_\epsilon(n))^{k+1} \) times and increases every counter \(c \) with \(\sum_{t \in T} \bx(t)\cdot \bu_t(c)>0 \) to at least \((h_\epsilon(n))^{k+1} \) with very high probability, thus for \(h_\epsilon(n)=\lfloor n^{1-\epsilon} \rfloor \) this gives us Lemma~\ref{lemma-something-something-k+1-estiamtes} .


Let us define  \(\A_{k,\hat{T}_{k+1}}^{\bx_k} \) as the VASS MDP created from \(\A_{k,\hat{T}_{k+1}} \) by removing any transition \(t\) with \(\bx_k(t)=0 \), and let \(B_1,\dots,B_w \) be all the MECs of \(\A_{k,\hat{T}_{k+1}}^{\bx_k} \).

From Lemma~\ref{lemma-decompose-multicomponents-into-components} we can express \(\bx_k \) as a conical sum of components, that is \(\bx_k=\sum_{\by} a_\by\cdot \by \) where \(\by \) ranges over all the components of \(\A \), and \(a_\by>0 \) iff \(\by \) is a component of \(\A_{k,\hat{T}_{k+1}}^{\bx_k} \). We can wlog. assume that \(a_\by\in \mathbb{N}_0\) for all \(\by\), as if it is not we can simply take a multiple of \(\bx_k\) instead of \(\bx_k\).

For each \(1\leq i\leq w \) let \(\by_1^i,\dots,\by_{l_i}^i \) be all the components \(\by\) of \(B_i \). Let us consider the graph \(G^i=(V^i,E^i) \) where \(V^i=\{\by_1^i,\dots,\by_{l_i}^i \} \) and \(\{\by_a^i,\by_b^i\}\in E^i \) iff the MECs corresponding to \(\by^i_a \) and \(\by^i_b \) share any states. Clearly \(G^i\) is a strongly connected graph, so there exists a spanning tree \(G_T^i \) of \( G^i\).\footnote{\(G^i \) is strongly connected since \(B_i \) is a MEC, and thus for each two states \(p,q \) of \(B_i \) there exists a simple path from \(p \) to \(q\), and hence there also exists a strategy \(\sigma\in \cMD \) which reaches \(q\) from \(p\), and thus there also exists a component \(\by_{p,q} \) corresponding to \(\sigma\) whose corresponding MEC contains both \(p\) and \(q\). Let \(\by \) and \(\by'\) be two components of \(B_i \), and let \(p,q \) be some states of the MECs corresponding to \(\by,\by' \), respectively. Then it holds \((\by,\by_{p,q}),(\by_{p,q},\by')\in E^i \). Hence \(G^i \) is strongly connected.} Wlog. we can declare \(\by_1^i \) the root of \(G_T^i \). For each \(\by_j^i \) let \(parent(\by_j^i) \) denote the parent of \(\by_j^i \) in \(G_T^i \) and \(children(\bx_j^i)=\{\bx_a^i \mid parent(\bx_a^i)=\bx_j^i \}\). We can wlog. also assume that \(p_{\by_j^i} \) is a state of the MEC corresponding to \(parent(\by_j^i) \) (this follows from Lemma~\ref{lemma-component-recentering}).


Let us fix \(1>\epsilon>0 \) and a function \(h_\epsilon:\mathbb{N}\rightarrow\mathbb{N}_0 \) such that \(h_\epsilon(n)\leq n^{1-\epsilon} \) and \(\lim_{n\rightarrow\infty}h_{\epsilon}(n)=\infty \).

Let us define technical constants \(0<\epsilont_1,\epsilont_2,\dots  \). As their exact values are not important we leave the assignment of their exact values to Table~\ref{Table-eps-section-another-proof-k+1}, where we also show that our assignment satisfies all the assumptions we make on \(\epsilont_{1},\epsilont_{2},\dots\).

Let \(h_{\epsilont_{1}}:\N\rightarrow \N_0\) be such that \(h_{\epsilont_{1}}(n)\leq n^{1-\epsilont_{1}} \) and \(\lim_{n\rightarrow\infty}h_{\epsilont_1}(n)=\infty \). Let \(\sigma_{k-1}^{\epsilont_1} \), \(g_{k-1}^{k-1,\epsilont_1} \), \(r_{1}^{k-1,\epsilont_1},\dots,r_{k}^{k-1,\epsilont_1} \) be the strategy and functions from the induction assumption on point  \ref{enum-main-5} of Lemma~\ref{lemma-main-for-all-k} for \(k-1 \), for \(\epsilont_1 \) and \(h_{\epsilont_{1}}\). Note that this requires \begin{equation}\label{eq-epsbound-bvuycrvubh}
	\epsilont_1<1
\end{equation}


For each component \(\by \) of \(\A_{k,\hat{T}_{k+1}} \) let \(\sigma_{\hat\by}^{\epsilont_3} \) be a pointing strategy on \(\A_{+\hat{\by}} \) such that \(\lim_{n\rightarrow\infty}\prob_{p\vec{n}}^{\sigma_{\hat\by}^{\epsilont_{3}}}[\calP_{\A_{+\hat{\by}}}[\M_{\hat{\by}}]\geq n^{k+1-\epsilont_{3}}]=1 \). Note that the existence of \(\sigma_{\hat\by}^{\epsilont_{3}} \) follows from the following: from Lemma~\ref{lemma-D-in-Rl-for-all-l-imply-D-in-R} we have that if for each \(1\leq l<\frac{k}{2} \) it holds \(\support^{C_1,\dots,C_l}(\hat{\by})\subseteq \Delta^{C_1,\dots,C_l}(X_{k-l,T_{k+1-l}}) \) then \(\support^{C_1,\dots,C_{\lfloor \frac{k}{2}\rfloor}}(\hat{\by})\subseteq R^{B_\bx,k}  \), which from Lemma~\ref{lemma-D-in-R-lower-estimate-nk+1-for-hatB} implies a lower asymptotic estimate of \(n^{k+1} \) for \(\calP_{\A_{+\hat{\by}}}[\M_{\hat{\by}}] \). If on the other hand there exists \(1\leq l<\frac{k}{2} \) such that \(\support^{C_1,\dots,C_l}(\hat{\by})\nsubseteq \Delta^{C_1,\dots,C_l}(X_{k-l,T_{k+1-l}}) \) then from Lemma~\ref{lemma-D-not-in-Rl-implies-B-zero-unbounded-rankl-dsada} we have that \(\hat{\by} \) is not zero-bounded on \(rank_{k-l,T_{k-l+1}}^{C_1,\dots,C_l} \), which then gives us from Lemma~\ref{lemma-hatB-zero-unbounded-rankl-give-upper-estimate-nk} that there exists a transition \(t\) with \(\by(t)>0 \) such that \(t\notin \hat{T}_{k+1} \) which is a contradiction with \(\by \) being a component of \(\A_{k,\hat{T}_{k+1}} \). Also note that this implies that \(\hat{\by} \) is zero-bounded for every component \(\by \) of \(\A_{k,\hat{T}_{k+1}} \).

\textbf{Description of \(\sigma_k^\epsilon \):}
We will now describe a strategy \(\sigma_k^\epsilon \) and functions \(g^{k,\epsilon}\), \(r_1^{k,\epsilon},\dots,r_{k+1}^{k,\epsilon}\)  that satisfy point \ref{enum-main-5} of Lemma~\ref{lemma-main-for-all-k} for \(k\), \(\epsilon \) and \(h_\epsilon \), and furthermore
with probability \(p^{\sigma_k^\epsilon}_{p\vec{n}}\) such that \(\lim_{n\rightarrow\infty} p^{\sigma_k^\epsilon}_{p\vec{n}}=1 \) it holds that \(\alpha_{p\vec{n}}^{\sigma_k^\epsilon} \) reaches at least \(n^{k+1-\epsilon} \) in every counter \(c\) with \(\Delta(\bx_k)(c)>0 \) and contains each transition \(t\) with \(\bx_k(t)>0 \) at least \(n^{k+1-\epsilon} \) times.


The strategy \(\sigma_k^\epsilon \) works as follows (we give only high level description):

\(\sigma_k^\epsilon\) remembers (note that all of these can always be computed from the history) a set of states \(P \) that is initialized to \(P=\emptyset \), a variable \(g_k^\epsilon \) initialized to \(g_k^\epsilon=\bot \), and variable \(procccessed\)  that is initialed to \(procccessed=FALSE\). At each point let \(\alpha \) denote the current computation generated by \(\sigma_k^\epsilon\) so far. At each step we put \(g^{k,\epsilon}(\alpha)=g_k^\epsilon(\alpha) \) where \(g_k^\epsilon(\alpha)\) is the value of the variable \(g_k^\epsilon\) right before this step was taken by \(\sigma_k^\epsilon \). We also put \(r^{k,\epsilon}_1(\alpha)=\dots=r^{k,\epsilon}_{k+1}(\alpha)=SAME \) at each step unless stated otherwise.

\textbf{Bins initialization:} From Lemma~\ref{counters-pumpable-all-at-once} we have a strategy \(\pi \) that simultaneously pumps each counter \(c\), such that \(c\in C_i\) for \(i\leq k \) (here we use \(C_{k}=C_{k+} \)), to \(n^{i-\epsilont_{2}}\), and this is achieved with probability \(p_n'\) such that \(\lim_{n\rightarrow\infty} p_n'=1\). \(\sigma_k^\epsilon \) therefore starts by using \(\pi \) to reach such configuration, that is until the current counters vector \(\bv_0 \) is such that \(\bv_0(c)\geq  n^{i-\epsilont_{2}} \) for every counter \(c\in C_i \) for each \(i\leq k \) (here \(n\) is the initial size of the counters as usual). At this point, \(\sigma_k^\epsilon \) divides the counters vector into \(m \) bins of equal size (here \(m\) is some properly chosen constant that does not depend on \(n\)), that is each bin contains the vector \(\lfloor \frac{\bv_0}{m} \rfloor \). If at any point any counter in any bin becomes negative, then \(\sigma_k^\epsilon \) is from that point on undefined, but we assume that from that point on all of the counter vectors in all of the bins are left untouched.


At this point \(\sigma_k^\epsilon\) virtually set every counter in every \(\hat\by\)-bin as well as the \((k-1)\)-tree-bin to \(\lfloor \frac{n^{1-\epsilont_{2}}}{m} \rfloor \) (note that this new value is smaller than the previous counter value for all of these bins) and then \(\sigma_k^\epsilon\) starts playing as per \(\sigma_{k-1}^{\epsilont_1} \) on the \((k-1)\)-tree-bin, and it will occasionally pause this computation to perform something else.

After each step performed as per \(\sigma_{k-1}^{\epsilont_1}\) in the \((k-1)\)-tree-bin (and also right before taking the very first step according to \(\sigma_{k-1}^{\epsilont_1}\)) let \(\alpha_{k-1}^{\epsilont_1} \) denote the computation taken by \(\sigma_{k-1}^{\epsilont_1}\) in the \((k-1)\)-tree-bin so far, and let \(p\) be the current state. First \(\sigma_k^\epsilon\) checks whether \(r_k^{k-1,\epsilont_1}(\alpha_{k-1}^{\epsilont_1})=NEXT \), and if yes then \(\sigma_k^\epsilon \) sets \(processed=FALSE \), \(g_k^\epsilon=\bot \), as well as \(r_l^{k,\epsilon}(\alpha)=r_{l+1}^{k,\epsilon}(\alpha)=\dots=r_{k+1}^{k,\epsilon}(\alpha)=NEXT \), where \(1\leq l\leq k\) is the smallest index such that \(r_l^{k-1,\epsilont_1}(\alpha_{k-1}^{\epsilont_1})=NEXT \). 

Second, \(\sigma_k^\epsilon\) checks whether  \(g^{k-1,\epsilont_1}(\alpha_{k-1}^{\epsilont_1})= \bot \), and if yes then \(\sigma_k^\epsilon\)  immediately performs another step according to \(\sigma_{k-1}^{\epsilont_1}\) in the \((k-1)\)-tree-bin, thus skipping steps three and four.

Third, \(\sigma_k^\epsilon\) checks if  \(processed=FALSE \), and if yes then it sets \(P \) to the set of all states contained in the MEC \(g^{k-1,\epsilont_1}(\alpha_{k-1}^{\epsilont_1}) \) of \(\A_k \), and also sets \(processed=TRUE \) (note that the second step ensures that \(g^{k-1,\epsilont_1}(\alpha_{k-1}^{\epsilont_1}) \neq \bot\)).


Fourth, \(\sigma_k^\epsilon\)  asks if there exists \(1\leq i\leq w \) such that \(p=p_{\by_1^i} \) and \(p\in P \). If yes then \(\sigma_k^\epsilon\)  pauses the computation under \(\sigma_{k-1}^{\epsilont_1}\) in the \((k-1)\)-tree-bin, removes \(p\) from \(P \), and performs a single \(B_i \)-cycling-procedure before unpausing the computation under \(\sigma_{k-1}^{\epsilont_1}\) in the \((k-1)\)-tree-bin. (if such \(i\) doesn't exist then it continues by playing next step of \(\sigma_{k-1}^{\epsilont_1} \) in the \((k-1)\)-tree-bin)

\textbf{\(B_i \)-cycling-procedure:}
A single \(B_i \)-cycling-procedure consists of first setting \(g_k^\epsilon=B_i \), and then repeating the \(\lfloor n^{\epsilont_4} \rfloor\)-iteration-of-\(B_i \)-procedure exactly  \(h_\epsilon(n) \) times.

In the following, whenever we say that ``\(\sigma_k^\epsilon\) plays a component \(\by \) for \(x\) times'' (we will only do so when the current state is \(p_\by \)) what we mean is that \(\sigma_k^\epsilon\) actually simulates \(\sigma_{\hat\by}^{\epsilont_{3}} \) on the \(\hat\by\)-bin from the initial pointing configuration  	$(p_1,\dots,p_\kappa,p_\by)\by\lfloor \frac{n^{1-\epsilont_{2}}}{m} \rfloor$ (where \(\A_{+\hat{\by}}=\big( (\M_1,p_1),\dots,(\M_\kappa,p_\kappa),(\M_\by,p_\by) \big)\)) in such a way  that whenever \(\sigma_{\hat\by}^{\epsilont_{3}}\) points to a Markov chain \(\M \) then \(\sigma_k^\epsilon\) actually chooses the next transition according to the {\cMD} strategy corresponding to \(\M \) while adding the effect of this transition to the \(\hat{\by}\)-bin. Additionally, if the new state becomes \(p_\by \) right after \(\sigma_{\hat\by}^{\epsilont_{3}}\) pointed to \(\M_{\hat{\by}} \) then in addition to this \(\sigma_k^\epsilon\) also adds \(-\Delta(\by) \) to the \(\hat{\by}\)-bin and \(+\Delta(\by) \) to the \(\by\)-bin (thus the effect on \(\hat{\by}\)-bin will be exactly the same as the effect of the same step on \(\M_{\hat{\by}} \)). This proceeds until the state \(p_\by \) is revisited for exactly the \(x\)-th time right after \(\sigma_{\hat\by}^{\epsilont_{3}}\) last pointed to \(\M_{\hat{\by}} \) (if \(\sigma_{\hat\by}^{\epsilont_{3}}\) pointed elsewhere then visiting \(p_\by \) does not count), at which point this simulation of \(\sigma_{\hat\by}^{\epsilont_{3}} \) on the \(\hat\by\)-bin is paused. And if in the future ``\(\sigma_k^\epsilon\) plays a component \(\by \) for \(x\) times'' again, then it unpauses the previously paused simulation instead of starting a new one. Again until revisiting \(p_\by\) for the \(x\)-th time as above before pausing once again.

\textbf{\(\lfloor n^{\epsilont_{4}} \rfloor\)-iteration-of-\(B_i \)-procedure:}
A single \(\lfloor n^{\epsilont_{4}} \rfloor\)-iteration-of-\(B_i \)-procedure consists of first setting \(r_{k+1}^{k,\epsilon}(\alpha)=NEXT\) and then of \(\sigma_k^\epsilon\) playing each component \(\by_1^i,\dots,\by^i_{l_i} \) of \(B_i \) exactly \(a_\by\cdot \lfloor n^{\epsilont_{4}} \rfloor\) times as follows:



First \(\sigma_k^\epsilon\) initializes variables \(P_{1},\dots,P_{l_i} \) where each \(P_j=children(\by_j^i) \).

Second \(\sigma_k^\epsilon\) starts playing \(\by_1^i \) exactly \(a_{\by_{1}^i}\cdot \lfloor n^{\epsilont_{4}} \rfloor\) times, except that this computation can be temporarily paused in favor of playing different components as described below.

Let \(\by^i_{j} \) be the component that is currently being played by \(\sigma_k^\epsilon\), then after  every step taken \(\sigma_k^\epsilon\) asks whether there exists \(\by_a^i\in P_j\) such that the current state \(p\) satisfies \(p=p_{\by_a^i} \). If such \(\by_a^i\) exists then the playing of  \(\by^i_{j} \) is  paused and \(\sigma_k^\epsilon\) first sets  \( P_j= P_j\setminus \{\by_a^i\} \) and then \(\sigma_k^\epsilon\) starts playing \(\by_a^i \) exactly \(a_{\by_{a}^i}\cdot \lfloor n^{\epsilont_{4}} \rfloor\) times. Note that once the playing of  \(\by_a^i \) finishes  \(\sigma_k^\epsilon\) moves to the step of "check whether there exists \(\by_a^i\in P_i\) such that..." (and this check is made as if \(\by^i_{j}\) was being played, not \(\by_a^i\)) before resuming playing of \(\by^i_{j} \).

The \(\lfloor n^{\epsilont_{4}} \rfloor\)-iteration-of-\(B_i \)-procedure ends once the playing of \(\by_1^i \) exactly \(a_{\by_{1}^i}\cdot \lfloor n^{\epsilont_{4}} \rfloor\) times finishes. At this point \(\sigma_k^\epsilon\) for each \(1\leq j\leq l_i \) adds \(-\Delta(\by_j^i) \cdot a_{\by_{j}^i}\cdot \lfloor n^{\epsilont_{4}} \rfloor\) to the \(\by_{j}^i\)-bin and \(+\Delta(\by_j^i) \cdot a_{\by_{j}^i}\cdot \lfloor n^{\epsilont_{4}} \rfloor\) to the main-bin. We say that this \(\lfloor n^{\epsilont_{4}} \rfloor\)-iteration-of-\(B_i \)-procedure succeeded if for each \(1\leq j\leq l_i \) the playing of \(\by_j^i\) visited at least once every single state of the MEC induced by \(\by_j^i \), otherwise it failed. Note that the procedure can end (unless the computation terminated during the procedure) only in the state \(p_{\by_1^i} \), and if it succeeded then its total effect on each \(\by_{j}^i\)-bin is exactly \(\vec{0}\).

\subsubsection{Analysis of \(\sigma_k^\epsilon \)}

Let \(\alpha_{p\vec{n}}^{\sigma_{k}^\epsilon} \) be the random variable that represents the computation generated by \(\sigma_{k}^\epsilon \) in \(\A \) from initial configuration \(p\vec{n} \), and let \(\alpha_{p\vec{n},..i}^{\sigma_k^\epsilon} \) be the prefix of \(\alpha_{p\vec{n}}^{\sigma_{k}^\epsilon}\) of length \(i\).

Let \(\alpha_{p\vec{n}}^{\sigma_{k-1}^{\epsilont_1}} \) be the random variable that represents the computation generated by \(\sigma_{k-1}^{\epsilont_1} \) in the \((k-1)\)-tree-bin when \(\sigma_{k}^\epsilon \) is started from the initial configuration \(p\vec{n}\), and let \(\alpha_{p\vec{n},..i}^{\sigma_{k-1}^{\epsilont_1}} \) be the prefix of \(\alpha_{p\vec{n}}^{\sigma_{k-1}^{\epsilont_1}}\) of length \(i\).

We will now prove the following Lemma.

\begin{lemma}\label{lemma-something-something-k+1}
	For \(\sigma_k^\epsilon \) and the functions \(g^{k,\epsilon} \), \(r_1^{k,\epsilon},\dots,r_{k+1}^{k,\epsilon}\) all of the following hold:
	\begin{itemize}
		\item \(\lim_{n\rightarrow\infty} \prob_{p\vec{n}}^{\sigma_k^\epsilon}[\calC_\A[c]\geq (h_\epsilon(n))^{k+1} ]=1 \) for every counter \(c\) with \(\Delta(\bx_k)(c)>0 \);
		\item \(\lim_{n\rightarrow\infty} \prob_{p\vec{n}}^{\sigma_k^\epsilon}[\calT_\A[t]\geq (h_\epsilon(n))^{k+1} ]=1 \) for every transition \(t\) with \(\bx_k(t)>0 \);
		\item  \(\sigma_k^\epsilon \), \(g^{k,\epsilon}\), \(r_{1}^{k,\epsilon},\dots,r_{k+1}^{k,\epsilon} \) satisfy point \ref{enum-main-5} of Lemma~\ref{lemma-main-for-all-k} for \(k\), \(\epsilon \), and \(h_\epsilon \). 
	\end{itemize}
\end{lemma}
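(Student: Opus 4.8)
I would prove the three assertions simultaneously by analysing the computation generated by $\sigma_k^\epsilon$ conditioned on a family of ``good events'', each of probability tending to $1$ as $n\to\infty$. The good events I would use are: (a) the initial pumping phase $\pi$ succeeds, raising every counter $c\in C_i$ (for $i\le k$, with $C_k:=C_{k+}$) to at least $n^{i-\epsilont_2}$ before the bins are split, which holds with probability $\to1$ by Lemma~\ref{counters-pumpable-all-at-once}; (b) the sub-computation $\alpha_{p\vec n}^{\sigma_{k-1}^{\epsilont_1}}$ carried out inside the $(k-1)$-tree-bin admits a valid $(k,h_\epsilon(n))$-tree compatible with $g^{k-1,\epsilont_1}$ and $r_1^{k-1,\epsilont_1},\dots,r_k^{k-1,\epsilont_1}$, which is the induction hypothesis (point~\ref{enum-main-5} of Lemma~\ref{lemma-main-for-all-k} for $k-1$), invoked with the function $h_{\epsilont_1}:=h_\epsilon$ — permissible since $h_\epsilon(n)\le n^{1-\epsilon}\le n^{1-\epsilont_1}$, and chosen precisely so that the grafted tree below is uniformly $h_\epsilon(n)$-branching; (c) every simulation of $\sigma_{\hat\by}^{\epsilont_3}$ on a $\hat{\by}$-bin achieves $\calP_{\A_{+\hat{\by}}}[\M_{\hat{\by}}]\ge n^{k+1-\epsilont_3}$, which holds by the defining property of $\sigma_{\hat\by}^{\epsilont_3}$ (itself obtained from point~\ref{enum-main-4} of Lemma~\ref{lemma-main-for-all-k} for $k$); and (d) every $\lfloor n^{\epsilont_4}\rfloor$-iteration-of-$B_i$-procedure that is ever invoked \emph{succeeds}, i.e. each of its component plays visits all states of the relevant MEC. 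For (d) I would first bound the number of invoked iteration procedures by the number of leaves of the tree built below (polynomial in $n$), then note that repeating a single component play at least $n^{\epsilont_4}$ times misses a fixed state with probability at most $\kappa^{n^{\epsilont_4}}$ for some $\kappa<1$, and union bound. As in Section~\ref{app-lemma-linear-upper-bound}, a concluding table would fix $\epsilont_1,\epsilont_2,\dots$ so that all the exponent inequalities used below hold.

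\textbf{No bin depletes.} Conditioned on (a)--(d) I would argue that $\sigma_k^\epsilon$ stays defined and $\alpha_{p\vec n}^{\sigma_k^\epsilon}$ runs for the full length of the construction. The $(k-1)$-tree-bin stays non-negative by the induction hypothesis. Each $\hat{\by}$-bin stays non-negative because $\hat{\by}$ is zero-bounded on every counter (so the only obstruction to continuing a play of $\by$ is exhausting the pointing budget $n^{k+1-\epsilont_3}$ of the paused-and-resumed simulation of $\sigma_{\hat\by}^{\epsilont_3}$), and the total number of $\by$-iterations ever requested is at most $(\text{number of }B_i\text{-cycling-procedures})\cdot h_\epsilon(n)\cdot a_\by\lfloor n^{\epsilont_4}\rfloor$, which is below $n^{k+1-\epsilont_3}$ for the chosen constants — the relevant comparison being $k(1-\epsilont_1)+1-\epsilon+\epsilont_4\le k+1-\epsilont_3$, i.e. $\epsilont_3+\epsilont_4\le\epsilon+k\epsilont_1$. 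Each $\by_j^i$-bin is only transiently perturbed, inside an iteration-of-$B_i$-procedure, by at most $a_{\by_j^i}\lfloor n^{\epsilont_4}\rfloor\cdot|\Delta(\by_j^i)|=O(n^{\epsilont_4})$, which is smaller than its initial value $\lfloor n^{1-\epsilont_2}/m\rfloor$ for $\epsilont_4<1-\epsilont_2$; and the main-bin only ever receives the transfers $\lfloor n^{\epsilont_4}\rfloor\Delta(\bx_k|_{B_i})$, whose fluctuation on counters of $C_1\cup\dots\cup C_k$ is $O(n^{\epsilont_4})$ (the per-MEC zero-boundedness of $\bx_k$ there) and whose running total on the remaining counters I keep non-negative by the interleaving discussed below.

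\textbf{The tree and the bookkeeping functions.} Conditioned on (a)--(d) I would build the required $(k+1,h_\epsilon(n))$-tree $G=(V,E,f,g)$ of $\alpha_{p\vec n}^{\sigma_k^\epsilon}$ by grafting one extra bottom level onto the $(k,h_\epsilon(n))$-tree of the tree-bin computation: to each of its leaves $v$, whose $f$-part visits every state of the MEC $D=g(v)$ of $\A_{k-1}$, attach, for every MEC $B_i$ of $\A_k^{D}$, exactly $h_\epsilon(n)$ children, one per iteration-of-$B_i$-procedure of the unique $B_i$-cycling-procedure that $\sigma_k^\epsilon$ triggers in that leaf; a \emph{successful} such procedure visits all of $B_i$, which is exactly what the definition of a $(k+1,h_\epsilon(n))$-tree demands of the new leaves (the remaining tree conditions are inherited, using $l_k=|\{B_i:B_i\subseteq D\}|$). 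Compatibility of $G$ with $g^{k,\epsilon}$ and $r_1^{k,\epsilon},\dots,r_{k+1}^{k,\epsilon}$ — the monotonicity of the $NEXT$-indices, and that $g^{k,\epsilon}\ne\bot$ is constant across a block between consecutive $r_{k+1}^{k,\epsilon}=NEXT$ steps — would then be read directly off the way $\sigma_k^\epsilon$ updates these values: $r_1^{k,\epsilon},\dots,r_k^{k,\epsilon}$ copy the switches of $r_1^{k-1,\epsilont_1},\dots,r_k^{k-1,\epsilont_1}$, $r_{k+1}^{k,\epsilon}$ switches to $NEXT$ at the start of each iteration-of-$B_i$-procedure, and $g^{k,\epsilon}$ equals the current value of the variable $g_k^\epsilon$, which is $\bot$ outside $B_i$-cycling-procedures and $B_i$ throughout one. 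This establishes the third bullet of the lemma.

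\textbf{Counting and the main obstacle.} The tree $G$ has $\prod_{i=0}^{k}(h_\epsilon(n)\cdot l_i)\ge h_\epsilon(n)^{k+1}$ leaves, each one an iteration-of-$B_i$-procedure. For a transition $t$ with $\bx_k(t)>0$, $t$ lies in a unique $B_i$ and is traversed at least $\lfloor n^{\epsilont_4}\rfloor$ times in every iteration-of-$B_i$-procedure (during the play of a component $\by$ of $B_i$ with $\by(t)>0$, which occurs $a_\by\lfloor n^{\epsilont_4}\rfloor$ times), so $\calT_\A[t]\ge h_\epsilon(n)^{k+1}$ once one checks, via the exponent inequalities, that enough of the $h_\epsilon(n)^{k+1}$ leaves concern $B_i$. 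For a counter $c$ with $\Delta(\bx_k)(c)>0$, the amount accumulated in the main-bin equals $\lfloor n^{\epsilont_4}\rfloor\sum_{\text{leaves}}\Delta(\bx_k|_{B_i})(c)$; grouping leaves level by level in $G$, one whole cycle of the tree-bin through an $\A_{k-1}$-MEC $D$ contributes $\lfloor n^{\epsilont_4}\rfloor\,h_\epsilon(n)\,\Delta(\bx_k|_{D})(c)$, and summing these over the MECs $D$ that the tree cycles through (which it does in a balanced, round-robin fashion, once per visit at the leaf level) reorganises the total into a positive multiple of $\Delta(\bx_k)(c)=\sum_D\Delta(\bx_k|_D)(c)$, yielding $\calC_\A[c]\ge h_\epsilon(n)^{k+1}$. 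I expect the genuine difficulty of the lemma to be exactly this last point: controlling the interleaving of the component plays so that the \emph{partial} counter values stay non-negative (no premature termination) while the net effect is forced to be the truly positive $\Delta(\bx_k)$ — this is where the per-MEC zero-boundedness of $\bx_k$ on $C_1\cup\dots\cup C_k$ and the level-by-level balancing built into the $(k-1)$-tree are essential. Specialising $h_\epsilon(n)=\lfloor n^{1-\epsilon}\rfloor$ then gives Lemma~\ref{lemma-something-something-k+1-estiamtes}.
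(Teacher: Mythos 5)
Your overall decomposition matches the paper's: condition on a finite set of ``good'' events of probability $\to 1$; use the induction hypothesis on $\sigma_{k-1}^{\epsilont_1}$ to get an inner tree; graft on one more level by the $B_i$-cycling and partial-procedures; then count iterations and derive the asymptotic estimates. However, there are two places where the plan as written breaks.

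\textbf{First}, the choice $h_{\epsilont_1}:=h_\epsilon$ does not produce an inner $(k,h_\epsilon(n))$-tree. After the bin split, the $(k{-}1)$-tree-bin is re-initialised with counter vector $\lfloor n^{1-\epsilont_2}/m\rfloor$, not $\vec n$, so the induction hypothesis applied to $\sigma_{k-1}^{\epsilont_1}$ yields (with probability $\to 1$) a $(k,\,h_{\epsilont_1}(\lfloor n^{1-\epsilont_2}/m\rfloor))$-tree. With $h_{\epsilont_1}=h_\epsilon$ this inner branching is $\approx n^{(1-\epsilont_2)(1-\epsilon)}<h_\epsilon(n)$, so the grafted tree is \emph{not} uniformly $h_\epsilon(n)$-branching, and the leaf count $h_\epsilon(n)\cdot(h_\epsilon(\lfloor n^{1-\epsilont_2}/m\rfloor))^k$ falls strictly below $h_\epsilon(n)^{k+1}$ — which is exactly what the first two bullets need. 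You have to pick $h_{\epsilont_1}$ to over-compensate for the shrunken argument; the paper takes $h_{\epsilont_1}(n)=\lfloor n^{1-2\epsilont_1}\rfloor$ with $\epsilont_1,\epsilont_2\ll\epsilon$, so that $h_{\epsilont_1}(\lfloor n^{1-\epsilont_2}/m\rfloor)\ge h_\epsilon(n)$ for all large $n$, and the inequality is re-checked with an explicit table of exponent choices.

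\textbf{Second}, the step you flag as ``the genuine difficulty'' — keeping the main-bin non-negative on $C_1\cup\dots\cup C_k$ as the partial-$\bx_i^B$-procedures interleave — is indeed the heart of the lemma, and your sketch does not supply the quantitative argument. The phrase ``per-MEC zero-boundedness of $\bx_k$'' is not quite right: for a local copy $c\in C_\tau$ ($\tau\le k$) it is $\Delta(\bx_k)(c)$ that vanishes, not $\Delta(\bx_k|_{B_i})(c)$; the per-MEC contributions are generally non-zero and only cancel when the $\#_i$'s (the per-MEC iteration counts) are aligned. What the paper actually proves, and what you would need, is a \emph{level-distance} bound: for any two MECs $B_i,B_j$ of $\A_k$ whose level-distance is at most $\tau$, the counts obey $|\#_i-\#_j|\le 2\,h_\epsilon(n)\,(h_{\epsilont_1}(\lfloor n^{1-\epsilont_2}/m\rfloor))^{\tau-1}$, which follows from the fact that every subtree rooted at depth $k-\tau+1$ processes all such MECs a comparable number of times. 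Plugging this bound into the decomposition $\sum_i(\#_i-\#_r)\Delta(\by_j^i)(c)a_{\by_j^i}+\#_r\Delta(\bx_k)(c)$ and comparing the error term with the initial value $\lfloor n^{\tau-\epsilont_2}/m\rfloor$ is where the exponent budget is spent. Gesturing at ``level-by-level balancing built into the tree'' names the right mechanism but does not close the argument; without the explicit $|\#_i-\#_j|$ bound the non-negativity of the main-bin, and hence the validity of the constructed tree, is unproven.
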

\begin{proof}
		\textbf{Event \(E_0\):} Let \(E_0 \) be the set of all computations \(\alpha \) generated by \(\sigma_k^\epsilon \) in \(\A\), initiated in a configuration with counters vector \(\vec{n}\), which does not terminate before the bins are initialized, that is it succesfully reaches a configuration with counters vector \(\bv_0 \). As discussed in the ``bins initialization'' above, it holds \(\lim_{n\rightarrow\infty}\prob_{p\vec{n}}^{\sigma_k^\epsilon}[E_0]=1 \).

		\textbf{Event \(E_1\):}	Let \(E_1 \) be the set of all computations generated by \(\sigma_k^\epsilon \), initiated in a configuration with counters vector \(\vec{n}\), for which  \(\sigma_{k-1}^{\epsilont_1}\) on the  \((k-1)\)-tree-bin generates a computation that has the property from point~\ref{enum-main-5} of Lemma~\ref{lemma-main-for-all-k} for \(k-1\), \(\epsilont_1\), and \(h_{\epsilont_1} \). Additionally, if the computation under \(\sigma_k^\epsilon \) terminates before any counter in the  \((k-1)\)-tree-bin becomes negative then we include such computation in \(E_{1} \) even if it does not meet the previous requirement. Note that from the way we chose \(\sigma_{k-1}^{\epsilont_1}\) it holds \(\lim_{n\rightarrow\infty} \prob_{p\vec{n}}^{\sigma_k^\epsilon}(E_{1})=1 \).
		

		\textbf{Event \(E_2\):}	Let \(E_2 \) be the set of all computations generated by \(\sigma_k^\epsilon \) \(\A\), initiated in a configuration with counters vector \(\vec{n}\), which have the following property: for each component \(\by \) with \(a_\by>0 \) no counter in the \(\hat{\by}\)-bin becomes negative before the pointing computation under \(\sigma^{\epsilont_{3}}_{\hat\by}\) in the \(\hat{\by}\)-bin points at \(\hat{\by} \) at least \(n^{k+1-\epsilont_{3}} \) times (note that computations which either are not in \(E_0\) or for which any counter became negative in any other bin are also in \(E_2\)). From how we chose the pointing strategies \(\sigma^{\epsilont_{3}}_{\hat\by}\) it holds for each \(\by \) that \(\lim_{n\rightarrow\infty}\prob_{p\vec{1}\cdot \lfloor \frac{n^{1-\epsilont_{2}}}{m} \rfloor }^{\sigma_{\hat{\by}}^{\epsilont_3}}[\pointcomplex_{\A_{+\hat{\by}}}[\M_{\hat{\by}}]\geq (\lfloor \frac{n^{1-\epsilont_{2}}}{m} \rfloor)^{k+1-\epsilont_3}]=1 \). Assuming  \begin{equation}\label{eq-epsbound-bhcftxfdsfsw}
			\epsilont_3\leq k+1
		\end{equation} it holds \((\lfloor \frac{n^{1-\epsilont_{2}}}{m} \rfloor)^{k+1-\epsilont_3}
		\leq 
		( \frac{n^{1-\epsilont_{2}}}{m} )^{k+1-\epsilont_3} 
		=
		 \frac{n^{k+1-\epsilont_3-(k+1-\epsilont_3)\cdot \epsilont_{2}}}{m^{k+1-\epsilont_3}}  
		 \leq 
		   n^{k+1-\epsilont_3}
		 \). Hence 
		  \[\lim_{n\rightarrow\infty} \prob_{p\vec{n}}^{\sigma_k^\epsilon}(E_2)\geq \lim_{n\rightarrow\infty} 1- \sum_{\by} \prob_{p\vec{1}\cdot \lfloor \frac{n^{1-\epsilont_{2}}}{m} \rfloor }^{\sigma_{\hat{\by}}^{\epsilont_3}}[\pointcomplex_{\A_{+\hat{\by}}}[\M_{\hat{\by}}]< (\lfloor \frac{n^{1-\epsilont_{2}}}{m} \rfloor)^{k+1-\epsilont_3}] =1  \]

		\textbf{Event \(E_3\):}
		Let \(E_3 \) be the set of all computations \(\alpha \) generated by \(\sigma_k^\epsilon \) in \(\A\), initiated in a configuration with counters vector \(\vec{n}\), for which for each \(1\leq i\leq w\) the first \(h_\epsilon(n)\cdot (h_{\epsilont_{1}}(\lfloor \frac{n^{1-\epsilont_{2}}}{m} \rfloor))^k \) \(\lfloor n^{\epsilont_{4}} \rfloor\)-iteration-of-\(B_i \)-procedures succeed (here we consider procedures which did not even start to have succeeded). Note that this condition is equivalent to saying that for each \(1\leq i\leq w\) and each \(1\leq j \leq l_i \), during each of the first \(h_\epsilon(n)\cdot (h_{\epsilont_{1}}(\lfloor \frac{n^{1-\epsilont_{2}}}{m} \rfloor))^k \) \(\lfloor n^{\epsilont_{4}} \rfloor\)-iteration-of-\(B_i \)-procedures, the pointing strategy \(\sigma_{\hat{\by}_{j}^i}^{\epsilont_3} \) simulated on the \(\hat{\by}_j^i \)-bin during the  playing of \(\by_j^i \)  visits, right after pointing to \(\hat{\by}_j^i \), every state of the MEC corresponding to \(\by_j^i \) during this procedure. 
		
		During a single \(\lfloor n^{\epsilont_{4}} \rfloor\)-iteration-of-\(B_i \)-procedure each counter \(c\) in a \(\by_j^i\)-bin can be changed by at most \(a_{\by}\cdot \Delta(\by_j^i)(c)\cdot \lfloor n^{\epsilont_{4}} \rfloor < \lfloor \frac{n^{1-\epsilont_{2}}}{m}\rfloor \) for all sufficiently large \(n\), assuming \begin{equation}\label{eq-epsilonbound-jhjvgcgfxdysfsdfstjh}
			1-\epsilont_{2}>\epsilont_{4}
		\end{equation}  before being set back to \(\lfloor \frac{\bv_0(c)}{m}\rfloor\geq \lfloor \frac{n^{1-\epsilont_{2}}}{m}\rfloor \). Hence for all sufficiently large \(n\) the procedure cannot fail by terminating unless a counter becomes negative on some \(\hat{\by}_{j}^i\)-bin. Conditioned on \(E_2\) it holds that a counter can only become negative in a \(\hat{\by}_{j}^i\)-bin after \(\sigma^{\epsilont_{3}}_{\hat\by}\) points at \(\hat{\by}_{j}^i \) at least \(n^{k+1-\epsilont_{3}} \) times.

			All of the steps at which \(\sigma_{\hat{\by}_{j}^i}^{\epsilont_3} \)  points at \(\hat{\by}_j^i \) during the first \(h_\epsilon(n)\cdot (h_{\epsilont_{1}}(\lfloor \frac{n^{1-\epsilont_{2}}}{m} \rfloor))^k \) \(\lfloor n^{\epsilont_{4}} \rfloor\)-iteration-of-\(B_i \)-procedures produce a computation \(\pi\) on \(\M_{\hat{\by}_{j}^i} \) (see description of pointing VASS in Appendix~\ref{section-additional-definitions} for details). We will now show that with very high probability, \(\pi \) never takes more than \(\lfloor n^{\epsilont_5} \rfloor\) steps without visiting every single state of the MEC corresponding to \(\by_j^i\) at least once. Note that this would imply  \(\lim_{n\rightarrow\infty}\prob_{p\vec{n}}^{\sigma^\epsilon_k}[E_3]=1 \) as that would imply that during each of the first \(h_\epsilon(n)\cdot (h_{\epsilont_{1}}(\lfloor \frac{n^{1-\epsilont_{2}}}{m} \rfloor))^k \) \(\lfloor n^{\epsilont_{4}} \rfloor\)-iteration-of-\(B_i \)-procedures  \(\sigma^{\epsilont_{3}}_{\hat\by}\) points at \(\hat{\by}_{j}^i \) at most \(a_{\by_j^i}\cdot \lfloor n^{\epsilont_4}\rfloor \cdot \lfloor n^{\epsilont_5}\rfloor \) times, which implies \begin{gather*}
			\length(\pi)
			\leq
			 h_\epsilon(n)\cdot (h_{\epsilont_{1}}(\lfloor \frac{n^{1-\epsilont_{2}}}{m} \rfloor))^k\cdot a_{\by_j^i}\cdot \lfloor n^{\epsilont_4}\rfloor \cdot  \lfloor n^{\epsilont_5}\rfloor \leq\\
		    n^{1-\epsilon}\cdot \big((\lfloor \frac{n^{1-\epsilont_{2}}}{m} \rfloor)^{1- \epsilont_{1}}\big)^{k}\cdot a_{\by_j^i}\cdot \lfloor n^{\epsilont_4}\rfloor \cdot  \lfloor n^{\epsilont_5}\rfloor 
		    \leq\\
		    n^{1-\epsilon}\cdot  \frac{n^{k-k\cdot \epsilont_{1}-k\cdot (1- \epsilont_{1})\cdot\epsilont_{2}}}{m^{k-k\cdot \epsilont_{1}}} \cdot a_{\by_j^i}\cdot  n^{\epsilont_4} \cdot   n^{\epsilont_5}
		    = \\
		    n^{1+k-\epsilon-k\cdot \epsilont_{1}-k\cdot (1- \epsilont_{1})\cdot\epsilont_{2}+\epsilont_4+\epsilont_5}\cdot  \frac{1}{m^{k-k\cdot \epsilont_{1}}} \cdot a_{\by_j^i}
		    < n^{k+1-\epsilont_{3}} 
		\end{gather*} where the last inequality holds for all sufficiently large \(n\) assuming \begin{equation}\label{eq-eps-bound-jhcyrxfdsfsdfstycuv}
			-\epsilont_{3} > -\epsilon-k\cdot \epsilont_{1}-k\cdot (1- \epsilont_{1})\cdot\epsilont_{2}+\epsilont_4+\epsilont_5
			\end{equation} hence conditioned on \(E_2 \) the procedure cannot fail by depleting any counter on any bin, and assuming \begin{equation}\label{eq-eps-bound-jhcyrxtycuv}
			\epsilont_4 > \epsilont_5
			\end{equation} this would also imply that  each of the first \(h_\epsilon(n)\cdot (h_{\epsilont_{1}}(\lfloor \frac{n^{1-\epsilont_{2}}}{m} \rfloor))^k \) \(\lfloor n^{\epsilont_{4}} \rfloor\)-iteration-of-\(B_i \)-procedures that are initiated succeed, assuming sufficiently large \(n\) (since \(\lfloor n^{\epsilont_4}\rfloor >  \lfloor n^{\epsilont_5}\rfloor \) for all sufficiently large \(n\)).

	 Let \(r \) be the probability of \(\pi\) avoiding some state \(p'\) for a total of \(\lfloor n^{\epsilont_5}\rfloor \) steps within the first at most \(n^{k+1-\epsilont_{3}} \) steps. Since at every step, the probability of reaching \(p'\) within the next at most constant number of steps is bounded from below by a constant (remember that \(\pi\) is a computation on a strongly connected VASS Markov chain), it holds \(r\leq a^{\lfloor n^{\epsilont_5}\rfloor}\cdot n^{k+1-\epsilont_{3}} \) for some constant \(a<1\). 	Hence
		\begin{align*}
			\lim_{n\rightarrow\infty}\prob_{p\vec{n}}^{\sigma_k^\epsilon}[E_3\mid E_2]
			\geq
			\lim_{n\rightarrow\infty} 1-r
			\geq
			\lim_{n\rightarrow\infty} 1-a^{\lfloor n^{\epsilont_5}\rfloor}\cdot n^{k+1-\epsilont_{3}}
			=
			1	 
		\end{align*}
		And since \(\lim_{n\rightarrow\infty}\prob_{p\vec{n}}^{\sigma^\epsilon_k}[E_2]=1 \) this also gives us   \(\lim_{n\rightarrow\infty}\prob_{p\vec{n}}^{\sigma^\epsilon_k}[E_1]=1 \).

	\textbf{Number of \(\lfloor n^{\epsilont_{4}} \rfloor\)-iteration-of-\(B_i \)-procedures:}	We will now show that for each computation in \(E_0\cap E_1 \cap E_2 \cap E_3  \), for each \(1\leq i\leq w \), the number of times \(\lfloor n^{\epsilont_{4}} \rfloor\)-iteration-of-\(B_i \)-procedure is performed is exactly \(h_\epsilon(n)\cdot (h_{\epsilont_1}(\lfloor \frac{n^{1-\epsilont_{2}}}{m} \rfloor))^{k}\). Assume towards contradiction that there exists a computation \(\alpha\in E_0\cap E_1 \cap E_2 \cap E_3  \) generated by \(\sigma_k^\epsilon \) from an initial configuration \(p\vec{n}\) that does not have this property, that is there exists \(1\leq i\leq w\) such that the \(\lfloor n^{\epsilont_{4}} \rfloor\)-iteration-of \(B_i \)-procedure is not performed exactly  \( h_\epsilon(n)\cdot (h_{\epsilont_1}(\lfloor \frac{n^{1-\epsilont_{2}}}{m} \rfloor))^{k} \) times along \(\alpha \).

	
	Since \(\alpha\in E_1 \) there exists a \((k,h_{\epsilont_1}(\lfloor \frac{n^{1-\epsilont_{2}}}{m} \rfloor))\)-tree \(G=(V,E,f,g) \) of \(\alpha_{p\vec{n}}^{\sigma_{k-1}^{\epsilont_1}} \) satisfying the conditions of point~\ref{enum-main-5} of Lemma~\ref{lemma-main-for-all-k}. Let \(v_1,\dots,v_a \) be all the leaves of \( G\). Notice that for each leaf \(v_j\)  there exist indexes \(a_j,b_j,c_j \), such that all of the following hold:
	\begin{itemize}
		\item \(\alpha_{p\vec{n},a_j,\dots,c_j}^{\sigma_{k-1}^{\epsilont_1}}=f(v) \),
		\item \(\alpha_{p\vec{n},b_j,\dots,c_j}^{\sigma_{k-1}^{\epsilont_1}} \) contains every single state of \(g(v) \) at least once,
		\item \(r_{k}^{k-1,\epsilont_1}(\alpha_{p\vec{n},..a_j}^{\sigma_{k-1}^{\epsilont_1}})=NEXT \),
		\item for each \(a_j< l< c_j \) it holds \(r_{k}^{k-1,\epsilont_1}(\alpha_{p\vec{n},..l}^{\sigma_{k-1}^{\epsilont_1}})=SAME \),
		\item for each \(b_j\leq l<c_j \) it holds  \(g^{k-1,\epsilont_1}(\alpha_{p\vec{n},..l}^{\sigma_{k-1}^{\epsilont_1}})=g(v)\),
		\item for each \(a_j\leq l<b_j \) it holds  \(g^{k-1,\epsilont_1}(\alpha_{p\vec{n},..l}^{\sigma_{k-1}^{\epsilont_1}})=\bot\).
	\end{itemize}
	Therefore for each leaf of \(G \) \(\sigma_k^{\epsilon} \) sets the set \(P\) to the set of all states of \(g(v) \) exactly once, and then for each \(B_i \) that is included in \(g(v) \) perform exactly one \(B_i\)-cycling-procedure which consists of performing the \(\lfloor  n^{\epsilont_{4}} \rfloor \)-iteration-of-\(B_i\)-procedure exactly \(h_\epsilon(n) \) times. Since each MEC \(\MEC\) of \(\A_{k} \) satisfies \(\MEC=g(v_j) \) for exactly \((h_{\epsilont_{1}}(\lfloor \frac{n^{1-\epsilont_{2}}}{m} \rfloor))^{k} \) leaves, this means that assuming no counter becomes negative in the  the main-bin then \(\alpha \) iterated exactly \(h_{\epsilon}(n)\cdot (h_{\epsilont_1}(\lfloor \frac{n^{1-\epsilont_{2}}}{m} \rfloor))^{k} \) \(\lfloor n^{\epsilont_{4}} \rfloor\)-iteration-of \(B_i \)-procedures for each \(B_i \).  Notice that from conditioning on \(E_3 \) this also implies it is not possible for any \(\lfloor n^{\epsilont_{4}} \rfloor\)-iteration-of \(B_i \)-procedure to fail as long as no counter becomes negative in the main-bin. 

	Hence it must hold that  some counter \(c\in C_\variable \) of \(\A_k \) for some \(1\leq \variable\leq k \) (here we put \(C_k=C_{k+} \)) becomes negative in the  main-bin along \(\alpha \) before this happens. Notice that the  main-bin is modified only by the \(\lfloor n^{\epsilont_{4}}\rfloor \)-iteration-of-\(B_i\)-procedures, and thus at each point along \(\alpha\)  it holds that the value of \(c \) in the main-bin  is equal to \(\lfloor \frac{n^{\variable-\epsilont_{2}}}{m} \rfloor + \sum_{i=1}^{w}\#_{i}\cdot \sum_{j=1}^{l_i}  \Delta(\by_j^i)(c)\cdot a_{\by_j^i}\cdot  \lfloor n^{\epsilont_{4}} \rfloor  \) where \(\#_{i} \) represents the number of \(\lfloor n^{\epsilont_{4}}\rfloor \)-iteration-of-\(B_i\)-procedures performed so far.
	
	Given two components \(\by_1 \) and \(\by_2 \) of \(\A_k \), we define the \emph{level-distance} of \(\by_1 \) and \(\by_2 \) as the smallest value \(0\leq l\) such that both \(\by_1 \) and \(\by_2 \) belong to the same MEC of \(\A_{k-l} \) (note that the maximal possible level-distance is \(k-1 \)). Also notice for any two components \(\by_1 \) and \(\by_2 \) of \(\A \) whose level distance is at least \(\variable\) it holds that either \(\Delta(\by_1)(c)=0 \) or \(\Delta(\by_2)(c)=0 \) since both \(\by_1 \) and \(\by_2 \) contain a different local copy of the counter of which \(c \) is a local copy of in \(\A_k \) (note that if \(\variable=k \) then it is not possible for the level-distance to be at least \(\variable \) hence in such case this implication holds trivially). Also notice that for any two \(B_i,B_j \) the level-distance between \(\by_a^i \) and \(\by_b^j \) is the same regardless of \(a \) and \(b\), allowing us to define the level-distance of \(B_i\) and \(B_j \) as the level-distance between \(\by_1^i \) and \(\by_1^j \).
	
	But notice that for each vertex \(v\) of \(G \) at distance \(k-\variable+1 \) from the root such that the sub-tree \(G_v\) rooted in \(v\) contains a leaf \(v' \) with the MEC \(g(v') \) that contains the counter \(c\), it holds    for each MEC \(\MEC\) of \(\A_k \) containing \(c\) that \(G_v \) has exactly \((h_{\epsilont_1}(\lfloor \frac{n^{1-\epsilont_{2}}}{m} \rfloor))^{\variable-1} \) leafs \(v'\) with \(g(v')=\MEC \). And since in each such \(v'\) \(\sigma_k^\epsilon \) performs the \(\lfloor n^{\epsilont_{4}}\rfloor \)-iteration-of-\(B_i\)-procedure exactly \(h_\epsilon(n) \) times, at each point it holds for each two  \(B_i, B_j \) whose level-distance is at most \(\variable\)  that \(|\#_{i}-\#_{j}|\leq 2\cdot h_\epsilon(n)\cdot (h_{\epsilont_1}(\lfloor \frac{n^{1-\epsilont_{2}}}{m} \rfloor))^{\variable-1} \). Let us fix some  \(B_b \) of \(\A_{k,\hat{T}_{k+1}}^{\bx_k} \) that contains \(c \) and let \(M_{\variable}=\{i\in \{1,\dots,w   \}\mid \textit{the level-distance of } B_b\textit{ and } B_i \textit{ is at most } \variable-1\}  \). Let \(r=\min_{i\in M_\variable}\#_{i}   \), then it holds for all \(i\in M_\variable\) that \(\#_i-\#_r\leq |\#_i-\#_r|\leq   2\cdot h_\epsilon(n)\cdot (h_{\epsilont_1}(\lfloor \frac{n^{1-\epsilont_{2}}}{m} \rfloor))^{\variable-1}\).
	
	Therefore the value of the counter \(c\) on the main-bin (after being initialized) at each point along \(\alpha \) satisfies 
	\begin{gather*}
		\lfloor \frac{n^{\variable-\epsilont_{2}}}{m} \rfloor + \sum_{i=1}^{w}\#_{i}\cdot \sum_{j=1}^{l_i}  \Delta(\by_j^i)(c)\cdot a_{\by_j^i}\cdot \lfloor n^{\epsilont_{4}} \rfloor 
		=\\
		\lfloor \frac{n^{\variable-\epsilont_{2}}}{m} \rfloor + \sum_{i\in M_{\variable}}\#_{i}\cdot \sum_{j=1}^{l_i}  \Delta(\by_j^i)(c)\cdot a_{\by_j^i}\cdot  \lfloor n^{\epsilont_{4}} \rfloor
		+ \sum_{i\notin M_{\variable}}\#_{i}\cdot \sum_{j=1}^{l_i}  \Delta(\by_j^i)(c)\cdot a_{\by_j^i}\cdot  \lfloor n^{\epsilont_{4}} \rfloor
		=\\
		\lfloor \frac{n^{\variable-\epsilont_{2}}}{m} \rfloor + \sum_{i\in M_{\variable}}\#_{i}\cdot \sum_{j=1}^{l_i}  \Delta(\by_j^i)(c)\cdot a_{\by_j^i}\cdot  \lfloor n^{\epsilont_{4}} \rfloor
		+ \sum_{i\notin M_{\variable}}\#_{i}\cdot \sum_{j=1}^{l_i}  0\cdot a_{\by_j^i}\cdot  \lfloor n^{\epsilont_{4}} \rfloor
		=\\
		\lfloor \frac{n^{\variable-\epsilont_{2}}}{m} \rfloor + \sum_{i\in M_{\variable}}\#_{i}\cdot \sum_{j=1}^{l_i}  \Delta(\by_j^i)(c)\cdot a_{\by_j^i}\cdot  \lfloor n^{\epsilont_{4}} \rfloor
		=\\
		\lfloor \frac{n^{\variable-\epsilont_{2}}}{m} \rfloor + \sum_{i\in M_{\variable}}(\#_i-\#_r+\#_r)\cdot \sum_{j=1}^{l_i}  \Delta(\by_j^i)(c)\cdot a_{\by_j^i} \cdot \lfloor n^{\epsilont_{4}} \rfloor
		=\\
		\lfloor \frac{n^{\variable-\epsilont_{2}}}{m} \rfloor + \sum_{i\in M_{\variable}}\#_r\cdot \sum_{j=1}^{l_i}  \Delta(\by_j^i)(c)\cdot a_{\by_j^i} \cdot \lfloor n^{\epsilont_{4}} \rfloor
		+
		\sum_{i\in M_{\variable}} (\#_i-\#_r)\cdot \sum_{j=1}^{l_i}  \Delta(\by_j^i)(c)\cdot a_{\by_j^i}\cdot  \lfloor n^{\epsilont_{4}} \rfloor   
		\geq \\
		\lfloor \frac{n^{\variable-\epsilont_{2}}}{m} \rfloor + \sum_{i\in M_{\variable}}\#_r\cdot \sum_{j=1}^{l_i}  \Delta(\by_j^i)(c)\cdot a_{\by_j^i}\cdot  \lfloor n^{\epsilont_{4}} \rfloor
		-
		\sum_{i\in M_{\variable}} (\#_i-\#_r)\sum_{j=1}^{l_i}  |\Delta(\by_j^i)(c)|\cdot a_{\by_j^i}\cdot  \lfloor n^{\epsilont_{4}} \rfloor 
		\geq \\
		\lfloor \frac{n^{\variable-\epsilont_{2}}}{m} \rfloor + \#_r\cdot \lfloor n^{\epsilont_{4}} \rfloor\cdot \sum_{i\in M_{\variable}}\sum_{j=1}^{l_i}  \Delta(\by_j^i)(c)\cdot a_{\by_j^i} 
		\\-
		\sum_{i\in M_{\variable}} 2\cdot h_\epsilon(n)\cdot (h_{\epsilont_1}(\lfloor \frac{n^{1-\epsilont_{2}}}{m} \rfloor))^{\variable-1}\cdot \sum_{j=1}^{l_i}  |\Delta(\by_j^i)(c)|\cdot a_{\by_j^i} \cdot \lfloor n^{\epsilont_{4}} \rfloor 
	\end{gather*}
	As it also holds that \(0\leq \Delta(\bx_k)(c)=\sum_{\by} a_\by\cdot \Delta(\by)(c)
	=
	\sum_{i\in M_{\variable}}\sum_{j=1}^{l_i}a_{\by_j^i}\cdot \Delta(\by_j^i)(c)
	\) 
	we can write 
	\begin{gather*}
		\lfloor \frac{n^{\variable-\epsilont_{2}}}{m} \rfloor + \#_r\cdot \lfloor n^{\epsilont_{4}} \rfloor\cdot \sum_{i\in M_{\variable}}\sum_{j=1}^{l_i}  \Delta(\by_j^i)(c)\cdot a_{\by_j^i} 
		\\-
		\sum_{i\in M_{\variable}} 2\cdot h_\epsilon(n)\cdot (h_{\epsilont_1}(\lfloor \frac{n^{1-\epsilont_{2}}}{m} \rfloor))^{\variable-1}\cdot \sum_{j=1}^{l_i}  |\Delta(\by_j^i)(c)|\cdot a_{\by_j^i} \cdot \lfloor n^{\epsilont_{4}} \rfloor 
		=\\
		\lfloor \frac{n^{\variable-\epsilont_{2}}}{m} \rfloor + \#_r\cdot \lfloor n^{\epsilont_{4}} \rfloor\cdot  \Delta(\bx_k)(c) 
		-
		\sum_{i\in M_{\variable}} 2\cdot h_\epsilon(n)\cdot (h_{\epsilont_1}(\lfloor \frac{n^{1-\epsilont_{2}}}{m} \rfloor))^{\variable-1}\cdot  \sum_{j=1}^{l_i} |\Delta(\by_j^i)(c)|\cdot a_{\by_j^i} \cdot \lfloor n^{\epsilont_{4}} \rfloor 
		\geq \\
		\lfloor \frac{n^{\cdot -\epsilont_{2}}}{m} \rfloor -
		\sum_{i\in M_{\variable}} 2\cdot h_\epsilon(n)\cdot (h_{\epsilont_1}(\lfloor \frac{n^{1-\epsilont_{2}}}{m} \rfloor))^{\variable-1}\cdot\sum_{j=1}^{l_i}  |\Delta(\by_j^i)(c)|\cdot a_{\by_j^i}\cdot  \lfloor n^{\epsilont_{4}} \rfloor 
		\geq\\
		\lfloor \frac{n^{\variable-\epsilont_{2}}}{m} \rfloor
		-
		\sum_{i\in M_{\variable}} 2\cdot h_\epsilon(n)\cdot (h_{\epsilont_1}(\lfloor \frac{n^{1-\epsilont_{2}}}{m} \rfloor))^{\variable-1}\cdot l_i \cdot  \max_{\by}(a_{\by}\cdot |\Delta(\by)(c)|)\cdot  \lfloor n^{\epsilont_{4}} \rfloor
		=\\
		\lfloor \frac{n^{\variable-\epsilont_{2}}}{m} \rfloor
		-
		|M_{\variable}|\cdot  2\cdot h_\epsilon(n)\cdot (h_{\epsilont_1}(\lfloor \frac{n^{1-\epsilont_{2}}}{m} \rfloor))^{\variable-1}\cdot l_i \cdot  \max_{\by}(a_{\by}\cdot |\Delta(\by)(c)|)\cdot  \lfloor n^{\epsilont_{4}} \rfloor
		\geq\\
		\lfloor \frac{n^{\variable-\epsilont_{2}}}{m} \rfloor
		-
		w \cdot  2\cdot n^{1-\epsilon}\cdot \big((\lfloor \frac{n^{1-\epsilont_{2}}}{m} \rfloor)^{1-\epsilont_1}\big)^{\variable-1}\cdot l_i \cdot  \max_{\by}(a_{\by}\cdot |\Delta(\by)(c)|)\cdot   n^{\epsilont_{4}} 
		\geq\\
		\lfloor \frac{n^{\variable-\epsilont_{2}}}{m} \rfloor
		-
		w \cdot  2\cdot n^{1-\epsilon}\cdot \big(( \frac{n^{1-\epsilont_{2}}}{m} )^{1-\epsilont_1}\big)^{\variable-1}\cdot l_i \cdot  \max_{\by}(a_{\by}\cdot |\Delta(\by)(c)|)\cdot   n^{\epsilont_{4}} 
		=\\
		\lfloor \frac{n^{\variable-\epsilont_{2}}}{m} \rfloor
		-
		w \cdot  2\cdot n^{1-\epsilon}\cdot  \frac{
			n^{(1-\epsilont_{2})\cdot(1-\epsilont_1)\cdot(\variable-1)
		}}{m^{(1-\epsilont_1)\cdot(\variable-1)}} \cdot l_i \cdot  \max_{\by}(a_{\by}\cdot |\Delta(\by)(c)|)\cdot   n^{\epsilont_{4}} 
		=\\
		\lfloor \frac{n^{\variable-\epsilont_{2}}}{m} \rfloor
		-
		w \cdot  2\cdot n^{1-\epsilon}\cdot  \frac{
			n^{
				\variable-1-(\variable-1)\cdot\epsilont_1-(\variable-1)\cdot\epsilont_2\cdot(1-\epsilont_1)
		}}{m^{(1-\epsilont_1)\cdot(\variable-1)}} \cdot l_i \cdot  \max_{\by}(a_{\by}\cdot |\Delta(\by)(c)|)\cdot   n^{\epsilont_{4}} 
		=\\
		\lfloor \frac{n^{\variable-\epsilont_{2}}}{m} \rfloor
		-
		w \cdot  2\cdot \frac{1}{m^{(1-\epsilont_{2})\cdot (1-\epsilont_{2}) }} \cdot l_i \cdot  \max_{\by}(a_{\by}\cdot |\Delta(\by)(c)|) \cdot n^{    \variable-\epsilon-(\variable-1)\cdot\epsilont_1-(\variable-1)\cdot\epsilont_2\cdot(1-\epsilont_1)+\epsilont_4}
	\end{gather*}

	And since \(w \cdot  2\cdot\frac{1}{m^{(1-\epsilont_{2})\cdot (1-\epsilont_{2}) }}\cdot  l_i \cdot  \max_{\by}(a_{\by}\cdot |\Delta(\by)(c)|)  \)  is a constant, assuming  \begin{equation}\label{eq-epsbound-fsdhivuttrxtrx}
		-\epsilont_2>-\epsilon-(\variable-1)\cdot\epsilont_1-(\variable-1)\cdot\epsilont_2\cdot(1-\epsilont_1)+\epsilont_4
	\end{equation} it  holds for all sufficiently large \(n\) that \[\lfloor \frac{n^{\variable-\epsilont_{2}}}{m} \rfloor
	-
	w \cdot  2\cdot \frac{1}{m^{(1-\epsilont_{2})\cdot (1-\epsilont_{2}) }} \cdot l_i \cdot  \max_{\by}(a_{\by}\cdot |\Delta(\by)(c)|) \cdot n^{    \variable-\epsilon-(\variable-1)\cdot\epsilont_1-(\variable-1)\cdot\epsilont_2\cdot(1-\epsilont_1)+\epsilont_4}\geq 0\] Therefore for all sufficiently large \(n\) no counter can get depleted on the main-bin during a computation \(\alpha \in E_0\cap E_1 \cap E_2 \cap E_3   \) under \(\sigma_k^\epsilon \). Therefore assuming sufficiently large \(n\) and conditioned on \(E_0\cap E_1 \cap E_2 \cap E_3 \) it holds for any computation produced by \(\sigma_k^\epsilon \) that \(\#_r=h_\epsilon(n)\cdot (h_{\epsilont_1}(\lfloor \frac{n^{1-\epsilont_{2}}}{m} \rfloor))^{k}\) for each \(1\leq i \leq w\) which gives us that each counter \(c\) with \(\Delta(\bx_k)(c)>0 \) reaches in the main-bin at least  the value
	\begin{gather*}
		\lfloor \frac{n^{k-\epsilont_{2}}}{m} \rfloor + \#_r\cdot \lfloor n^{\epsilont_{4}} \rfloor\cdot  \Delta(\bx_k)(c) \\-
		|M_{k}|\cdot  2\cdot h_\epsilon(n)\cdot (h_{\epsilont_1}(\lfloor \frac{n^{1-\epsilont_{2}}}{m} \rfloor))^{k-1}\cdot l_i \cdot  \max_{\by}(a_{\by}\cdot |\Delta(\by)(c)|)\cdot  \lfloor n^{\epsilont_{4}} \rfloor
		\geq\\
		 \#_r\cdot \lfloor n^{\epsilont_{4}} \rfloor\cdot  \Delta(\bx_k)(c) -
		|M_{k}|\cdot  2\cdot h_\epsilon(n)\cdot (h_{\epsilont_1}(\lfloor \frac{n^{1-\epsilont_{2}}}{m} \rfloor))^{k-1}\cdot l_i \cdot  \max_{\by}(a_{\by}\cdot |\Delta(\by)(c)|)\cdot  \lfloor n^{\epsilont_{4}} \rfloor
		=\\
		 h_\epsilon(n)\cdot (h_{\epsilont_1}(\lfloor \frac{n^{1-\epsilont_{2}}}{m} \rfloor))^{k}\cdot \lfloor n^{\epsilont_{4}} \rfloor\cdot  \Delta(\bx_k)(c) \\-
		 |M_{k}|\cdot  2\cdot h_\epsilon(n)\cdot (h_{\epsilont_1}(\lfloor \frac{n^{1-\epsilont_{2}}}{m} \rfloor))^{k-1}\cdot l_i \cdot  \max_{\by}(a_{\by}\cdot |\Delta(\by)(c)|)\cdot  \lfloor n^{\epsilont_{4}} \rfloor
		 =\\
		 h_\epsilon(n)\cdot (h_{\epsilont_1}(\lfloor \frac{n^{1-\epsilont_{2}}}{m} \rfloor))^{k-1}
		 \cdot
		  \big( h_{\epsilont_1}(\lfloor \frac{n^{1-\epsilont_{2}}}{m} \rfloor)\cdot \lfloor n^{\epsilont_{4}} \rfloor\cdot  \Delta(\bx_k)(c)-|M_{k}|\cdot  2\cdot l_i \cdot  \max_{\by}(a_{\by}\cdot |\Delta(\by)(c)|)\cdot  \lfloor n^{\epsilont_{4}} \rfloor  \big)
		  \geq \\
		 h_\epsilon(n)\cdot (h_{\epsilont_1}(\lfloor \frac{n^{1-\epsilont_{2}}}{m} \rfloor))^{k-1}
		 \cdot
		  \big(h_{\epsilont_1}(\lfloor \frac{n^{1-\epsilont_{2}}}{m} \rfloor)\big)^{1-\epsilont_1}
		 \geq \\
		 h_\epsilon(n)\cdot (h_{\epsilont_1}(\lfloor \frac{n^{1-\epsilont_{2}}}{m} \rfloor))^{k-\epsilont_1}
	\end{gather*}
	
	Where the second to last inequality holds for all sufficiently large \(n\) from \(\lim_{n\rightarrow\infty}h_{\epsilont_1}(n)=\infty\). 
	
	Towards proving \(\lim_{n\rightarrow\infty} \prob_{p\vec{n}}^{\sigma_k^\epsilon}[\calC_\A[c]\geq (h_\epsilon(n))^{k+1} ]=1 \) for every counter \(c\) with \(\Delta(\bx_k)(c)>0 \) it now suffices to show that \((h_\epsilon(n))^{k+1}\leq h_\epsilon(n)\cdot (h_{\epsilont_1}(\lfloor \frac{n^{1-\epsilont_{2}}}{m} \rfloor))^{k-\epsilont_1}  \) for some properly chosen \(h_{\epsilont_1}\). It holds \((h_\epsilon(n))^{k+1}\leq h_\epsilon(n)\cdot (n^{1-\epsilon})^{k}=h_\epsilon(n)\cdot n^{k-k\cdot\epsilon} \), whereas for  \(h_{\epsilont_1}(n)=\lfloor n^{1-2\cdot \epsilont_1}\rfloor\) it holds  for all sufficiently large \(n\) that \begin{align*}
		h_\epsilon(n)\cdot (h_{\epsilont_1}(\lfloor \frac{n^{1-\epsilont_{2}}}{m} \rfloor))^{k-\epsilont_1}
	\geq
	 h_\epsilon(n)\cdot \big(\lfloor (\lfloor \frac{n^{1-\epsilont_{2}}}{m} \rfloor)^{1-2\cdot \epsilont_1}\rfloor\big)^{k-\epsilont_1}
	\geq\\
	 h_\epsilon(n)\cdot \big( ( n^{1-\epsilont_{2}} )^{1-2\cdot \epsilont_1}\big)^{k-2\cdot\epsilont_1}
	 =
	  h_\epsilon(n)\cdot n^{k-2\cdot\epsilont_1-(k-2\cdot\epsilont_1)\cdot2\cdot \epsilont_1-(k-2\cdot\epsilont_1)\cdot(1-2\cdot \epsilont_1)\cdot\epsilont_{2}} 
	\end{align*}
	Hence assuming \begin{equation}\label{eq-epsbound-jnhvgycukvhkb}
		-2\cdot\epsilont_1-(k-2\cdot\epsilont_1)\cdot2\cdot \epsilont_1-(k-2\cdot\epsilont_1)\cdot(1-2\cdot \epsilont_1)\cdot\epsilont_{2}>-k\cdot \epsilon
	\end{equation}
	it holds \(\lim_{n\rightarrow\infty} \prob_{p\vec{n}}^{\sigma_k^\epsilon}[\calC_\A[c]\geq (h_\epsilon(n))^{k+1} ]=1 \).
	
	This also proves \item \(\lim_{n\rightarrow\infty} \prob_{p\vec{n}}^{\sigma_k^\epsilon}[\calT_\A[t]\geq (h_\epsilon(n))^{k+1} ]=1 \) for every transition \(t\) with \(\bx_k(t)>0 \), as we can simply add a new \(t\)-transition-counter \(c_t\) to \(\A\) which is increased only by \(t\). Then it clearly holds \(\Delta(\bx_k)(c_t)>0 \) and \(\calT_\A[t]=\calC_\A[c_t]-n\).
	
	It remains to show there exist values for \(\epsilon_1,\epsilon_2,\dots \) that satisfy all of our assumptions. We do this in Table~\ref{Table-eps-section-another-proof-k+1}.

		\begin{table*}[h]
			\caption{Values of \(\epsilon_1,\epsilon_2,\dots \) for Section~\ref{section-proof-of-lemma-something-something-k+1-estiamtes}. Remember that \(0<\epsilon<1 \) and \(1\leq \variable\leq k\).}
			\centering
				\begin{tabular}{|l|| c c|} 
					\hline
					\(\epsilon\) assignment &  \multicolumn{2}{|c|}{restrictions}   \\ 
					\hline\hline
					\(\epsilon_1=\nicefrac{\epsilon}{1000} \) & \(0<\epsilon_1,\epsilon_2,\dots  \) &   \\ 
					\hline
					\(\epsilon_2=\nicefrac{\epsilon}{2} \) & \(\epsilont_1<1 \) & \eqref{eq-epsbound-bvuycrvubh}  \\
					\hline
					\(\epsilon_{3}=\epsilon \) & 	\(\epsilont_3\leq k+1\) & \eqref{eq-epsbound-bhcftxfdsfsw}  \\
					\hline
					\(\epsilon_{4}=\nicefrac{\epsilon}{5000} \) & \(1-\epsilont_{2}>\epsilont_{4}\) & \eqref{eq-epsilonbound-jhjvgcgfxdysfsdfstjh}  \\
					\hline
					\(\epsilon_{5}=\nicefrac{\epsilon}{10000} \) & \(-\epsilont_{3} > -\epsilon-k\cdot \epsilont_{1}-k\cdot (1- \epsilont_{1})\cdot\epsilont_{2}+\epsilont_4+\epsilont_5\) & \eqref{eq-eps-bound-jhcyrxfdsfsdfstycuv}   \\ 
					\hline
					& \(\epsilont_4 > \epsilont_5 \) & \eqref{eq-eps-bound-jhcyrxtycuv}  \\
					\cline{2-3}
					& \(	-\epsilont_2>-\epsilon-(\variable-1)\cdot\epsilont_1-(\variable-1)\cdot\epsilont_2\cdot(1-\epsilont_1)+\epsilont_4 \) & \eqref{eq-epsbound-fsdhivuttrxtrx}  \\
					\cline{2-3}
					& \(	-2\cdot\epsilont_1-(k-2\cdot\epsilont_1)\cdot2\cdot \epsilont_1-(k-2\cdot\epsilont_1)\cdot(1-2\cdot \epsilont_1)\cdot\epsilont_{2}>-k\cdot \epsilon \) & \eqref{eq-epsbound-jnhvgycukvhkb}  \\
					
					\hline\hline
					&\multicolumn{2}{|c|}{after substitution} \\
					\hline\hline
					& \(\nicefrac{\epsilon}{1000}<1 \) & \eqref{eq-epsbound-bvuycrvubh} \\
					\cline{2-3}
					& \(\nicefrac{\epsilon}{5000}\leq k+1\) & \eqref{eq-epsbound-bhcftxfdsfsw} \\
					\cline{2-3}
					& \(1-\nicefrac{\epsilon}{2}>\nicefrac{\epsilon}{5000}\) & \eqref{eq-epsilonbound-jhjvgcgfxdysfsdfstjh} \\
					\cline{2-3}
					& \(-\epsilon > -\epsilon-k\cdot \nicefrac{\epsilon}{1000}-k\cdot (1- \nicefrac{\epsilon}{1000})\cdot\nicefrac{\epsilon}{2}+\nicefrac{\epsilon}{5000}+\nicefrac{\epsilon}{10000} \) & \eqref{eq-eps-bound-jhcyrxfdsfsdfstycuv} \\
					\cline{2-3}
					& \(\nicefrac{\epsilon}{5000} > \nicefrac{\epsilon}{10000} \) & \eqref{eq-eps-bound-jhcyrxtycuv} \\
					\cline{2-3}
					& \(	-\nicefrac{\epsilon}{2}>-\epsilon-(\variable-1)\cdot\nicefrac{\epsilon}{1000}-(\variable-1)\cdot\nicefrac{\epsilon}{2}\cdot(1-\nicefrac{\epsilon}{1000})+\nicefrac{\epsilon}{5000}\) & \eqref{eq-epsbound-fsdhivuttrxtrx} \\
					\cline{2-3}
					& \(	-2\cdot\nicefrac{\epsilon}{1000}-(k-2\cdot\nicefrac{\epsilon}{1000})\cdot2\cdot \nicefrac{\epsilon}{1000}-(k-2\cdot\nicefrac{\epsilon}{1000})\cdot(1-2\cdot \nicefrac{\epsilon}{1000})\cdot\nicefrac{\epsilon}{2}>-k\cdot \epsilon\) & \eqref{eq-epsbound-jnhvgycukvhkb} \\
					
					\hline

				\end{tabular}
				\par
				\label{Table-eps-section-another-proof-k+1}	
		\end{table*}

	
\end{proof}

\section{Proof of point \ref{enum-main-4} of Lemma~\ref{lemma-main-for-all-k}}
\label{sec-pointingupperbound-app}
In this Section we prove the induction step for \(k\) for point \ref{enum-main-4} of Lemma~\ref{lemma-main-for-all-k}.
That is we want to prove for a given component \(\br\) of \(\A\) that either there exists a transition \(t\) with \(\br(t)>0 \) and with an upper asymptotic estimate of \(n^{k} \) for \(\calT_\A[t] \), or \(\calP_{\A_{+\hat\br}}[\M_{\hat{\br}}] \) has a lower asymptotic estimate of \(n^{k+1}\).

Given a counters vector \(\bv\) and subsets of counters \(C_1,\dots, C_m \) we use \(\bv^{C_1,\dots,C_m}=[\bu_1,\dots,\bu_m]^{C_1,\dots, C_m} \) to denote the vector \(\bv \) restricted to the counters from \(C_1\cup\dots\cup C_m \) such that  \(\bu_i \) is the vector \(\bv\) restricted to the counters from \(C_i \).  (i.e., for each \(1\leq i\leq m \), for each \(c\in C_i\) we put \(\bu_i(c)=\bv(c)\) and \(\bu_i \) is a \(|C_i| \)-dimensional vector. Also we treat \([\bu_1,\dots,\bu_m]^{C_1,\dots, C_m} \) as a \((|C_1|+\dots+|C_m|) \)-dimensional vector such that for each \(c\in C_i\) it holds \([\bu_1,\dots,\bu_m]^{C_1,\dots, C_m}(c)=\bv^{C_1,\dots,C_m}(c)=\bu_i(c)\))

We denote by \(\support^{C_1,\dots,C_m}(\bx)=\{\bv^{C_1,\dots,C_m}\mid \bv \in \support(\bx) \} \)  the set of vectors \(\support(\bx) \) restricted to the counters from \(C_1\cup \dots \cup C_m \).


Given a   MEC \(\MEC\) of \(\A_\sigma \) for  \(\sigma\in\stratsMD{\A} \), we use \(X_{i,T_j}^{B,l} \) to denote the set of all multi-components \(\bx\) on \(\A_{i,T_j} \) such that \(\bx(t)=0\) for all transitions \(t\) that are not contained in the same MEC of \(\A_l \) as \(\MEC\). 

Given a   MEC \(\MEC\) of \(\A_\sigma \) for  \(\sigma\in\stratsMD{\A} \), we define
\begin{multline*}
	R^{B,k}=\big\{[\bv_1,\dots,\bv_{\lfloor \frac{k}{2}\rfloor}]^{C_1,\dots,C_{\lfloor \frac{k}{2}\rfloor}} \mid
	\exists [\bv_1^1,\dots,\bv^1_{\lfloor \frac{k}{2}\rfloor}]^{C_1,\dots,C_{\lfloor \frac{k}{2}\rfloor}}  \in \Delta^{C_1,\dots,C_{\lfloor \frac{k}{2}\rfloor}}(X^{B,k-2+1}_{k-1,T_k}), 
	\\
	[\bv_1^2,\dots,\bv^2_{\lfloor \frac{k}{2}\rfloor}]^{C_1,\dots,C_{\lfloor \frac{k}{2}\rfloor}}  \in \Delta^{C_1,\dots,C_{\lfloor \frac{k}{2}\rfloor}}(X^{B,k-2\cdot 2+1}_{k-2,T_{k-1}}),\dots
	\\	
	\dots,[\bv_1^{\lfloor \frac{k}{2}\rfloor},\dots,\bv^{\lfloor \frac{k}{2}\rfloor}_{\lfloor \frac{k}{2}\rfloor}]^{C_1,\dots,C_{\lfloor \frac{k}{2}\rfloor}}  \in \Delta^{C_1,\dots,C_{\lfloor \frac{k}{2}\rfloor}}(X^{B,k-2\cdot {\lfloor \frac{k}{2}\rfloor}+1}_{k-{\lfloor \frac{k}{2}\rfloor},T_{k-\lfloor \frac{k}{2}\rfloor+1}});
	\\
	\forall j\in \{1,2,\dots,\lfloor \frac{k}{2}\rfloor \}: \sum_{i=1}^{{\lfloor \frac{k}{2}\rfloor}} \bv_j^i=\bv_j; \forall m<j: \bv_m^j=\vec{0}  \big\} 
\end{multline*}
where we assume that for each \(i\) the local copies of each counter  \(c\in C_i \), for each transition \(t\in T_{k-i+1}\) the local copy of \(c\) used by \(t\) in \(\A_{k- i,T_{k-i+1}} \)  is the same as the local copy of \(c \) used by \(t\) in \(\A_{k- j,T_{k-j+1}} \) for for each \(j\geq i \) (note that as \(j\) increases so does the number of transitions in \(\A_{k- j,T_{k-j+1}} \) and the number of local copies in \(\A_{k- j,T_{k-j+1}} \) decreases as \(j \) increases. But every multi-component from \(X^{B,k-2\cdot j+1}_{k-j,T_{k-j+1}} \) ever modifies only a single local copy of each counter \(c\in C_i \) for \(i\geq j \).) 

We use \(\MEC_\br \) to denote the MEC corresponding to \(\br \) in \(\A_\br \).

The induction step for \(k\) for point \ref{enum-main-4} of Lemma~\ref{lemma-main-for-all-k} follows from the following Lemma.

\begin{lemma}
	If for each \(1\leq l \leq \frac{k}{2} \) it holds \(\support^{C_1,\dots,C_l}(\hat{\br})  \subseteq \Delta^{C_1,\dots,C_l}(X_{k-l,T_{k-l+1}})\) then  then \(\calP_{\A_{+\hat{\br}}}[\M_{\hat\br}] \) has a lower asymptotic estimate of \(n^{k+1} \).
	
	If on the other hand there exists \(1\leq l \leq \frac{k}{2} \) such that \(\support^{C_1,\dots,C_l} (\hat{\br}) \nsubseteq \Delta^{C_1,\dots,C_l}(X_{k-l,T_{k-l+1}})\) then there exists a transition \(t\) with \(\br(t)>0 \) and an upper asymptotic estimate of \(n^{k} \) for \(\calT_\A[t] \).
\end{lemma}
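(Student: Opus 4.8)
The statement is an exhaustive dichotomy (either \(\support^{C_1,\dots,C_l}(\hat\br)\subseteq\Delta^{C_1,\dots,C_l}(X_{k-l,T_{k-l+1}})\) for \emph{all} \(l\le k/2\), or it fails for \emph{some} \(l\le k/2\)), so the plan is to split on which clause is in force and, in each clause, simply compose the auxiliary lemmas already developed for the sets \(X^{\MEC,l}_{i,T_j}\), \(R^{\MEC,k}\) and the ranking functions \(rank^{C_1,\dots,C_l}_{k-l,T_{k-l+1}}\); no new argument is needed. Recall \(\MEC_\br\) from above (the MEC corresponding to \(\br\) in \(\A_\br\)) and that \(\hat\br\) is the \emph{probabilistic part} of \(\br\), which is zero-bounded or zero-unbounded on every counter.

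\textbf{First clause.} Assume \(\support^{C_1,\dots,C_l}(\hat\br)\subseteq\Delta^{C_1,\dots,C_l}(X_{k-l,T_{k-l+1}})\) for every integer \(l\) with \(1\le l\le k/2\) (i.e. \(1\le l\le\lfloor k/2\rfloor\)). I would feed this whole family of inclusions, one per level \(1,\dots,\lfloor k/2\rfloor\), into Lemma~\ref{lemma-D-in-Rl-for-all-l-imply-D-in-R} instantiated at \(\MEC=\MEC_\br\): since by construction \(R^{\MEC_\br,k}\) is exactly the set of level-wise sums of the effect sets \(\Delta^{C_1,\dots,C_{\lfloor k/2\rfloor}}(X^{\MEC_\br,k-2j+1}_{k-j,T_{k-j+1}})\) for \(j=1,\dots,\lfloor k/2\rfloor\), that lemma yields the single inclusion \(\support^{C_1,\dots,C_{\lfloor k/2\rfloor}}(\hat\br)\subseteq R^{\MEC_\br,k}\). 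Applying Lemma~\ref{lemma-D-in-R-lower-estimate-nk+1-for-hatB} to \(\hat\br\) then turns this inclusion into the promised lower asymptotic estimate of \(n^{k+1}\) for \(\calP_{\A_{+\hat\br}}[\M_{\hat\br}]\).

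\textbf{Second clause.} Assume instead there is an \(l\le k/2\) with \(\support^{C_1,\dots,C_l}(\hat\br)\nsubseteq\Delta^{C_1,\dots,C_l}(X_{k-l,T_{k-l+1}})\), and fix such an \(l\). If some transition \(t\) with \(\br(t)>0\) is \emph{not} in the transition set \(T_{k-l+1}\) of \(\A_{k-l,T_{k-l+1}}\), then by the induction hypothesis of Lemma~\ref{lemma-main-for-all-k} (points~\ref{enum-main-1}--\ref{enum-main-2} at index \(k-l<k\)) this \(t\) is classified at a level \(\le k-l\), so \(\calT_\A[t]\) has an upper asymptotic estimate of \(n^{k-l}\le n^k\) and we are done. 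Otherwise every transition used by \(\br\) lies in \(T_{k-l+1}\), so \(\br\) restricts to a component of \(\A_{k-l,T_{k-l+1}}\); then Lemma~\ref{lemma-D-not-in-Rl-implies-B-zero-unbounded-rankl-dsada} gives that \(\hat\br\) is not zero-bounded on \(rank^{C_1,\dots,C_l}_{k-l,T_{k-l+1}}\), and the equivalence in the first bullet of Lemma~\ref{lemma-hatB-zero-unbounded-rankl-give-upper-estimate-nk} (\(\hat\br\) zero-bounded on this ranking function iff \(R^{C_1,\dots,C_l}_{k-l}\cap\{t\mid\br(t)>0\}=\emptyset\)) forces a transition \(t\) with \(\br(t)>0\) and \(t\in R^{C_1,\dots,C_l}_{k-l}\); the second bullet of the same lemma gives \(\calT_\A[t]\) an upper asymptotic estimate of \(n^k\). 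In all cases we have exhibited a transition \(t\) with \(\br(t)>0\) carrying an upper asymptotic estimate of \(n^k\), which (since the two clauses exhaust the hypothesis) also establishes point~\ref{enum-main-4} of Lemma~\ref{lemma-main-for-all-k} for \(k\).

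\textbf{Main obstacle.} The conceptual content lives entirely in the cited lemmas, so the remaining work is combinatorial-notational: aligning the restriction-to-\(C_1\cup\dots\cup C_l\) conventions and, above all, the identification of the local copies of each counter across the VASS MDPs \(\A_{k-j,T_{k-j+1}}\), so that the family of hypotheses in the statement matches precisely the ingredients in the definition of \(R^{\MEC_\br,k}\) and the hypotheses of Lemmas~\ref{lemma-D-in-Rl-for-all-l-imply-D-in-R} and~\ref{lemma-D-in-R-lower-estimate-nk+1-for-hatB}; one must also check that the top index \(l=\lfloor k/2\rfloor\) (in particular \(l=k/2\) for even \(k\)) is handled on the same footing as the smaller indices, since that level genuinely appears in \(R^{\MEC_\br,k}\). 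I expect this index/local-copy bookkeeping, rather than any new idea, to be where the care is needed.
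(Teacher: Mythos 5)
Your proof is correct and follows the same route as the paper: the first clause composes Lemma~\ref{lemma-D-in-Rl-for-all-l-imply-D-in-R} with Lemma~\ref{lemma-D-in-R-lower-estimate-nk+1-for-hatB}, and the second clause composes Lemma~\ref{lemma-D-not-in-Rl-implies-B-zero-unbounded-rankl-dsada} with Lemma~\ref{lemma-hatB-zero-unbounded-rankl-give-upper-estimate-nk}. The only point at which you depart from the paper is the extra sub-case split in the second clause (whether some transition of \(\br\) falls outside \(T_{k-l+1}\)), and this is in fact a genuine gap-fill rather than padding: the equivalence in the first bullet of Lemma~\ref{lemma-hatB-zero-unbounded-rankl-give-upper-estimate-nk} is stated only for components of \(\A_{k-l,T_{k-l+1}}\), whereas \(\br\) is a priori just a component of \(\A\), and if \(\br\) used a transition outside \(T_{k-l+1}\) the ``increasing is not possible'' part of that lemma's proof would not apply to it. Your sub-case handles exactly that situation by falling back on the induction hypothesis (point~\ref{enum-main-1} at index \(k-l<k\)) to get the upper estimate \(n^{k-l}\le n^k\) directly; the paper's write-up silently assumes the applicability of the equivalence, so your version is the more careful one. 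Your flagged concerns about the \(l=\lfloor k/2\rfloor\) boundary (since Lemma~\ref{lemma-D-not-in-Rl-implies-B-zero-unbounded-rankl-dsada} is stated with a strict inequality) and the local-copy bookkeeping are real but cosmetic; the paper itself glosses over both.
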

\begin{proof}
	In the first case we get from Lemma~\ref{lemma-D-in-Rl-for-all-l-imply-D-in-R} that \(\support^{C_1,\dots,C_{\lfloor\frac{k}{2}\rfloor}}(\hat{\br})\subseteq R^{B_\br,k}  \) and therefore from Lemma~\ref{lemma-D-in-R-lower-estimate-nk+1-for-hatB} we get that \(\calP_{\A_{+\hat{\br}}}[\M_{\hat\br}] \) has a lower asymptotic estimate of \(n^{k+1} \).
	
	Whereas in the second case we get from Lemma~\ref{lemma-D-not-in-Rl-implies-B-zero-unbounded-rankl-dsada} that \(\hat\br \) is not zero-bounded on \(rank_{k-l,T_{k-l+1}}^{C_1,\dots,C_l} \) which then gives us from Lemma~\ref{lemma-hatB-zero-unbounded-rankl-give-upper-estimate-nk} that there exists a transition \(t\) with \(\br(t)>0 \) such that  \(\calT_\A[t] \) has an upper asymptotic estimate of \(n^{k} \).
\end{proof}

\subsection{Lower Asymptotic Estimate \(n^{k+1} \) for \(\calP_{\A_{+\hat{\br}}}[\M_{\hat{\br}}] \)}\label{app-sec-lowerest-nkplus1-hatby}

In this Section we prove the following Lemma.

\begin{lemma}\label{lemma-D-in-R-lower-estimate-nk+1-for-hatB}
	Let \(\br \) be a component of \(\A\). If \(\support^{C_1,\dots,C_{\lfloor \frac{k}{2}\rfloor}}(\hat\br)\subseteq R^{B_\br,k} \) then \(\calP_{\A_{+\hat{\br}}}[\M_{\hat{\br}}] \) has a lower asymptotic estimate of \(n^{k+1} \). 
\end{lemma}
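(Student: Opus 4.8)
The plan is to prove the statement by constructing, for every $\epsilon>0$ and every function $h_\epsilon:\N\to\N_0$ with $h_\epsilon(n)\le n^{1-\epsilon}$ and $\lim_{n\to\infty}h_\epsilon(n)=\infty$, a pointing strategy on $\A_{+\hat{\br}}$ under which, with probability tending to $1$ as $n\to\infty$, the chain $\M_{\hat{\br}}=\A_{\hat{\br}}$ is pointed to at least $(h_\epsilon(n))^{k+1}$ times before any counter becomes negative; instantiating $h_\epsilon(n)=\lfloor n^{1-\epsilon}\rfloor$ then gives the lower asymptotic estimate $n^{k+1}$ for $\calP_{\A_{+\hat{\br}}}[\M_{\hat{\br}}]$. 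As in the constructions of Lemmas~\ref{lemma-linear-upper-bound} and~\ref{lemma-something-something-k+1}, the strategy is described using bins: one $\hat{\br}$-bin carrying the computation produced by pointing to $\M_{\hat{\br}}$, one reservoir bin for each reset level $i\in\{1,\dots,\lfloor k/2\rfloor\}$, plus auxiliary bins for the components used inside resets; the initial vector $\vec{n}$ is split among them after an initial pumping phase (Lemma~\ref{counters-pumpable-all-at-once}). The whole argument lives inside the induction on $k$ of Lemma~\ref{lemma-main-for-all-k}, so for every $k'<k$ we may use the strategies $\sigma_{k'}^{\epsilon'}$ of Lemma~\ref{lemma-something-something-k+1-estiamtes} (which iterate any multi-component with support in $T_{k'+1}$ roughly $n^{k'+1}$ times) and, for components at those lower levels, the pointing strategies furnished by this very lemma at smaller $k$.

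First I would dispose of the ``large'' counters, those in $C_m$ with $m>\lfloor k/2\rfloor$. Pump each such $c$ to about $n^m$; since $\hat{\br}$ has zero expected effect on every counter (it is zero-bounded or zero-unbounded on each one) and $2m\ge k+1$, the one-dimensional martingale bounds behind the $n^2$ classification of $1$-dimensional VASS MDPs from~\cite{AKCONCUR23} show that these counters stay positive for at least $n^{k+1-\epsilon}$ steps of $\A_{\hat{\br}}$ with probability tending to $1$, so they need no maintenance. All remaining work concerns the small counters $C_1\cup\dots\cup C_{\lfloor k/2\rfloor}$.

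The core is a nested reset scheme on the small counters fuelled by the hypothesis $\support^{C_1,\dots,C_{\lfloor k/2\rfloor}}(\hat{\br})\subseteq R^{B_{\br},k}$. Unfolding $R^{B_{\br},k}$, every single-iteration effect $\bv$ of $\hat{\br}$ on the small counters decomposes as $\bv=\sum_{i=1}^{\lfloor k/2\rfloor}\Delta^{C_1,\dots,C_{\lfloor k/2\rfloor}}(\bx_{\bv}^{i})$ where $\bx_{\bv}^{i}$ is a multi-component of $\A_{k-i,T_{k-i+1}}$ confined to the MEC of $\A_{k-2i+1}$ containing $B_{\br}$ and, by the triangular constraint ``$\bv^{j}_{m}=\vec{0}$ for $m<j$'', affects only the counters of $C_i\cup\dots\cup C_{\lfloor k/2\rfloor}$. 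The strategy iterates $\hat{\br}$ in $\lfloor k/2\rfloor$ nested families of epochs, a level-$i$ epoch having length about $n^{2i}$ (so about $n^{k+1-2i}$ level-$i$ epochs, each subdivided into about $n^2$ level-$(i-1)$ epochs), and at the end of each level-$i$ epoch it performs a level-$i$ reset. Using closure of multi-components under addition (Lemma~\ref{lemma-addition-multicomponents}), the accumulated level-$i$ part of $\hat{\br}$'s effect over such an epoch is $\Delta$ of one multi-component in $X^{B_{\br},k-2i+1}_{k-i,T_{k-i+1}}$; the reset iterates a multi-component cancelling it, decomposed into components of $\A$ (Lemma~\ref{lemma-decompose-multicomponents-into-components}, recentred via Lemma~\ref{lemma-component-recentering}) and realised by the lower-level iteration machinery while pointing only to the chains $\M_1,\dots,\M_w$ of $\tilde{\A}$. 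Since $\hat{\br}$ is a martingale with bounded increments, the drift over a level-$i$ epoch is $\approx n^{i}$ with high probability, so each level-$i$ reset has flow $\approx n^{i}$ and the total level-$i$ flow is $\approx n^{k+1-2i}\cdot n^{i}=n^{k-i+1}$, within the iteration budget $n^{k-i+1}$ afforded by the $T_{k-i+1}$ transitions; and a counter $c\in C_i$, maintained at levels $1,\dots,i$ with windows of length at most $n^{2i-o(1)}$, never drops to $0$ except with probability tending to $0$. Finally I would collect all failure events — failed bin initialisation, failure of a large counter, a level-$i$ random walk exceeding its window, a reset's component iterations not visiting all required states — show their union has probability tending to $0$ exactly as in the proofs of Lemmas~\ref{lemma-linear-upper-bound} and~\ref{lemma-something-something-k+1}, and fix the auxiliary constants $\epsilon_1,\epsilon_2,\dots$ at the end in a table of inequalities.

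The main obstacle is keeping the nested resets mutually consistent while every counter stays positive simultaneously: a level-$i$ reset is itself a computation on $\A$ iterating components $\by$, hence triggering fresh martingales on the counters on which $\hat{\by}$ is zero-unbounded, so the argument is genuinely recursive and must be run inside the same induction on $k$ as Lemma~\ref{lemma-something-something-k+1-estiamtes}. In particular one must verify that the (uncontrolled) net effect of the resets on the large counters and on the higher-level small counters stays below their pumped values — this is precisely where the confinement of $\bx_{\bv}^{i}$ to the MEC of $\A_{k-2i+1}$ containing $B_{\br}$ and the triangular shape of $R^{B_{\br},k}$ enter — and that the reservoir bins, which differ only by polynomial factors in $n$, never run dry. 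Aligning the epoch lengths, the number of resets, and the $\epsilon$-slack is the bulk of the work, but it requires no idea beyond those already used for Lemmas~\ref{lemma-hatB-zero-unbounded-rankl-give-upper-estimate-nk}, \ref{lemma-linear-upper-bound}, and~\ref{lemma-something-something-k+1-estiamtes}.
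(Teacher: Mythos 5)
Your plan follows the same high-level route as the paper's proof of this lemma: pump the counters via Lemma~\ref{counters-pumpable-all-at-once}, split into bins, keep iterating $\M_{\hat{\br}}$, unfold $R^{B_{\br},k}$ into its levels $i=1,\dots,\lfloor k/2\rfloor$, cancel the drift of the small counters by resets realised as computations on $\A$ using the lower-$k$ machinery ($\sigma_{k'}^{\epsilon'}$ from Lemma~\ref{lemma-something-something-k+1-estiamtes} and recursively the pointing strategies at smaller $k$), and handle the large counters with a one-dimensional martingale/``quadratic survival'' argument. That much matches the paper.

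There is, however, a genuine gap at the crucial step of bounding the flow of a level-$i$ reset. You write that the accumulated level-$i$ part of $\hat{\br}$'s effect over an epoch is $\Delta$ of one multi-component in $X^{B_\br,k-2i+1}_{k-i,T_{k-i+1}}$ (by closure under addition), and then say ``the drift over a level-$i$ epoch is $\approx n^i$ with high probability, so each level-$i$ reset has flow $\approx n^i$''. But the multi-component you get by summing the per-iteration decompositions $\bx^i_{\bv_j}$ has flow on the order of the \emph{number of iterations}, i.e.\ $\approx n^{2i}$ — not $\approx n^i$ — because each $\bx^i_{\bv_j}$ contributes positively to the flow even when their effects largely cancel. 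To get a reset with flow $\approx n^i$ you must realise the \emph{drift vector} (of magnitude $\approx n^i$) by a \emph{different} multi-component with the same effect but small flow. That this is possible requires showing that the achievable effects form a finite-dimensional vector space with a basis of bounded-flow multi-components, and expressing the drift in that basis. This is exactly what the paper constructs and uses: the sets $Y_{k-i,T_{k-i+1}}$ (shown to be vector spaces via Lemma~\ref{lemma-v-in-XBiTi+1-implies-minus-v-also-in-XBiTi+1sss} and the closure lemmas), orthogonal bases $\mathit{BASIS}_i$, the multi-components $\bx^{+}_{i,\bu},\bx^{-}_{i,\bu}$, the linear projections $f_\bu$, and the coordinates $a_\bu$. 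Without this piece, ``drift $\approx n^i \Rightarrow$ reset flow $\approx n^i$'' is an unsupported leap, and the budget computation $n^{k+1-2i}\cdot n^{i}=n^{k-i+1}$ does not go through.

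A secondary design point, worth flagging: the paper does not use fixed-length epochs; it triggers a $\bs$-reset adaptively, the moment some coordinate $|a_\bu|$ reaches the threshold $n^{i-\epsilon_5}$, and pauses the iteration of $\M_{\hat{\br}}$ until the reset brings $a_\bu$ back to $0$. This has two benefits that your fixed-epoch scheme would have to re-derive: (a) the $\bu$-bin counters $\lfloor\bv_0/m\rfloor + a_\bu\bu$ stay positive \emph{at all times}, since $|a_\bu|$ never exceeds $n^{i-\epsilon_5}+O(n^{\epsilon_{17}})$, whereas with a window of length $\approx n^{2i}$ and a pumped value $\approx n^{i-\epsilon_4}$ a counter in $C_i$ can go negative mid-window with constant probability; and (b) the martingale increment in the Azuma argument of Lemma~\ref{lemma-a-bu-zero-exp} is then \emph{deterministically} bounded by $n^{i-\epsilon_5}+n^{\epsilon_{17}}$ rather than being a random drift whose tail must be controlled separately. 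Your ``$n^{2i-o(1)}$'' hints at the necessary tightening, but nailing down an $o(1)$ that simultaneously keeps every counter alive mid-window and stays inside the $T_{k-i+1}$-iteration budget is nontrivial, and the adaptive threshold sidesteps it.
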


Let us begin with the following technical Lemmas.

\begin{lemma}\label{lemma-v-in-XBiTi+1-implies-minus-v-also-in-XBiTi+1sss}
	Let \(1\leq i\leq \lfloor\frac{k}{2}\rfloor \) and let \(\bx\in X_{k-i,T_{k-i+1}} \), then there exists \(\by\in X_{k-i,T_{k-i+1}}  \) such that \(\Delta^{C_1,\dots,C_{k-i}}(\bx)=-\Delta^{C_1,\dots,C_{k-i}}(\by) \).
\end{lemma}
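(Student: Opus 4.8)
The plan is to reduce the statement to the existence of a single ``dominating'' multi-component on $\A_{k-i,T_{k-i+1}}$ whose restricted effect vanishes, and then exploit the fact that multi-components form a cone (Lemmas~\ref{lemma-multiplication-multicomponents} and~\ref{lemma-subtraction-multicomponents}) to obtain $\by$ from any given $\bx$ by subtraction. Recall $X_{k-i,T_{k-i+1}}$ is the set of all multi-components of $\A_{k-i,T_{k-i+1}}$. The key observation is that although $\Delta^{C_1,\dots,C_{k-i}}(X_{k-i,T_{k-i+1}})$ is only a priori a cone, point~\ref{enum-main-3} of Lemma~\ref{lemma-main-for-all-k} makes it symmetric under negation.

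Since this lemma is invoked inside the induction step of Lemma~\ref{lemma-main-for-all-k} for a fixed $k$, and since $1\le i\le\lfloor k/2\rfloor$ forces $k-i<k$, the induction hypothesis applies at index $k-i$. In particular, point~\ref{enum-main-3} of Lemma~\ref{lemma-main-for-all-k} yields a multi-component $\bx^*$ on $\A_{k-i}$ with $\Delta^{C_1,\dots,C_{k-i}}(\bx^*)=\vec 0$ and $\bx^*(t)>0$ iff $t\in T_{k-i+1}$. Because $\bx^*$ vanishes on every transition outside $T_{k-i+1}$, it restricts to a multi-component of $\A_{k-i,T_{k-i+1}}$: nonnegativity, flow conservation at each state, and the probabilistic ratio constraints are all expressed through sums over transitions, and the discarded transitions carry zero flow (here one uses that $T_{k-i+1}$ respects the probabilistic structure, so that the restriction is a well-formed VASS MDP with $\tout(p)$ unchanged for every surviving probabilistic state). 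Keeping the name $\bx^*$ for this restriction, we have $\bx^*(t)>0$ for \emph{every} transition $t$ of $\A_{k-i,T_{k-i+1}}$, and still $\Delta^{C_1,\dots,C_{k-i}}(\bx^*)=\vec 0$ (the update vectors of the shared transitions, and the counter set $\countersset_{k-i}$, are the same in $\A_{k-i}$ and $\A_{k-i,T_{k-i+1}}$).

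Now let $\bx\in X_{k-i,T_{k-i+1}}$ be arbitrary. Since $\bx^*$ is positive on the finitely many transitions of $\A_{k-i,T_{k-i+1}}$, the constant $a=\max_t \bx(t)/\bx^*(t)\ge 0$ (the maximum over those transitions) satisfies $a\,\bx^*(t)\ge\bx(t)$ for every transition $t$; on transitions outside $\A_{k-i,T_{k-i+1}}$ both sides are $0$. By Lemma~\ref{lemma-multiplication-multicomponents}, $a\,\bx^*$ is a multi-component of $\A_{k-i,T_{k-i+1}}$, and since $a\,\bx^*-\bx\ge\vec 0$, Lemma~\ref{lemma-subtraction-multicomponents} gives that $\by:=a\,\bx^*-\bx$ is a multi-component of $\A_{k-i,T_{k-i+1}}$, i.e.\ $\by\in X_{k-i,T_{k-i+1}}$. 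Using linearity of $\bx\mapsto\Delta(\bx)=\sum_{t}\bx(t)\,\bu_t$ and of the coordinate projection onto $C_1\cup\dots\cup C_{k-i}$,
\[
\Delta^{C_1,\dots,C_{k-i}}(\by)\;=\;a\,\Delta^{C_1,\dots,C_{k-i}}(\bx^*)-\Delta^{C_1,\dots,C_{k-i}}(\bx)\;=\;-\,\Delta^{C_1,\dots,C_{k-i}}(\bx),
\]
since $\Delta^{C_1,\dots,C_{k-i}}(\bx^*)=\vec 0$. This is the required $\by$, completing the proof.

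The only delicate points are bookkeeping: verifying that the multi-component supplied by point~\ref{enum-main-3} of the induction hypothesis genuinely restricts to a multi-component of the transition-restricted system $\A_{k-i,T_{k-i+1}}$, and maintaining the identification of counters and transitions between $\A_{k-i}$ and $\A_{k-i,T_{k-i+1}}$. No substantive difficulty arises beyond this; the whole content is the ``cone-to-symmetry'' trick made available by point~\ref{enum-main-3}.
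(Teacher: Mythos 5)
Your proof is correct and takes essentially the same approach as the paper: both invoke the induction hypothesis at index $k-i$ on point~\ref{enum-main-3} of Lemma~\ref{lemma-main-for-all-k} to obtain a multi-component $\bz$ with $\Delta^{C_1,\dots,C_{k-i}}(\bz)=\vec{0}$ supported exactly on $T_{k-i+1}$, then set $\by=a\cdot\bz-\bx$ for $a$ large enough and conclude via Lemmas~\ref{lemma-multiplication-multicomponents} and~\ref{lemma-subtraction-multicomponents}. Your writeup is a bit more explicit about why $\bz$ restricts to a multi-component of the transition-restricted system and gives a concrete choice for $a$, but there is no substantive difference from the paper's argument.
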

\begin{proof}
	From the induction assumption for \(k-i \) on point \ref{enum-main-3} of Lemma~\ref{lemma-main-for-all-k} there exists a multi-component \(\bz \) on \(\A_{k-i} \) such that \(\bz(t)>0 \) iff \(t\in T_{k-i+1} \), and \(\Delta^{C_1,\dots,C_{k-i}}(\bz )=\vec{0} \). 
	
	
	As \(A_{k-i,T_{k-i+1}}\) contains only transitions from \(T_{k-i+1} \) it thus holds that \(\bx(t)>0 \) implies \(\bz(t)>0 \), and thus there exists \(a>0\) such that \(\by=a\cdot \bz-\bx\geq \vec{0} \), thus from  Lemmas~\ref{lemma-multiplication-multicomponents}~and~\ref{lemma-subtraction-multicomponents} \(\by \) is a multi-component of \(\A_{k-i,T_{k-i+1}} \). Furthermore, it holds \(\Delta^{C_1,\dots,C_{k-i}}(\by)=a\cdot \Delta^{C_1,\dots,C_{k-i}}(\bz)-\Delta^{C_1,\dots,C_{k-i}}(\bx)=a\cdot \vec{0}-\Delta^{C_1,\dots,C_{k-i}}(\bx)=-\Delta^{C_1,\dots,C_{k-i}}(\bx)  \). \end{proof}

For each \(1\leq i\leq \lfloor\frac{k}{2} \rfloor  \) let  \(Y_{k-i,T_{k-i+1}}\subseteq X_{k-i,T_{k-i+1}} \) be such that \(\bx\in Y_{k-i,T_{k-i+1}}\) if  \(\Delta^{C_1,\dots,C_{i-1}}(\bx)= \vec{0} \).
We have the following:
\begin{itemize}
	\item  \(\Delta^{C_1,\dots,C_{k-i}}(Y_{k-i,T_{k-i+1}}) \) is closed under multiplication by \(-1 \) from  Lemma~\ref{lemma-v-in-XBiTi+1-implies-minus-v-also-in-XBiTi+1sss} since \(i\leq k-i \);
	\item  \(\Delta^{C_1,\dots,C_{k-i}}(Y_{k-i,T_{k-i+1}}) \) is closed under addition from  Lemma~\ref{lemma-addition-multicomponents};
	\item  \(\Delta^{C_1,\dots,C_{k-i}}(Y_{k-i,T_{k-i+1}}) \) is closed under multiplication by non-negative constant from  Lemma~\ref{lemma-multiplication-multicomponents}.
\end{itemize} Therefore \(\Delta^{C_1,\dots,C_{k-i}}(Y_{k-i,T_{k-i+1}}) \) is a vector space and there exists an orthogonal basis \(BASIS_i \) of \(\Delta^{C_1,\dots,C_{k-i}}(Y_{k-i,T_{k-i+1}}) \). Furthermore, for each \(\bu\in BASIS_i \) there exist multi-components \(\bx_{i,\bu}^{+},\bx_{i,\bu}^- \in Y_{k-i,T_{k-i+1}}\) such that \(\Delta^{C_1,\dots,C_{k-i}}(\bx_{i,\bu}^+)=\bu \) and   \(\Delta^{C_1,\dots,C_{k-i}}(\bx_{i,\bu}^-)=-\bu \). Let  \(XBASIS_i=\{\bx_{i,\bu}^{+},\bx_{i,\bu}^- \mid \bu\in BASIS_i \}  \) and let \(XBASIS=\bigcup_{i=1}^{\lfloor \frac{k}{2}\rfloor }  XBASIS_i \)	 and
\(BASIS=\bigcup_{i=1}^{\lfloor \frac{k}{2}\rfloor }  \Delta^{C_1,\dots,C_{\lfloor \frac{k}{2}\rfloor}}(XBASIS_i) \). Note that we can wlog. assume that for each \(\bu\in BASIS\) there exists exactly one \(1\leq i\leq \lfloor \frac{k}{2}\rfloor \) such that \(\bu\in BASIS_i^{C_1, \dots, C_{\lfloor\frac{k}{2}\rfloor} }\cup (-BASIS_i^{C_1, \dots, C_{\lfloor\frac{k}{2}\rfloor} }) \), as if there exists multiple such indexes we can simply replace one of the sets \(BASIS_i \) with a multiple of itself.
as a conical sum  \[ 	[\bv_1^i,\dots,\bv^i_{\lfloor \frac{k}{2}\rfloor}]^{C_1,\dots,C_{\lfloor \frac{k}{2}\rfloor}} = \sum_{\by\in Z_i } b_{i,\by}\cdot \Delta^{C_1,\dots,C_{\lfloor \frac{k}{2}\rfloor}}(\by) \]

\begin{lemma}\label{lemma-bvuiuobvuiguolbj}
	Let \(\bv=[\bv_1,\dots,\bv_{\lfloor \frac{k}{2}\rfloor}]^{C_1,\dots,C_{\lfloor \frac{k}{2}\rfloor}}\in R^{B_\br,k} \), then \(\bv \) can be expressed as a linear sum of elements from \(BASIS \).
\end{lemma}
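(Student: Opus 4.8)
The plan is to unwind the definition of $R^{B_\br,k}$ summand by summand and then exploit the vector-space structure of $\Delta^{C_1,\dots,C_{k-j}}(Y_{k-j,T_{k-j+1}})$ established above, together with the fact that restricting an effect vector from the counters of $C_1\cup\dots\cup C_{k-j}$ to those of $C_1\cup\dots\cup C_{\lfloor \frac{k}{2}\rfloor}$ is a linear map. Concretely, let $\bv=[\bv_1,\dots,\bv_{\lfloor \frac{k}{2}\rfloor}]^{C_1,\dots,C_{\lfloor \frac{k}{2}\rfloor}}\in R^{B_\br,k}$ and fix, exactly as in the definition of $R^{B_\br,k}$, vectors $\bv^j=[\bv^j_1,\dots,\bv^j_{\lfloor \frac{k}{2}\rfloor}]^{C_1,\dots,C_{\lfloor \frac{k}{2}\rfloor}}\in \Delta^{C_1,\dots,C_{\lfloor \frac{k}{2}\rfloor}}(X^{B_\br,k-2j+1}_{k-j,T_{k-j+1}})$ for $1\le j\le \lfloor \frac{k}{2}\rfloor$ satisfying $\sum_{j=1}^{\lfloor \frac{k}{2}\rfloor}\bv^j=\bv$ (componentwise on each $C_m$) and $\bv^j_m=\vec{0}$ for all $m<j$. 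Since $\bv=\sum_j\bv^j$, it suffices to prove that each $\bv^j$ is a linear combination of elements of $\Delta^{C_1,\dots,C_{\lfloor \frac{k}{2}\rfloor}}(XBASIS_j)\subseteq BASIS$.

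Fix $j$ and pick a multi-component $\bx^j\in X^{B_\br,k-2j+1}_{k-j,T_{k-j+1}}$ witnessing $\bv^j$, i.e.\ $\Delta^{C_1,\dots,C_{\lfloor \frac{k}{2}\rfloor}}(\bx^j)=\bv^j$. Reading off the $C_m$-block, $\Delta^{C_m}(\bx^j)=\bv^j_m$, so from $\bv^j_m=\vec{0}$ for $m<j$ we get $\Delta^{C_1,\dots,C_{j-1}}(\bx^j)=\vec{0}$. As $\bx^j$ is a multi-component on $\A_{k-j,T_{k-j+1}}$ --- hence $\bx^j\in X_{k-j,T_{k-j+1}}$ --- this is precisely the defining condition of $Y_{k-j,T_{k-j+1}}$, so $\bx^j\in Y_{k-j,T_{k-j+1}}$ (the extra support constraint built into $X^{B_\br,k-2j+1}_{k-j,T_{k-j+1}}$ plays no role here). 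Consequently $\Delta^{C_1,\dots,C_{k-j}}(\bx^j)$ lies in the vector space $\Delta^{C_1,\dots,C_{k-j}}(Y_{k-j,T_{k-j+1}})$, and since $BASIS_j$ is a basis of this space we may write $\Delta^{C_1,\dots,C_{k-j}}(\bx^j)=\sum_{\bu\in BASIS_j}\lambda_\bu\cdot\bu$ for suitable scalars $\lambda_\bu$.

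Finally, restrict both sides of this identity to the counters of $C_1\cup\dots\cup C_{\lfloor \frac{k}{2}\rfloor}$; since $j\le \lfloor \frac{k}{2}\rfloor\le k-j$ this is a well-defined linear map, and the alignment of local copies fixed in the definition of $R^{B_\br,k}$ guarantees that for all $j$ its images live in one common coordinate space --- the one in which $\bv$ itself lives. This restriction sends $\Delta^{C_1,\dots,C_{k-j}}(\bx^j)$ to $\Delta^{C_1,\dots,C_{\lfloor \frac{k}{2}\rfloor}}(\bx^j)=\bv^j$, and it sends each $\bu=\Delta^{C_1,\dots,C_{k-j}}(\bx_{j,\bu}^{+})\in BASIS_j$ to $\Delta^{C_1,\dots,C_{\lfloor \frac{k}{2}\rfloor}}(\bx_{j,\bu}^{+})\in \Delta^{C_1,\dots,C_{\lfloor \frac{k}{2}\rfloor}}(XBASIS_j)\subseteq BASIS$. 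Hence $\bv^j=\sum_{\bu\in BASIS_j}\lambda_\bu\cdot\Delta^{C_1,\dots,C_{\lfloor \frac{k}{2}\rfloor}}(\bx_{j,\bu}^{+})$ is a linear combination of elements of $BASIS$, and summing over $j=1,\dots,\lfloor \frac{k}{2}\rfloor$ expresses $\bv=\sum_j\bv^j$ as a linear combination of elements of $BASIS$, as required.

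The step I expect to require the most care is the bookkeeping of local copies: one has to check that the zero pattern $\bv^j_m=\vec{0}$ for $m<j$ genuinely forces $\Delta^{C_1,\dots,C_{j-1}}(\bx^j)=\vec{0}$ already in the coordinate space of $\A_{k-j,T_{k-j+1}}$ (not merely after some later restriction), and that the restriction maps used in the last step, which have different source spaces for different $j$, are mutually compatible so that the identity $\bv=\sum_j\bv^j$ and the final summation make sense. Both facts rest on the convention, fixed in the definition of $R^{B_\br,k}$, that a transition uses the same local copy of a counter $c\in C_i$ in all the VASS MDPs $\A_{k-j,T_{k-j+1}}$ with $j\ge i$. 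Everything else is routine linear algebra built on the closure properties of multi-components (Lemmas~\ref{lemma-addition-multicomponents}--\ref{lemma-subtraction-multicomponents}) already invoked to establish that $\Delta^{C_1,\dots,C_{k-j}}(Y_{k-j,T_{k-j+1}})$ is a vector space.
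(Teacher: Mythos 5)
Your proof is correct and follows essentially the same route as the paper's: unwind the definition of $R^{B_\br,k}$ into summands $\bv^j = \Delta^{C_1,\dots,C_{\lfloor k/2\rfloor}}(\bx^j)$ with $\bx^j \in X^{B_\br,k-2j+1}_{k-j,T_{k-j+1}}$, observe $\bx^j \in Y_{k-j,T_{k-j+1}}$, expand each in $BASIS_j$, and restrict to the counters of $C_1\cup\dots\cup C_{\lfloor k/2\rfloor}$ to land in $BASIS$. If anything, you are slightly more explicit than the paper in justifying the membership $\bx^j \in Y_{k-j,T_{k-j+1}}$ (the zero-block constraint $\bv^j_m=\vec{0}$ for $m<j$), whereas the paper states it as an immediate consequence of the containment $X^{B_\br,k-2j+1}_{k-j,T_{k-j+1}} \subseteq X_{k-j,T_{k-j+1}}$; the paper additionally reorganizes the final sum as a conical combination using the $\pm$ pairs in $BASIS$, but this exceeds what the lemma statement asks for.
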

\begin{proof}
	From the definition of \(  R^{B_\br,k}\) we can write \[[\bv_1,\dots,\bv_{\lfloor \frac{k}{2}\rfloor}]^{C_1,\dots,C_{\lfloor \frac{k}{2}\rfloor}}=\sum_{i=1}^{\lfloor\frac{k}{2}\rfloor} [\bv_1^i,\dots,\bv^i_{\lfloor \frac{k}{2}\rfloor}]^{C_1,\dots,C_{\lfloor \frac{k}{2}\rfloor}} \] where \( [\vec{0},\dots,\vec{0},\bv^i_{i},\dots,\bv^i_{\lfloor \frac{k}{2}\rfloor}]^{C_1,\dots,C_{\lfloor \frac{k}{2}\rfloor}}=[\bv_1^i,\dots,\bv^i_{\lfloor \frac{k}{2}\rfloor}]^{C_1,\dots,C_{\lfloor \frac{k}{2}\rfloor}}\in \Delta^{C_1,\dots,C_{\lfloor \frac{k}{2}\rfloor}}(X^{B,k-2\cdot i+1}_{k-i,T_{k-i+1}}) \). Therefore for each \(1\leq i\leq \lfloor \frac{k}{2}\rfloor\) there exists a multi-component \(\bx_i\in X^{B,k-2\cdot i+1}_{k-i,T_{k-i+1}} \) such that \([\bv_1^i,\dots,\bv^i_{\lfloor \frac{k}{2}\rfloor}]^{C_1,\dots,C_{\lfloor \frac{k}{2}\rfloor}}= \Delta^{C_1,\dots,C_{\lfloor \frac{k}{2}\rfloor}}(x_i) \). Since \(X^{B,k-2\cdot i+1}_{k-i,T_{k-i+1}}\subseteq X_{k-i,T_{k-i+1}} \) it holds \(\bx_i\in Y_{k-i,T_{k-i+1}}  \). Hence we can write \[[\bv_1^i,\dots,\bv^i_{\lfloor \frac{k}{2}\rfloor}]^{C_1,\dots,C_{\lfloor \frac{k}{2}\rfloor}}=\sum_{\bu\in BASIS_i} a_{i,\bu}\cdot \bu^{C_1,\dots,C_{\lfloor \frac{k}{2}\rfloor}}  \] for some constants \(a_{i,\bu}\).  Hence we can express \(\bv\) as a conical sum of elements from \(BASIS \)  as follows:
	\begin{gather*}
		[\bv_1,\dots,\bv_{\lfloor \frac{k}{2}\rfloor}]^{C_1,\dots,C_{\lfloor \frac{k}{2}\rfloor}}
		=
		\sum_{i=1}^{\lfloor\frac{k}{2}\rfloor} [\bv_1^i,\dots,\bv^i_{\lfloor \frac{k}{2}\rfloor}]^{C_1,\dots,C_{\lfloor \frac{k}{2}\rfloor}} 
		=
		\sum_{i=1}^{\lfloor\frac{k}{2}\rfloor} \sum_{\bu\in BASIS_i} a_{i,\bu}\cdot \bu^{C_1,\dots,C_{\lfloor \frac{k}{2}\rfloor}}
		=\\
		\sum_{i=1}^{\lfloor\frac{k}{2}\rfloor} \big( \sum_{\bu\in BASIS_i; a_{i,\bu}\geq 0} a_{i,\bu}\cdot \bu^{C_1,\dots,C_{\lfloor \frac{k}{2}\rfloor}} + \sum_{\bu\in BASIS_i; a_{i,\bu}< 0} (-a_{i,\bu})\cdot (-\bu^{C_1,\dots,C_{\lfloor \frac{k}{2}\rfloor}}) \big)
	\end{gather*}
\end{proof}

\begin{lemma}\label{lemma-Rbk-vector-space}
	\(R^{B_\br,k}\) is a vector space.
\end{lemma}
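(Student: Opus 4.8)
The plan is to identify $R^{B_\br,k}$ with the linear span $\lspan(BASIS)$ of the set $BASIS$ constructed above. Since the span of any set of vectors is automatically a vector space, this immediately yields the claim. One of the two inclusions, $R^{B_\br,k}\subseteq\lspan(BASIS)$, is exactly Lemma~\ref{lemma-bvuiuobvuiguolbj}, so all the remaining work lies in proving $\lspan(BASIS)\subseteq R^{B_\br,k}$.

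For the reverse inclusion I would first record two auxiliary facts about $BASIS$. First, $BASIS=-BASIS$: every element of $BASIS$ has the form $\Delta^{C_1,\dots,C_{\lfloor k/2\rfloor}}(\bx_{i,\bu}^{+})$ for some $\bx_{i,\bu}^{+}\in XBASIS_i$, and by construction $XBASIS_i$ also contains $\bx_{i,\bu}^{-}$ with $\Delta^{C_1,\dots,C_{k-i}}(\bx_{i,\bu}^{-})=-\Delta^{C_1,\dots,C_{k-i}}(\bx_{i,\bu}^{+})$; restricting further to $C_1,\dots,C_{\lfloor k/2\rfloor}$ is legitimate since $i\le\lfloor k/2\rfloor\le k-i$, and shows $-\Delta^{C_1,\dots,C_{\lfloor k/2\rfloor}}(\bx_{i,\bu}^{+})\in BASIS$. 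Consequently $\lspan(BASIS)=\cone(BASIS)$, because a term $c\,\bw$ with $c<0$ may be rewritten as $(-c)(-\bw)$ with $-\bw\in BASIS$. Second, each $\bw\in BASIS$ already lies in $R^{B_\br,k}$: if $\bw=\Delta^{C_1,\dots,C_{\lfloor k/2\rfloor}}(\bx)$ with $\bx\in XBASIS_i$, take the $i$-th block multi-component in the definition of $R^{B_\br,k}$ to be $\bx$ and all the other block multi-components to be the zero multi-component; the requirement $\bv_m^j=\vec 0$ for $m<j$ holds because $\bx\in Y_{k-i,T_{k-i+1}}$ forces $\Delta^{C_1,\dots,C_{i-1}}(\bx)=\vec 0$, while the membership and summation conditions hold trivially (the zero multi-component lies in every $X^{B,\cdot}_{\cdot}$, and for each block index only the $i$-th summand is nonzero). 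Finally $R^{B_\br,k}$ is closed under addition and under multiplication by a non-negative scalar: given witnessing tuples $(\bx_i)_i$ and $(\bx_i')_i$ for $\bv,\bv'\in R^{B_\br,k}$, the tuple $(\bx_i+\bx_i')_i$ witnesses $\bv+\bv'$ — each $\bx_i+\bx_i'$ is again a multi-component by Lemma~\ref{lemma-addition-multicomponents}, still vanishes on transitions outside the relevant MEC, still has vanishing effect on $C_1,\dots,C_{i-1}$, and $\Delta$ is additive — and $(a\,\bx_i)_i$ witnesses $a\bv$ for $a\ge 0$ by Lemma~\ref{lemma-multiplication-multicomponents}. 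Combining these three observations gives $\cone(BASIS)=\lspan(BASIS)\subseteq R^{B_\br,k}$, which with Lemma~\ref{lemma-bvuiuobvuiguolbj} yields $R^{B_\br,k}=\lspan(BASIS)$.

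The step I expect to need the most care is the second fact above: the definition of $R^{B_\br,k}$ insists that the $i$-th block multi-component lie in the \emph{MEC-restricted} set $X^{B_\br,k-2i+1}_{k-i,T_{k-i+1}}$, whereas $XBASIS_i$ was extracted from $Y_{k-i,T_{k-i+1}}$, which a priori is only a subset of the unrestricted $X_{k-i,T_{k-i+1}}$. One must therefore verify that the orthogonal basis $BASIS_i$ and its witnesses $\bx_{i,\bu}^{\pm}$ can be chosen inside $X^{B_\br,k-2i+1}_{k-i,T_{k-i+1}}$ — equivalently, that $\Delta^{C_1,\dots,C_{\lfloor k/2\rfloor}}$ of the MEC-restricted set, intersected with the subspace "first $i-1$ blocks vanish", already spans the same space as $BASIS_i$. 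This is where the interplay between the MEC restrictions at the various levels $\A_{k-1},\A_{k-2},\dots$ and the per-MEC local copies of the counters actually matters, and where one would invoke the relevant part of the induction hypothesis — point~\ref{enum-main-3} of Lemma~\ref{lemma-main-for-all-k}, exactly as it was already used in Lemma~\ref{lemma-v-in-XBiTi+1-implies-minus-v-also-in-XBiTi+1sss}; everything else is routine closure bookkeeping.
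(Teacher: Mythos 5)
Your overall plan — identify $R^{B_\br,k}$ with $\lspan(BASIS)$ — is not what the paper does. The paper establishes closure under addition and non-negative scalar multiplication exactly as you do (these are the easy closures), and then the entire substance of the proof is closure under multiplication by $-1$, which it proves directly rather than via a span identification. Your route would also work in principle, but it displaces, rather than avoids, the one difficult step, and your treatment of that step has a genuine gap.

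The gap is the one you flag yourself: to place each $\bw\in BASIS$ in $R^{B_\br,k}$ you need the witness $\bx_{i,\bu}^{\pm}$ to lie in the MEC-restricted set $X^{B_\br,k-2i+1}_{k-i,T_{k-i+1}}$, but it only lies in $Y_{k-i,T_{k-i+1}}$. Your suggested repair — re-choose $BASIS_i$ inside the MEC-restricted set and cite point~\ref{enum-main-3} of Lemma~\ref{lemma-main-for-all-k} as in Lemma~\ref{lemma-v-in-XBiTi+1-implies-minus-v-also-in-XBiTi+1sss} — does not close it. Lemma~\ref{lemma-v-in-XBiTi+1-implies-minus-v-also-in-XBiTi+1sss} manufactures a $-1$-inverse by subtracting from a zero-effect multi-component $\bz$ that is supported on \emph{all} of $T_{k-i+1}$; to get an inverse inside the MEC-restricted set you would need the restriction of $\bz$ to the MEC containing $\MEC_\br$ to still have zero effect on $C_1,\dots,C_{k-i}$, which is not asserted anywhere and would itself require an argument (cancellation across distinct MECs could in principle contribute). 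You would also have to re-derive Lemma~\ref{lemma-bvuiuobvuiguolbj} for the re-chosen basis. The paper instead keeps the original basis, takes the (unrestricted) $\pm 1$-inverses $\bx_i^-\in Y_{k-i,T_{k-i+1}}$, and then iteratively repairs the MEC restriction by peeling off the part of $\by_i=\bx_i^-+\bz_{i-1}$ that lies inside the MEC of $\A_{k-2i+1}$ containing $\MEC_\br$ and carrying the leftover $\bz_i=\by_i-\by_i^{\MEC_\br}$ to level $i+1$, where the restriction becomes weaker. The crux is a per-counter case analysis showing $\Delta^{C_1,\dots,C_{\lfloor k/2\rfloor}}(\bz_{\lfloor k/2\rfloor})=\vec{0}$, so the accumulated leftovers are invisible. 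That projection-and-carry mechanism is the missing idea; "routine closure bookkeeping" plus the cited lemmas does not produce it.
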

\begin{proof}
	From  Lemma~\ref{lemma-addition-multicomponents}	\(R^{B_\br,k}\) is closed under addition and from  Lemma~\ref{lemma-multiplication-multicomponents}	\(R^{B_\br,k}\) is closed under multiplication by non-negative constant. It thus suffices to show that \(R^{B_\br,k} \) is also closed under multiplication by \(-1\).
	
	Let \(\bv=[\bv_1,\dots,\bv_{\lfloor \frac{k}{2}\rfloor}]^{C_1,\dots,C_{\lfloor \frac{k}{2}\rfloor}}\in R^{B_\br,k} \), we want to show that also \(-\bv=[-\bv_1,\dots,-\bv_{\lfloor \frac{k}{2}\rfloor}]^{C_1,\dots,C_{\lfloor \frac{k}{2}\rfloor}}\in R^{B_\br,k} \).

	From the definition of \(  R^{B_\br,k}\) we can write \[[\bv_1,\dots,\bv_{\lfloor \frac{k}{2}\rfloor}]^{C_1,\dots,C_{\lfloor \frac{k}{2}\rfloor}}=\sum_{i=1}^{\lfloor\frac{k}{2}\rfloor} [\bv_1^i,\dots,\bv^i_{\lfloor \frac{k}{2}\rfloor}]^{C_1,\dots,C_{\lfloor \frac{k}{2}\rfloor}} \] where \( [\vec{0},\dots,\vec{0},\bv^i_{i},\dots,\bv^i_{\lfloor \frac{k}{2}\rfloor}]^{C_1,\dots,C_{\lfloor \frac{k}{2}\rfloor}}=[\bv_1^i,\dots,\bv^i_{\lfloor \frac{k}{2}\rfloor}]^{C_1,\dots,C_{\lfloor \frac{k}{2}\rfloor}}\in \Delta^{C_1,\dots,C_{\lfloor \frac{k}{2}\rfloor}}(X^{B,k-2\cdot i+1}_{k-i,T_{k-i+1}}) \). Therefore for each \(1\leq i\leq \lfloor \frac{k}{2}\rfloor\) there exists a multi-component \(\bx_i\in X^{B,k-2\cdot i+1}_{k-i,T_{k-i+1}} \) such that \([\bv_1^i,\dots,\bv^i_{\lfloor \frac{k}{2}\rfloor}]^{C_1,\dots,C_{\lfloor \frac{k}{2}\rfloor}}= \Delta^{C_1,\dots,C_{\lfloor \frac{k}{2}\rfloor}}(x_i) \). Since \(X^{B,k-2\cdot i+1}_{k-i,T_{k-i+1}}\subseteq X_{k-i,T_{k-i+1}} \) it holds \(\bx_i\in Y_{k-i,T_{k-i+1}}  \). Since \(Y_{k-i,T_{k-i+1}}\) is closed under multiplication by \(-1\) there exists component \(\bx_i^-\in Y_{k-i,T_{k-i+1}}  \) such that \([-\bv_1^i,\dots,-\bv^i_{\lfloor \frac{k}{2}\rfloor}]^{C_1,\dots,C_{\lfloor \frac{k}{2}\rfloor}}=-\Delta^{C_1,\dots,C_{\lfloor \frac{k}{2}\rfloor}}(\bx_i)= \Delta^{C_1,\dots,C_{\lfloor \frac{k}{2}\rfloor}}(\bx_i^-) \)
	
	For each \(i \) let us define multi-components \(\by_i,\by_i^{\MEC_\br}, \bz_i \) as follows:
	
	\begin{itemize}
		\item for \(i=1\) we put: \(\by_1=\bx_1^- \), \(\by_1^{\MEC_\br}(t)=\by_1(t) \) if \(t\) is contained in the same MEC of \(\A_{k-1} \) as \(\MEC_\br\) and \(\by_1^{\MEC_\br}(t)=0 \) otherwise, and \(\bz_1=\by_1-\by_1^{\MEC_\br} \).
		\item for \(i>1\) we put: \(\by_i=\bx_i^- + \bz_{i-1} \),  \(\by_i^{\MEC_\br}(t)=\by_i(t) \) if \(t\) is contained in the same MEC of \(\A_{k-2\cdot i+1} \) as \(\MEC_\br\)  and \(\by_i^{\MEC_\br}(t)=0 \) otherwise, and  \(\bz_i=\by_i-\by_i^{\MEC_\br} \).
	\end{itemize}
	Note that \(\by_i^{\MEC_\br}\in X^{B,k-2\cdot i+1}_{k-i,T_{k-i+1}}\) for each \(1\leq i\leq \lfloor \frac{k}{2} \rfloor\).

	Let \(c\in C_j\) where \(1\leq j\leq \lfloor \frac{k}{2}\rfloor \), we will now show that \(\Delta(\bz_{\lfloor \frac{k}{2}\rfloor})(c)=0 \).
	
	If \(c \) is the local copy of a counter used in the MEC of \(\A_{k-j} \) containing \(\MEC_\br \) then \(\Delta(\bz_i)(c)=0 \) for each \(j\leq i\leq \lfloor \frac{k}{2}\rfloor \) since every transition of \(\A_{k-i} \) which modifies \(c \) and is included in \(\by_i \) is by the definition of \(\by^{\MEC_\br}_i \) included in \(\by^{\MEC_\br}_i \), and thus these transitions can no longer be included in \(\bz_i \). Hence in this case \(\Delta(\bz_{\lfloor\frac{k}{2}\rfloor})(c)=0 \)
	
	If \(c \) is a not the local copy of a counter used in the MEC of \(\A_{k-j} \) containing \(\MEC_\br \)  \(\Delta(\by)(c)=0 \) for each multi-component \(\by\in X^{B,k-2\cdot i+1}_{k-i,T_{k-i+1}} \) where \(1\leq i\leq j \) (this is because none of the transitions modify \(c\) as they modify a different local copy), and thus from the definition of \(  R^{B_\br,k} \) it holds \(\bv_i(c)=0 \) for all \(1\leq i\leq \lfloor \frac{k}{2}\rfloor \). Thus we can apply inducition on \(i\) to prove \(\bz_i(c)=0 \) for all \(1\leq i\leq \lfloor \frac{k}{2}\rfloor \). Base case \(i=1\) we obtain straightforward \(\Delta(\bz_1)(c)=\Delta(\by_1)(c)-\Delta(\by^{\MEC_\br}_1)(c)=\bv_1(c)-0=0 \). Let \(\kappa \) be such that \(c\) is the local copy used in the MEC of \(\A_{k-\kappa} \) containing \(\MEC_\br \). For the induction step for \(j<i< \kappa \)   we obtain  \(\Delta(\bz_i)(c)=\Delta(\by_i)(c)-\Delta(\by^{\MEC_\br}_i)(c)=\Delta(\bx_i^-)(c)+\Delta(\bz_{i-1})(c)-0=\bv_i(c)-0=0 \). And for the induction step for \(\kappa \leq i\leq \lfloor\frac{k}{2}\rfloor \) since every transition of \(\A_{k-i} \) which modifies \(c \) and is included in \(\by_i \) is by the definition of \(\by^{\MEC_\br}_i \) included in \(\by^{\MEC_\br}_i \), and thus these transitions can no longer be included in \(\bz_i \).
	Hence in this case \(\Delta(\bz_{\lfloor\frac{k}{2}\rfloor})(c)=0 \) as well.
	
	Therefore \(\Delta^{C_1,\dots,C_{\lfloor \frac{k}{2}\rfloor}}(\bz_{\lfloor \frac{k}{2}\rfloor})=\vec{0}\) and so it holds \begin{gather*}
		-\bv=\sum_{i=1}^{\lfloor \frac{k}{2} \rfloor} \Delta^{C_1,\dots,C_{\lfloor \frac{k}{2}\rfloor}}(\bx_i^-)
		=\\
		\Delta^{C_1,\dots,C_{\lfloor \frac{k}{2}\rfloor}}(\by_1)+  
		\sum_{i=2}^{\lfloor \frac{k}{2} \rfloor} \Delta^{C_1,\dots,C_{\lfloor \frac{k}{2}\rfloor}}(\by_i)-\sum_{i=2}^{\lfloor \frac{k}{2} \rfloor}\Delta^{C_1,\dots,C_{\lfloor \frac{k}{2}\rfloor}}(\bz_{i-1})
		=\\
		\sum_{i=1}^{\lfloor \frac{k}{2} \rfloor} \big(\Delta^{C_1,\dots,C_{\lfloor \frac{k}{2}\rfloor}}(\by_i^{\MEC_\br}) +\Delta^{C_1,\dots,C_{\lfloor \frac{k}{2}\rfloor}}(\bz_i) \big)
		-
		\sum_{i=2}^{\lfloor \frac{k}{2} \rfloor}\Delta^{C_1,\dots,C_{\lfloor \frac{k}{2}\rfloor}}(\bz_{i-1})
		=\\
		\sum_{i=1}^{\lfloor \frac{k}{2} \rfloor} \Delta^{C_1,\dots,C_{\lfloor \frac{k}{2}\rfloor}}(\by_i^{\MEC_\br}) +\Delta^{C_1,\dots,C_{\lfloor \frac{k}{2}\rfloor}}(\bz_{\lfloor \frac{k}{2}\rfloor})
		=
		\sum_{i=1}^{\lfloor \frac{k}{2} \rfloor} \Delta^{C_1,\dots,C_{\lfloor \frac{k}{2}\rfloor}}(\by_i^{\MEC_\br})
	\end{gather*} 
	but since \(\by_i^{\MEC_\br}\in X^{B,k-2\cdot i+1}_{k-i,T_{k-i+1}}\) for each \(1\leq i\leq \lfloor \frac{k}{2} \rfloor\), this implies that \(-\bv\in R^{B_\br,k} \).
\end{proof}

Since \(R^{B_\br,k}\) is a vector space form Lemma~\ref{lemma-Rbk-vector-space} there exists an orthogonal basis \(BASIS_\bot \) of \(R^{B_\br,k}\). Since \(BASIS_\bot\subseteq R^{B_\br,k} \), from Lemma~\ref{lemma-bvuiuobvuiguolbj} we obtain that for each \(\bv\in BASIS_\bot \) and each \(\bu\in BASIS \) there exist constants \(b_{\bv,\bu} \) (not necessarily unique, but we shall fix some such constants thorough this Section) such that for any \(\bv\in BASIS_\bot \) it holds \(\sum_{\bu\in BASIS} b_{\bv,\bu}\cdot \bu = \bv \). This allows us to uniquely (for our fixed constants \(b_{\bv,\bu} \)) express each \(\bs\in R^{B_\br,k} \) as a linear sum \(\bs=\sum_{\bu\in BASIS} b_{\bs,\bu}\cdot \bu \) for some constants \(b_{\bs,\bu} \). Furthermore for each \(\bu\in BASIS \) there exists a linear transformation \(f_\bu \) such that for each \(\bs\in R^{B_\br,k} \) it holds  \(f_\bu(\bs)=b_{\bs,\bu} \).\footnote{This follows from \(\bs \) being expressible as a linear combination of elements from \(BASIS_\bot \), and each \(BASIS_\bot \) is expressible as a linear combination of elements from \(BASIS \) using our fixed constants \(b_{\bv,\bu} \).}

Let us define technical constants \(0<\epsilonr_1,\epsilonr_2,\dots  \). As their exact values are not important we leave the assignment of their exact values to Table~\ref{Tadsdsble-eps-section-anotdfdsfsdfdher-proof-k+1}, where we also show that our assignment satisfies all the assumptions we make on \(\epsilonr_{1},\epsilonr_{2},\dots\).

\begin{lemma}\label{lemma-rand-walk-double-exp-zero}
	Let \(\alpha_{\hat{\br}} \) be the random variable denoting the computation on \(\M_{\hat{\br}} \) initiated in the configuration \(p_{\br}\vec{0} \) under the {\cMD} strategy \(\sigma \) corresponding to \(\br\) until \(p_{\br}\) is revisited for the first time.
	Let \[\realeffect_{\hat{\br}}^{\epsilonr_{1}}=\begin{cases}
		\realeffect_{\hat{\br}} & \text{if }  \length(\alpha_{\hat{\br}})\leq n^{\epsilonr_{1}}  \\
		0 & \textit{else}
	\end{cases} 
	\]
	Then for each \(\bu\in BASIS \) it holds   \(\E_{p_{\br}\vec{0}}^\sigma[f_\bu(\realeffect_{\hat{\br}})]=0 \) as well as \(\big|\E_{p_{\br}\vec{0}}^\sigma[f_\bu(\realeffect_{\hat{\br}}^{\epsilonr_{1}})]\big|\in \bigO(a^{ n^{\epsilonr_{2}}}) \) for some \(a<1 \).
\end{lemma}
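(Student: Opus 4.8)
The plan is to reduce both assertions to two standard facts about the finite, strongly connected Markov chain \(\M_{\hat{\br}}\) (which is \(\A_{\hat{\br}}\), corresponding to a MEC of some \(\A_\sigma\)): that the first‑return time \(\tau := \length(\alpha_{\hat{\br}})\) to \(p_{\br}\) has finite expectation and, crucially, geometrically decaying tails, i.e.\ \(\prob^\sigma_{p_{\br}\vec 0}[\tau \geq m] \leq C\rho^{m}\) for constants \(C>0\), \(\rho<1\) depending only on \(\M_{\hat{\br}}\). First I would establish this tail bound by the usual argument: since \(\M_{\hat{\br}}\) is finite and strongly connected, there are \(L\in\N\) and \(\delta>0\) so that from every state \(p_{\br}\) is revisited within \(L\) steps with probability at least \(\delta\), whence \(\prob^\sigma_{p_{\br}\vec 0}[\tau > mL] \leq (1-\delta)^{m}\). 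I would also record that \(|\realeffect_{\hat{\br}}|_\infty \leq u\cdot\tau\), where \(u\) is the maximal absolute counter change on a single transition of \(\A\), and that each \(f_\bu\) is a linear map on \(R^{B_\br,k}\) whose norm \(M_\bu\) depends only on \(\A\) and \(k\) (the bases \(BASIS\), \(BASIS_\bot\) and the fixed coefficients \(b_{\bv,\bu}\) are fixed once and for all, independently of the input size \(n\)).

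For the first identity I would argue as follows. We work under the standing hypothesis of Lemma~\ref{lemma-D-in-R-lower-estimate-nk+1-for-hatB}, namely \(\support^{C_1,\dots,C_{\lfloor k/2\rfloor}}(\hat{\br})\subseteq R^{B_\br,k}\); hence the restriction of \(\realeffect_{\hat{\br}}\) to \(C_1\cup\dots\cup C_{\lfloor k/2\rfloor}\) lies in \(R^{B_\br,k}\) with probability one, and there \(f_\bu\) is defined and linear (by Lemma~\ref{lemma-Rbk-vector-space}, \(R^{B_\br,k}\) is a vector space). Since \(\hat{\br}\) is zero‑bounded or zero‑unbounded on every counter, \(\E^\sigma_{p_{\br}\vec 0}[\realeffect_{\hat{\br}}(c)]=0\) for every \(c\), so the restricted expectation is \(\vec 0\in R^{B_\br,k}\). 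The variable \(f_\bu(\realeffect_{\hat{\br}})\) is integrable because \(|f_\bu(\realeffect_{\hat{\br}})|\leq M_\bu u\,\tau\) and \(\E[\tau]<\infty\); hence one may pull the expectation through the linear map \(f_\bu\), giving \(\E^\sigma_{p_{\br}\vec 0}[f_\bu(\realeffect_{\hat{\br}})] = f_\bu(\vec 0) = 0\).

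For the exponential bound I would use linearity once more: since \(f_\bu(\vec 0)=0\), we have \(f_\bu(\realeffect^{\epsilonr_1}_{\hat{\br}}) = f_\bu(\realeffect_{\hat{\br}})\cdot\mathbb{1}[\tau \leq n^{\epsilonr_1}]\), so that
\[
\E^\sigma_{p_{\br}\vec 0}[f_\bu(\realeffect^{\epsilonr_1}_{\hat{\br}})] = \E^\sigma_{p_{\br}\vec 0}[f_\bu(\realeffect_{\hat{\br}})] - \E^\sigma_{p_{\br}\vec 0}\big[f_\bu(\realeffect_{\hat{\br}})\cdot\mathbb{1}[\tau > n^{\epsilonr_1}]\big] = -\,\E^\sigma_{p_{\br}\vec 0}\big[f_\bu(\realeffect_{\hat{\br}})\cdot\mathbb{1}[\tau > n^{\epsilonr_1}]\big].
\]
Bounding the right‑hand side by \(M_\bu u\,\E[\tau\,\mathbb{1}[\tau > n^{\epsilonr_1}]]\) and using the tail estimate, \(\E[\tau\,\mathbb{1}[\tau>N]] \leq \sum_{j>N} j\,\prob[\tau = j] \leq \sum_{j>N} jC\rho^{j} = \bigO(N\rho^{N})\). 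Substituting \(N = n^{\epsilonr_1}\), the quantity \(n^{\epsilonr_1}\rho^{n^{\epsilonr_1}}\) is eventually dominated by \(a^{n^{\epsilonr_2}}\) for any fixed \(a\in(\rho,1)\) as long as \(\epsilonr_2 \leq \epsilonr_1\) (the table of constants for this section will fix such a choice), which yields \(\big|\E^\sigma_{p_{\br}\vec 0}[f_\bu(\realeffect^{\epsilonr_1}_{\hat{\br}})]\big|\in\bigO(a^{n^{\epsilonr_2}})\).

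The probabilistic ingredients (the geometric tail bound, the interchange of expectation with the linear \(f_\bu\)) are routine; I expect the main thing to get right is not a genuine difficulty but the bookkeeping: verifying that \(f_\bu\) is applied to exactly the counter‑restriction that lands in \(R^{B_\br,k}\) under the standing hypothesis, and that \(M_\bu\), \(C\), \(\rho\), \(L\), \(\delta\) are all independent of \(n\), so that the final bound is uniform in the input size.
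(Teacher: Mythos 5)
Your proof is correct and follows essentially the same route as the paper's: both establish the first identity by linearity of expectation together with $\E[\realeffect_{\hat{\br}}]=\vec{0}$ (the defining property of $\hat{\br}$), and both get the second bound by splitting on the event $\{\tau \le n^{\epsilonr_1}\}$, exploiting that $f_\bu(\vec 0)=0$, and then controlling the large-$\tau$ tail via the geometric bound $\prob[\tau\geq i]\leq C\rho^i$ for a finite strongly connected chain. Your version is marginally cleaner (you avoid the paper's redundant division by $\prob[\tau\leq n^{\epsilonr_1}]$), and you are right to flag that the argument needs a relation like $\epsilonr_2\leq\epsilonr_1$ between the constants; at the point where the paper actually invokes this lemma (inside Lemma~\ref{lemma-a-bu-zero-exp}, with $\epsilonr_{17},\epsilonr_{27}$ in the roles of $\epsilonr_1,\epsilonr_2$) the required inequality does hold.
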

\begin{proof}
	
	
	Since \(f_\bu \) is a linear transformation and \(\E_{p_{\br}\vec{0}}^\sigma[\realeffect_{\hat{\br}}]=\vec{0} \) (see definition of \(\hat{\br} \) in Section~\ref{sec-prelim}) it holds from the linearity of expectation that  \(\E_{p_{\br}\vec{0}}^\sigma[f_\bu(\realeffect_{\hat{\br}})]=0 \).
	We can write
	\begin{gather*}
		0=\E_{p_{\br}\vec{0}}^\sigma[f_\bu(\realeffect_{\hat{\br}})]
		=\\
		\E_{p_{\br}\vec{0}}^\sigma[f_\bu(\realeffect_{\hat{\br}})\mid \length(\alpha_{\hat{\br}})\leq n^{\epsilonr_{1}}]\cdot \prob_{p_{\br}\vec{0}}^\sigma[ \length(\alpha_{\hat{\br}})\leq n^{\epsilonr_{1}}]
		+
		\E_{p_{\br}\vec{0}}^\sigma[f_\bu(\realeffect_{\hat{\br}})\mid \length(\alpha_{\hat{\br}})> n^{\epsilonr_{1}}]\cdot \prob_{p_{\br}\vec{0}}^\sigma[ \length(\alpha_{\hat{\br}})> n^{\epsilonr_{1}}]
		=\\
		\E_{p_{\br}\vec{0}}^\sigma[f_\bu(\realeffect^{\epsilonr_{18}}_{\hat{\br}})\mid \length(\alpha_{\hat{\br}})\leq n^{\epsilonr_{1}}]\cdot \prob_{p_{\br}\vec{0}}^\sigma[ \length(\alpha_{\hat{\br}})\leq n^{\epsilonr_{1}}]
		+
		\E_{p_{\br}\vec{0}}^\sigma[f_\bu(\realeffect_{\hat{\br}})\mid \length(\alpha_{\hat{\br}})> n^{\epsilonr_{1}}]\cdot \prob_{p_{\br}\vec{0}}^\sigma[ \length(\alpha_{\hat{\br}})> n^{\epsilonr_{1}}]
	\end{gather*}
	
	Thus it suffices to show that \[\big|\frac{\E_{p_{\br}\vec{0}}^\sigma[f_\bu(\realeffect_{\hat{\br}})\mid \length(\alpha_{\hat{\br}})> n^{\epsilonr_{1}}]\cdot \prob_{p_{\br}\vec{0}}^\sigma[ \length(\alpha_{\hat{\br}})> n^{\epsilonr_{1}}]}{\prob_{p_{\br}\vec{0}}^\sigma[ \length(\alpha_{\hat{\br}})\leq n^{\epsilonr_{1}}]}\big| \in \bigO(a^{ n^{\epsilonr_{2}}}) \]

	For each for each \(i\in \mathbb{N}\) it holds  \(|(f_\bu(E_{\hat{\br}})\mid\length(\alpha)\leq i)|\leq  u\cdot i\) where \(u\) is the maximal change of a counter per single transition in \(\M_{\hat{\br}} \). Also it holds  \(\prob_{p_\br,\vec{0}}^\sigma[\len(\alpha)\geq i]\leq a^i \) for some constant \(a<1\) since every step there is a positive probability of reaching  \(p_\br\) within at most constantly many steps. Therefore \begin{gather*}
		\big|\E_{p_{\br}\vec{0}}^\sigma[f_\bu(\realeffect_{\hat{\br}})\mid \length(\alpha_{\hat{\br}})> n^{\epsilonr_{1}})]\big|
		\leq 
		\sum_{i=\lfloor n^{\epsilonr_{1}}\rfloor+1}^{\infty}  
		\big|\E_{p_{\br}\vec{0}}^\sigma[f_\bu(\realeffect_{\hat{\br}})\mid \length(\alpha_{\hat{\br}})=i ]\big|\cdot \prob_{p_{\br}\vec{0}}^\sigma[\length(\alpha)=i]
		\leq \\
		\sum_{i=\lfloor n^{\epsilonr_{1}}\rfloor+1}^{\infty}  
		u\cdot i\cdot a^i
		=
		u \cdot  \frac{a^{\lfloor n^{\epsilonr_{1}}\rfloor+1}\cdot (-a\cdot (\lfloor n^{\epsilonr_{1}}\rfloor+1)+\lfloor n^{\epsilonr_{1}}\rfloor+1)}{(a-1)^2}
		=\\
		\frac{u}{(a-1)^2}\cdot  a^{\lfloor n^{\epsilonr_{1}}\rfloor+1}\cdot (-a\cdot (\lfloor n^{\epsilonr_{1}}\rfloor+1)+\lfloor n^{\epsilonr_{1}}\rfloor+1)
		\leq 
		a^{n^{\epsilonr_{3}}}
	\end{gather*}
	Where the last inequality holds for all sufficiently large \(n\) assuming  \begin{equation}\label{eq-epsbound-kgvkvubbvgvgvgppp}
		\epsilonr_{1}>\epsilonr_{3}
	\end{equation}
	
	Therefore assuming \begin{equation}\label{eq-epsbound-kgfdsfvkvubbvgvgvgppp}
		\epsilonr_{2}<\epsilonr_{3}+\epsilonr_{1} 
	\end{equation}  it holds \[\big|\frac{\E_{p_{\br}\vec{0}}^\sigma[f_\bu(\realeffect_{\hat{\br}})\mid \length(\alpha_{\hat{\br}})> n^{\epsilonr_{1}}]\cdot \prob_{p_{\br}\vec{0}}^\sigma[ \length(\alpha_{\hat{\br}})> n^{\epsilonr_{1}}]}{\prob_{p_{\br}\vec{0}}^\sigma[ \length(\alpha_{\hat{\br}})\leq n^{\epsilonr_{1}}]}\big| 
	\leq 
	\big|\frac{a^{n^{\epsilonr_{3}}}\cdot a^{n^{\epsilonr_{1}}}}{1-a^{n^{\epsilonr_1}}}\big|
	\in \bigO(a^{ n^{\epsilonr_{2}}}) \]
	Lemma holds.
\end{proof}




Let  \(\epsilon>0 \) be arbitrary but further fixed, we will now describe a pointing strategy \(\sigma_\epsilon \) such that the pointing computation under \(\sigma_\epsilon \) on the pointing VASS \(\A_{+\hat{B}} \) from any initial pointing configuration with all counters set at \(\vec{n} \) will point to \(\M_{\hat{\br}} \) at least \(n^{k+1-\epsilon} \) times with probability \(p_n \) such that \(\lim_{n\rightarrow\infty}p_n=1 \), assuming \(\support^{C_1,\dots,C_{\lfloor \frac{k}{2}\rfloor}}(\hat\br)\subseteq R^{B_\br,k} \) holds. Note that this would prove Lemma~\ref{lemma-D-in-R-lower-estimate-nk+1-for-hatB}. We will give only a high level description of \(\sigma_\epsilon \). To ease the description we assume \(\sigma_\epsilon\)  remembers values \(a_\bu\in \mathbb{Q} \) for each \(\bu\in BASIS \) that are initialized to \(0\) (note that these can always be computed from the pointing history).
\textbf{Bins initialization:} From Lemma~\ref{counters-pumpable-all-at-once} combined with  the induction assumptions for all \(0\leq k'<k\) on points  \ref{enum-main-1} and  \ref{enum-main-2} of Lemma~\ref{lemma-main-for-all-k} we have a strategy \(\pi \) in \(\A\) that from any initial configuration with initial counters vector \(\vec{n}\) reaches a configuration with counters vector \(\bv_0 \) such that for each counter \(c\in C_i\) where \(i\leq k \) (here we put \(C_{k}=C_{k+} \)) it holds \(\bv_0(c)\geq n^{i-\epsilonr_{4}}\), and this is achieved with probability \(p_n'\) such that \(\lim_{n\rightarrow\infty} p_n'=1\). \(\sigma_\epsilon \) therefore starts by using \(\pi \) to reach such counters vector (remember that \(\pi \) can be seen as a pointing strategy on \(\tilde{\A} \) as per \cite{AKCONCUR23}). At this point, \(\sigma_\epsilon \)  divides the counters vector into \(m \) bins of equal size (here \(m\) is some properly chosen constant, important thing is that \(m\) does not depend on \(n\)), that is each of the \(m\) bins contains the counters vector \(\lfloor \frac{\bv_0}{m} \rfloor \).


\textbf{Moving to \(p_{\br} \):} At this point \(\sigma_\epsilon \) will start a computation on a \(p_{\br} \)-bin (that will never be used again) under a strategy that minimizes the expected number of steps needed to reach \(p_{\br}\) and \(\sigma_\epsilon \) will play according to this strategy until reaching \(p_{\br}\). (Again a strategy on \(\A \) can be seen as a pointing strategy on \(\tilde{\A} \). We say a state \(p\) is reached if the new state of the VASS Markov chain that was just pointed to is \(p\)).


At this point \(\sigma_\epsilon \) starts pointing at \(\M_{\hat{\br}} \) using the main-bin as well as an additional high-bin such that the effects on counters in \(C_1,\dots,C_{\lfloor \frac{k}{2} \rfloor} \) are added to the main-bin while the effects on the other counters are added to the high-bin. 
Every time \(\M_{\hat{\br}}  \) revisits \(p_B\), let \(\alpha \) denote the path taken in \(\M_{\hat{\br}} \) since its last visit of \(p_B \), and let \(\bv\) be the effect of \(\alpha \) on counters \(C_1 \cup \dots\cup C_{\lfloor \frac{k}{2}\rfloor}  \) (i.e., the effect in the main-bin). If \(length(\alpha)> n^{\epsilonr_{1}} \) then \(\sigma_\epsilon \) further behaves arbitrarily without changing the counter vectors in the bins. Otherwise \(\sigma_{\epsilon} \) adds the value \(f_\bu(\bv^{C_1,\dots,C_{\lfloor \frac{k}{2} \rfloor}}) \) to \(a_\bu \) for each \(\bu\in BASIS \) (note that \(\bv^{C_1,\dots,C_{\lfloor \frac{k}{2} \rfloor}}\in \support^{C_1,\dots,C_{\lfloor\frac{k}{2}\rfloor}}(\hat{\br}) \) and thus from our assumption that \(\support^{C_1,\dots,C_{\lfloor\frac{k}{2}\rfloor}}(\hat{\br})\subseteq R^{B_\br,k} \) the value \(f_\bu(\bv^{C_1,\dots,C_{\lfloor \frac{k}{2} \rfloor}}) \) is defined). Next, \(\sigma_\epsilon \) virtually ``distributes'' the vector \(\bv^{C_1,\dots,C_{\lfloor \frac{k}{2} \rfloor}} \) from the main-bin into individual \(\bu\)-bins as follows: First it adds \(-\bv^{C_1,\dots,C_{\lfloor \frac{k}{2} \rfloor}} \) to the main-bin, and then for each \(\bu\in BASIS \) it adds \(f_\bu(\bv^{C_1,\dots,C_{\lfloor \frac{k}{2} \rfloor}})\cdot \bu \) to the \(\bu\)-bin. At this point \(\sigma_\epsilon \) checks whether there exists \(\bu\in BASIS \) such that \(|a_\bu|\geq n^{i-\epsilonr_{5}} \) where \(i\) is such that \(\bu\in BASIS_i^{C_1, \dots, C_{\lfloor\frac{k}{2}\rfloor} }\cup (-BASIS_i^{C_1, \dots, C_{\lfloor\frac{k}{2}\rfloor} })\) (note that the \(i\) is uniquely given by \(\bu\) as discussed above). If such \(\bu\) does not exist then \(\sigma_\epsilon \) continues by again pointing a \(\M_{\hat{\by}} \) in the same way as above (thus doing so until either such \(\bu\) exists or termination). If such \(\bu\in BASIS\) exists, then \(\sigma_\epsilon \) temporarily pauses the pointing \(\M_{\hat{\br}} \) on the main-bin and performs a reset procedure. After this reset procedure finishes, \(\sigma_\epsilon \) unpauses the pointing \(\M_{\hat{\br}} \) on the main-bin and moves right to the ``\(\sigma_\epsilon \) checks whether there exists \(\bu\in BASIS \) such that...'', thus looping in this way till termination. 

Which reset procedure is chosen depends on whether \(a_\bu\) is positive or negative. \(\sigma_\epsilon \) performs the \(\bs\)-reset-procedure where \(\bs=\bu\) if \(a_\bu<0 \) and \(\bs=-\bu\) if  \(a_\bu>0 \). The \(\bs\)-reset-procedure is fully described below.

\subsubsection{\(\bs \)-reset-procedure}\label{section--sub-reseting-subprocedure}
 Let \(1\leq i\leq \frac{k}{2}\) and  \( \bv\in BASIS_i\) be such that either  \(\bs=\bv^{C_1,\dots,C_{\lfloor \frac{k}{2}\rfloor}}\) or \(\bs=-\bv^{C_1,\dots,C_{\lfloor \frac{k}{2}\rfloor}}  \). If \(\bs=\bv^{C_1,\dots,C_{\lfloor \frac{k}{2}\rfloor}}\) then let \(\bx=\bx_{i,\bv}^+ \) and if \(\bs=-\bv^{C_1,\dots,C_{\lfloor \frac{k}{2}\rfloor}}  \) then let \(\bx=\bx_{i,\bv}^- \).

When we say that the \(\bs\)-reset-procedure uses an \emph{interface transition} then \(\sigma_\epsilon \) doesn't actually do anything in \(\A_{+\hat{\br}} \), instead it only virtually simulates the interface transition (whose effect is \(-\Delta(\bx) \) without changing the state) by adding \(\Delta^{C_1,\dots,C_{\lfloor \frac{k}{2}\rfloor}}(\bx)\) to the \(\bu\)-bin and \(\bs^{C_{\lfloor \frac{k}{2}\rfloor+1},\dots,C_k}\) to the high-\(\bu\)-bin while adding \(-\bs \) to the bin on which the \(\bs\)-reset-procedure performs the interface transition (thus effectively simulating it for the procedure), in addition to this, every time the interface transition gets used \(\sigma_\epsilon \) also adds \(+1 \) to \(a_\bu \) if \(\bs=-\bu \) or it adds \(-1\) to \(a_\bu \) if \(\bs=\bu \) (i.e., it moves it by \(1\) towards \(0\)). Note that if the \(\bs\)-reset-procedure selects the interface transition while \(|a_\bu|<1\) then \(\sigma_\epsilon \) ``pauses the \(\bs\)-reset-procedure halfway during the interface transition'', that is \(\sigma_\epsilon \) changes the counters in the bins by only \(\bs\cdot|a_\bu| \) and sets \(a_\bu \) to \(0\), then pauses the \(\bs\)-reset-procedure and the next time it gets unpaused it performs the remaining portion of the interface transition (while changing the counters and modifying \(a_\bu \) correspondingly)\footnote{Technically, this can cause the counter values in these bins to no longer be integers. Note that this is not an issue as all the individual strategies simulated on these bins ignore the counter values. Also note that the sum of all the counters vectors in all of the bins (i.e. the actual counters vector) still sums up to an integer vector.} thus from the point of view of the \(\bs\)-reset-procedure the entire interface transition was performed (assuming \(\bs\)-reset-procedure gets used again later). \(\sigma_\epsilon \) keeps playing as per the \(\bs\)-reset-procedure until \(a_\bu\neq 0 \), at which point the \(\bs\)-reset-procedure gets paused (finished).  If again at some point later \(\sigma_\epsilon \) performs the  \(\bs\)-reset-procedure again then it unpauses it from the last time (thus there is only one \(\bs\)-reset  procedure for each \(\bs\in BASIS \) running for the entire computation).

We will describe the strategy  \(\sigma_\bs^{\epsilonr_{6}} \) on the VASS MDP \(\A \) modified in such a way that we also add a new loop  over \(p_{\br} \) that has the effect \(-\bs \), we use \(\A^{+p_B(-\bs)} \) to denote this new VASS MDP. We use \(\by_B \) to denote a component corresponding to this loop (i.e., \(\by_B(t)=0 \) for each \(t \) in \(\A \) and \(\by(t)=1 \) for the newly added transition \(t\)). We will call this new added transition the "interface transition" whose behavior is described above.

From Lemma~\ref{lemma-decompose-multicomponents-into-components} we can express \(\bx \) as a conical sum of components, that is \(\bx=\sum_{j=1}^w a_j\cdot \by_j \) where \(w \) is the number of components on \(\A_{k-i} \), \(\by_j \) is the \(j\)-th component and \(a_j\geq 0 \). Additionally, for the multi-component \(\bx'=\bx+\by_B \) on \(\A_{k-i}^{+p_B(-\bs)} \) it holds \(\Delta^{C_1,\dots,C_{k-i}}(\bx')=\Delta^{C_1,\dots,C_{k-i}}(\bx)+\Delta^{C_1,\dots,C_{k-i}}(\by_B)=\bs-\bs=\vec{0} \) and it can be expressed as a conical sum of components as \(\bx'=a_B\cdot \by_B + \sum_{j=1}^w a_j\cdot \by_j\) for \(a_B=1 \).

Given two components \(\by_1\) and \(\by_2 \) on \(\A_{k-i}^{+p_B(-\bs)} \), we define the \emph{level-distance} of \( \by_1\) and \(\by_2 \) as the smallest value \(l\in \mathbb{N}_0 \) such that both \(\by_1 \) and \(\by_2 \) are contained in the same MEC of \(\A_{k-i-l}^{+p_B(-\bs)}  \). (note that the maximal possible level-distance is \(k-i-1 \))



From  the induction assumption on point \ref{enum-main-4} of Lemma~\ref{lemma-main-for-all-k} for \(k-i \), there exists for each component \(\by \) of \(\A_{k-i,T_{k-i+1}}\) a pointing strategy \(\sigma_{\hat{\by}}^{\epsilonr_{7}} \) such that \(\lim_{n\rightarrow\infty}\prob_{p\vec{n}}^{\sigma_{\hat{\by}}^{\epsilonr_{7}}}[\calP_{\A_{+\hat{\by}}}[\M_{\hat{\by}}]\geq n^{k+1+\epsilonr_{7}}]=1 \) . In addition to this, since \(\hat\by_B \) never changes any counter we define \(\sigma_{\hat{\by}_\MEC}^{\epsilonr_{7}}  \) as a pointing strategy for \(\A^{+p_B(-\bs)}_{+\hat\by_B} \) that does nothing but keeps pointing to \(\M_{\hat{\by_B}} \), clearly this pointing strategy will result in termination when initiated in \(p_\br \).

From the induction assumption on point \ref{enum-main-5} of Lemma~\ref{lemma-main-for-all-k} for \(k-i \) there exists a strategy \(\sigma_{k-i}^{\epsilonr_{8}}  \) which for \(h_{\epsilonr_{8}}(n)=\lfloor n^{1-\epsilonr_{8}}\rfloor \)  produces with probability \(\lim_{n\rightarrow\infty} p_n=1 \) a \((k-i)\)-level \(h_{\epsilonr_{8}}(n) \)-cyclic computation when initialized with counters vector \(n\).

We will now describe the strategy \(\sigma_\bs^{\epsilonr_{6}} \) on \(\A^{+p_B(-\bs)} \) that starts in \(p_B \) and that we shall call the \(\bs\)-reset-procedure (i.e., the \(\bs\)-reset-procedure consists of playing as \(\sigma_\bs^{\epsilonr_{6}} \)).

The strategy \(\sigma_\bs^{\epsilonr_{6}} \) works as follows, once again we only give a high level description of the strategy: 

It remembers (it can always compute these from the history) values \(b_B,b_1,\dots,b_w \) that are all initialized to \(0 \).

It plays \(\sigma_{k-i}^{\epsilonr_{8}} \) on a \((\bs,k,m)\)-tree-bin, except that occasionally this computation can be temporarily ``paused'' in favor of playing something else. Every time this computation under \(\sigma_{k-i}^{\epsilonr_{8}} \) reaches some state \(p\) then \(\sigma_\bs^{\epsilonr_{6}} \)   ``processes'' every component \(\by_a \) of \(\A_{k-i}^{+p_B(-\bs)} \) such that \(p=p_{\by_a} \) and \(a_a\neq 0 \) before resuming playing as per \(\sigma_{k-i}^{\epsilonr{8}} \), where the ``processing'' of \(\by_a \) looks as follows: \(\sigma_\bs^{\epsilonr_{6}} \) asks whether there exists another component \(\by_b \) of \(\A_{k-i}^{+p_B(-\bs)} \) such that \(a_a,a_b\neq 0 \) and the level-distance of \(\by_a \) and \(\by_b \) is \(l \) such that it holds \(b_a - \frac{a_a}{a_b}\cdot (n^{1-\epsilonr_{6}})^{l+1} \geq b_b    \). If such \(\by_b \) exists then \(\by_a \) is  declared ``processed'' without \(\sigma_\bs^{\epsilonr_{6}} \) doing anything. 

If such \(\by_b \) does not exist, then \(\sigma_\bs^{\epsilonr_{6}} \) starts playing according to the strategy \(\sigma_{\hat\by_a}^{\epsilonr_{7}} \) using a \((\bs,\hat\by_a)\)-bin (technically, we initialize this bin with the counter vector \(\lfloor \frac{n^{1-\epsilonr_{5}}}{m}\rfloor \) instead of \(\lfloor \frac{\bv_0}{m}\rfloor \))  in such a way that whenever \(\sigma_{\hat\by_a}^{\epsilonr_{7}} \) points at \(\M_{ \hat\by_a} \) then \(\sigma_\bs^{\epsilonr_{6}} \) actually points at \(\M_{\by_a} \) while adding the resulting effect to the  \((\bs,\hat\by_a)\)-bin, and if this pointing results in reaching the state \(p_{\by_a} \) then additionally the vector \(\Delta(\by_a) \) is added to the \(\bs\)-main-bin (that is shared among all the components \(\by_a \)) and the same vector is subtracted from the  \((\bs,\hat\by_a)\)-bin (these are done virtually, so in effect the computation on the \((\bs,\hat\by_a)\)-bin corresponds to a computation on \(\A_{+\hat{\by}_a} \)).  This proceeds until this computation returns to  \(p_{\by_a} \) after \(\sigma_{\hat\by_a}^{\epsilonr_{7}} \) pointed at \(\M_{\hat{\by}_a} \) (if \(p_{\by_a} \) is reached after \(\sigma_{\hat\by_a}^{\epsilonr_{7}} \) pointed at different VASS Markov chain then it does not count) for a total of  \(n^{1-\epsilonr_{6}}\) times (note that if this number is not an integer then what the strategy effectively does is it reaches \(p_{\by_a} \) a total of \(\lceil n^{1-\epsilonr_{6}} \rceil\) times while remembering a value \(X\) equal to the difference from the number of times it reached \(p_{\by_a} \) and  \(n^{1-\epsilonr_{6}}\), and the next time it starts iterating \(\sigma_{\hat\by_a}^{\epsilonr_{7}} \) then before any transition is even chosen it assumes that it already reached \(p_{\by_a} \) a total of \(X \) times, while these "extra" iterations happen on a completely separate bin that is not used anywhere else (thus this cannot ever kill the computation)). Once this happens \(\sigma_\bs^{\epsilonr_{6}} \) increases the value \(b_a \) by \(n^{1-\epsilonr_{6}} \) and then declares \(\by_a \) to be ``processed'' (note that each component can be ``proccessed'' multiple times before unpausing \(\sigma_{k-i}^{\epsilonr_{8}} \), thus at this point \(\by_a \) gets checked for processing again).

\subsubsection{Analysis of \(\bs \)-reset-procedure}

In this Section we prove some Lemmas pertaining the behavior of the \(\bs \)-reset-procedure that will help us prove the main statement about \(\sigma_{\epsilon} \) later on.

\begin{lemma}\label{lemma-reset-procedure-main-bin-nonterminating}

For each counter \(c\in C_j \) for \(1\leq j \leq k-i \) (here we consider the counters on \(\A_{k-i} \)) the counter \(c\) can never become negative in the \(\bs\)-main bin for all sufficiently large \(n\).


\end{lemma}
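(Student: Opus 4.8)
The plan is to express the value of the counter $c$ in the $\bs$-main-bin explicitly in terms of the numbers of iterations of the various components, and then bound it from below. The only events of the $\bs$-reset-procedure that modify the $\bs$-main-bin are (i) an iteration of a component $\by_a$ of $\A_{k-i}^{+p_B(-\bs)}$ with $a\in\{1,\dots,w\}$, which adds $\Delta(\by_a)$, and (ii) a use of the interface transition, which adds $\Delta(\by_B)=-\bs$; since the interface transition is precisely one iteration of the extra loop component $\by_B$, it is convenient to treat $\by_B$ uniformly as the $(w{+}1)$-st component. Writing $b_a$ for the number of iterations of $\by_a$ carried out so far (for $a\in\{1,\dots,w,B\}$; this coincides with the tracked variables up to one batch $\lfloor n^{1-\epsilonr_{6}}\rfloor$ and the possible partial last batch of the interface transition), the value of $c$ in the $\bs$-main-bin at any moment equals $\lfloor\bv_0(c)/m\rfloor+\sum_{a}b_a\cdot\Delta(\by_a)(c)$. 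By the bins-initialization step of $\sigma_\epsilon$ and the induction hypotheses for $k'<k$ on points~\ref{enum-main-1} and~\ref{enum-main-2} of Lemma~\ref{lemma-main-for-all-k}, for $c\in C_j$ with $j\le k-i$ we have $\bv_0(c)\ge n^{j-\epsilonr_{4}}$, so it suffices to prove $\sum_{a}b_a\cdot\Delta(\by_a)(c)\ge-\bigO(n^{j-\delta})$ for some $\delta>\epsilonr_{4}$.

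Two ingredients go into this lower bound. First, the conical decomposition $\bx'=\by_B+\sum_{a=1}^{w}a_a\by_a$ (with $a_B=1$) satisfies $\Delta^{C_1,\dots,C_{k-i}}(\bx')=\vec{0}$, hence $\sum_{a}a_a\cdot\Delta(\by_a)(c)=0$ for every $c\in C_j$, $j\le k-i$; that is, the ideal ``proportional'' schedule $b_a=\lambda\,a_a$ has zero net effect on $c$. Second, exactly as in the proof of Lemma~\ref{lemma-something-something-k+1}, using the notion of level-distance of components of $\A_{k-i}^{+p_B(-\bs)}$, one observes that if $\by_a$ and $\by_b$ have level-distance at least $j$ then $\Delta(\by_a)(c)=0$ or $\Delta(\by_b)(c)=0$ for $c\in C_j$, because they modify different local copies of the underlying counter of $\A_{k-i}$.

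It then remains to check that the ``processing'' test of $\sigma_\bs^{\epsilonr_{6}}$ keeps the schedule close to proportional: a component $\by_a$ is iterated only when no $\by_b$ at level-distance $l$ witnesses $b_a-\tfrac{a_a}{a_b}(n^{1-\epsilonr_{6}})^{l+1}\ge b_b$, and since the $b_a$'s are nondecreasing and each iteration batch has size $\lfloor n^{1-\epsilonr_{6}}\rfloor$, a straightforward induction over the run shows that at every moment and for every pair at level-distance $l$ one has $|b_a/a_a-b_b/a_b|\le\bigO((n^{1-\epsilonr_{6}})^{l+1})$ (this is a consequence of the deterministic test alone and uses nothing probabilistic). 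Now fix $c\in C_j$ and choose a component $\by_{\mathrm{ref}}$ with $\Delta(\by_{\mathrm{ref}})(c)\ne0$ minimizing $b_{\mathrm{ref}}/a_{\mathrm{ref}}$; every $\by_a$ with $\Delta(\by_a)(c)\ne0$ has level-distance at most $j-1$ from $\by_{\mathrm{ref}}$, so $\big|b_a-(b_{\mathrm{ref}}/a_{\mathrm{ref}})a_a\big|\le\bigO((n^{1-\epsilonr_{6}})^{j})$. Subtracting $(b_{\mathrm{ref}}/a_{\mathrm{ref}})\sum_a a_a\Delta(\by_a)(c)=0$, bounding $|\Delta(\by_a)(c)|$ by the maximal counter change $u$, and using that only constantly many components contribute, we get
\[
\sum_{a}b_a\,\Delta(\by_a)(c)=\sum_{a}\Big(b_a-\tfrac{b_{\mathrm{ref}}}{a_{\mathrm{ref}}}a_a\Big)\Delta(\by_a)(c)\ \ge\ -\bigO\big((n^{1-\epsilonr_{6}})^{j}\big)\ \ge\ -\bigO\big(n^{\,j-\epsilonr_{6}}\big),
\]
where the last inequality uses $j\ge1$. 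Together with $\lfloor\bv_0(c)/m\rfloor\ge\lfloor n^{j-\epsilonr_{4}}/m\rfloor$, this gives that whenever $\epsilonr_{4}<\epsilonr_{6}$ (ensured by the choice of constants for Section~\ref{app-sec-lowerest-nkplus1-hatby}) the value of $c$ in the $\bs$-main-bin is non-negative for all sufficiently large $n$.

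I expect the balancing claim of the third paragraph to be the main obstacle: one must verify carefully that the lazy test in $\sigma_\bs^{\epsilonr_{6}}$ really confines every pair of weighted iteration counts to the window $\bigO((n^{1-\epsilonr_{6}})^{l+1})$ throughout the run — in particular absorbing the rounding from the fixed batch size $\lfloor n^{1-\epsilonr_{6}}\rfloor$ and the possible partial final batch of the interface transition, and arguing both directions of the inequality (the test only prevents a component from getting \emph{ahead}, so the reverse bound comes from applying the test with the roles of $\by_a$ and $\by_b$ exchanged). This is the same bookkeeping as in the counting argument in the proof of Lemma~\ref{lemma-something-something-k+1}, carried over to the reset procedure.
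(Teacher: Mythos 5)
Your argument follows essentially the same route as the paper's: you express the counter value explicitly as $\lfloor\bv_0(c)/m\rfloor+\sum_{a}b_a\,\Delta(\by_a)(c)$, cancel the proportional part of the iteration counts using $\Delta^{C_1,\dots,C_{k-i}}(\bx')=\vec{0}$, restrict attention to components at level-distance $<j$ via the locality of the local counter copies, and bound the residual deviations from a reference ``slope''; your $\by_{\mathrm{ref}}$ minimizing $b_{\mathrm{ref}}/a_{\mathrm{ref}}$ is exactly the paper's $\kappa=\min_{x\in I_c}b_x/a_x$, and your treatment of $\by_B$ as a $(w{+}1)$-st component together with the remark about the partial final batch of the interface transition matches the paper's bookkeeping.

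One point worth pinning down carefully. Your balancing invariant $|b_a/a_a-b_b/a_b|\le\bigO\big((n^{1-\epsilonr_{6}})^{l+1}\big)$ does not follow verbatim from the processing test as literally written in $\sigma_\bs^{\epsilonr_{6}}$: that test compares $b_a-\frac{a_a}{a_b}(n^{1-\epsilonr_{6}})^{l+1}$ against $b_b$, and hence it only confines the \emph{raw} difference $b_a-b_b$, whereas the cancellation against $\Delta(\bx')(c)=\sum_x a_x\Delta(\by_x)(c)=0$ needs the iteration counts to stay close to \emph{proportional} to the $a_x$'s. Reading the test as confining $b_a-\frac{a_a}{a_b}\,b_b$ (equivalently, as a test on the ratios $b_a/a_a$ vs.\ $b_b/a_b$) is what makes your claim, and the proportionality bound derived from it, come out cleanly. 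The paper's own chain of inequalities for this lemma keeps a $\kappa$-dependent term $\kappa(a_b-a_a)$ in exactly the same spot, so this is a shared reading of the construction rather than a gap specific to your write-up; you should make the reading explicit when you prove the invariant by induction over the run, since that is the load-bearing step you correctly single out.
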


\begin{proof}


Notice that at any point it holds that the counter vector in the \(\bs\)-main bin is equal to \(\lfloor\frac{\bv_0}{m}  \rfloor + b_B\cdot \Delta(\by_B) + \sum_{j=1}^{w}b_j\cdot \Delta(\by_j) \). Furthermore, notice that for each \(c\in C_j \) it holds for all \(\by_a,\by_b \) that if \(\Delta(\by_a)(c)\neq 0 \) and the level-distance of \(\by_a,\by_b \) is at least \(j \) then \(\Delta(\by_b)(c)=0 \) (this is because in such case \(\by_b\) lies in a different MEC of \(\A_{k-i-j}\) from \(\by_a \) and thus both consider different local copy of the counter corresponding to \(c\)). Let \(I_c=\{i\in \mathbb{N}\cup\{B\}\mid \textit{there exists a}\textit{ transition }t\textit{ of }\A_{k-i};\by_i(t)>0 \textit{ and }\bu_t(c)\neq 0\} \) be the indexes of all the components of \(\A_{k-i}^{+p_B(-\bs)} \) that do modify \(c\) in any way (note that the interface transition modifies the local copies of counters that are present in the MEC containing \(p_B \)), then the maximal level-distance between any \(\by_a,\by_b \) for \(a,b\in I_c \) is \(j-1 \), and thus at any point it holds for any \(a\in I_c \) that for any \(b\in I_c \) we  have \(b_a - \frac{a_a}{a_b}\cdot (n^{1-\epsilonr_{9}})^{j} \leq b_b + n^{1-\epsilonr_{9}}   \). Let \(\kappa \) be the largest value such that \(b_x-\kappa\cdot  a_x\geq 0 \) for all \(x\in I_c \), and for each \(a\in I_c \) let  \(b_a'=b_a-\kappa\cdot  a_a\). Notice that from from the definition of \(\kappa \) there exists at least one \(b\in I_c \) such that \(b_b'=0 \). Thus by substituting \(b_x'+\kappa\cdot  a_x \) for \(b_x \) we get 

\(b_a'+\kappa\cdot  a_a - \frac{a_a}{a_b}\cdot (n^{1-\epsilonr_{9}})^{j} \leq b_b'+\kappa \cdot a_b + n^{1-\epsilonr_{9}}    \)

which can be simplified into

\(b_a'+\kappa\cdot  a_a - \frac{a_a}{a_b}\cdot (n^{1-\epsilonr_{9}})^{j} \leq 0+\kappa\cdot  a_b + n^{1-\epsilon\cdot _{4}}  \)

\(b_a' \leq \kappa\cdot  a_b-\kappa \cdot a_a + n^{1-\epsilonr_{9}}+ \frac{a_a}{a_b}\cdot (n^{1-\epsilonr_{9}})^{j}   \)

At any point it holds that the counter \(c\) in the \(\bs\)-main bin is equal to \begin{multline*}
	\lfloor\frac{\bv_0(c)}{m}  \rfloor + b_B\cdot \Delta(\by_B)(c) + \sum_{j=1}^{w}b_j\cdot \Delta(\by_j)(c)
	=
	\lfloor\frac{\bv_0(c)}{m}  \rfloor   + \sum_{j\in I_c}b_j\cdot \Delta(\by_j)(c)
	=  \\
	\lfloor\frac{\bv_0(c)}{m}  \rfloor + \sum_{j\in I_c}(b_j'+\kappa\cdot  a_j)\cdot \Delta(\by_j)(c)
	=
	\lfloor\frac{\bv_0(c)}{m}  \rfloor  + \sum_{j\in I_c}\kappa\cdot  a_j\cdot \Delta(\by_j)(c) + \sum_{j\in I_c}b_j'\cdot \Delta(\by_j)(c)
\end{multline*}

since \( \sum_{j\in I_c}\kappa \cdot a_j\cdot  \by_j\) is a multi-component that matches \(\kappa\cdot \bx' \) on the MEC of \(\A_{k-i-j+1} \) that contains \(c\), it holds \(\sum_{j\in I_c}\kappa a_j\cdot \Delta(\by_j)(c)=\kappa \cdot \Delta(\bx')(c)=0 \) thus we can write for  \(u=\max_{i\in\{B,1,\dots,w\}}\{|\Delta(\by_i)(c)| \}\)

\begin{gather*}
	\lfloor\frac{\bv_0(c)}{m}  \rfloor  + \sum_{j\in I_c}\kappa\cdot  a_j\cdot \Delta(\by_j)(c) + \sum_{j\in I_c}b_j'\cdot \Delta(\by_j)(c)
	=
	\lfloor\frac{\bv_0(c)}{m}  \rfloor + 0 + \sum_{j\in I_c}b_j'\cdot \Delta(\by_j)(c)
	\geq   \\ 
	\lfloor\frac{n^{j-\epsilonr_{5}}}{m} \rfloor - \sum_{j\in I_c}b_j'\cdot u
	\geq 
	\lfloor\frac{n^{j-\epsilonr_{5}}}{m} \rfloor - \sum_{x\in I_c}u(\kappa \cdot a_b-\kappa\cdot  a_x + n^{1-\epsilonr_{9}}+ \frac{a_x}{a_b}\cdot (n^{1-\epsilonr_{9}})^{j})
	=\\
	\lfloor\frac{n^{j-\epsilonr_{5}}}{m} \rfloor - \sum_{x\in I_c}u\cdot \kappa\cdot  a_b+\sum_{x\in I_c}u\cdot \kappa \cdot a_x -\sum_{x\in I_c}u \cdot n^{1-\epsilonr_{9}}-\sum_{x\in I_c}u \cdot \frac{a_x}{a_b}\cdot (n^{1-\epsilonr_{9}})^{j}
	=\\
	\lfloor\frac{n^{j-\epsilonr_{5}}}{m} \rfloor - |I_c|\cdot u\cdot \kappa\cdot  a_b+u\cdot \kappa\cdot \sum_{x\in I_c} a_x -|I_c|\cdot u\cdot  \cdot n^{1-\epsilonr_{9}}-\frac{u}{a_b}\cdot (n^{1-\epsilonr_{9}})^{j}\cdot \sum_{x\in I_c} a_x
	\geq\\
	\lfloor\frac{n^{j-\epsilonr_{5}}}{m} \rfloor - |I_c|\cdot u\cdot \kappa \cdot a_b -|I_c|\cdot u\cdot  n^{1-\epsilonr_{9}}-\frac{u}{a_b}\cdot (n^{1-\epsilonr_{9}})^{j}\cdot \sum_{x\in I_c} a_x
	\geq\\
	\lfloor\frac{n^{j-\epsilonr_{5}}}{m} \rfloor - |I_c|\cdot u\cdot \kappa \cdot a_b -|I_c|\cdot u\cdot  n^{1-\epsilonr_{9}}-\frac{u}{a_b}\cdot (n^{1-\epsilonr_{9}})^{j}\cdot |I_c|\cdot \max_{x\in I_c} a_x
	=\\
	\lfloor\frac{n^{j-\epsilonr_{5}}}{m} \rfloor - |I_c|\cdot u\cdot \kappa \cdot a_b -|I_c|\cdot u\cdot  n^{1-\epsilonr_{9}}-\frac{u}{a_b}\cdot n^{j-j\epsilonr_{9}}\cdot |I_c|\cdot \max_{x\in I_c} a_x
\end{gather*}

Which is always non-negative for all sufficiently large \(n\) assuming \begin{equation}\label{eq-epsbound-bvcyuviuib}
	\epsilonr_{5}< k\cdot \epsilonr_{9}
\end{equation}  since in such case \(\lfloor\frac{n^{j-\epsilonr_{5}}}{m} \rfloor\) grows asymptotically faster than each of \(|I_c|\cdot u\cdot \kappa\cdot  a_b\), \(|I_c|\cdot u\cdot  n^{1-\epsilonr_{9}} \), and \(\frac{u}{a_b}\cdot n^{j-j\epsilonr_{9}}\cdot |I_c|\cdot \max_{j\in I_c} a_x\).  
\end{proof}

\begin{lemma}\label{lemma-reset-procedure-interface-transition-nk-i-eps101}

Let \(I=\{a\in\{B,1,\dots,w\}\mid a_a\neq 0  \} \).
The probability of \(\sigma_\bs^{\epsilonr_{9}} \) terminating while there exists \(a\in I \) such that \(b_a\leq n^{k-i+1-\epsilonr_{10}} \) is \(p_\bs\) such that \(\lim_{n\rightarrow\infty} p_\bs=0 \).
\end{lemma}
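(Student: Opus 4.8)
The plan is to condition on a bounded family of high-probability events and then argue deterministically that, on their intersection, $\sigma_\bs^{\epsilonr_{9}}$ (the $\bs$-reset-procedure) cannot reach a negative counter before every $b_a$ with $a\in I$ has grown past $n^{k-i+1-\epsilonr_{10}}$. The events are: (i) the sub-computation that $\sigma_{k-i}^{\epsilonr_{8}}$ generates on the $(\bs,k,m)$-tree-bin carries a genuine $(k-i,h_{\epsilonr_{8}}(n))$-tree, which has probability tending to $1$ by point~\ref{enum-main-5} of Lemma~\ref{lemma-main-for-all-k} for $k-i$ since the tree-bin is initialised with a polynomially large value; (ii) for every component $\by_a$ of $\A_{k-i}^{+p_B(-\bs)}$ the pointing computation induced by $\sigma_{\hat{\by}_a}^{\epsilonr_{7}}$ on the $(\bs,\hat{\by}_a)$-bin points to $\M_{\hat{\by}_a}$ at least $(\lfloor n^{1-\epsilonr_{5}}/m\rfloor)^{k-i+1-\epsilonr_{7}}$ times before a counter of that bin becomes negative, which has probability tending to $1$ by the induction hypothesis (point~\ref{enum-main-4} of Lemma~\ref{lemma-main-for-all-k} for $k-i$); (iii) no counter ever becomes negative in the $\bs$-main-bin, which is Lemma~\ref{lemma-reset-procedure-main-bin-nonterminating}; and (iv) none of the remaining bins (the tree-bin, the high-bin, the high-$\bu$-bins) becomes negative within the first $n^{k-i+2}$ tree-bin steps, which follows from a crude count: these bins change only by interface transitions and by $\sigma_{k-i}^{\epsilonr_{8}}$-steps, by at most a constant each, there are only polynomially many of them, and the resulting bounds stay well below the $\Theta(n^{1-\epsilonr_{5}})$-or-larger initial bin values once $n$ is large (the same style of estimate as in the proof of Lemma~\ref{lemma-reset-procedure-main-bin-nonterminating}).

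On the intersection of (i)--(iv) the argument becomes a deterministic load-balancing estimate. First, the $(k-i,h_{\epsilonr_{8}}(n))$-tree forces each MEC of $\A_{k-i}^{+p_B(-\bs)}$ to be cycled through $\Theta\big(h_{\epsilonr_{8}}(n)^{k-i}\big)$ times within the first $n^{k-i+2}$ tree-bin steps, so $p_{\by_a}$ is visited $\Theta\big(h_{\epsilonr_{8}}(n)^{k-i}\big)$ times for every $a\in I$ (including $a=B$, as $p_B$ lies in one such MEC), and every visit that is not blocked triggers a processing of $\by_a$ that raises $b_a$ by $n^{1-\epsilonr_{6}}$ (for $\by_B$ the processing consists of $n^{1-\epsilonr_{6}}$ interface transitions). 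Second, a component $\by_a$ achieving $\min_{a'\in I} b_{a'}$ is never blocked: the blocking test asks for a $\by_b$ with $b_a-\frac{a_a}{a_b}(n^{1-\epsilonr_{6}})^{l+1}\ge b_b$, and since the slack term is strictly positive while $b_b\ge b_a$ for a minimiser, no such $\by_b$ exists. Third, the blocking mechanism keeps all $b_a$ within a band of width at most $C\,n^{(k-i)(1-\epsilonr_{6})}$ for a constant $C$ (a component becomes blocked, and hence stops advancing, once it is about $C\,n^{(k-i)(1-\epsilonr_{6})}$ ahead of the current minimum), and every blocked visit witnesses a lagging component, so the blocked visits of any fixed $\by_a$ are at most $|I|$ times the number of processings of the remaining components. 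Putting these together, $\min_{a\in I} b_a$ reaches $\Omega\big(h_{\epsilonr_{8}}(n)^{k-i}\,n^{1-\epsilonr_{6}}\big)/C' = \Omega\big(n^{(k-i)(1-\epsilonr_{8})+1-\epsilonr_{6}}\big)$ for a constant $C'$, which exceeds $n^{k-i+1-\epsilonr_{10}}$ once $\epsilonr_{10}>\epsilonr_{6}+(k-i)\epsilonr_{8}$ and $n$ is large; the band width is negligible against this, since $(k-i)(1-\epsilonr_{6})<k-i+1-\epsilonr_{10}$. Hence on (i)--(iv) every $b_a$ with $a\in I$ has crossed $n^{k-i+1-\epsilonr_{10}}$ before any counter of any bin becomes negative, so the run has not terminated; a union bound over the finitely many conditioning events and the finitely many components of $I$ gives $\lim_{n\to\infty}p_\bs=0$.

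The step I expect to be the main obstacle is making the load-balancing estimate fully rigorous: one has to bound, for each individual component, how many of its $\Theta(h_{\epsilonr_{8}}(n)^{k-i})$ visits can be blocked --- using that each blocked visit witnesses a lagging component and that a current minimiser is never blocked --- in order to conclude that \emph{every} $b_a$, not merely $\min_a b_a$, attains the target; and one must check that $p_B$ is genuinely visited the full $\Theta(h_{\epsilonr_{8}}(n)^{k-i})$ times and that the interface transitions used while processing $\by_B$ do not deplete the high-$\bu$-bins. As in the earlier lemmas, threading all the resulting inequalities through a schedule of constants $\epsilonr_{1},\epsilonr_{2},\dots$ is routine but tedious, and it is what ultimately pins down the admissible value of $\epsilonr_{10}$.
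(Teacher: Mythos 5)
Your high-level strategy (condition on a bounded family of high-probability events and then argue deterministically) matches the paper, and events (i)--(iv) are in the right spirit: Lemma~\ref{lemma-reset-procedure-main-bin-nonterminating} handles the $\bs$-main-bin, the choice of $\sigma_{\hat\by_a}^{\epsilonr_{7}}$ handles the $(\bs,\hat\by_a)$-bins, and the $(k-i)$-level-cyclic guarantee for $\sigma_{k-i}^{\epsilonr_{8}}$ handles the tree-bin. The genuine gap is in the deterministic load-balancing step, which you yourself flag as ``routine but tedious''; it is neither, and the observations you offer in its place do not close it. Your argument rests on three claims: a minimiser is never blocked (correct); all $b_a$ stay within a uniform band of width $\bigO(n^{(k-i)(1-\epsilonr_{6})})$; and ``blocked visits of any fixed $\by_a$ are at most $|I|$ times the number of processings of the remaining components.'' The third is essentially circular --- it bounds blocked visits of $\by_a$ by processings of other components, but the number of processings of \emph{each} component is exactly what one must lower-bound --- and the second discards precisely the structure that makes the scheme work: the blocking slack is $\frac{a_a}{a_b}(n^{1-\epsilonr_{6}})^{l+1}$ where $l$ is the \emph{level distance} between $\by_a$ and $\by_b$, not the uniform worst case. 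A minimiser may have to wait an entire sub-cycle of a level-$(k-i-l)$ vertex of the tree before its state is revisited, during which other components at smaller level distance may each be visited $\Theta(h^{l})$ times; the level-indexed slack is what allows the minimum to survive that wait, and a uniform band width neither records nor exploits this.

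The paper's proof (Lemma~\ref{lemma-some-some-something7}) replaces your counting argument by an inductive invariant that is intrinsically tied to the tree: it sets $c_a(\alpha_{..j}) = \kappa_{\by_a}(\alpha_{..j})\cdot n^{1-\epsilonr_{9}}$, where $\kappa_{\by_a}$ counts the tree leaves containing $p_{\by_a}$ encountered so far, and proves $c_a(\alpha_{..j})\leq b_a(\alpha_{..j})$ by induction on $j$. The induction step is carried by the bound $|c_a(\alpha_{..j})-c_b(\alpha_{..j})|\leq 2\,h^{l}\,n^{1-\epsilonr_{9}}$, which holds because $c_a$ and $c_b$ count leaves inside the same depth-$(k-i-l)$ subtree and so cannot drift apart by more than the subtree's leaf count; this estimate is calibrated to the blocking slack $(n^{1-\epsilonr_{9}})^{l+1}$ so that, if $\by_a$ is blocked by $\by_b$, then $c_a\leq c_b + 2h^l n^{1-\epsilonr_{9}}\leq b_b + \frac{a_a}{a_b}(n^{1-\epsilonr_{9}})^{l+1}\leq b_a$. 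Finishing is then immediate, since at the end $\kappa_{\by_a}$ equals the total leaf count $h^{k-i}$ for every $a\in I$, giving $b_a\geq c_a\geq h^{k-i}\,n^{1-\epsilonr_{9}}\geq n^{k-i+1-\epsilonr_{10}}$. If you want to salvage your framing, the ``minimiser never blocked'' fact is a pleasant cross-check, but you still need a potential that sees the tree and grows with the level-indexed leaf count --- i.e.\ something equivalent to $c_a$ --- and at that point you have reproduced the paper's argument rather than replaced it.
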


\begin{proof}
\(\sigma_\bs^{\epsilonr_{9}} \) can only terminate if at least one of the individual bins it uses reaches negative value on at least one counter. Hence it suffices to show that no counter gets depleted in any of those bins while there exists \(a\in I \) such that \(b_a\leq n^{k-i+1-\epsilonr_{10}} \) with probability \(p_\bs'\) such that \(\lim_{n\rightarrow\infty} p_\bs'=0 \). The bins on which \(\sigma_\bs^{\epsilonr_{9}} \) can terminate are the \((\bs,k,m)\)-tree-bin while using a \(\sigma_{k-i}^{\epsilonr_{14}} \) strategy, on a \((\bs,\hat\by_a)\)-bin under  \(\sigma_{\hat\by_a}^{\epsilonr_{13}} \), or in the \(\bs\)-main bin. From Lemma~\ref{lemma-reset-procedure-main-bin-nonterminating} the \(\bs\)-main bin can never reach negative values on counters from \(C_1,\dots,C_{k-i} \), thus the computation on this bin is safe until at least one \(b_a \) reaches a minimum of \(n^{k-i+1-\epsilonr_{4}-\epsilonr_{11}} \) (this is due to the the smallest counter it can deplete being \(\lfloor\frac{n^{k-i+1-\epsilonr_{4}}}{m}\rfloor \) and at any point counter vector in the \(\bs\)-main bin is equal to \(\lfloor\frac{\bv_0}{m}  \rfloor + b_B\cdot \Delta(\by_B) + \sum_{j=1}^{w}b_j\cdot \Delta(\by_j) \)).
The probability \(p_{n}^{\hat{\by_a}} \) of terminating in some \((\br,\hat\by_a)\)-bin before \(b_a\geq n^{k-i+1-\epsilonr_{13}-\epsilonr_{11}} \) satisfies \(\lim_{n\rightarrow\infty} p_n^{\hat{\by_a}}=0 \) from how we chose the strategy \(\sigma_{\hat\by_a}^{\epsilonr_{13}} \).

As any two \(b_a,b_b \) always satisfy \(b_a - \frac{a_a}{a_b}\cdot (n^{1-\epsilonr_{9}})^{l+1} \leq b_b + n^{1-\epsilonr_{9}}   \), where \(l\) is the level distance of \(\by_a\) and \(\by_b \), and the maximal level distance is \(k-i-1 \), it holds \(b_a - \frac{a_a}{a_b}\cdot (n^{1-\epsilonr_{9}})^{k-i-1+1} \leq b_b + n^{1-\epsilonr_{9}}   \).

Thus if it were to hold that there exists \(b_b\leq n^{k-i+1-\epsilonr_{12}} \) with \begin{equation}\label{eq-epsbound-bvcyuviuib2}
	\epsilonr_{12}>\epsilonr_{10}
\end{equation} then for each \(b_a \) it holds \(b_a - \frac{a_a}{a_b}\cdot n^{k-i-(k-i)\cdot \epsilonr_{9}} \leq b_b + n^{1-\epsilonr_{9}} \leq n^{k-i+1-\epsilonr_{12}} + n^{1-\epsilonr_{9}} \)
which gives us that \(b_a\leq \frac{a_a}{a_b}\cdot n^{k-i-(k-i)\epsilonr_{9}}+ n^{k-i+1-\epsilonr_{12}} + n^{1-\epsilonr_{9}} \leq n^{k-i+1-\epsilonr_{10}} \) assuming that \begin{equation}\label{eq-epsbound-bvcyuviuib3}
	\epsilonr_{10}<\epsilonr_{12}<1
\end{equation} Therefore assuming \begin{equation}\label{eq-epsbound-bvcyuviuib4}
\epsilonr_{4}+\epsilonr_{11}<\epsilonr_{12}
\end{equation}  it holds that the probability of \(\sigma_\bs \) depleting a counter on any \((\bs,\hat\by_a)\)-bin or \(\bs\)-main-bin while there exists \(a\in \{B,1,\dots,w \} \) with \(a_a\neq 0 \) and such that \(b_a\leq n^{k-i+1-\epsilonr_{10}} \) is \(p_\bs'\) such that \(\lim_{n\rightarrow\infty} p_\bs'=0 \).

Thus it remains to show that the computation will with high enough probability not die too early on the \((\bs,k,m)\)-tree-bin.

We will now show that if the computation under  \(\sigma_{k-i}^{\epsilonr_{14}} \) on the \((\bs,k,m)\)-tree-bin produces a \((k-i)\)-level \(
h_{\epsilonr_{14}}
(\lfloor \frac{n^{1-\epsilonr_{4}}}{m} \rfloor)\)-cyclic computation \(\alpha\) then assuming \(\sigma_\bs \) does not reach negative value on any counter in a \((\bs,\hat\by_a)\)-bin before \(b_a\geq n^{k-i+1-\epsilonr_{13}-\epsilonr_{11}} \) for some \(a\in I \) then \(\sigma_\bs\)  will reach \(n^{k-i+1-\epsilonr_{10}} \) in \(b_a\) for all \(a\in I\).

Assume towards contradiction that this does not happen.

Let \(G=(V,E,f,g) \) be the \((k-i,(\lfloor \frac{n^{1-\epsilonr_{4}}}{m} \rfloor)^{1-\epsilonr_{14}})\)-tree of \(\alpha \). 
Notice that the concatenation of the computations \(f(v) \) in the leaves of \(G \) is exactly \(\alpha \), thus to each \(\alpha_{..j} \) we can assign the leaves of \(G \) for whose the concatenation of \(f(v) \) corresponds to \(\alpha_{..j} \). 

For each \(a\in I \) let \(c_a\) be a function such that  \(c_a(\alpha_{..j})=\kappa_{\by_a}(\alpha_{..j}) n^{1-\epsilonr_{9}}  \) where \(\kappa_{\by_a}(\alpha_{..j}) \) is the number of leaves \(v\) of \(G\) for which \(g(v) \) is the MEC of \(\A_{k-i} \) that contains \(p_{\by_a} \) and that is included in the leaves corresponding to \(\alpha_{..j} \). Furthermore, let \(b_a(\alpha_{..j}) \) be the value of \(b_a \) right after \(\alpha_{..j} \).

We will now show the following Lemma.

\begin{lemma}\label{lemma-some-some-something7}
	It holds \(c_a(\alpha_{..j})\leq b_a(\alpha_{..j})  \) for all \(j\).
\end{lemma}

Note that this finishes the proof of the Lemma~\ref{lemma-reset-procedure-interface-transition-nk-i-eps101} since if \(\alpha \) is \((k-i,
h_{\epsilonr_{14}}
(\lfloor \frac{n^{1-\epsilonr_{4}}}{m} \rfloor))\)-cyclic then \begin{gather*}
	c_a(\alpha)=
	(h_{\epsilonr_{14}}
	(\lfloor \frac{n^{1-\epsilonr_{4}}}{m} \rfloor))^{k-i}\cdot n^{1-\epsilonr_{9}}=
	(\lfloor(\lfloor \frac{n^{1-\epsilonr_{4}}}{m} \rfloor)^{1-\epsilonr_{14}}\rfloor)^{k-i}\cdot n^{1-\epsilonr_{9}}
	\leq \\
	((n^{1-\epsilonr_{4}})^{1-\epsilonr_{14}})^{k-i}\cdot n^{1-\epsilonr_{9}}
	=
	(n^{1-\epsilonr_{14}-(1-\epsilonr_{14})\cdot \epsilonr_{4}})^{k-i}\cdot n^{1-\epsilonr_{9}}
	=
	n^{k-i-(k-i)\cdot (\epsilonr_{14}-(1-\epsilonr_{14})\cdot \epsilonr_{4})}\cdot n^{1-\epsilonr_{9}}
	=\\
	n^{k-i+1-\epsilonr_{9}-(k-i)\cdot (\epsilonr_{14}-(1-\epsilonr_{14})\cdot \epsilonr_{4})}
	\leq n^{k-i+1-\epsilonr_{10}} 
\end{gather*}

for all sufficiently large \(n\) assuming that \begin{equation}\label{eq-epsbound-bvcyuviuib5}
	\epsilonr_{9}+(\epsilonr_{14}-(1-\epsilonr_{14})\epsilonr_{4})>\epsilonr_{10}
\end{equation} 

We prove Lemma~\ref{lemma-some-some-something7} by induction on \(j\). 

Base case \(j=0 \): it holds \(c_a(\epsilon)=b_a(\epsilon)=0 \), thus base case holds. (here \(\epsilon \) denotes an empty computation)

Induction step: assume Lemma~\ref{lemma-some-some-something7} holds for all \(j\in\{0,\dots,j-1 \} \), we show it hold for \( j\) as well. 

Notice that when \(\sigma_{k-i}^{\epsilonr_{14}} \) reaches some state \(p_{\by_a}\) for \(a\in I \) after \(\alpha_{..j} \) then \(b_{\by_a} \) is either increased by at least \(n^{1-\epsilonr_{9}} \) or there exists some other \(\by_b \) with \(b\in I \) such that  \(b_a(\alpha_{..j}) - \frac{a_a}{a_b}(n^{1-\epsilonr_{9}})^{l+1} \geq b_b(\alpha_{..j})    \) with level distance between \(\by_a \) and \(\by_b \) being \(l \). In the former case \(c_a(\alpha_{..j})\leq  c_a(\alpha_{..j-1})+n^{1-\epsilonr_{9}}\leq b_a(\alpha_{..j-1})+n^{1-\epsilonr_{9}}=b_a(\alpha_{..j})\) and thus the induction step holds. Assume therefore the later case.
Then both \(p_{\by_a} \) and \(p_{\by_b} \) are part of the same MEC \(B_{a,b} \) in \(\A_{k-i-l} \), and thus \(\kappa_{\by_a} \) and \(\kappa_{\by_b} \) both count only leaves that belong to a sub-tree for vertices \(v\) of \(G\) that have distance of \(k-i-l\) from the root and with \(g(v)=B_{a,b} \). But for each such \(v\) the entire sub-tree of \(v\) cycles between \(p_{\by_a} \) and \(p_{\by_b} \) at least \((h_{\epsilonr_{14}}(\lfloor\frac{n^{1-\epsilonr_{4}}}{m}\rfloor))^l \) times, while containing at most \((h_{\epsilonr_{14}}(\lfloor\frac{n^{1-\epsilonr_{4}}}{m}\rfloor))^l \) leaves that count towards   \(\kappa_{\by_a} \) as well as at most \((h_{\epsilonr_{14}}(\lfloor\frac{n^{1-\epsilonr_{4}}}{m}\rfloor))^l \) leaves that count towards \(\kappa_{\by_b}\). Hence it holds

\begin{multline*}
	|c_a(\alpha_{..j})-c_b(\alpha_{..j})|\leq 2(h_{\epsilonr_{14}}(\lfloor\frac{n^{1-\epsilonr_{4}}}{m}\rfloor))^l n^{1-\epsilonr_{9}}
	=
	2(\lfloor(\lfloor\frac{n^{1-\epsilonr_{4}}}{m}\rfloor)^{1-\epsilonr_{14}}\rfloor)^l n^{1-\epsilonr_{9}}
	\leq \\
	2((n^{1-\epsilonr_{4}})^{1-\epsilonr_{14}})^l n^{1-\epsilonr_{9}}
	=
	2(n^{1-\epsilonr_{14}-(1-\epsilonr_{14})\epsilonr_{4}})^l n^{1-\epsilonr_{9}}
	=
	2n^{l-l\epsilonr_{14}-l(1-\epsilonr_{14})\epsilonr_{4}} n^{1-\epsilonr_{9}}
	=\\
	2n^{l+1-\epsilonr_{9}-l\epsilonr_{14}-l(1-\epsilonr_{14})\epsilonr_{4}}
	\leq
	n^{l+1-\epsilonr_{31}} \end{multline*} assuming \begin{equation}\label{eq-epsbound-bvcyuviuib6}
	-\epsilonr_{31}>-\epsilonr_{9}-k\epsilonr_{14}-k(1-\epsilonr_{14})\epsilonr_{4}
	\end{equation}

If \(l=0\) then either there is some \(b'\) at level distance at least \(1\) from \(a \) that is also preventing \(b_a\) from being increased, or \(b_b(\alpha_{..j})>b_b(\alpha_{..j-1}) \) which then contradicts that \(b_a(\alpha_{..j})=b_a(\alpha_{..j-1}) \) (since in such case both \(b_a\) and \(b_b \) have the same constraints on when they are being increased). Thus we can wlog. assume that \(p_{\by_a}\neq p_{\by_b} \) and thus also \(b_{b}(\alpha_{..j})=b_b(\alpha_{..j-1}) \).

And from the induction assumption, we have that \(b_b(\alpha_{..j})=b_b(\alpha_{..j-1}) \geq c_b(\alpha_{..j-1}) \) as well as \(b_a(\alpha_{..j}) \geq c_a(\alpha_{..j-1}) \)

If \(c_a(\alpha_{..j})=c_a(\alpha_{..j-1}) \) then it holds \(b_a(\alpha_{..j})=b_a(\alpha_{..j-1})\leq c_a(\alpha_{..j-1})=c_a(\alpha_{..j}) \). Thus we can assume that \(c_a(\alpha_{..j})=c_a(\alpha_{..j-1})+n^{1-\epsilonr_{9}} \). But this means that \(c_a(\alpha_{..j-1})=c_a(\alpha_{..j})\) (since both \(\kappa_{\by_a} \) and \(\kappa_{\by_b} \) count different leaves).

Thus it holds
\begin{multline*}
	b_a(\alpha_{..j}) - \frac{a_a}{a_b}(n^{1-\epsilonr_{9}})^{l+1}=b_a(\alpha_{..{j-1}}) - \frac{a_a}{a_b}(n^{1-\epsilonr_{9}})^{l+1} \geq b_b(\alpha_{..j})
	\geq  
	c_b(\alpha_{..j-1})  
	=
	c_b(\alpha_{..j})  
\end{multline*}


which gives us \[b_a(\alpha_{..j}) 
\geq  
c_b(\alpha_{..j}) + \frac{a_a}{a_b}(n^{1-\epsilonr_{9}})^{l+1}
=
c_b(\alpha_{..j}) + \frac{a_a}{a_b}n^{l+1-(l+1)\epsilonr_{9}}
\]

but from 
\(|c_a(\alpha_{..j})-c_b(\alpha_{..j})|\leq n^{l+1-\epsilonr_{31}} \)
we get  \[c_a(\alpha_{..j})
\leq 
c_b(\alpha_{..j}) +n^{l+1-\epsilonr_{31}} 
\leq 
c_b(\alpha_{..j}) + \frac{a_a}{a_b}n^{l+1-(l+1)\epsilonr_{9}}
\leq 
b_a(\alpha_{..j})   \]

assuming \begin{equation}\label{eq-epsbound-bvcyuviuib7}
	-\epsilonr_{31}<-(k+1)\epsilonr_{9} 
\end{equation} Thus the induction step holds for all sufficiently large \(n\).
\end{proof}

\subsubsection{Analysis of the strategy \(\sigma_{\epsilon} \) from Lemma~\ref{lemma-rand-walk-double-exp-zero}}

In this Section we show that the strategy \(\sigma_{\epsilon} \) as described above when initiated in a pointing configuration with counters set to \(\vec{n} \) points to \(\M_{\hat{\br}} \) at least \(n^{k+1-\epsilon} \) times with probability \(p_n\) such that \(\lim_{n\rightarrow\infty}p_n=1 \).



We will show that probability of any bin depleting any counter before \(n^{k+1-\epsilon} \) pointings at \(\M_{\hat{\br}} \) goes to \(0 \) as \(n \) goes to \(\infty \), which would prove our claim as the whole computation surely cannot terminate sooner then any of the individual bins (once \(\bv_0 \) is reached and the bins are formed).

First note that the probability of \(\sigma_\epsilon \) dying before reaching \(\bv_0 \), and subsequently forming the bins, using \(\pi \) is \(p_{n,1} \) satisfying \(\lim_{n \rightarrow \infty}p_{n,1}=0 \).

The probability that moving to \(p_B \) needs more than \(n^{\epsilonr_{30}} \) steps is \(p_{n,2} \) such that \(\lim_{n\rightarrow\infty} p_{n,2}=0 \), this follows from Markov inequality since the expected number of steps to reach the destination state is a constant independent of \(n\).

Hence probability of \(\sigma_\epsilon \) dying before reaching \(p_B \) with all the bins initiated to \(\lfloor \frac{\bv_0}{m} \rfloor \) is at most \(p_{n,1}+p_{n,2} \) and \(\lim_{n\rightarrow\infty}p_{n,1}+p_{n,2}=0 \).

\textbf{High-bin:}
Note that the probability \(p_{n,3}\) of any counter in the high-bin running out before \(n^{k+1-\epsilon} \) pointings at \(\M_{\hat{\br}} \) also goes to \(0\) as \(n\rightarrow\infty \), since this bin effectively simulates a computation on the VASS Markov chain \(\M_{\hat{\br}} \) restricted only to counters that are initialized to at least \(\lfloor \frac{n^{\lceil \frac{k+1}{2}\rceil-\epsilonr_{4}}}{m} \rfloor \) while the effect of \(\M_{\hat{\br}} \) on all these counters is either zero-bounded or zero-unbounded. And from theorem~\ref{thm-VASS-Markov-chain-analysis} we have that this computation has a quadratic lower asymptotic estimate on it's length before terminating. And for all sufficiently large \(n\) it holds 
\begin{gather*}
(\lfloor \frac{n^{\lceil \frac{k+1}{2}\rceil-\epsilonr_{4}}}{m} \rfloor)^{2-\epsilonr_{15}}  
\geq
(\lfloor \frac{n^{\frac{k+1}{2}-\epsilonr_{4}}}{m} \rfloor)^{2-\epsilonr_{15}} 
\geq 
( \frac{n^{\frac{k+1}{2}-\epsilonr_{4}}}{m} -1)^{2-\epsilonr_{15}}
\geq 
( \frac{n^{\frac{k+1}{2}-\epsilonr_{4}}}{2m})^{2-\epsilonr_{15}} 
\geq \\
\frac{n^{(2-\epsilonr_{15})\frac{k+1}{2}-(2-\epsilonr_{15})\epsilonr_{4}}}{(2m)^{2-\epsilonr_{15}}}
=
\frac{n^{k+1-\epsilonr_{15}\frac{k+1}{2}-(2-\epsilonr_{15})\epsilonr_{4}}}{(2m)^{2-\epsilonr_{15}}}
\geq  
n^{k+1-\epsilon} \end{gather*}

assuming \begin{equation}\label{eq-epsbound-bvcyuviuib8}
	-\epsilon<-\epsilonr_{15}\frac{k+1}{2}-(2-\epsilonr_{15})\epsilonr_{4} 
\end{equation}  Thus probability \(p_{n,3}\) of any counter in the high-bin becoming negative before \(n^{k+1-\epsilon} \) pointings at \(\M_{\hat{\br}} \) satisfies \(\lim_{n\rightarrow\infty} p_{n,3}=0 \).

\textbf{Main-bin:}
The only way for the main-bin to drop below \(0\) in any counter \(c\) is if while iterating \(\M_{\hat{\br}} \) the effect on c between two visits of \(p_\br\) is at most \(-n^{1-\epsilonr_{16}} \), for \begin{equation}\label{eq-epsbound-bvcyuviuib9}
\epsilonr_{16}>\epsilonr_{4}
\end{equation} which is only possible if this computation is longer than \(n^{1-\epsilonr_{17}} \) for \begin{equation}\label{eq-epsbound-bvcyuviuib10}
\epsilonr_{17}>\epsilonr_{16}
\end{equation} (since each step can change the counters by at most a constant value). And the probability of the iterations of \(\M_{\hat{\br}} \) producing such a path within first \( n^{k+1-\epsilon}\) iterations goes to \(0\) as \(n\rightarrow \infty \) since each at most constant number of steps the probability of having returned to \(p_\br\) is lower bounded by a positive constant, and hence if \(X\) denotes the numbers of such sub-computations on \(\M_{\hat{\br}}\) that do not contain \(p_{\br}\) and are of length \(n^{1-\epsilonr_{17}}\) withing the first \(n^{k+1-\epsilon} \) pointings has expectation \(\E_{p_\br\vec{n}}^{\sigma_\br}[X]\len^{k+1-\epsilon}a^{n^{1-\epsilonr_{17}}}  \) for some \(a<1 \), and thus from the Markov inequality it holds \(\E_{p_\br\vec{n}}^{\sigma_\br}[X\geq 1]\leq n^{k+1-\epsilon}a^{n^{1-\epsilonr_{17}}} \)  and \(\lim_{n\rightarrow\infty}n^{k+1-\epsilon}a^{n^{1-\epsilonr_{17}}}=0 \).

\textbf{\(\bu\)-bin:}
Note that whenever \(\sigma_\epsilon \) points at \(\M_{\hat{\br}} \) while in \( p_\br\) it holds \(|a_\bu|< n^{i-\epsilonr_{5}} \) for each \(\bu\in BASIS_i \). Hence the minimal value of each counter \(c\in C_j \) for \(j\leq k-i \)  in any \(\bu\)-bin at this point is \(\lfloor \frac{n^{j-\epsilonr_{4}}}{m}\rfloor-n^{i-\epsilonr_{18}} \) for \begin{equation}\label{eq-epsbound-bvcyuviuib11}
	\epsilonr_{18}<\epsilonr_{5}
\end{equation}
 which for \begin{equation}\label{eq-epsbound-bvcyuviuib12}
 	\epsilonr_{4}< \epsilonr_{18}
 \end{equation} is larger than \(0\) for all sufficiently large \(n\).

Thus if \(\sigma_{\epsilon} \) does not point at \(\M_{\hat{\br}} \) at least \(n^{k+1-\epsilon} \) times with high enough probability then it must die while performing some \(\bs \)-reset-procedure. 

We shall divide the following proof into three parts, first we will show that with high enough probability \(\sigma_{\epsilon} \) cannot deplete any counter from \(C_1,\dots,C_{\lfloor \frac{k}{2} \rfloor} \) during a \(\bs\)-reset for \(\bs\in BASIS_i\cup -BASIS_i \) before pointing to \(\M_{\br} \) at least \(n^{k+1-\epsilon} \) times, and then we shall show the same separately for the remaining counters from \(C_{\lfloor \frac{k}{2} \rfloor +1},\dots,C_{k-i} \) and then finally for the counters from \(C_{k-i+1},\dots,C_{k-1}, C_{k+} \).

\textbf{For the \(C_1,\dots,C_{\lfloor \frac{k}{2} \rfloor}  \) part:} 
First let us show that the \(\bu\)-bin corresponding to either \(\bs \) or \(-\bs \) (only one of these exists) cannot get depleted on these counters during a \(\bs \)-reset-procedure. At each point it holds that the counters vector in the \(\bu\)-bin is equal to \(\lfloor \frac{\bv_0}{m} \rfloor + a_\bu \bu \), and since during the \(\bs\)-reset-procedure the value \(|a_\bu| \) can only decrease no counter from \(C_1,\dots,C_{\lfloor \frac{k}{2} \rfloor}  \) can ever become negative during any reset procedure in the \(\bu\)-bin unless this counter was negative before this reset procedure even began.

Thus from Lemmas \ref{lemma-reset-procedure-main-bin-nonterminating} and \ref{lemma-reset-procedure-interface-transition-nk-i-eps101} the probability \(p_{n,\bs} \) of any counter \(c\in C_1,\dots,C_{\lfloor \frac{k}{2} \rfloor}   \) becoming negative in any bin modified during the \(\bs\)-reset-procedure before \(\bs\)-reset-procedure takes the ``interface transition'' at least \(n^{k-i+1-\epsilonr_{10}} \) times satisfies \(\lim_{n\rightarrow\infty} p_{n,\bs}=0 \). And since the \(\bs\)-reset-procedure gets paused every at least	   \(n^{i-\epsilonr_{5}}+n^{\epsilonr_{17}}\leq n^{i-\epsilonr_{19}}  \) times it uses the ``interface transition'', assuming \begin{equation}\label{eq-epsbound-bvcyuviuib13}
\epsilonr_{19}<\epsilonr_{5}
\end{equation} \( \) and \begin{equation}\label{eq-epsbound-bvcyuviuib14}
\epsilonr_{17}<1-\epsilonr_{5} 
\end{equation}  it holds that \( p_{n,\bs}\) is also an upper bound on the probability of  of any counter \(c\in C_1,\dots,C_{\lfloor \frac{k}{2} \rfloor}   \) becoming negative in any bin used by the \(\bs\)-reset-procedure before this \(\bs\)-reset-procedure  procedure is paused and unpaused at least \(\frac{n^{k-i+1-\epsilonr_{10}}}{n^{i-\epsilonr_{19}} }=n^{k-2i+1-\epsilonr_{10}+\epsilonr_{19}}\leq n^{k-2i+1-\epsilonr_{20}} \) times, where \begin{equation}\label{eq-epsbound-bvcyuviuib15}
\epsilonr_{20}<\epsilonr_{10}-\epsilonr_{19}
\end{equation} And from Lemma~\ref{lemma-resets-enough-for-us} there the probability of there being at least \(n^{k-2i+1-\epsilonr_{20}} \) such resets converges to \(1\) as \(n\) goes to \(\infty\).

\textbf{For the \(C_{\lfloor \frac{k}{2} \rfloor+1} ,\dots,C_{k-i}\)  part:}
From Lemma~\ref{lemma-reset-procedure-interface-transition-nk-i-eps101} we have that the probability of any counter from  \(C_{\lfloor \frac{k}{2} \rfloor+1} ,\dots,C_{k-i}\) being depleted in either bin directly modified by \(\sigma_\bs^{\epsilonr_{9}} \) in a \(\bs\)-reset-procedure is \(p_{n,\br,1} \) with \(\lim_{n\rightarrow\infty}p_{n,\br,1}=0 \). The only remaining bin where any counter from \(C_{\lfloor \frac{k}{2} \rfloor+1} ,\dots,C_{k-i}\) is modified during a \(\bs\)-reset-procedure is the high-\(\bu\)-bin, and it holds that the total effect of a \(\bs\)-reset-procedure from the moment it gets unpaused until it gets paused again on the counter \(c\in C_j\) for \(j\in \{\lfloor \frac{k}{2} \rfloor+1 ,\dots,C_{k-1}\} \) in the high-\(\bu\)-bin is \(a_\bu\bu(c) \) where \(\bu\in BASIS_i\cap \{\br,-\br \} \) and \(a_\bu \) is the value of \(a_\bu \) at the moment the \(\bs\)-reset-procedure is unpaused. And let \(X_\bu^\kappa \) be the random variable corresponding to the value of \(a_\bu \) at the moment either \(\bs\)-reset-procedure or the \((-\bs)\)-reset-procedure is unpaused for the \(\kappa\)-th time. Furthremore, let \(X_\kappa \) be the value of the counter \(c\) in the high-\(\bu\)-bin at the moment of the \(\kappa\)-th pausing of either \(\bu\)-reset-procedure or \((-\bu)\)-reset-procedure (i.e., either one counts).

Then since from the Lemma~\ref{lemma-a-bu-zero-exp} we have \(\E_n(X_\bu)=0 \) and it also holds that \(|X^\kappa_\bu|\leq n^{i-\epsilonr_{5}}+n^{\epsilonr_{17}} \) as well as  \(X_\kappa=X_{\kappa-1}+X^\kappa_\bu \) it holds that \(X_0,X_1,X_2,\dots \) is a
martingale with \(|X_{\kappa}-X_{\kappa+1}|\leq n^{i-\epsilonr_{5}}+n^{\epsilonr_{17}} \). 
Therefore we can apply Azuma's inequality to obtain that \[\prob_{p\vec{n}}^{\sigma_{\epsilon}}(|X_{\kappa}-X_0|\geq n^{j-\epsilonr_{23}})\leq 2 \exp(\frac{-(n^{j-\epsilonr_{23}})^2 }{2\sum_{r=1}^{\kappa} (n^{i-\epsilonr_{5}}+n^{\epsilonr_{17}})^2 }) \] Which for \(\kappa=n^{k-2i+1-\epsilonr_{22}} \) gives us 
\begin{gather*}
\prob(|X_{n^{k-2i+1-\epsilonr_{22}}}-X_0|\geq n^{j-\epsilonr_{23}})\leq 2 \exp(\frac{-n^{2j-2\epsilonr_{23}} }{2\sum_{r=1}^{n^{k-2i+1-\epsilonr_{22}}} (n^{i-\epsilonr_{5}}+n^{\epsilonr_{17}})^2 }) 
=\\
2 \exp(\frac{-n^{2j-2\epsilonr_{23}} }{2n^{k-2i+1-\epsilonr_{22}} (n^{i-\epsilonr_{5}}+n^{\epsilonr_{17}})^2 }) 
=
2 \exp(\frac{-n^{2j-k+2i-1+\epsilonr_{22}-2\epsilonr_{23}} }{2 (n^{i-\epsilonr_{5}}+n^{\epsilonr_{17}})^2 }) 
\leq \\
2 \exp(\frac{-n^{2j-k+2i-1+\epsilonr_{22}-2\epsilonr_{23}} }{ (n^{i-\epsilonr_{21}})^2 }) 
\end{gather*}

where we assume \begin{equation}\label{eq-epsbound-bvcyuviuib16}
	\epsilonr_{21}<\epsilonr_{5}
\end{equation}    and \begin{equation}\label{eq-epsbound-bvcyuviuib17}
\epsilonr_{17}<1-\epsilonr_{5} 
\end{equation}  We can further rewrite 

\begin{gather*}
2 \exp(\frac{-n^{2j-k+2i-1+\epsilonr_{22}-2\epsilonr_{23}} }{ (n^{i-\epsilonr_{21}})^2 }) 
=
2 \exp(\frac{-n^{2j-k+2i-1+\epsilonr_{22}-2\epsilonr_{23}} }{ n^{2i-2\epsilonr_{21}} }) 
=\\
2 \exp(-n^{2j-k+2i-2i-1+2\epsilonr_{21}+\epsilonr_{22}-2\epsilonr_{23}}  ) 
=
2 \exp(-n^{2j-k-1+2\epsilonr_{21}+\epsilonr_{22}-2\epsilonr_{23}} ) 
\end{gather*}

and since \(j\geq \lfloor\frac{k}{2} \rfloor+1 \) it holds \(2j-k-1\geq 0 \), thus 

\[
2 \exp(-n^{2j-k-1+2\epsilonr_{21}+\epsilonr_{22}-2\epsilonr_{23}} ) 
\leq 
2 \exp(-n^{0+2\epsilonr_{21}+\epsilonr_{22}-2\epsilonr_{23}} ) 
\]

which for \begin{equation}\label{eq-epsbound-bvcyuviuib171}
2\epsilonr_{21}+\epsilonr_{22}-2\epsilonr_{23}>0  
\end{equation} satisfies 
\(\lim_{n\rightarrow\infty}	2 \exp(-n^{0+2\epsilonr_{21}+\epsilonr_{22}-2\epsilonr_{23}} ) =0 \).  And from Lemma~\ref{lemma-resets-enough-for-us} there the probability of there being at least \(n^{k-2i+1-\epsilonr_{22}} \) such "resets" converges to \(1\) as \(n\) goes to \(\infty\).

\textbf{For the \(C_{k+}\) part:}
From Lemma~\ref{lemma-reset-procedure-interface-transition-nk-i-eps101} we have that the probability of any counter from  \(C_{k+}\) being depleted in any bin directly modified by \(\sigma_\bs^{\epsilonr_{9}} \) in a \(\bs\)-reset-procedure is \(p_{n,\bs,1} \) with \(\lim_{n\rightarrow\infty}p_{n,\bs,1}=0 \). The only remaining bin where any counter from \(C_{k+}\) is modified during a \(\bs\)-reset-procedure is the high-\(\bu\)-bin. But each counter  \(c\in C_{k+} \) starts at at least \(\lfloor \frac{n^{k-\epsilonr_{4}}}{m} \rfloor \) in the high-\(\bu\)-bin and during every \(\bu\)-reset or \((-\bu)\)-reset it gets decreased by at most \(u(n^{i-\epsilonr_{5}}+n^{\epsilonr_{24}}) \), where \(\bu\in BASIS_i \) and \(u=\max \{|\Delta(\bx_{i,\bu}^+)(c)|,|\Delta(\bx_{i,\bu}^-)(c)| \} \). Thus the counter \(c\) cannot get depleted in the high-\(\bu\)-bin until a \(\bu\)-reset or \((-\bu)\)-reset is performed at least \[
\frac{\lfloor \frac{n^{k-\epsilonr_{4}}}{m} \rfloor}{ u(n^{i-\epsilonr_{5}}+n^{\epsilonr_{24}}) } 
\geq 
\frac{\frac{n^{k-\epsilonr_{4}}}{m}-1 }{ u(n^{i-\epsilonr_{5}}+n^{\epsilonr_{24}}) }
\geq 
\frac{\frac{n^{k-\epsilonr_{4}}}{m}-1 }{ n^{i-\epsilonr_{25}} }
\]
assuming \begin{equation}\label{eq-epsbound-bvcyuviuib18}
	\epsilonr_{25}<\epsilonr_{5}
\end{equation}  and\begin{equation}\label{eq-epsbound-bvcyuviuib19}
1-\epsilonr_{25}>\epsilonr_{17}
\end{equation} 

\[
\frac{\frac{n^{k-\epsilonr_{4}}}{m}-1 }{ n^{i-\epsilonr_{25}} }
\geq
\frac{\frac{n^{k-\epsilonr_{4}}}{2m} }{ n^{i-\epsilonr_{25}} }
=
\frac{n^{k-i+\epsilonr_{25}-\epsilonr_{4}} }{ 2m }
\geq 
n^{k-i-\epsilonr_{26}}	
\]
assuming  \begin{equation}\label{eq-epsbound-bvcyuviuib20}
	\epsilonr_{26}\leq \epsilonr_{4}-\epsilonr_{25}
\end{equation}  and\begin{equation}\label{eq-epsbound-bvcyuviuib21}
	\epsilonr_{17}<1
\end{equation}  
And from Lemma~\ref{lemma-resets-enough-for-us} there the probability of there being at least \(n^{k-i-\epsilonr_{26}}	 \) such "resets" converges to \(1\) as \(n\) goes to \(\infty\).

\begin{lemma}\label{lemma-a-bu-zero-exp}
let \(X_\bu \) be the random variable corresponding to the value of \(a_\bu \) at the moment either \(\bu\)-reset-procedure or the \((-\bu)\)-reset-procedure is unpaused. Then it holds \(\E_n(X_\bu)=0 \).
\end{lemma}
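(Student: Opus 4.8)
The plan is to recognise the bookkeeping variable $a_\bu$ as a stopped random walk with independent, mean-zero increments, and to apply the optional stopping theorem (equivalently, Wald's identity). First I would isolate the dynamics of $a_\bu$ over the whole run of $\sigma_\epsilon$. Inspecting the definition of $\sigma_\epsilon$, the value $a_\bu$ changes only when: (i) the pointing at $\M_{\hat{\br}}$ returns to $p_B$ along a path $\alpha$ with $\length(\alpha)\le n^{\epsilonr_1}$ and effect $\bv$ on the counters $C_1\cup\dots\cup C_{\lfloor\frac{k}{2}\rfloor}$, in which case $f_\bu(\bv^{C_1,\dots,C_{\lfloor\frac{k}{2}\rfloor}})$ is added — this is well defined precisely because we assume $\support^{C_1,\dots,C_{\lfloor\frac{k}{2}\rfloor}}(\hat{\br})\subseteq R^{B_{\br},k}$; and (ii) a $\bu$-reset or $(-\bu)$-reset performs an interface transition, each of which moves $a_\bu$ by one towards $0$, such a reset being run until $a_\bu=0$. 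During every other activity of $\sigma_\epsilon$ — the pointing at $\M_{\hat{\br}}$ is suspended whenever a reset in any direction is performed, and resets for basis vectors $\bw\ne\pm\bu$ leave $a_\bu$ untouched — the value $a_\bu$ is frozen. Hence, between the completion of the $(\kappa{-}1)$-th $\pm\bu$-reset (after which $a_\bu=0$; and likewise at the start of the pointing phase, where $a_\bu=0$) and the $\kappa$-th moment a $\pm\bu$-reset is unpaused, the values of $a_\bu$ read off at the successive $p_B$-returns form a partial-sum process $0,\,f_\bu(\bv_1^{C_1,\dots}),\,f_\bu(\bv_1^{C_1,\dots})+f_\bu(\bv_2^{C_1,\dots}),\dots$, and the $\kappa$-th unpause occurs exactly when this process first leaves $(-n^{i-\epsilonr_5},n^{i-\epsilonr_5})$, at which point its value is $X_\bu$ (precisely, $X_\bu^\kappa$ for the $\kappa$-th such event).

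The increments of this walk are independent and identically distributed, since each $\M_{\hat{\br}}$-excursion from $p_B$ back to $p_B$ is a fresh copy by the Markov property ($\hat{\br}$ is centred at $p_B$); and each increment equals $f_\bu$ of the effect of one such excursion, which has mean $0$ by the first part of Lemma~\ref{lemma-rand-walk-double-exp-zero} (using linearity of $f_\bu$ and $\E_{p_{\br}\vec{0}}^\sigma[\realeffect_{\hat{\br}}]=\vec{0}$). If $f_\bu(\realeffect_{\hat{\br}})$ is identically $0$ then $a_\bu$ is identically $0$, no $\pm\bu$-reset is ever unpaused, and the statement is vacuous; otherwise this is a non-degenerate mean-zero random walk with bounded increments (only short excursions contribute, so increments are bounded by a polynomial in $n$), whose first exit time $\tau_\kappa$ from $(-n^{i-\epsilonr_5},n^{i-\epsilonr_5})$ has finite expectation. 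The optional stopping theorem, applied to the walk started at $0$ just after the $(\kappa{-}1)$-th $\pm\bu$-reset, then gives $\E[\,a_\bu\text{ at }\tau_\kappa\mid\text{the history up to that reset}\,]=0$; iterating over $\kappa$ with the tower rule yields $\E_n(X_\bu)=0$.

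The delicate point, and the one I expect to need the most care, is the clause "$\sigma_\epsilon$ further behaves arbitrarily" that fires as soon as some $\M_{\hat{\br}}$-excursion exceeds length $n^{\epsilonr_1}$: on that event $a_\bu$ stops being a clean martingale and no further unpause is well defined. I would handle it by applying the optional stopping theorem instead to the process $a_\bu$ stopped at $\tau_\kappa\wedge\sigma_{\mathrm{long}}$, where $\sigma_{\mathrm{long}}$ is the first overlong excursion, under the convention $X_\bu^\kappa:=0$ on $\{\sigma_{\mathrm{long}}\le\tau_\kappa\}$: the stopped process is a bounded martingale with an almost surely finite stopping time, so has expectation $0$, and the omitted contribution of $\{\sigma_{\mathrm{long}}\le\tau_\kappa\}$ is controlled, via the tail estimates $\prob_{p_{\br}\vec{0}}^\sigma[\length(\alpha_{\hat{\br}})>n^{\epsilonr_1}]\le a^{n^{\epsilonr_1}}$ and $\big|\E_{p_{\br}\vec{0}}^\sigma[f_\bu(\realeffect_{\hat{\br}}^{\epsilonr_1})]\big|\in\bigO(a^{n^{\epsilonr_2}})$ from Lemma~\ref{lemma-rand-walk-double-exp-zero} together with the at most polynomially many excursions needed to reach the threshold $n^{i-\epsilonr_5}$, by a doubly-exponentially small quantity — which is negligible where the lemma is invoked, namely as the drift term of the Azuma estimate in the $C_{\lfloor\frac{k}{2}\rfloor+1},\dots,C_{k-i}$ part. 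The remaining routine checks are that the per-step increments are genuinely bounded, that $\tau_\kappa$ has finite (indeed polynomial) expected value for a non-degenerate mean-zero walk, and that the freezing of $a_\bu$ during resets of other directions does not disturb the i.i.d.\ structure of the increments — all of which are immediate from the description of $\sigma_\epsilon$.
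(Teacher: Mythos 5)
Your outline is the right one and matches the paper's strategy at the level of ideas: isolate the bookkeeping variable $a_\bu$ as an (approximate) random walk whose increments come from $f_\bu$ of $\M_{\hat{\br}}$-excursions, apply optional stopping at the exit of $(-n^{i-\epsilonr_5},n^{i-\epsilonr_5})$, and control the deviation from exact mean-zero-ness caused by the length-$n^{\epsilonr_1}$ truncation via the exponential estimate of Lemma~\ref{lemma-rand-walk-double-exp-zero}. You also correctly flag the truncation as the delicate point and correctly observe that the walk is mean-zero only for the \emph{untruncated} excursion effect.

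Where you slip is in the precise martingale claim. You assert ``the stopped process is a bounded martingale'' for $a_\bu$ stopped at $\tau_\kappa\wedge\sigma_{\mathrm{long}}$. But the increment of $a_\bu$ per excursion (with long excursions contributing $0$, as the strategy prescribes) is $f_\bu(\realeffect_{\hat{\br}}^{\epsilonr_1})$, whose conditional expectation given the past is $\E_{p_\br\vec{0}}^\sigma[f_\bu(\realeffect_{\hat{\br}}^{\epsilonr_1})]$ --- exponentially small but \emph{not} zero, precisely because the truncation removes the heavy-tailed long excursions that balance the short ones. So the stopped $a_\bu$ is a bounded process but not a martingale; it is a martingale plus a deterministic drift of exponentially small size per step. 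The partial-sum process $N_j=\sum_{l\le j}f_\bu(\realeffect_{\hat{\br},l})$ built from \emph{untruncated} effects, which you appeal to via ``mean $0$ by the first part of Lemma~\ref{lemma-rand-walk-double-exp-zero},'' \emph{is} a martingale, but it is not bounded (a long excursion can contribute an arbitrarily large jump), so applying OST to it stopped at $\sigma_{\mathrm{long}}$ still requires an integrability argument you do not supply. The paper resolves exactly this tension by a different (and cleaner) device: it conceptually modifies $\sigma_\epsilon$ to subtract the small constant $\E_{p_\br\vec{0}}^\sigma[f_\bu(\realeffect_{\hat{\br}}^{\epsilonr_{17}})]$ from $a_\bu$ at each short excursion, so that the resulting process $S_i$ is a genuine martingale that is \emph{also} bounded on $\{\tau\le n^{k+1-\epsilon}\}$; OST then applies without further ado, and the cumulative effect of the modification over at most $n^{k+1-\epsilon}$ steps is $n^{k+1-\epsilon}\cdot\bigO(a^{n^{\epsilonr_{27}}})\to 0$, hence negligible. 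Your ``stop at $\sigma_{\mathrm{long}}$ and set $X_\bu^\kappa:=0$ there'' can be made to work by carefully separating the martingale $N$ from the bounded process $a_\bu$ and bounding the mismatch on $\{\sigma_{\mathrm{long}}\le\tau_\kappa\}$, but as written it conflates boundedness and the martingale property in a single object that has only one of the two.

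Two smaller remarks. First, your vacuous case (``$f_\bu(\realeffect_{\hat{\br}})$ identically $0$'') is fine but not needed, since the lemma is only invoked inside Lemma~\ref{lemma-resets-enough-for-us}, where such $\bu$ never trigger a reset. Second, you correctly note that the freezing of $a_\bu$ during $\pm\bw$-resets for $\bw\neq\pm\bu$ is harmless: between unfreezings, the $\M_{\hat{\br}}$-excursions are still i.i.d.\ by the strong Markov property at $p_\br$, which is the property that drives both your argument and the paper's.
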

\begin{proof}
(note that this is not exactly true, but we can show that it does hold if we modify \(\sigma_{\epsilon} \) to always subtract a very small constant equal to \(\E_{p_{\br}\vec{0}}^\sigma[f_\bu(\realeffect_{\hat{\br}})] \) from \(a_\bu \) whenever it modifies it upon reaching \(p_\br \) from the iterating on \(\M_{\hat{\br}} \), and over the first \(n^{k+1-\epsilon} \) pointings at \(\M_{\hat{\br}} \) the total effect of this addition will not even be enough to add up to \(1\), thus we can safely ignore it. We leave it out of the definition of \(\sigma_\epsilon \) as it would considerably increase the technical difficulty of all the above proofs)

Notice that \(X_\bu \) (with the aforementioned modification) corresponds to the random variable \(S_\tau \) defined as follows:
\(S_0=0\),  \(S_i=S_{i-1}+f_\bu(\realeffect_{\hat{\br}}^{\epsilonr_{17}})-\E_{p_{\br}\vec{0}}^\sigma[f_\bu(\realeffect_{\hat{\br}})] \), where \(f_\bu(\realeffect_{\hat{\br}}^{\epsilonr_{17}}) \) is as defined in Lemma~\ref{lemma-rand-walk-double-exp-zero},  and \(\tau\) is a stopping time such that \(|S_\tau|\geq n^{i-\epsilonr_{5}}  \). From 	Lemma~\ref{lemma-rand-walk-double-exp-zero} we have that 	\(\big|\E_{p_{\br}\vec{0}}^\sigma[f_\bu(\realeffect_{\hat{\br}}^{\epsilonr_{17}})]\big|\in \bigO(a^{ n^{\epsilonr_{27}}}) \) for some \(a<1 \) and \begin{equation}\label{eq-epsbound-bvcyuviuib22}
	\epsilonr_{17}>\epsilonr_{27}
\end{equation}  and thus \(\big|\E_{p_{\br}\vec{0}}^\sigma[f_\bu(\realeffect_{\hat{\br}}^{\epsilonr_{17}})]\leq ba^{n^{\epsilonr_{27}}}\) for some constant \(b \) and for all sufficiently large \(n\).



Clearly it holds \(\E_{p_{\br}\vec{0}}^\sigma[S_i]=0 \) and therefore from the optional stopping theorem it holds \(0=\E_{p_{\br}\vec{0}}^\sigma[S_0]=\E_{p_{\br}\vec{0}}^\sigma[S_\tau] \).  If \(\tau>n^{k+1-\epsilon} \) then \(\sigma_{\epsilon} \) had to already point at \(\M_{\hat{\br}} \) at least \(n^{k+1-\epsilon}\) times, and thus we do not have to consider such case. And if \(\tau\leq n^{k+1-\epsilon} \) then it holds that \(\tau|\E_{p_{\br}\vec{0}}^\sigma[f_\bu(\realeffect_{\hat{\br}})]|\leq n^{k+1-\epsilon}ba^{n^{\epsilonr_{27}}}\) and it holds \(\lim_{n\rightarrow\infty}n^{k+1-\epsilon}ba^{n^{\epsilonr_{27}}}=0 \), thus the difference caused by our modification of \(\sigma_\epsilon \) is negligible.		
\end{proof}

\textbf{Number of resets:}
Thus it now suffices to show that the probability of there being at least \(n^{k+1-\epsilon} \) pointings at \(\M_{\hat{\br}} \) before there are at least \(\min(n^{k-i-\epsilonr_{26}}, n^{k-2i+1-\epsilonr_{20}}, n^{k-2i+1-\epsilonr_{22}}	)\geq  \lfloor n^{k-2i+1-\epsilonr_{28}}\rfloor \)  \(\bu\)-resets or \((-\bu)\)-resets for each \(\bu\in BASIS_i \) goes to \(1\) as \(n \) goes to infinity, where \begin{equation}\label{eq-epsbound-bvcyuviuib23}
	\epsilonr_{28}>\epsilonr_{20}
\end{equation}
\begin{equation}\label{eq-epsbound-bvcyuviuib24}
\epsilonr_{28}>\epsilonr_{22}
\end{equation}
\begin{equation}\label{eq-epsbound-bvcyuviuib25}
	\epsilonr_{28}-1<-\epsilonr_{26}
\end{equation} 



\begin{lemma}\label{lemma-resets-enough-for-us}
For each \(\bu\in BASIS_i \) the probability of \(\sigma_{\epsilon} \) making at least \(n^{k+1-\epsilon} \) poinitngs at \(\M_{\hat{\br}} \) before there are \(\lfloor n^{k-2i+1-\epsilonr_{28}}\rfloor\) \(\bu\)-resets or "\(-\bu\)"-"resets" goes to \(1\) as \(n\) goes to \(\infty \). 
\end{lemma}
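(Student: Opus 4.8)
The plan is to recognise the coordinate $a_\bu$ maintained by $\sigma_\epsilon$ as an essentially zero‑mean random walk driven by the short excursions of $\M_{\hat\br}$ from $p_\br$, and to show that within the first $n^{k+1-\epsilon}$ pointings at $\M_{\hat\br}$ this walk leaves the interval $(-n^{i-\epsilonr_5},n^{i-\epsilonr_5})$ at least $\lfloor n^{k-2i+1-\epsilonr_{28}}\rfloor$ times, each such crossing being exactly a $\bu$‑ or $(-\bu)$‑reset. First I would dispose of the degenerate case: if $f_\bu$ vanishes on $\support^{C_1,\dots,C_{\lfloor k/2\rfloor}}(\hat\br)$ (equivalently $f_\bu(\realeffect_{\hat\br})=0$ almost surely, using $\E^{\sigma}_{p_\br\vec 0}[f_\bu(\realeffect_{\hat\br})]=0$ from Lemma~\ref{lemma-rand-walk-double-exp-zero}), then $\sigma_\epsilon$ never modifies $a_\bu$, the $\bu$‑bin or the high‑$\bu$‑bin, so no reset is triggered and the depletion events that invoke this lemma never occur; thus we may fix a constant $v_0>0$ such that $f_\bu(\realeffect_{\hat\br}^{\epsilonr_1})$ has variance at least $v_0$ for all large $n$. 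I will also use that all moments of $f_\bu(\realeffect_{\hat\br}^{\epsilonr_1})$ are bounded by constants independent of $n$ (the large values allowed by the $n^{\epsilonr_1}$‑truncation occur with probability at most $a^{n^{\epsilonr_1}}$ for some $a<1$), that $\big|\E^{\sigma}_{p_\br\vec 0}[f_\bu(\realeffect_{\hat\br}^{\epsilonr_1})]\big|\in\bigO(a^{n^{\epsilonr_2}})$ by Lemma~\ref{lemma-rand-walk-double-exp-zero}, and the vanishing‑drift correction of Lemma~\ref{lemma-a-bu-zero-exp}, so that over the whole window the increments added to $a_\bu$ may be treated as exactly centred (their cumulative bias being $<1$).

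Next I would set up the conditioning. Let $G$ be the event that, within the first $n^{k+1-\epsilon}$ pointings at $\M_{\hat\br}$, every excursion of $\M_{\hat\br}$ from $p_\br$ has length at most $n^{\epsilonr_1}$ and no counter is ever depleted in any bin of $\sigma_\epsilon$. A union bound over the at most $n^{k+1-\epsilon}$ excursions (each long with probability at most $a^{n^{\epsilonr_1}}$), together with the Main‑bin and High‑bin bounds proved above and Lemmas~\ref{lemma-reset-procedure-main-bin-nonterminating} and~\ref{lemma-reset-procedure-interface-transition-nk-i-eps101}, gives $\prob^{\sigma_\epsilon}_{p\vec n}[G]\to 1$, and conditioning on $G$ perturbs the joint law of the short‑excursion effects only super‑exponentially little. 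On $G$, each return of $\M_{\hat\br}$ to $p_\br$ inside the window contributes an independent centred increment of variance at least $v_0$ and bounded moments to $a_\bu$ — note that the iteration of $\M_{\hat\br}$, indexed by its number of pointings, is a Markov chain unaffected by the pauses that reset procedures impose, so these increments are genuinely i.i.d.\ and independent of the past. Since on $G$ each such excursion consumes at most $n^{\epsilonr_1}$ pointings, the window contains at least $n^{k+1-\epsilon-\epsilonr_1}$ returns; I then cut the first $n^{k+1-\epsilon}$ pointings into $M':=\lceil(2/\delta)\,n^{k-2i+1-\epsilonr_{28}}\rceil$ consecutive blocks of equal length, where $\delta>0$ is the constant obtained below. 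By the lower bound on $\epsilonr_{28}$ fixed in Table~\ref{Tadsdsble-eps-section-anotdfdsfsdfdher-proof-k+1} (chosen large enough relative to $\epsilon$, compatibly with its upper bounds there, so that $\epsilonr_{28}\ge \epsilon+\epsilonr_1-2\epsilonr_5$ up to a negligible additive term) each block then contains at least $C\,(n^{i-\epsilonr_5})^2$ returns, for any prescribed constant $C$ and all large $n$.

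The core step is the per‑block estimate $\prob[A_b\mid\mathcal F_{b-1}]\ge\delta$, where $A_b$ is the event ``a $\bu$‑ or $(-\bu)$‑trigger (a crossing of $\pm n^{i-\epsilonr_5}$ by $a_\bu$) occurs during block $b$'' and $\mathcal F_{b-1}$ is the history up to the start of block $b$. If no trigger occurs during block $b$, then throughout block $b$ the walk stays in $(-n^{i-\epsilonr_5},n^{i-\epsilonr_5})$ and is modified there only by the block's return‑increments $X_1,X_2,\dots$ (a reset triggered at the boundary runs between two consecutive $\M_{\hat\br}$‑pointings and consumes none of block $b$'s pointings), and in particular the start value $a^*$ satisfies $|a^*|<n^{i-\epsilonr_5}$; hence $\big|\sum_{j\le S_0}X_j\big|<2n^{i-\epsilonr_5}$ for $S_0:=\lceil C(n^{i-\epsilonr_5})^2\rceil$, as otherwise $a_\bu$ would have to exit $(-n^{i-\epsilonr_5},n^{i-\epsilonr_5})$ during the block. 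Now $\sum_{j\le S_0}X_j$ is a sum of $S_0$ independent centred variables, independent of $\mathcal F_{b-1}$, with second moment at least $v_0 S_0$ and fourth moment $\bigO(S_0^2)$; so the Paley--Zygmund inequality applied to its square yields a constant lower bound $\delta>0$ on the probability that it is at least $\sqrt{v_0 S_0/2}\ge 2n^{i-\epsilonr_5}$ (the last inequality holding once $C\ge 8/v_0$). This proves $\prob[A_b\mid\mathcal F_{b-1}]\ge\delta$ uniformly; hence $\sum_{b\le M'}\mathbf{1}[A_b]$ stochastically dominates a $\mathrm{Binomial}(M',\delta)$ variable and a Chernoff bound gives $\prob\big[\sum_{b\le M'}\mathbf{1}[A_b]<\tfrac{\delta}{2}M'\big]\le e^{-\Theta(M')}\to 0$. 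On the complementary event, at least $\tfrac{\delta}{2}M'\ge\lfloor n^{k-2i+1-\epsilonr_{28}}\rfloor$ distinct blocks each contain a $\bu$‑ or $(-\bu)$‑trigger within the first $n^{k+1-\epsilon}$ pointings at $\M_{\hat\br}$, and since triggers in distinct blocks are distinct this gives at least $\lfloor n^{k-2i+1-\epsilonr_{28}}\rfloor$ such resets; combined with $\prob^{\sigma_\epsilon}_{p\vec n}[G]\to 1$, this is the claim.

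I expect the core step to be the main obstacle, for two reasons. First, $a_\bu$ is not literally an autonomous walk: its evolution is interleaved with the paused/unpaused $\bs$‑reset procedures, and ``short return'' is a length‑conditioned event, so one has to argue carefully that between consecutive triggers the only changes to $a_\bu$ are the conditionally independent short‑excursion increments, and that the anti‑concentration (Paley--Zygmund) estimate, together with the high‑probability lower bound on the number of short excursions per block, indeed applies. Second, this bound must be reconciled with the many technical constants $\epsilonr_1,\epsilonr_5,\epsilonr_{28},\dots$ of the section: verifying that the choices in Table~\ref{Tadsdsble-eps-section-anotdfdsfsdfdher-proof-k+1} leave enough slack both for the block‑size requirement here and for the upper bounds on the number of resets demanded in the ``$C_1,\dots$'', ``$C_{\lfloor k/2\rfloor+1},\dots$'' and ``$C_{k+}$'' parts is routine but must be kept consistent throughout.
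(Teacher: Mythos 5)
Your argument proves the wrong direction of the race between pointings and resets. The lemma asserts that with high probability $\sigma_\epsilon$ accumulates $n^{k+1-\epsilon}$ pointings at $\M_{\hat{\br}}$ \emph{before} the $\bu$/$(-\bu)$-reset count reaches $\lfloor n^{k-2i+1-\epsilonr_{28}}\rfloor$ — that is, resets are \emph{infrequent} relative to pointings (each reset must be preceded by many pointings). The Lemmas it serves (\ref{lemma-reset-procedure-main-bin-nonterminating}, \ref{lemma-reset-procedure-interface-transition-nk-i-eps101}) guarantee that the reset-procedure bins stay non-negative only \emph{until} a certain number of resets, so one needs an upper bound on how fast the reset count grows. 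Accordingly the paper's proof conditions on there being at least $\lfloor n^{k-2i+1-\epsilonr_{28}}\rfloor$ resets and then uses Azuma's inequality (a \emph{concentration} bound) to show that each reset requires at least $n^{2i-\epsilonr_{30}}$ pointings with probability tending to $1$, and then Markov's inequality to pass from per-reset to the whole block of resets; the pointing count therefore dominates. Your proposal instead shows, via Paley--Zygmund \emph{anti}-concentration, that within $n^{k+1-\epsilon}$ pointings the walk $a_\bu$ crosses $\pm n^{i-\epsilonr_5}$ at least $\lfloor n^{k-2i+1-\epsilonr_{28}}\rfloor$ times, i.e.\ that resets are \emph{frequent}. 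That is the complement of the event the lemma requires, and it is not salvageable by tweaking: a reset takes $\Theta(n^{2(i-\epsilonr_5)})$ pointings up to polylogarithmic fluctuations, so your conclusion ``many resets within the window'' and the lemma's ``few resets within the window'' cannot both hold under the same constraints.

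This also shows up as an arithmetic contradiction in the $\epsilon$-constants. Your per-block estimate needs each of the $M' \approx n^{k-2i+1-\epsilonr_{28}}$ blocks to contain $\gtrsim n^{2(i-\epsilonr_5)}$ returns; since the whole window has only $n^{k+1-\epsilon}$ pointings, this forces $\epsilonr_{28} \ge \epsilon - 2\epsilonr_5$ (up to lower-order corrections), which you explicitly assume (``$\epsilonr_{28}\ge \epsilon+\epsilonr_1-2\epsilonr_5$''). But the paper fixes $\epsilonr_{28}=\min(\epsilon,\nicefrac{1}{2})/10^{20}$ and requires $\epsilonr_{30}+\epsilonr_{28}<\epsilon$ with $\epsilonr_{30}>2\epsilonr_5+2\epsilonr_{17}$ (constraints~\eqref{eq-epsbound-bvcyuviuib26},\eqref{eq-epsbound-bvcyuviuib27}), forcing $\epsilonr_{28}<\epsilon-2\epsilonr_5$. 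So the parameter choice your proof rests on is incompatible with the ones the rest of Section~\ref{sec-pointingupperbound-app} (in particular the ``$C_{\lfloor k/2\rfloor+1},\dots$'' and ``$C_{k+}$'' parts) depends on. The fix is to invert the estimate: bound from \emph{above} the probability that a single excursion of $a_\bu$ from $0$ hits $\pm n^{i-\epsilonr_5}$ in fewer than $n^{2i-\epsilonr_{30}}$ steps via Azuma (using $|S_j - S_{j-1}| \le u\, n^{\epsilonr_{17}}$ as in the paper, not the $n^{\epsilonr_1}$ truncation you use), then combine with Markov over the first $\lfloor n^{k-2i+1-\epsilonr_{28}}\rfloor$ resets.
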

\begin{proof}
Let \(Z_n^\bu \) denote the number of \(\bu\)-"resets" or \((-\bu)\)-"resets" for \(\bu\in BASIS_i \) that do not give us at least \(n^{2i-\epsilonr_{30}}\)  pointings at \(\M_{\hat{\br}} \) among the first  \(\lfloor n^{k-2i+1-\epsilonr_{28}}\rfloor\) such "resets", conditioned there are at least  \(n^{k-2i+1-\epsilonr_{28}} \) such resets (i.e., every time \(\sigma_{\epsilon} \) pointed at \(\M_{\hat{\br}} \) less than \(n^{2i-\epsilonr_{30}} \) times before \(|a_\bu|\geq n^{i-\epsilonr_{5}} \) since the last time a \(\bu\)-reset or \((-\bu) \)-reset happened then we add \(+1\) to \(Z_n^u \), but only if the total number of times these rests happened is no larger than \(n^{k-2i+1-\epsilonr_{28}} \)). We denote by * the event of there being at least \(n^{k-2i+1-\epsilonr_{28}} \) such reset.  It holds \(\E_{p\vec{n}}^{\sigma_{\epsilon}}(Z_n^\bu\mid *)= q_n^\bu \lfloor n^{k-2i+1-\epsilonr_{28}}\rfloor \), where \(q_n^\bu \) is the probability that a single \(\bu\)-reset or \((-\bu)\)-reset does not give at least \(n^{2i-\epsilonr_{30}}\)  pointings at \(\M_{\hat{\br}} \). 

We can upper bound \(q_n^\bu \) using Azuma inequality as follows: Let \(S_0=0 \) and \(S_i=S_{i-1}+f_\bu(\realeffect_{\hat{\br}}^{\epsilonr_{17}})-\E_{p_{\br}\vec{0}}^\sigma[f_\bu(\realeffect_{\hat{\br}})] \) (again here we consider the \(\sigma_{\epsilon} \) modified as described under Lemma~\ref{lemma-a-bu-zero-exp}), then \(S_0,S_1,\dots  \) is a martingale with \(|S_i-S_{i-1}|\leq un^{\epsilonr_{17}} \) for some constant \(u\), and it holds that the number of pointings  at \(\M_{\hat{\br}} \) per single  \(\bu\)-reset or \((-\bu)\)-reset is at least \(\tau \) where \(\tau\) is the stopping time such that \(|S_\tau|\geq n^{i-\epsilonr_{5}} \). Thus from Azuma inequality we get \begin{multline*}
	q_n^\bu\leq \prob[|S_{n^{2i-\epsilonr_{30}}}-S_0|\geq n^{i-\epsilonr_{5}}]\leq 2\exp(\frac{-(n^{i-\epsilonr_{5}})^2}{2\sum_{j=1}^{n^{2i-\epsilonr_{30}}}(un^{\epsilonr_{17}})^2 }) 
	= \\
	2\exp(\frac{-n^{2i-2\epsilonr_{5}}}{2n^{2i-\epsilonr_{30}}u^2n^{2\epsilonr_{17}} })
	=
	2\exp(\frac{-n^{2i-2i+\epsilonr_{30}-2\epsilonr_{5}-2\epsilonr_{17}}}{2u^2 })
	=
	2\exp(\frac{-n^{\epsilonr_{30}-2\epsilonr_{5}-2\epsilonr_{17}}}{2u^2})
\end{multline*}

and for \begin{equation}\label{eq-epsbound-bvcyuviuib26}
	0\leq \epsilonr_{30}-2\epsilonr_{5}-2\epsilonr_{17}=\epsilonr_{29}
\end{equation}   it holds that \[\lim_{n\rightarrow\infty}q_n^\bu\leq \lim_{n\rightarrow\infty} 2\exp(\frac{-n^{\epsilonr_{29}}}{2u^2})=0 \]

Therefore from Markov inequality we obtain \[
\prob[Z_n^\bu\geq \frac{1}{2}\lfloor n^{k-2i+1-\epsilonr_{28}}\rfloor\mid * ]
\leq
\frac{2q_n^\bu \lfloor n^{k-2i+1-\epsilonr_{28}}\rfloor}{\lfloor n^{k-2i+1-\epsilonr_{28}}\rfloor}
=
2q_n^\bu
\] Thus it holds it holds  \(\lim_{n\rightarrow\infty}\prob[Z_n^\bu\geq \frac{1}{2}\lfloor n^{k-2i+1-\epsilonr_{28}}\rfloor\mid * ]\leq \lim_{n\rightarrow\infty}2q_n^\bu=0 \).

Thus the probability of there being at least \(n^{2i-\epsilonr_{30}} \lfloor n^{k-2i+1-\epsilonr_{28}}\rfloor\geq \frac{1}{2}n^{k+1-\epsilonr_{30}-\epsilonr_{28}} \) pointings at \(\M_{\hat{\br}} \) before there are at least \( n^{k-2i+1-\epsilonr_{28}} \) \(\bu\)-"resets" or \((-\bu)\)-"resets" for each \(\bu\in BASIS_i \) goes to \(1\) as \(n \) goes to infinity. And assuming  \begin{equation}\label{eq-epsbound-bvcyuviuib27}
	\epsilonr_{30}+\epsilonr_{28}<\epsilon
\end{equation}  it holds \(n^{k+1-\epsilon}\leq \frac{1}{2}n^{k+1-\epsilonr_{30}-\epsilonr_{28}}  \) for all sufficiently large \(n\).
\end{proof}

Hence \(\sigma_\epsilon \) satisfies \(\lim_{n\rightarrow\infty}\prob_{p\vec{n}}^{\sigma_\epsilon}[\calP_{\A_{+\hat{br}}}[\M_{\hat{\br}}]]=1 \).

It remains to show there exist values for \(\epsilon_1,\epsilon_2,\dots \) that satisfy all of our assumptions. We do this in Table~\ref{Tadsdsble-eps-section-anotdfdsfsdfdher-proof-k+1}.

\begin{table*}[h]
	\caption{Values of \(\epsilon_1,\epsilon_2,\dots \) for Section~\ref{sec-pointingupperbound-app}. The equations after substitution can be found in Table~\ref{table-subst}}
	\centering
		\begin{tabular}{|l|| c c|} 
			\hline
			\(\epsilon\) assignment &  \multicolumn{2}{|c|}{restrictions}\\ 
			\hline\hline
			\(\epsilon_1=\epsilon  \) & \(0<\epsilon_1,\epsilon_2,\dots  \)&\\ 
			\hline
			\(\epsilon_2=\epsilon  \) & \(\epsilonr_{5}< k\epsilonr_{9}\) & \eqref{eq-epsbound-bvcyuviuib}  \\
			\hline
			\(\epsilon_{3}=\epsilon  \) & 	\(\epsilonr_{12}>\epsilonr_{10}\) & \eqref{eq-epsbound-bvcyuviuib2}  \\
			\hline
			\(\epsilon_{4}=\frac{\min(\epsilon,\nicefrac{1}{2})}{10^{40}} \) & \(\epsilonr_{10}<\epsilonr_{12}<1\) & \eqref{eq-epsbound-bvcyuviuib3}  \\
			\hline
			\(\epsilon_{5}=\frac{\min(\epsilon,\nicefrac{1}{2})}{10^{29}}\) & \(\epsilonr_{4}+\epsilonr_{11}<\epsilonr_{12} \) & \eqref{eq-epsbound-bvcyuviuib4}   \\ 
			\hline
			\(\epsilon_{6}=\epsilon \) & \(\epsilonr_{9}+(\epsilonr_{14}-(1-\epsilonr_{14})\epsilonr_{4})>\epsilonr_{10} \) & \eqref{eq-epsbound-bvcyuviuib5}  \\
			\hline
			\(\epsilon_{7}=\epsilon \)	& \(	-\epsilonr_{31}>-\epsilonr_{9}-k\epsilonr_{14}-k(1-\epsilonr_{14})\epsilonr_{4}
			\) & \eqref{eq-epsbound-bvcyuviuib6}  \\
			\hline
			\(\epsilon_{8}=\epsilon \) & \(-\epsilonr_{31}<-(k+1)\epsilonr_{9}  \) & \eqref{eq-epsbound-bvcyuviuib7} \\
			\hline
			\(\epsilon_{9}=\frac{\min(\epsilon,\nicefrac{1}{2})}{10^{25}} \) & \(-\epsilon<-\epsilonr_{15}\frac{k+1}{2}-(2-\epsilonr_{15})\epsilonr_{4}  \) & \eqref{eq-epsbound-bvcyuviuib8}  \\
			\hline
			\(\epsilon_{10}=\frac{\min(\epsilon,\nicefrac{1}{2})}{10^{25}} \) & \(\epsilonr_{16}>\epsilonr_{4} \) & \eqref{eq-epsbound-bvcyuviuib9} \\
			\hline
			\(\epsilon_{11}=\frac{\min(\epsilon,\nicefrac{1}{2})}{10^{90}} \) & \(\epsilonr_{17}>\epsilonr_{16} \) & \eqref{eq-epsbound-bvcyuviuib10}  \\
			\hline
			\(\epsilon_{12}=\nicefrac{9}{10}  \) & \(\epsilonr_{18}<\epsilonr_{5} \) & \eqref{eq-epsbound-bvcyuviuib11}\\
			\hline
			\(\epsilon_{13}=\epsilon \) & \(\epsilonr_{4}< \epsilonr_{18} \) & \eqref{eq-epsbound-bvcyuviuib12} \\
			\hline
			\(\epsilon_{14}=\frac{ \min(\epsilon,\nicefrac{1}{2})}{10^{20}} \) & \(\epsilonr_{19}<\epsilonr_{5} \) & \eqref{eq-epsbound-bvcyuviuib13} \\
			\hline
			\(\epsilon_{15}=\frac{\min(\epsilon,\nicefrac{1}{2})}{(k+1)\cdot 10^{20}} \) & \(\epsilonr_{17}<1-\epsilonr_{5} \) & \eqref{eq-epsbound-bvcyuviuib14}  \\
			\hline
			\(\epsilon_{16}=\frac{\min(\epsilon,\nicefrac{1}{2})}{10^{90}} \) & \(\epsilonr_{20}<\epsilonr_{10}-\epsilonr_{19} \) & \eqref{eq-epsbound-bvcyuviuib15} \\
			\hline
			\(\epsilon_{17}=\frac{\min(\epsilon,\nicefrac{1}{2})}{10^{80}} \) & \(\epsilonr_{21}<\epsilonr_{5} \) & \eqref{eq-epsbound-bvcyuviuib16} \\
			\hline
			\(\epsilon_{18}=\frac{\min(\epsilon,\nicefrac{1}{2})}{10^{30}} \) & \(\epsilonr_{17}<1-\epsilonr_{5}  \) & \eqref{eq-epsbound-bvcyuviuib17}  \\
			\hline
			\(\epsilon_{19}=\frac{\min(\epsilon,\nicefrac{1}{2})}{10^{100}} \) & \(2\epsilonr_{21}+\epsilonr_{22}-2\epsilonr_{23}>0  \) & \eqref{eq-epsbound-bvcyuviuib171} \\
			\hline
			\(\epsilon_{20}=\frac{\min(\epsilon,\nicefrac{1}{2})}{10^{30}} \) & \(\epsilonr_{25}<\epsilonr_{5} \) & \eqref{eq-epsbound-bvcyuviuib18}  \\
			\hline
			\(\epsilon_{21}=\frac{\min(\epsilon,\nicefrac{1}{2})}{10^{30}} \) & \(1-\epsilonr_{25}>\epsilonr_{17} \) & \eqref{eq-epsbound-bvcyuviuib19}  \\
			\hline
			\(\epsilon_{22}=\frac{\min(\epsilon,\nicefrac{1}{2})}{10^{100}} \) & \(\epsilonr_{26}\leq \epsilonr_{4}-\epsilonr_{25} \) & \eqref{eq-epsbound-bvcyuviuib20}  \\
			\hline
			\(\epsilonr_{23}=\frac{\min(\epsilon,\nicefrac{1}{2})}{10^{100}} \) & \(\epsilonr_{17}<1 \) & \eqref{eq-epsbound-bvcyuviuib21}  \\
			\hline
			\(\epsilon_{24}=\epsilon \) & \(\epsilonr_{17}>\epsilonr_{27} \) & \eqref{eq-epsbound-bvcyuviuib22} \\
			\hline
			\(\epsilon_{25}=\frac{\min(\epsilon,\nicefrac{1}{2})}{10^{100}} \) & \(\epsilonr_{28}>\epsilonr_{20} \) & \eqref{eq-epsbound-bvcyuviuib23}  \\
			\hline
			\(\epsilon_{26}=\frac{\min(\epsilon,\nicefrac{1}{2})}{10^{100}} \) & \(\epsilonr_{28}>\epsilonr_{22} \) & \eqref{eq-epsbound-bvcyuviuib24}  \\
			\hline
			\(\epsilon_{27}=\frac{\min(\epsilon,\nicefrac{1}{2})}{10^{100}} \) & \(\epsilonr_{28}-1<-\epsilonr_{26} \) & \eqref{eq-epsbound-bvcyuviuib25}\\
			\hline
			\(\epsilon_{28}=\frac{\min(\epsilon,\nicefrac{1}{2})}{10^{20}} \) & \(0\leq \epsilonr_{30}-2\epsilonr_{5}-2\epsilonr_{17}=\epsilonr_{29} \) & \eqref{eq-epsbound-bvcyuviuib26}  \\
			\hline
			\(\epsilon_{29}=\frac{\min(\epsilon,\nicefrac{1}{2})}{10}-2\frac{\min(\epsilon,\nicefrac{1}{2})}{10^{29}}-2\frac{\min(\epsilon,\nicefrac{1}{2})}{10^{80}} \) & \(\epsilonr_{30}+\epsilonr_{28}<\epsilon \) & \eqref{eq-epsbound-bvcyuviuib27}  \\
			\hline
			\(\epsilon_{30}=\frac{\min(\epsilon,\nicefrac{1}{2})}{10} \) &   &      \\
			\hline
			\(\epsilon_{31}=(k+1)\frac{\min(\epsilon,\nicefrac{1}{2})}{10^{24}} \) &   &      \\
			\hline
		\end{tabular}
		\par
		\label{Tadsdsble-eps-section-anotdfdsfsdfdher-proof-k+1}	
\end{table*}

\begin{table*}[h]
	\caption{Values of \(\epsilon_1,\epsilon_2,\dots \) for Section~\ref{sec-pointingupperbound-app} after substitution.}
	\centering
		\begin{tabular}{| c c|} 
			\hline
			\multicolumn{2}{|c|}{ After substitution}  \\ 
			\hline\hline
			\(\frac{\min(\epsilon,\nicefrac{1}{2})}{10^{29}}< k\frac{\min(\epsilon,\nicefrac{1}{2})}{10^{25}} \) & \eqref{eq-epsbound-bvcyuviuib} \\
			\hline
			\(\nicefrac{9}{10}>\frac{\min(\epsilon,\nicefrac{1}{2})}{10^{25}}\)& \eqref{eq-epsbound-bvcyuviuib2} \\
			\hline
			\(\frac{\min(\epsilon,\nicefrac{1}{2})}{10^{25}}<\nicefrac{9}{10}<1\)&  \eqref{eq-epsbound-bvcyuviuib3} \\
			\hline
			\(\frac{\min(\epsilon,\nicefrac{1}{2})}{10^{40}}+\frac{\min(\epsilon,\nicefrac{1}{2})}{10^{90}}<\nicefrac{9}{10} \) &\eqref{eq-epsbound-bvcyuviuib4} \\ 
			\hline
			\(\frac{\min(\epsilon,\nicefrac{1}{2})}{10^{25}}+(\frac{ \min(\epsilon,\nicefrac{1}{2})}{10^{20}} -(1-\frac{ \min(\epsilon,\nicefrac{1}{2})}{10^{20}} )\frac{\min(\epsilon,\nicefrac{1}{2})}{10^{40}})>\frac{\min(\epsilon,\nicefrac{1}{2})}{10^{25}} \) & \eqref{eq-epsbound-bvcyuviuib5} \\
			\hline
			\(-(k+1)\frac{\min(\epsilon,\nicefrac{1}{2})}{10^{25}}>-\frac{\min(\epsilon,\nicefrac{1}{2})}{10^{25}}-k\frac{ \min(\epsilon,\nicefrac{1}{2})}{10^{20}} -k(1-\frac{ \min(\epsilon,\nicefrac{1}{2})}{10^{20}} )\frac{\min(\epsilon,\nicefrac{1}{2})}{10^{40}}
			\) & \eqref{eq-epsbound-bvcyuviuib6}\\
			\hline
			\(-(k+1)\frac{\min(\epsilon,\nicefrac{1}{2})}{10^{24}}<-(k+1)\frac{\min(\epsilon,\nicefrac{1}{2})}{10^{25}}  \) & \eqref{eq-epsbound-bvcyuviuib7}\\
			\hline
			\(-\epsilon<-\frac{\min(\epsilon,\nicefrac{1}{2})}{(k+1)\cdot 10^{20}}\frac{k+1}{2}-(2-\frac{\min(\epsilon,\nicefrac{1}{2})}{(k+1)\cdot 10^{20}})\frac{\min(\epsilon,\nicefrac{1}{2})}{10^{40}}  \)&\eqref{eq-epsbound-bvcyuviuib8} \\
			\hline
			\(\frac{\min(\epsilon,\nicefrac{1}{2})}{10^{90}}>\frac{\min(\epsilon,\nicefrac{1}{2})}{10^{40}} \)&\eqref{eq-epsbound-bvcyuviuib9} \\
			\hline
			\(\frac{\min(\epsilon,\nicefrac{1}{2})}{10^{80}}>\frac{\min(\epsilon,\nicefrac{1}{2})}{10^{90}} \)&  \eqref{eq-epsbound-bvcyuviuib10} \\
			\hline
			\(\frac{\min(\epsilon,\nicefrac{1}{2})}{10^{30}}<\frac{\min(\epsilon,\nicefrac{1}{2})}{10^{29}} \) & \eqref{eq-epsbound-bvcyuviuib11} \\
			\hline
			\(\frac{\min(\epsilon,\nicefrac{1}{2})}{10^{40}}< \frac{\min(\epsilon,\nicefrac{1}{2})}{10^{30}} \) &\eqref{eq-epsbound-bvcyuviuib12} \\
			\hline
			\(\frac{\min(\epsilon,\nicefrac{1}{2})}{10^{100}}<\frac{\min(\epsilon,\nicefrac{1}{2})}{10^{29}} \)& \eqref{eq-epsbound-bvcyuviuib13} \\
			\hline
			\(\frac{\min(\epsilon,\nicefrac{1}{2})}{10^{80}}<1-\frac{\min(\epsilon,\nicefrac{1}{2})}{10^{29}} \)& \eqref{eq-epsbound-bvcyuviuib14} \\
			\hline
			\(\frac{\min(\epsilon,\nicefrac{1}{2})}{10^{30}}<\frac{\min(\epsilon,\nicefrac{1}{2})}{10^{25}}-\frac{\min(\epsilon,\nicefrac{1}{2})}{10^{100}} \)& \eqref{eq-epsbound-bvcyuviuib15} \\
			\hline
			\(\frac{\min(\epsilon,\nicefrac{1}{2})}{10^{30}}<\frac{\min(\epsilon,\nicefrac{1}{2})}{10^{29}} \)& \eqref{eq-epsbound-bvcyuviuib16} \\
			\hline
			\(\frac{\min(\epsilon,\nicefrac{1}{2})}{10^{80}}<1-\frac{\min(\epsilon,\nicefrac{1}{2})}{10^{29}}  \)& \eqref{eq-epsbound-bvcyuviuib17} \\
			\hline
			\(2\frac{\min(\epsilon,\nicefrac{1}{2})}{10^{30}}+\frac{\min(\epsilon,\nicefrac{1}{2})}{10^{100}}-2\frac{\min(\epsilon,\nicefrac{1}{2})}{10^{100}}>0  \) &\eqref{eq-epsbound-bvcyuviuib171}\\
			\hline
			\(\frac{\min(\epsilon,\nicefrac{1}{2})}{10^{100}}<\frac{\min(\epsilon,\nicefrac{1}{2})}{10^{29}} \) & \eqref{eq-epsbound-bvcyuviuib18} \\
			\hline
			\(1-\frac{\min(\epsilon,\nicefrac{1}{2})}{10^{100}}>\frac{\min(\epsilon,\nicefrac{1}{2})}{10^{80}} \) &\eqref{eq-epsbound-bvcyuviuib19} \\
			\hline
			\(\frac{\min(\epsilon,\nicefrac{1}{2})}{10^{100}}\leq \frac{\min(\epsilon,\nicefrac{1}{2})}{10^{40}}-\frac{\min(\epsilon,\nicefrac{1}{2})}{10^{100}} \) & \eqref{eq-epsbound-bvcyuviuib20}\\
			\hline
			\(\frac{\min(\epsilon,\nicefrac{1}{2})}{10^{80}}<1 \) &  \eqref{eq-epsbound-bvcyuviuib21} \\
			\hline
			\(\frac{\min(\epsilon,\nicefrac{1}{2})}{10^{80}}>\frac{\min(\epsilon,\nicefrac{1}{2})}{10^{100}} \) &\eqref{eq-epsbound-bvcyuviuib22}\\
			\hline
			\(\frac{\min(\epsilon,\nicefrac{1}{2})}{10^{20}}>\frac{\min(\epsilon,\nicefrac{1}{2})}{10^{30}} \) & \eqref{eq-epsbound-bvcyuviuib23}\\
			\hline
			\(\frac{\min(\epsilon,\nicefrac{1}{2})}{10^{20}}>\frac{\min(\epsilon,\nicefrac{1}{2})}{10^{100}} \) &\eqref{eq-epsbound-bvcyuviuib24}\\
			\hline
			\(\frac{\min(\epsilon,\nicefrac{1}{2})}{10^{20}}-1<-\frac{\min(\epsilon,\nicefrac{1}{2})}{10^{100}} \) &\eqref{eq-epsbound-bvcyuviuib25}\\
			\hline
			\(0\leq \frac{\min(\epsilon,\nicefrac{1}{2})}{10}-2\frac{\min(\epsilon,\nicefrac{1}{2})}{10^{29}}-2\frac{\min(\epsilon,\nicefrac{1}{2})}{10^{80}}=\frac{\min(\epsilon,\nicefrac{1}{2})}{10}-2\frac{\min(\epsilon,\nicefrac{1}{2})}{10^{29}}-2\frac{\min(\epsilon,\nicefrac{1}{2})}{10^{80}} \)&\eqref{eq-epsbound-bvcyuviuib26} \\
			\hline
			\(\frac{\min(\epsilon,\nicefrac{1}{2})}{10}+\frac{\min(\epsilon,\nicefrac{1}{2})}{10^{20}}<\epsilon \) & \eqref{eq-epsbound-bvcyuviuib27}  \\
			\hline
		\end{tabular}
		\par
		\label{table-subst}	
\end{table*}
	
	Hence Lemma~\ref{lemma-D-in-R-lower-estimate-nk+1-for-hatB} holds.
	
	\flushbottom
	\flushbottom


\subsubsection{Proof of Lemma~\ref{lemma-D-in-Rl-for-all-l-imply-D-in-R} and Lemma~\ref{lemma-D-not-in-Rl-implies-B-zero-unbounded-rankl-dsada}}

\begin{lemma}\label{lemma-D-in-Rl-for-all-l-imply-D-in-R}
	Let \(\br \) be a component  of \(\A\).	If it holds \(\support^{C_1,\dots,C_l}(\hat{\br})\subseteq  \Delta^{C_1,\dots,C_l}(X_{k-l,T_{k+1-l}}) \) for all \(1\leq l\leq \lfloor\frac{k}{2}\rfloor \), then \(\support^{C_1,\dots,C_{\lfloor\frac{k}{2}\rfloor}}(\hat{\br})\subseteq R^{B_\br,k}  \).
\end{lemma}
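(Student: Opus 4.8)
The plan is to prove the inclusion $\support^{C_1,\dots,C_{\lfloor k/2\rfloor}}(\hat\br) \subseteq R^{B_\br,k}$ by unwinding the definition of $R^{B_\br,k}$ and exploiting two structural facts that are available from the constructions preceding this lemma: first, that for each $1\le l\le\lfloor k/2\rfloor$ the hypothesis gives $\support^{C_1,\dots,C_l}(\hat\br)\subseteq\Delta^{C_1,\dots,C_l}(X_{k-l,T_{k-l+1}})$, and second, that every multi-component witnessing membership in $\Delta^{C_1,\dots,C_l}(X_{k-l,T_{k-l+1}})$ can be localized to the MEC $B_\br$ of the appropriate $\A_l$, i.e.\ can be taken to lie in $X^{B_\br,\,k-2l+1}_{k-l,T_{k-l+1}}$. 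The latter localization is the crucial observation: since $\hat\br$ is a component and all of its activity is confined to the MEC $\MEC_\br$, any witnessing multi-component may be truncated to the transitions inside the corresponding MEC of $\A_{k-2l+1}$ without changing its effect on the counters $C_1\cup\dots\cup C_{\lfloor k/2\rfloor}$ — here one uses Lemma~\ref{lemma-subtraction-multicomponents} (truncation to a MEC produces another multi-component) together with the fact that transitions outside $\MEC_\br$ carry zero $\br$-flow.

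Concretely, I would fix $\bv=\bv^{C_1,\dots,C_{\lfloor k/2\rfloor}}\in\support^{C_1,\dots,C_{\lfloor k/2\rfloor}}(\hat\br)$ and build the decomposition $\bv=\sum_{j=1}^{\lfloor k/2\rfloor}[\bv_1^j,\dots,\bv^j_{\lfloor k/2\rfloor}]^{C_1,\dots,C_{\lfloor k/2\rfloor}}$ required by the definition of $R^{B_\br,k}$ by peeling off coordinate blocks one level at a time. At level $j=1$ the hypothesis for $l=\lfloor k/2\rfloor$ (or rather its restriction to the $C_1$-block) provides a multi-component in $X_{k-\lfloor k/2\rfloor,\dots}$ matching $\bv$ on $C_1$; localize it to $B_\br$, record its full effect as the first summand, subtract it off, and observe the remainder still lies in $\support$ restricted to $C_2\cup\dots$ by the compatibility of the $\hat\br$-support across levels (each counter $c\in C_j$ has its $\A_k$-local copy sitting inside the MEC of $\A_{k-j}$ containing $\MEC_\br$, which is precisely the bookkeeping convention stated just before $\MEC_\br$ is introduced). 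Iterating this peeling for $j=2,\dots,\lfloor k/2\rfloor$, at each step invoking the hypothesis $\support^{C_1,\dots,C_{l}}(\hat\br)\subseteq\Delta^{C_1,\dots,C_l}(X_{k-l,T_{k-l+1}})$ with $l=\lfloor k/2\rfloor-j+1$ restricted to the $C_j$-block and localizing the witness to $X^{B_\br,k-2j+1}_{k-j,T_{k-j+1}}$, produces exactly the family of vectors $[\bv_1^j,\dots,\bv^j_{\lfloor k/2\rfloor}]$ with $\bv_m^j=\vec 0$ for $m<j$ and $\sum_j\bv_j^i=\bv_j$ demanded in the displayed definition of $R^{B_\br,k}$.

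The main obstacle I anticipate is the interplay between the changing index sets of local copies as $l$ varies: $\A_{k-l,T_{k-l+1}}$ has more transitions and fewer local copies of each counter as $l$ grows, so one must argue carefully that a multi-component from $X^{B_\br,k-2l+1}_{k-l,T_{k-l+1}}$ touches only the single local copy of each $c\in C_i$ (for $i\ge l$) that is shared across all the relevant levels — this is exactly the coherence assumption flagged in the paragraph defining $R^{B,k}$, and the proof must check it is respected when we subtract successive witnesses. A secondary subtlety is that at each peeling step the remaining vector must genuinely still be in $\support^{C_j,\dots,C_{\lfloor k/2\rfloor}}(\hat\br)$ rather than merely in its span; since $\hat\br$ is a fixed component and $\support(\hat\br)$ is the support of the single effect random variable $E_{\hat\br}$, one needs that the block-restriction of a support vector is a support vector of the block-restricted system, which follows from the definition of $\support^{C_1,\dots,C_n}$ as the coordinate projection of $\support$. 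Once these bookkeeping points are nailed down the computation is routine: each summand is by construction the $\Delta^{C_1,\dots,C_{\lfloor k/2\rfloor}}$-image of a multi-component in $X^{B_\br,k-2j+1}_{k-j,T_{k-j+1}}$, the lower-block zero conditions hold by construction, and the summands add up to $\bv$, so $\bv\in R^{B_\br,k}$ and the lemma follows.
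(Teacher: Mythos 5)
Your high-level plan — build the decomposition block by block and localize each witness to $\MEC_\br$ — is broadly in the same spirit as the paper's induction, but there are two substantive problems.

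First, the index bookkeeping is off in a way that is not just a typo. In the definition of $R^{B_\br,k}$ the $i$-th witness must come from $X^{B_\br,k-2i+1}_{k-i,\,T_{k-i+1}}$, i.e.\ from the hypothesis for $l=i$. The paper's induction runs $i=1,\dots,\lfloor k/2\rfloor$ and at step $i$ invokes $\support^{C_1,\dots,C_i}(\hat\br)\subseteq\Delta^{C_1,\dots,C_i}(X_{k-i,T_{k-i+1}})$. You invoke the hypothesis for $l=\lfloor k/2\rfloor-j+1$ at step $j$ and then claim the resulting multi-component can be "localized" to $X^{B_\br,k-2j+1}_{k-j,T_{k-j+1}}$, but $X_{k-(\lfloor k/2\rfloor-j+1),\,T_{\dots}}$ and $X_{k-j,T_{k-j+1}}$ live on different VASS MDPs $\A_{k-l}$ vs.\ $\A_{k-j}$; there is no natural localization map between them. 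Your step $j=1$ in particular produces a witness on $\A_{k-\lfloor k/2\rfloor}$, whereas the definition of $R^{B_\br,k}$ forces the first witness onto $\A_{k-1}$.

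Second, and more fundamentally, the peeling mechanism "subtract the witness's full effect and observe the remainder is still in $\support$" does not work. $\support(\hat\br)$ is the support of a single random variable $\realeffect_{\hat\br}$; it is a fixed set of vectors, not a linear space, and if $\bv$ is in it and $\bx$ is an arbitrary multi-component with $\Delta^{C_1}(\bx)=\bv_1$, there is no reason $\bv-\Delta(\bx)$ should again be a support vector. Your stated fix — that block-restriction of a support vector is a support vector of the block-restricted system — is true but irrelevant: the issue is not projection, it is subtraction. The paper circumvents this precisely by \emph{not} subtracting from the support vector. At stage $i$ it applies the hypothesis for $l=i$ to the \emph{original} vector, producing $\hat\bx^i$ with $\Delta^{C_1,\dots,C_i}(\hat\bx^i)=[\bv_1,\dots,\bv_i]$; it then builds a negating multi-component $\bx^i_-$ (via the zero-$\Delta$ multi-component from point~\ref{enum-main-3} of Lemma~\ref{lemma-main-for-all-k} applied at level $k-i$, using Lemmas~\ref{lemma-multiplication-multicomponents} and~\ref{lemma-subtraction-multicomponents}) and shows the deficit $[\vec 0,\dots,\vec 0,\bv_i-\hat\bv_i]^{C_1,\dots,C_i}$ is achievable by a multi-component in $X^{B_\br,k-2i+1}_{k-i,T_{k-i+1}}$, which is exactly the form the $i$-th witness must have. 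Without this zero-effect correction trick, I do not see how to make your peeling argument close.
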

\begin{proof}	 Assume towards contradiction that \(\support^{C_1,\dots,C_l}(\hat{\br})\subseteq  \Delta^{C_1,\dots,C_l}(X_{k-l,T_{k+1-l}}) \) holds for for all \(1\leq l\leq \lfloor \frac{k}{2}\rfloor \), and that there exists \([\bv_1,\dots,\bv_{\lfloor \frac{k}{2}\rfloor}]^{C_1,\dots,C_{\lfloor \frac{k}{2}\rfloor}}\in \support^{C_1,\dots,C_{\lfloor\frac{k}{2}\rfloor}}(\hat{\br}) \) such that \([\bv_1,\dots,\bv_{\lfloor \frac{k}{2}\rfloor}]^{C_1,\dots,C_{\lfloor \frac{k}{2}\rfloor}}\notin R^{B_\br,k} \).   
	
	
	For each  \(1\leq l\leq \lfloor \frac{k}{2}\rfloor \) let
	\begin{multline*}	R^{B_\br,k}_{l}=\{[\bv_1,\dots,\bv_l]^{C_1,\dots,C_{l}} \mid \exists [\bv_1^1,\dots,\bv^1_{l}]^{C_1,\dots,C_{l}}  \in \Delta^{C_1,\dots,C_{l}}(X^{B_\br,k-2+1}_{k-1,T_k}),	\\ [\bv_1^2,\dots,\bv^2_{l}]^{C_1,\dots,C_{l}}  \in \Delta^{C_1,\dots,C_{l}}(X^{B_\br,k-2\cdot 2+1}_{k-2,T_{k-1}}),\dots,	[\bv_1^l,\dots,\bv^{l}_{l}]^{C_1,\dots,C_{l} } \in \Delta^{C_1,\dots,C_{l}}(X^{B_\br,k-2\cdot l+1}_{k-l,T_{k-l+1}});\\ \forall j\in \{1,\dots,l \}: \sum_{i=1}^{{l}} \bv_j^i=\bv_j; \forall m<j: \bv_m^j=\vec{0}  \} 
	\end{multline*}

	We will now do an induction over \(1\leq l \lfloor \frac{k}{2}\rfloor  \) to show that for each such \(i\) it holds  \( [\bv_1,\dots,\bv_{i}]^{C_1,\dots,C_{i}}\in R^{B_\br,k}_i \). Note that if this were to hold then we are done, since \(R^{B_\br,k}_{\lfloor \frac{k}{2}\rfloor}=R^{B_\br,k} \), thus we would obtain a contradiction with our assumption that \([\bv_1,\dots,\bv_{\lfloor \frac{k}{2}\rfloor}]^{C_1,\dots,C_{\lfloor \frac{k}{2} \rfloor}}\notin R^{B_\br,k}\).
	
	\textbf{Base case \(i=1\):} 
	\begin{gather*}
		R^{B_\br,k}_1=\{[\bv_1]^{C_1} \mid \exists [\bv_1^1]^{C_1}  \in \Delta^{C_1}(X^{B_\br,k-2+1}_{k-1,T_k}); \forall j\in \{1 \}: \sum_{i=1}^{{1}} \bv_j^i=\bv_j; \forall m<j: \bv_m^j=\vec{0}  \} =\\ \{[\bv_1^1]^{C_1} \in \Delta^{C_1}(X^{B_\br,k-1}_{k-1,T_k}) \} \end{gather*}
	Since \([\bv_1]^{C_1}\in \support^{C_1}(\hat{\br})\subseteq \Delta^{C_1}(X_{k-1,T_{k}}) \) there exists a multi-component \(\bx\in X_{k-1,T_k} \) with \(\Delta^{C_1}(	\bx)=\bv_1 \). Let \(\bx' \) be a multi-component of \(\A_{k-1,T_k} \) created from \(\bx\) by setting \(\bx'(t)=0 \) for each transition \(t\) that is not included in the same MEC of \(\A_{k-1} \) as \(\MEC_{\hat{\br}} \) (i.e., \(\bx'(t)=\bx(t) \) if \(t\) is in the MEC of \(\A_{k-1} \) containing \(p_\br \) and \(\bx'(t)=0 \) otherwise). Notice that each MEC of \(\A_{k-1} \) has different local copies of counters from \(C_1\), and as \(\bv_1\in \support^{C_1}(\hat{\br}) \) every counter different from \(0\) in \(\bv_1 \) must be the local copy of some counter in \(\A_{k-1} \) in the same MEC of \(\A_{k-1} \) as \(\MEC_\br \). It thus holds  \(\Delta^{C_1}(\bx)=\bv_1=\Delta^{C_1}(\bx') \) and since \(\bx'\in X_{k-1,T_k}^{\MEC_\br,k-1}\) this implies \(\bv_1\in R_1^{B_\br,k} \). Base case holds.
	
	\textbf{Induction step:} Assume we have proven this statement for \(i-1\) and we want to prove it for \(i\leq \lfloor \frac{k}{2}\rfloor\) as well. From the induction assumption we have \([\bv_1,\dots,\bv_{i-1}]^{C_1,\dots,C_{i-1}}  \in R^{B_\br,k}_{i-1} \).	 Therefore from the definition of \(R^B_{i-1}\) there exist	\begin{multline*}	[\bv_1^1,\dots,\bv^1_{i-1}]^{C_1,\dots,C_{i-1}}  \in \Delta^{C_1,\dots,C_{i-1}}(X^{B_\br,k-2+1}_{k-1,T_{k}}),\dots\\
		\dots,[\bv_1^{i-1},\dots,\bv^{i-1}_{i-1}]^{C_1,\dots,C_{i-1}}  \in \Delta^{C_1,\dots,C_{i-1}}(X^{B_\br,k-2\cdot i+2+1}_{k-i+1,T_{{k-i+2}}}) \end{multline*}
	such that \[ \forall j\in \{1,\dots,i-1 \}: \sum_{i=1}^{{i-1}} \bv_j^i=\bv_j; \forall m<j: \bv_m^j=\vec{0} \]  		 	Therefore there also exist multi-components \(\bx^1,\dots,\bx^{i-1} \) with \(\bx^j\in X^{B_\br,k-2\cdot j+1}_{k-j,T_{k-j+1}}  \) and \(\Delta^{C_1,\dots,C_{i-1}}(\bx^j)=[\bv_1^j,\dots,\bv_{i-1}^j] \). Since the MEC containing \(\MEC_\br\) in \(\A_{k-2\cdot j+1} \) is included in the MEC containing \(\MEC_\br\) in \(\A_{k-2\cdot i+1} \) for each \(j<i \), it holds  \(\bx^1,\dots,\bx^{i-1}\in X^{B_\br,k-2\cdot i+1}_{k-i,T_{k-i+1}} \).
	
	It holds \(\sum_{j=1}^{i-1}\Delta^{C_1,\dots,C_i}(\bx^j)=[\bv_1,\dots,\bv_{i-1},\hat\bv_i]^{C_1,\dots,C_{i}} \) for some \(\hat\bv_i \). If it were to hold  \(\hat\bv_i=\bv_i \) then it holds \([\bv_1,\dots,\bv_{i}]^{C_1,\dots,C_{i}}  \in R^{B_\br,k}_{i}\) since \(\vec{0}\in X_{k-i,T_{k-i+1}}^{B_\br,k-2\cdot i+1} \). Assume therefore that \(\hat\bv_i\neq \bv_i \).

	Since \([\bv_1,\dots,\bv_{i}]^{C_1,\dots,C_{i}}\in \support^{C_1,\dots,C_i}(\hat{\br})\subseteq\Delta^{C_1,\dots,C_i}(X_{k-i,T_{k-i+1}}) \), there exists a multi-component \(\hat\bx^i \) of \(\A_{k-i,T_{k-i+1}} \) with \(\Delta^{C_1,\dots,C_{i}} (\hat\bx^i)=[\bv_1,\dots,\bv_{i}]^{C_1,\dots,C_{i}} \). Let \(\bx^i \) be created from \(\hat\bx^i\) by setting \(\bx^i(t)=0 \) for each \(t\) that is not included in the MEC of \(\A_{k-2\cdot i+1} \) that contains \(\MEC_\br\).	Since \([\bv_1,\dots,\bv_i]^{C_1,\dots,C_i} \) is effect of \(\hat{\br}\), it holds that \(\bv_j(c)=0 \) for each counter \(c\) that is not a local copy of some counter from \(C_1,\dots,C_i\) in the MEC containing \(\MEC_\br\) in \(\A_{k-2\cdot i+1} \). Therefore it holds \(\Delta^{C_1,\dots,C_i}(\bx^i)=\Delta^{C_1,\dots,C_i}(\hat\bx^i)=[\bv_1,\dots,\bv_{i}]^{C_1,\dots,C_{i}} \).

	From the induction assumption on \(k-i \) for point~\ref{enum-main-3} of Lemma~\ref{lemma-main-for-all-k}  there exists a multi-component \(\hat\by \) on \(\A_{k-i} \) such that \(\hat\by(t)>0 \) iff \(t\in T_{k-i+1} \), and \(\Delta^{C_1,\dots,C_{k-i}}(\hat\by )=\vec{0} \). Let \(\by \) be a multi-component created from \(\hat\by \) by setting \(\by(t)=0 \) for each \(t\) not contained in MEC of \(\A_{k-2\cdot i+1} \) containing  \(\MEC_\br\). Using the same argument as above it holds \(\Delta^{C_1,\dots,C_i}(\by)=\Delta^{C_1,\dots,C_i}(\hat\by)=\vec{0} \).
	
	Since it holds \(\bx^i(t)>0 \) implies \(\by(t)>0 \), there exists some \(a>0\) such that \(\bx^{i}_-=a\cdot \by-\bx^i\geq \vec{0} \) and thus from Lemma~\ref{lemma-subtraction-multicomponents} \(\bx^{i}_-\) is a multi-component in \(X^{B_\br,k-2\cdot i+1}_{k-i,T_{k-i+1}}\) with \(\Delta^{C_1,\dots,C_i}(\bx^{i}_-)=a\cdot \Delta^{C_1,\dots,C_i}(\by)-\Delta^{C_1,\dots,C_i}(\bx^i)=a\cdot \vec{0}-[\bv_1,\dots,\bv_{i}]^{C_1,\dots,C_{i}}=[-\bv_1,\dots,-\bv_{i}]^{C_1,\dots,C_{i}}  \). 
	
	Thus \(\bx=\bx^{i}_-+\sum_{j=1}^{i-1}\bx^j \) is a multi-component satisfying \(\bx\in X^{B_\br,k-2\cdot i+1}_{k-i,T_{k-i+1}} \) and \[\Delta^{C_1,\dots,C_i}(\bx)=[-\bv_1,\dots,-\bv_{i-1},-\bv_i]^{C_1,\dots,C_{i}}+[\bv_1,\dots,\bv_{i-1},\hat\bv_i]^{C_1,\dots,C_{i}} = [\vec{0},\dots,\vec{0},\hat\bv_i-\bv_i]^{C_1,\dots,C_{i}} \] 
	
	But once again, since \(\bx(t)>0 \) implies \(\by(t)>0 \) there exists \(b>0\) such that \(\bz=b\cdot \by-\bx\geq \vec{0} \) and \[\Delta^{C_1,\dots,C_i}(\bz)=b\cdot \Delta^{C_1,\dots,C_i}(\by)-\Delta^{C_1,\dots,C_i}(\bx)=b\cdot \vec{0}-[\vec{0},\dots,\vec{0},\hat\bv_i-\bv_i]^{C_1,\dots,C_{i}}=[\vec{0},\dots,\vec{0},-\hat\bv_i+\bv_i]^{C_1,\dots,C_{i}} \] Thus \([\vec{0},\dots,\vec{0},-\hat\bv_i+\bv_i]^{C_1,\dots,C_{i}}\in \Delta^{C_1,\dots,C_i}(X^{B_\br,k-2\cdot i+1}_{k-i,T_{k-i+1}})\). But it holds
	\begin{gather*}
		\sum_{j=1}^{i-1}\Delta^{C_1,\dots,C_i}(\bx^j) + [\vec{0},\dots,\vec{0},-\hat\bv_i+\bv_i]^{C_1,\dots,C_{i}} =\\ [\bv_1,\dots,\bv_{i-1},\hat\bv_i]^{C_1,\dots,C_{i}} + [\vec{0},\dots,\vec{0},-\hat\bv_i+\bv_i]^{C_1,\dots,C_{i}} 
		=
		[\bv_1,\dots,\bv_{i-1},\bv_i]^{C_1,\dots,C_{i}}  \end{gather*}  and thus \([\bv_1,\dots,\bv_{i-1},\bv_i]^{C_1,\dots,C_{i}}\in R^{B_\br,k}_i \). Lemma holds.
\end{proof}

\begin{lemma}\label{lemma-D-not-in-Rl-implies-B-zero-unbounded-rankl-dsada}

	Let \(1\leq l < \lfloor \frac{k}{2}\rfloor \), and let \(\by^{C_1,\dots,C_l}_{k-l,T_{k-l+1}},\bz^{C_1,\dots,C_l}_{k-l,T_{k-l+1}} \) be maximal solutions to \hyperref[fig-systems]{(II)} for \(\A^{C_1,\dots,C_l}_{k-l,T_{k-l+1}} \). Let \(rank^{C_1,\dots,C_l}_{k-l,T_{k-l+1}} \) be the resulting ranking function defined by \(\by^{C_1,\dots,C_l}_{k-l,T_{k-l+1}},\bz^{C_1,\dots,C_l}_{k-l,T_{k-l+1}} \). If there exists \(\bv\in \support^{C_1,\dots,C_l}(\hat{\br})\) such  that \(\bv\notin \Delta^{C_1,\dots,C_l}(X_{k-l,T_{k+1-l}}) \) then \(\hat{\br}\) is not zero-bounded on   \(rank^{C_1,\dots,C_l}_{k-l,T_{k-l+1}} \).
\end{lemma}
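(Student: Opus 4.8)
The plan is to prove the contrapositive: assuming $\hat{\br}$ \emph{is} zero-bounded on $rank^{C_1,\dots,C_l}_{k-l,T_{k-l+1}}$, I will show $\support^{C_1,\dots,C_l}(\hat{\br}) \subseteq \Delta^{C_1,\dots,C_l}(X_{k-l,T_{k+1-l}})$. The first step is to turn zero-boundedness on the ranking function into a purely linear condition. I would extend $rank^{C_1,\dots,C_l}_{k-l,T_{k-l+1}}$ to configurations of $\A$ exactly as $rank_k$ was obtained from $rank_{k,\hat T_{k+1}}$ in Appendix~\ref{app-lemma-linear-upper-bound} (in a state $p$ use, for each counter of $C_1,\dots,C_l$, the local copy belonging to the MEC of $\A_{k-l}$ containing $p$). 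A single iteration of $\hat{\br}$ stays inside the MEC of $\A_{k-l}$ containing $p_{\br}$, so the $\bz$-part of the rank telescopes to $0$ and the change of rank over an iteration realising a path $\pi$ equals $\by^{C_1,\dots,C_l}_{k-l,T_{k-l+1}}\cdot\bv_\pi$, where $\bv_\pi \in \support^{C_1,\dots,C_l}(\hat{\br})$ is the restricted effect of $\pi$ (as in Lemma~\ref{lemma-D-in-Rl-for-all-l-imply-D-in-R}, every $\bv \in \support^{C_1,\dots,C_l}(\hat{\br})$ only charges the local copies lying in the MEC of $\A_{k-l}$ through $p_{\br}$). Since $\E^\sigma_{p_{\br}\vec{0}}[\realeffect_{\hat{\br}}] = \vec{0}$, the mean of this rank change is already $0$, hence $\hat{\br}$ is zero-bounded on $rank^{C_1,\dots,C_l}_{k-l,T_{k-l+1}}$ if and only if $\by^{C_1,\dots,C_l}_{k-l,T_{k-l+1}}\cdot\bv = 0$ for \emph{every} $\bv \in \support^{C_1,\dots,C_l}(\hat{\br})$. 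So it remains to show: if $\by^{C_1,\dots,C_l}_{k-l,T_{k-l+1}}$ is orthogonal to all of $\support^{C_1,\dots,C_l}(\hat{\br})$, then each such $\bv$ is the $\Delta^{C_1,\dots,C_l}$-image of a multi-component of $\A_{k-l,T_{k-l+1}}$.

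For this I would run the duality between constraint systems~\hyperref[fig-systems]{(I)} and~\hyperref[fig-systems]{(II)}. The set $K := \Delta^{C_1,\dots,C_l}(X_{k-l,T_{k+1-l}})$ is a convex cone (multi-components are closed under addition and non-negative scaling by Lemmas~\ref{lemma-addition-multicomponents} and~\ref{lemma-multiplication-multicomponents}, and $\Delta$ is linear), and flow conservation together with the probability-ratio constraint of a multi-component shows $\bu\cdot\by' \le 0$ for every $\bu \in K$ and every feasible $(\by',\bz')$ of~\hyperref[fig-systems]{(II)} for $\A^{C_1,\dots,C_l}_{k-l,T_{k-l+1}}$. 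Since~\hyperref[fig-systems]{(I)} and~\hyperref[fig-systems]{(II)} are LP duals, Farkas' lemma turns this into an equivalence: for a putative $\bv \notin K$ one gets a separating vector, and using Lemma~\ref{lemma:dichotomy} and the classification of $\A$ up to $k-1$ together with the induction hypotheses of points~\ref{enum-main-1}--\ref{enum-main-3} of Lemma~\ref{lemma-main-for-all-k} for the index $k-l$, one upgrades it into an actual non-negative, rank-shaped feasible pair of~\hyperref[fig-systems]{(II)}. Adding such a pair to the maximal solution $\by^{C_1,\dots,C_l}_{k-l,T_{k-l+1}},\bz^{C_1,\dots,C_l}_{k-l,T_{k-l+1}}$ (solutions of~\hyperref[fig-systems]{(II)} are closed under addition) and invoking maximality then contradicts the orthogonality established in the first step. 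Hence every $\bv \in \support^{C_1,\dots,C_l}(\hat{\br})$ lies in $K$, which is the claim.

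The main obstacle is exactly this last step: bridging the gap between \emph{paths} in the Markov chain $\A_{\hat{\br}}$, which realise the support vectors but are only multi-cycles — they need not respect the probability ratios at probabilistic states — and genuine multi-components. The clean way around it is not to argue path by path but to push everything through the~\hyperref[fig-systems]{(I)}/\hyperref[fig-systems]{(II)} duality as above; the delicate technical part is then the book-keeping with the local copies of the already-classified counters of the various $\A_i$, i.e.\ ensuring that a separating hyperplane furnished by Farkas' lemma can be realised as a non-negative certificate of the shape required by~\hyperref[fig-systems]{(II)}, and this is precisely where the restriction $1 \le l < \lfloor\frac{k}{2}\rfloor$ and the induction hypotheses for the smaller index $k-l$ enter.
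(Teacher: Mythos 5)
Your overall plan matches the paper's: characterize zero-boundedness on $rank^{C_1,\dots,C_l}_{k-l,T_{k-l+1}}$ as orthogonality of $\by^{C_1,\dots,C_l}_{k-l,T_{k-l+1}}$ to $\support^{C_1,\dots,C_l}(\hat{\br})$, observe that $K=\Delta^{C_1,\dots,C_l}(X_{k-l,T_{k-l+1}})$ is in fact a \emph{vector space} (not merely a cone --- closure under negation is Lemma~\ref{lemma-v-in-XBiTi+1-implies-minus-v-also-in-XBiTi+1sss}, which is exactly where point~\ref{enum-main-3} of Lemma~\ref{lemma-main-for-all-k} for $k-l$ and the bound $l\le\lfloor\frac{k}{2}\rfloor$ enter), pick a separating direction $\bu\in K^\perp$ with $\bu\cdot\bv\neq 0$, turn $\bu$ into a fresh feasible pair of~\hyperref[fig-systems]{(II)}, add it (after scaling) to the maximal one, and contradict maximality. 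You have located the difficulty correctly.

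The gap is precisely the step you summarise as ``one upgrades it into an actual non-negative, rank-shaped feasible pair of~\hyperref[fig-systems]{(II)}.'' Farkas separation only hands you a direction $\bu\perp K$: it gives no $\bz$-offsets, does not force $\bu\geq\vec{0}$, and does not by itself certify that the per-transition and per-probabilistic-state constraints of~\hyperref[fig-systems]{(II)} hold. The paper's one concrete move that closes this is to build the \emph{one-counter} VASS MDP $\A^\bu_{k-l,T_{k-l+1}}$ whose transitions update a single counter $c^\bu$ by the scalar $\Pi_\bu(\cdot)$, observe that every multi-component of this system has zero effect on $c^\bu$ (because $\bu\perp K$) so every component is zero-bounded or zero-unbounded on $c^\bu$, invoke Lemma~\ref{lemma:dichotomy} to get a maximal \hyperref[fig-systems]{(II)}-solution $(\by^\bu,\bz^\bu)$ of $\A^\bu_{k-l,T_{k-l+1}}$ with $\by^\bu(c^\bu)>0$ and with all rank constraints tight on $T_{k-l+1}$, pull $rank^\bu$ back to $\A^{C_1,\dots,C_l}_{k-l,T_{k-l+1}}$, and finally set $rank_a=a\cdot rank^{C_1,\dots,C_l}_{k-l,T_{k-l+1}}+rank^\bu$ with $a$ large enough to make the combined weight vector non-negative even where $\bu$ is negative. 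That projected-VASS construction is where the non-negativity, the offsets, and the feasibility of~\hyperref[fig-systems]{(II)} all come from simultaneously; without it the proposal asserts the existence of the competing solution without producing it, so the contradiction with maximality is not actually obtained.
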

\begin{proof}
	Assume towards contradiction this is not the case. Let \(1\leq l\leq \frac{k}{2} \), let there exist \(\bv\in \support^{C_1,\dots,C_l}(\hat{\br})\) such  that \(\bv\notin \Delta^{C_1,\dots,C_l}(X_{k-l,T_{k+1-l}}) \), and assume \(\hat{\br}\) is zero-bounded on \(rank^{C_1,\dots,C_l}_{k-l,T_{k-l+1}}\).
	
	From Lemma~\ref{lemma-v-in-XBiTi+1-implies-minus-v-also-in-XBiTi+1sss} the set \(\Delta^{C_1,\dots,C_l}(X_{{k-l},T_{k-l+1}}) \) is a closed under multiplication by \(-1\), from Lemma~\ref{lemma-addition-multicomponents} it is closed under addition, and from Lemma~\ref{lemma-multiplication-multicomponents} it is closed under multiplication by non-negative constant. Thus \(\Delta^{C_1,\dots,C_l}(X_{{k-l},T_{k-l+1}}) \) is a vector space. Since \(\bv\in \mathbb{Q}^{C_1,\dots,C_l}\setminus \Delta^{C_1,\dots,C_l}(X_{{k-l},T_{k-l+1}})  \) there exists a vector \(\bu\in \mathbb{Q}^{C_1,\dots,C_l} \) such that \(\bu \) is orthogonal to  \(\Delta^{C_1,\dots,C_l}(X_{{k-l},T_{k-l+1}}) \) but is not orthogonal to \(\bv\). Let \(\Pi_\bu(\bs) \) denote the size of the orthogonal projection of \(\bs \) onto \(\bu \). Note that  \(\Pi_\bu(\Delta^{C_1,\dots,C_l}(\bx_{{k-l},T_{k-l+1}}))= 0 \) for each \(\bx_{{k-l},T_{k-l+1}}\in X_{{k-l},T_{k-l+1}} \) and \(\Pi_\bu(\bs)\neq 0 \). 
	
	Let \(\A_{{k-l},T_{k-l+1}}^\bu\) be a one-counter VASS MDP created from \(\A_{{k-l},T_{k-l+1}} \) by replacing all the counters with the single counter \(\Pi_\bu \) (i.e. if \((p,\bs,q) \) is a transition of \(\A_{{k-l},T_{k-l+1}} \) then \((p,\Pi_\bu(\bs),q) \) is a transition of \(\A_{{k-l},T_{k-l+1}}^\bu \)). We use \(c^\bu\) to denote the counter of \(\A_{{k-l},T_{k-l+1}}^\bu \). Since every multi-component of \(\A_{{k-l},T_{k-l+1}}^\bu \) is in \(X_{{k-l},T_{k-l+1}}\)  it holds that every component of \(\A_{{k-l},T_{k-l+1}}^\bu \) is either zero-bounded or zero-unbounded on \(c^\bu\). Thus the maximal solution \( \bx^\bu,\by^\bu,\bz^\bu \) of \hyperref[fig-systems]{(I)} and \hyperref[fig-systems]{(II)} for 
	\(\A_{{k-l},T_{k-l+1}}^\bu \) satisfies that  \(\bx^\bu(t)>0 \) for each \(t\in T_{k-l+1} \) (since a multi-component obtained as a sum of all components of \(\A_{{k-l},T_{k-l+1}}^\bu \) is a valid solution of \hyperref[fig-systems]{(I)} for \(\A_{{k-l},T_{k-l+1}}^\bu \)) and from Lemma~\ref{lemma:dichotomy} it holds \(\by(c^\bu)>0 \) (since \(\Delta(\bx^\bu)(c^\bu)=0 \)). Thus from  Lemma~\ref{lemma:dichotomy} it holds for each transition \(t= (p,\Pi_\bu(\bs),q)\) of \(\A_{{k-l},T_{k-l+1}}^\bu \) that 
	\begin{itemize}
		\item if $p\in Q_n$ then $\bz^\bu(q)-\bz^\bu(p)+ \Pi_\bu(\bs)\cdot \by^\bu(c^\bu)=0$;
		\item if $p\in Q_p$ then \[\sum_{t'=(p,\Pi_\bu(\bs'),q') \in \tout(p)}P(t')\cdot \big(\bz^\bu(q')-\bz^\bu(p)+\Pi_\bu(\bs')\cdot \by^\bu(c^\bu)\big)= 0\]
	\end{itemize}
	We now define a new ranking function on \(\A_{{k-l},T_{k-l+1}} \) as \(rank^\bu_{{k-l},T_{k-l+1}}(p\bs)=\bz^\bu(p)+  \by^\bu(c^\bu)\cdot \Pi_\bu(\bs)=\bz^\bu(p)+ \sum_{c\in \countersset} \by^\bu(c^\bu)\cdot \Pi_\bu(\bs_c)  \) where \(\bs_c(c)=\bs(c) \) and \(\bs_c(c')=0 \) for \(c'\neq c \). 
	

	Consider now the ranking functions \(rank_{a}(p\bs)=a \cdot rank^{C_1,\dots,C_l}_{k-l,T_{k-l+1}}(p\bs)+ rank^\bu_{{k-l},T_{k-l+1}}(p\bs) \) defined for each  \(a\geq 0 \). We can write
	\begin{gather*}
		rank_{a}(p\bs)=a \cdot rank^{C_1,\dots,C_l}_{k-l,T_{k-l+1}}(p\bs)+ rank^\bu_{{k-l},T_{k-l+1}}(p\bs)
		=\\
		a \cdot \big(\bz^{C_1,\dots,C_l}_{k-l,T_{k-l+1}}(p)+ \sum_{c\in \countersset}\by^{C_1,\dots,C_l}_{k-l,T_{k-l+1}}(i)\cdot \bs(c)\big)+\bz^\bu(p)+ \sum_{c\in \countersset} \by^\bu(c^\bu)\cdot \Pi_\bu(\bs_c)
		=\\
		a\cdot \bz^{C_1,\dots,C_l}_{k-l,T_{k-l+1}}(p)+ \bz^\bu(p)+  \sum_{c\in \countersset}\big(a\cdot \by^{C_1,\dots,C_l}_{k-l,T_{k-l+1}}(i)\cdot \bs(c)+ \by^\bu(c^\bu)\cdot \Pi_\bu(\bs_c)\big)
		=\\
		a\cdot \bz^{C_1,\dots,C_l}_{k-l,T_{k-l+1}}(p)+ \bz^\bu(p)+  \sum_{c\in \countersset}(a\cdot \by^{C_1,\dots,C_l}_{k-l,T_{k-l+1}}(c)+ \by^\bu(c^\bu)\cdot m_c^\bu)\cdot \bs(c) 
	\end{gather*}  where the last equality comes from the fact that we can write \(\Pi_\bu(\bs_c)=\bs(c)\cdot m_c^\bu \) for \(m_c^\bu=\Pi_\bu(\vec{1}_c) \). Therefore \(rank_{a} \) can be expressed as \[rank_{a}(p\bs)=\bz_{a}(p)+ \sum_{c\in \countersset}\by_{a}(c)\cdot \bs(c) \] where \(\bz_{a}(p)=a\cdot \bz^{C_1,\dots,C_l}_{k-l,T_{k-l+1}}(p)+ \bz^\bu(p) \) and \(\by_{a}(c)=a\cdot \by^{C_1,\dots,C_l}_{k-l,T_{k-l+1}}(c)+ \by^\bu(c^\bu)\cdot m_c^\bu \). Notice that \(\bz_{a}\geq \vec{0} \) and the only way for \(\by_a(c)< 0 \) to hold is if \(m_c^\bu<0\) and \begin{equation*}\label{eq-a-bound}
	a< \frac{- \by^\bu(c^\bu)\cdot m_c^\bu}{\by^{C_1,\dots,C_l}_{k-l,T_{k-l+1}}(c)}
\end{equation*} Therefore if \(a \) is so large that \[	a> \frac{- \by^\bu(c^\bu)\cdot m_c^\bu}{\by^{C_1,\dots,C_l}_{k-l,T_{k-l+1}}(c)}\] holds for each counter \(c \), we have that \(\bz_{a},\by_{a}\) are both non-negative. We will now show that for some sufficiently large \(a\) it holds that \(\bz_{a},\by_{a}\) is a more maximal solution of \hyperref[fig-systems]{(II)} for \(\A_{k-l,T_{k-l+1}}^{C_1,\dots,C_l} \) than \(\bz_{k-l,T_{k-l+1}}^{C_1,\dots,C_l},\by_{k-l,T_{k-l+1}}^{C_1,\dots,C_l} \) thus contradicting with \(\bz_{k-l,T_{k-l+1}}^{C_1,\dots,C_l},\by_{k-l,T_{k-l+1}}^{C_1,\dots,C_l} \) being a maximal such solution.
	
	First notice that for every transition \(t=(p,\bs,q)\) of \(\A_{k-l,T_{k-l+1}}^{C_1,\dots,C_l} \) it holds 
	\begin{itemize}
		\item if $p\in Q_n$ then $\bz_{a}(q)-\bz_{a}(p)+ \sum_{c\in \countersset} \bs(c)\cdot \by_{a}(c)\leq 0$;
		\item if $p\in Q_p$ then \[\sum_{t'=(p,\bs',q') \in \tout(p)}P(t')\cdot \big(\bz_{a}(q')-\bz_{a}(p)+\sum_{c\in \countersset}\bs'(c)\cdot \by_{a}(c)\big)\leq 0\]
	\end{itemize}
	This is due to both  \(rank_{k-l,T_{k-l+1}}\) and \(rank_{k-l,T_{k-l+1}}^\bu \) being non-increasing on average for each step (see \hyperref[fig-systems]{(II)}). Thus for all sufficiently large \(a\) it holds that \(\bz_{a},\by_{a}\) is a solution of \hyperref[fig-systems]{(II)} for \(\A_{k-l,T_{k-l+1}}^{C_1,\dots,C_l} \). It remains to show it is more maximal than \(\bz_{k-l,T_{k-l+1}}^{C_1,\dots,C_l},\by_{k-l,T_{k-l+1}}^{C_1,\dots,C_l} \). For the first maximization objective, if \(\bz_{k-l,T_{k-l+1}}^{C_1,\dots,C_l}(c)>0 \) then also \(\bz_{a}(c)=a\cdot \bz_{k-l,T_{k-l+1}}^{C_1,\dots,C_l}(c)+\bz^\bu(c)\geq a\cdot\bz_{k-l,T_{k-l+1}}^{C_1,\dots,C_l}(c)>0 \).

	For the second maximization objective of \hyperref[fig-systems]{(II)}, let $t=(p,\bs,q)$ be such that $p\in Q_n$ and \(\bz_{k-l,T_{k-l+1}}^{C_1,\dots,C_l}(q)-\bz_{k-l,T_{k-l+1}}^{C_1,\dots,C_l}(p)+ \sum_{c\in \countersset} \bs(c)\cdot \by_{k-l,T_{k-l+1}}^{C_1,\dots,C_l}(c)<0\). Then it holds 
	\begin{gather*}		
		\bz_{a}(q)-\bz_{a}(p)+ \sum_{c\in \countersset} \bs(c)\cdot \by_{a}(c)=\\
		a\cdot \big(\bz_{k-l,T_{k-l+1}}^{C_1,\dots,C_l}(q)-\bz_{k-l,T_{k-l+1}}^{C_1,\dots,C_l}(p)+ \sum_{c\in \countersset} \bs(c)\cdot \by_{k-l,T_{k-l+1}}^{C_1,\dots,C_l}(i)\big)+
		\bz^\bu(q)-\bz^\bu(p)+ \sum_{c\in \countersset} \bs(c)\cdot \by^\bu(c^\bu)\cdot m_c^\bu
		=\\
		a\cdot \big(\bz_{k-l,T_{k-l+1}}^{C_1,\dots,C_l}(q)-\bz_{k-l,T_{k-l+1}}^{C_1,\dots,C_l}(p)+ \sum_{c\in \countersset} \bs(c)\cdot \by_{k-l,T_{k-l+1}}^{C_1,\dots,C_l}(i)\big)+
		\bz^\bu(q)-\bz^\bu(p)+ \Pi_\bu(\bs) \by^\bu(c^\bu)
		\leq \\
		 a\cdot \big(\bz_{k-l,T_{k-l+1}}^{C_1,\dots,C_l}(q)-\bz_{k-l,T_{k-l+1}}^{C_1,\dots,C_l}(p)+ \sum_{c\in \countersset} \bs(c)\cdot\by_{k-l,T_{k-l+1}}^{C_1,\dots,C_l}(c)\big)+
		0
		<0
	\end{gather*} where the second to last inequality comes from \(\by^\bu,\bz^\bu \) being a solution of \hyperref[fig-systems]{(II)} for \(\A_{{k-l},T_{k-l+1}}^\bu \).
	
	For the third maximization objective of \hyperref[fig-systems]{(II)}, let $p\in Q_p$ be such that 
	\[\sum_{t=(p,\bs,q) \in \tout(p)}P(t)\cdot \big(\bz_{k-l,T_{k-l+1}}^{C_1,\dots,C_l}(q')-\bz_{k-l,T_{k-l+1}}^{C_1,\dots,C_l}(p)+\sum_{c\in \countersset}\bs(c)\cdot \by_{k-l,T_{k-l+1}}^{C_1,\dots,C_l}(c)\big)< 0\] Then it holds \begin{gather*}
		\sum_{t=(p,\bs,q) \in \tout(p)}P(t)\cdot \big(\bz_{a}(q')-\bz_{a}(p)+\sum_{c\in \countersset}\bs(c)\cdot \by_{a}(c)\big)
		=\\
		a\cdot (\sum_{t=(p,\bs,q) \in \tout(p)}P(t)\cdot \big(\bz_{k-l,T_{k-l+1}}^{C_1,\dots,C_l}(q')-\bz_{k-l,T_{k-l+1}}^{C_1,\dots,C_l}(p)+\sum_{c\in \countersset}\bs(c)\cdot \by_{k-l,T_{k-l+1}}^{C_1,\dots,C_l}(i)\big)) +\\+ \sum_{t=(p,\bs,q) \in \tout(p)}P(t)\cdot \big(\bz^\bu(q)-\bz^\bu(p)+\sum_{c\in \countersset}\bs(c)\cdot \by^\bu(c^\bu)\cdot m_c^\bu\big)
		=\\
		a\cdot (\sum_{t=(p,\bs,q) \in \tout(p)}P(t)\cdot \big(\bz_{k-l,T_{k-l+1}}^{C_1,\dots,C_l}(q')-\bz_{k-l,T_{k-l+1}}^{C_1,\dots,C_l}(p)+\sum_{c\in \countersset}\bs(c)\cdot \by_{k-l,T_{k-l+1}}^{C_1,\dots,C_l}(c)\big)) +\\+ \sum_{t=(p,\bs,q) \in \tout(p)}P(t)\cdot \big(\bz^\bu(q)-\bz^\bu(p)+\Pi_\bu(\bs) \by^\bu(c^\bu)\big)
		\leq \\
		a\cdot (\sum_{t=(p,\bs,q) \in \tout(p)}P(t)\cdot \big(\bz_{k-l,T_{k-l+1}}^{C_1,\dots,C_l}(q')-\bz_{k-l,T_{k-l+1}}^{C_1,\dots,C_l}(p)+\sum_{c\in \countersset}\bs(c)\cdot \by_{k-l,T_{k-l+1}}^{C_1,\dots,C_l}(c)\big)) + 0
		<0 \end{gather*}
	where the second to last inequality comes from \(\by^\bu,\bz^\bu \) being a solution of \hyperref[fig-systems]{(II)} for \(\A_{{k-l},T_{k-l+1}}^\bu \).
	
	Therefore \(\bz_{a},\by_{a} \) is at least as maximal a solution of \hyperref[fig-systems]{(II)} for \(\A_{k-l,T_{k-l+1}}^{C_1,\dots,C_l} \) as \(\bz_{k-l,T_{k-l+1}}^{C_1,\dots,C_l},\by_{k-l,T_{k-l+1}}^{C_1,\dots,C_l} \). 
	
	And since \(\hat{\br} \) is zero-bounded  on \(rank_{k-l,T_{k-l+1}}^{C_1,\dots,C_l}\), it holds from \cite{AKCONCUR23} that every single transition \(t=(p,\bs,q)\)  with \(\br(t)>0 \) satisfies \(\bz_{k-l,T_{k-l+1}}^{C_1,\dots,C_l}(q)-\bz_{k-l,T_{k-l+1}}^{C_1,\dots,C_l}(p)+ \sum_{c\in \countersset} \bs(c)\cdot \by_{k-l,T_{k-l+1}}^{C_1,\dots,C_l}(c)=0 \). But since \(\bv\in \support^{C_1,\dots,C_l}(\hat{\br}) \) and  \(\Pi_\bu(\bv)\neq 0 \) it holds that \(\hat{\br} \) is not zero-bounded on   \(rank_{k-l,T_{k-l+1}}^\bu\), which from \cite{AKCONCUR23} gives that there exists a \(t=(p,\bs,q)\)  with \(\br(t)>0 \) such that \(0\neq \bz^\bu(q)-\bz^\bu(p)+ \sum_{c\in \countersset} \bs(c)\cdot \by^\bu(c^\bu)\cdot m_c^\bu\leq  0 \), and thus also  
	\begin{gather*}		
		\bz_{a}(q)-\bz_{a}(p)+ \sum_{c\in \countersset} \bs(c)\cdot \by_{a}(c)
		=\\
		a\cdot \big(\bz_{k-l,T_{k-l+1}}^{C_1,\dots,C_l}(q)-\bz_{k-l,T_{k-l+1}}^{C_1,\dots,C_l}(p)+ \sum_{c\in \countersset} \bs(c)\cdot \by_{k-l,T_{k-l+1}}^{C_1,\dots,C_l}(i)\big)+
		\bz^\bu(q)-\bz^\bu(p)+ \sum_{c\in \countersset} \bs(c)\cdot \by^\bu(c^\bu)\cdot m_c^\bu
		= \\
		 	0 +
		 \bz^\bu(q)-\bz^\bu(p)+ \sum_{c\in \countersset} \bs(c)\cdot \by^\bu(c^\bu)\cdot m_c^\bu
		<0
	\end{gather*}
	But this gives us that for a sufficiently large \(a\), \(\bz_{a},\by_{a} \) is a more maximal solution of \hyperref[fig-systems]{(II)} for \(\A_{k-l,T_{k-l+1}}^{C_1,\dots,C_l} \)  compared to \(\bz_{k-l,T_{k-l+1}}^{C_1,\dots,C_l},\by_{k-l,T_{k-l+1}}^{C_1,\dots,C_l} \), a contradiction with \(\bz_{k-l,T_{k-l+1}}^{C_1,\dots,C_l},\by_{k-l,T_{k-l+1}}^{C_1,\dots,C_l} \) being a maximal such solution. Lemma holds. 
\end{proof}

\subsubsection{Proof  of Lemma~\ref{lemma-hatB-zero-unbounded-rankl-give-upper-estimate-nk} and Lemma~\ref{lemma-I-dont-lknooow-but-it-is-somehting}}
\label{section-proof-lemma-hatB-zero-unbounded-rankl-give-upper-estimate-nk}

\begin{lemma*}[\textbf{\ref{lemma-hatB-zero-unbounded-rankl-give-upper-estimate-nk}}]	
			Let \(1\leq l \leq  \lfloor\frac{k}{2}\rfloor \). Let  \(\by^{C_1,\dots,C_l}_{k-l,T_{k-l+1}},\bz^{C_1,\dots,C_l}_{k-l,T_{k-l+1}} \) be   a maximal solution of \hyperref[fig-systems]{(II)} for \(\A^{C_1,\dots,C_l}_{k-l,T_{k-l+1}} \), and let \(rank^{C_1,\dots,C_l}_{k-l,T_{k-l+1}} \) be the resulting ranking function defined by \(\by^{C_1,\dots,C_l}_{k-l,T_{k-l+1}},\bz^{C_1,\dots,C_l}_{k-l,T_{k-l+1}} \). There exists a set of transitions \(R^{C_1,\dots,C_l}_{k-l} \) of \(\A^{C_1,\dots,C_l}_{k-l,T_{k-l+1}} \) such that both of these hold:
		\begin{itemize}
			\item for each component  \(\by \) of \(\A_{k-l,T_{k-l+1}} \) it holds that \(\hat{\by} \) is zero-bounded on \(rank^{C_1,\dots,C_l}_{k-l,T_{k-l+1}} \) iff \( R^{C_1,\dots,C_l}_{k-l}\cap \{t\mid \by(t)>0 \}= \emptyset \);
			\item \(\calT_\A[t] \) has an upper asymptotic estimate of \(n^k\) for each \(t\in R^{C_1,\dots,C_l}_{k-l}\).	
		\end{itemize}
		Furthermore, assuming we have a classification of \(\A\) up to \(k-1\),  \(R_{k-l}\) can be computed in time polynomial in \(\size{\A} \).
\end{lemma*}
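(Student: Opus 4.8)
The plan is to read off $R:=R^{C_1,\dots,C_l}_{k-l}$ from the fixed maximal solution $\by,\bz$ of~\hyperref[fig-systems]{(II)} for $\A^{C_1,\dots,C_l}_{k-l,T_{k-l+1}}$, prove the structural equivalence (first bullet) by a (super)martingale computation, derive the upper estimate (second bullet) by bounding the total variation of an extension of $rank^{C_1,\dots,C_l}_{k-l,T_{k-l+1}}$ along computations of $\A$, and finish with the routine PTIME remark. Throughout write $\RankEff(t):=\bz(q)-\bz(p)+\sum_{c}\bu(c)\by(c)$ for $t=(p,\bu,q)$.

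For the first bullet, note first that, because a component $\by$ of $\A^{C_1,\dots,C_l}_{k-l,T_{k-l+1}}$ is a flow and $\bz$ telescopes around the closed walk of one iteration, the net effect of one iteration of $\hat\by$ on $rank^{C_1,\dots,C_l}_{k-l,T_{k-l+1}}$ equals $\langle E_\by-\Delta(\by),\by\rangle=\sum_t(n_t-\by(t))\RankEff(t)$, where $n_t$ is the random number of uses of $t$ in the iteration; in particular this does not depend on $\bz$, and $\hat\by$ is zero-bounded on $rank$ iff the iteration rank-effect of $\by$ is almost surely constant. Using that $\A_\by$ is a strongly connected Markov chain and that the cycle content of an iteration spans the cycle space of the MEC of $\by$, I would show this holds iff $\RankEff$ restricted to that MEC is a gradient, i.e.\ sums to zero around every cycle; $R$ is then defined as the set of transitions of $\A^{C_1,\dots,C_l}_{k-l,T_{k-l+1}}$ lying on a cycle, inside the MEC of some component, along which $\RankEff$ does not sum to zero, and the equivalence of the first bullet is immediate. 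The auxiliary tool for the "constant iff gradient" step is the martingale $M_i:=\langle\bv_i,\by\rangle+h(p_i)$ with $h(q)$ the expected $\by$-weighted counter effect of a walk in $\A_\by$ from $q$ to $p_\by$, combined with the supermartingale property of $(rank_i)$ along an iteration and the inequalities of~\hyperref[fig-systems]{(II)}; a maximality argument — otherwise $\bz$ could be improved by a state-potential shift on the MEC — is what pins down the behaviour of $\RankEff$ on probabilistic states of components and keeps $R$ well-behaved (in particular, so that transitions of $T_{k-l+1}$ never raise $rank^{C_1,\dots,C_l}_{k-l,T_{k-l+1}}$).

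For the second bullet, fix $t_0\in R$; it lies in a component (all of $T_{k-l+1}$ does, being the support of a multi-component by point~\ref{enum-main-3} of Lemma~\ref{lemma-main-for-all-k} and Lemma~\ref{lemma-decompose-multicomponents-into-components-mainbody}), so each traversal of a cycle witnessing $t_0\in R$ changes $rank^{C_1,\dots,C_l}_{k-l,T_{k-l+1}}$ by a fixed nonzero amount. Extend $rank$ to $rank_{k-l}$ on configurations of $\A$, using in state $p$ the local copy of each counter living in the MEC containing $p$, exactly as in the proof of Lemma~\ref{lemma-linear-upper-bound}. Along any computation of $\A$: $rank_{k-l}\geq0$ until termination; transitions of $T_{k-l+1}$ never raise $rank_{k-l}$ and, inside a MEC, do not re-weight; and every transition not in $T_{k-l+1}$ has, by the classification up to $k-1$ and since $l\geq1$, an upper asymptotic estimate at most $n^{k-l}$ for $\calT_\A$, hence is used at most $n^{k-l+\epsilon}$ times with high probability, each use raising $rank_{k-l}$ by at most a constant plus one re-weighting term bounded by the affected counter's current value, which is at most $n^{l+\epsilon}$ with high probability (the counters of $\A^{C_1,\dots,C_l}_{k-l,T_{k-l+1}}$ have tight estimate at most $n^l$ and $l\leq\lfloor k/2\rfloor$). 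So with high probability the total increase of $rank_{k-l}$, hence also its total variation, is at most $n^{k+\epsilon'}$; since the cycles witnessing $t_0\in R$ move $rank_{k-l}$ by a fixed nonzero amount each and, via the {\cMD}-decomposition, the number of uses of $t_0$ is comparable to the number of traversals of such cycles, $t_0$ is used at most $n^{k+\epsilon''}$ times with high probability, uniformly over all initial states and strategies of $\A$; i.e.\ $\calT_\A[t_0]$ has upper asymptotic estimate $n^k$. This is precisely the "no computation can contain too many $rank$-decreasing segments, given the bound on how much $rank$ can rise" argument of the informal sketch, the segmentation being used to turn the counting into a union/Markov-type bound. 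Finally, given the classification up to $k-1$, the VASS MDP $\A^{C_1,\dots,C_l}_{k-l,T_{k-l+1}}$ and a maximal solution of~\hyperref[fig-systems]{(II)} are computable in PTIME (Section~\ref{sec-systems}), and $R$ is then obtained from $\RankEff$ by a cycle-space computation, so $R_{k-l}$ is computable in time polynomial in $\size\A$.

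The step I expect to be the main obstacle is making the second bullet rigorous and uniform over all strategies of $\A$: routing the count of $rank_{k-l}$-decreasing steps and of uses of the "pumping" transitions through the {\cMD}-decomposition / pointing-VASS machinery, controlling the re-weighting contributions coming from the local-copy structure of $\A_{k-l}$, and balancing the $\epsilon$-budgets so that the number of forced $R$-uses strictly exceeds the $n^{k+\epsilon'}$ rank-rise budget. A secondary but genuine subtlety, in the first bullet, is the maximality argument that fixes the behaviour of $\RankEff$ on probabilistic states of components: it needs the precise optimality of $(\by,\bz)$ together with a check that the potential shift does not violate~\hyperref[fig-systems]{(II)} on the transitions entering or leaving the relevant MEC.
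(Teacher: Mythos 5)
Your plan is in the right spirit, but there are two concrete gaps, and they interact.

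The paper does not define $R^{C_1,\dots,C_l}_{k-l}$ via a cycle‑space condition; it takes the simpler per‑transition set $R^{C_1,\dots,C_l}_{k-l}=\{t:\RankEff(t)\neq 0\}$ and shows this satisfies the first bullet using the form of the $\Ipath$‑style characterizations from \cite{AKCONCUR23} (a component that is not zero‑bounded on the rank is either decreasing, hence has a nondeterministic transition or probabilistic state with strictly negative $\RankEff$, or zero‑unbounded, hence has a transition with $\RankEff\neq0$), not via a maximality/potential‑shift argument as you suggest. Your set — transitions lying on a rank‑nonconserving cycle \emph{inside the MEC of some component} — is not the same set, and it is not clear that it satisfies the first bullet: the quantifier over components is the wrong way round. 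A transition $t$ can lie on a nonconserving cycle inside the MEC of a component $\by'$ while also belonging to another component $\by$ whose own MEC has no such cycle (e.g.\ $\by'$ reaches $t$'s cycle through extra transitions that $\by$ does not use). Then $t\in R$ by your definition, so $R\cap\{\by(t')>0\}\neq\emptyset$, yet $\hat\by$ can still be zero‑bounded, breaking the ``iff''. The ``constant iff gradient'' observation for $\hat\by$ on its \emph{own} MEC is correct, but you then need to show that this per‑component, cycle‑based condition is equivalent to a single, component‑independent, per‑transition set; you invoke a maximality argument to ``pin down $\RankEff$ on probabilistic states'' but do not carry it out, and this is genuinely the step where the paper's per‑transition definition is doing work.

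For the second bullet, the link you assert — ``the number of uses of $t_0$ is comparable to the number of traversals of cycles witnessing $t_0\in R$'' — is precisely where your argument would fail and is not what the paper proves. A use of $t_0$ can occur while iterating a component in which $t_0$ only lies on rank‑conserving cycles; such uses contribute nothing to the total variation of $rank_{k-l}$ and are not counted by your segmentation. The paper avoids this by factoring through pointing complexity: Lemma~\ref{lemma-I-dont-lknooow-but-it-is-somehting} (Appendix~\ref{section-proof-lemma-hatB-zero-unbounded-rankl-give-upper-estimate-nk}) proves an upper estimate $n^{\max(s+r,2r)}$ for $\calP_{\tilde{\A}}[\M_\by]$ whenever $\hat\by$ is not zero‑bounded on the rank, and then uses $\calT_\A[t_0]\leq\sum_{\by:\by(t_0)>0}\calP_{\tilde{\A}}[\M_\by]$. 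For this to bound $\calT_\A[t_0]$, one needs \emph{every} component containing $t_0$ to be non‑zero‑bounded — which is exactly what $\RankEff(t_0)\neq0$ (and the first bullet) delivers, since $\RankEff(t_0)$ does not depend on the component, but which your cycle‑based $R$ does not deliver. You do correctly flag the \cMD{}/pointing‑VASS routing as the main obstacle, but that routing, together with the re‑weighting control, is the content of the separate Lemma~\ref{lemma-I-dont-lknooow-but-it-is-somehting} and its own $\epsilon$‑budget; it is not absorbed by the one‑paragraph segmentation sketch. The PTIME remark is fine, though again the computation in the paper is of $\RankEff$ on transitions (plus \cite{AKCONCUR23}'s zero‑bounded test), not a cycle‑space computation.
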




Towards the second part of this Lemma, any component of \(\A^{C_1,\dots,C_l}_{k-l,T_{k-l+1}}\) whose effect on \(rank^{C_1,\dots,C_l}_{k-l,T_{k-l+1}} \) is not zero-bounded is either decreasing or zero-unbounded, as increasing is not possible since \(rank^{C_1,\dots,C_l}_{k-l,T_{k-l+1}} \) is defined using a solution of \hyperref[fig-systems]{(II)} (see Section~\ref{sec-systems}). But any such component  that is decreasing must contain either a non-deterministic transition that decreases \(rank^{C_1,\dots,C_l}_{k-l,T_{k-l+1}}\) or a probabilistic state from which the value of \(rank^{C_1,\dots,C_l}_{k-l,T_{k-l+1}}\) is decreased on average in one computational step. Furthermore, from \cite{AKCONCUR23} any component  of \(\A^{C_1,\dots,C_l}_{k-l,T_{k-l+1}} \) that is zero-unbounded on \(rank^{C_1,\dots,C_l}_{k-l,T_{k-l+1}} \) must contain a transition that strictly decreases \(rank^{C_1,\dots,C_l}_{k-l,T_{k-l+1}} \). Therefore it suffices to find all transitions \(R^{C_1,\dots,C_l}_{k-l}  \) of \(\A^{C_1,\dots,C_l}_{k-l,T_{k-l+1}} \) whose effect on  \(rank^{C_1,\dots,C_l}_{k-l,T_{k-l+1}} \) is not \(0\), and we have that any component  \(\A^{C_1,\dots,C_l}_{k-l,T_{k-l+1}} \) that includes a transition from \(R^{C_1,\dots,C_l}_{k-l} \) is not zero-bounded on  \(rank^{C_1,\dots,C_l}_{k-l,T_{k-l+1}}\). Whereas each component of \(\A^{C_1,\dots,C_l}_{k-l,T_{k-l+1}} \) that contains no transition from \(R^{C_1,\dots,C_l}_{k-l} \) is zero-bounded on \(rank^{C_1,\dots,C_l}_{k-l,T_{k-l+1}}\). Furthermore from \cite{AKCONCUR23}, we can compute the set \(R^{C_1,\dots,C_l}_{k-l}  \) in polynomial time when there exists no component of \(\A^{C_1,\dots,C_l}_{k-l,T_{k-l+1}} \) that is increasing on \(rank^{C_1,\dots,C_l}_{k-l,T_{k-l+1}}\), which is our case.

The upper asymptotic estimates for transitions from \(R^{C_1,\dots,C_l}_{k-l} \) then follow from  Lemma~\ref{lemma-I-dont-lknooow-but-it-is-somehting} for \(s=k-l\) and \(r=l \). This follows from the fact that \(\calT_\A[t]\leq \sum_{\by}\calP_{\tilde{\A}}[\M_\by] \) where \(\by \) ranges over all the components of \(\A \) with \(\by(t)>0 \) (see Section~\ref{section-additional-definitions}). Note that this also gives us the following Lemma.

\begin{lemma*}[\textbf{\ref{lemma-I-dont-lknooow-but-it-is-somehtingmain}}]	
  	We can only obtain an upper asymptotic estimate of \(n^k\) from Lemma~\ref{lemma-hatB-zero-unbounded-rankl-give-upper-estimate-nk} for values of  \(k \) satisfying \(k=\max(s+r,2\cdot r)\) where \(r\in \Aset\) and \( s\in S_r \).
\end{lemma*}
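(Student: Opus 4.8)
The plan is to unfold what ``obtaining an upper asymptotic estimate of $n^k$ from Lemma~\ref{lemma-hatB-zero-unbounded-rankl-give-upper-estimate-nk}'' actually amounts to, and then reduce the claim to the piecewise-constant structure of the family $\bigl\{\A^{C_1,\dots,C_m}_{n,T_{n+1}}\bigr\}$ already recorded in the main body. An application of Lemma~\ref{lemma-hatB-zero-unbounded-rankl-give-upper-estimate-nk} that yields an estimate $n^k$ is parametrised by an $l$ with $1\le l\le\lfloor k/2\rfloor$: it considers $\A^{C_1,\dots,C_l}_{k-l,T_{k-l+1}}$ and produces the set $R^{C_1,\dots,C_l}_{k-l}$, which by the argument in the proof of that lemma is precisely the set of transitions whose effect on the ranking function $rank^{C_1,\dots,C_l}_{k-l,T_{k-l+1}}$, defined by a maximal solution of \hyperref[fig-systems]{(II)}, is nonzero; each transition of that set receives the estimate $n^k$. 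Since the maximisation objective of \hyperref[fig-systems]{(II)} fixes the relevant inequalities uniquely and the effects of transitions on the induced ranking are intrinsic to the VASS MDP, the first thing I would note is that $R^{C_1,\dots,C_l}_{k-l}$ depends only on the VASS MDP $\A^{C_1,\dots,C_l}_{k-l,T_{k-l+1}}$, whereas the exponent produced is $k=l+(k-l)$.

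Next I would invoke the structural observations preceding Lemma~\ref{lemma-hatB-zero-unbounded-rankl-give-upper-estimate-nk}: each $\A^{C_1,\dots,C_l}_{k-l,T_{k-l+1}}$ coincides with $\A^{C_1,\dots,C_r}_{s,T_{s+1}}$, where $r:=\max\bigl(\{a\in\Aset: a\le l\}\cup\{0\}\bigr)$ and $s:=\max\bigl(\{b\in S_r: b\le k-l\}\cup\{0\}\bigr)$, because that family changes, in its first index, only at levels in $\Aset$, and in its second index (including the local-copy structure inherited from $\A_{k-l}$) only at levels in $S_r$. Hence $R^{C_1,\dots,C_l}_{k-l}=R^{C_1,\dots,C_r}_{s}$, and moreover $r\le l$ and $s\le k-l$, so $k=l+(k-l)\ge r+s$, while $2l\le k$ gives $k\ge 2l\ge 2r$; therefore $k\ge\max(r+s,2r)$. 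Thus the only way a given nonempty $R$-set can arise at an exponent $k$ is with $k\ge\max(r+s,2r)$ for the associated $r\in\Aset$, $s\in S_r$ (the degenerate case $r=0$, where there is nothing to rank on and the \hyperref[fig-systems]{(II)}-optimal ranking is constant on each relevant strongly connected piece so that $R$ is empty, contributing nothing).

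Finally I would check that this bound is tight, so that $\max(r+s,2r)$ is exactly the smallest exponent at which the $R$-set appears: put $l':=r$ and $k':=\max(r+s,2r)$, so $l'\le\lfloor k'/2\rfloor$ as $k'\ge 2r$, and $k'-l'=\max(s,r)$. Since $k-l\ge l\ge r$ and $s$ is the largest element of $S_r$ not exceeding $k-l$, the interval $(s,k-l]$, and in particular $(s,\max(s,r)]$, contains no element of $S_r$, so the piecewise-constant family is constant on $[s,\max(s,r)]$ and $\A^{C_1,\dots,C_r}_{k'-l',T_{k'-l'+1}}=\A^{C_1,\dots,C_r}_{s,T_{s+1}}$; hence $R^{C_1,\dots,C_r}_{k'-l'}=R^{C_1,\dots,C_r}_{s}=R^{C_1,\dots,C_l}_{k-l}$, and the application at $k'$ yields the same estimate. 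Combining the two bounds, every exponent $k$ at which Lemma~\ref{lemma-hatB-zero-unbounded-rankl-give-upper-estimate-nk} first produces an upper estimate $n^k$ (for some transition) is of the form $\max(r+s,2r)$ with $r\in\Aset$ and $s\in S_r$, which is the assertion.

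The step I expect to be the main obstacle is the precise bookkeeping in the second and third paragraphs: one must check, against the literal definitions of $\A_i$, of $\A^{C_1,\dots,C_l}_{k-l,T_{k-l+1}}$, and of $\Aset$, $\Bset$, $S_r$, that the family $\bigl\{\A^{C_1,\dots,C_m}_{n,T_{n+1}}\bigr\}$ breaks only at the claimed indices --- in particular that the local-copy structure of $\A_{k-l}$, which varies with $k-l$, does not introduce extra break points beyond those in $S_r$ --- and that the canonical representatives $r,s$ extracted this way satisfy $r\le l$, $s\le k-l$ with $(s,k-l]$ free of $S_r$-break points. The analytic input, namely that $R^{C_1,\dots,C_l}_{k-l}$ is exactly the set of transitions with nonzero effect on the \hyperref[fig-systems]{(II)}-ranking, is already supplied by the proof of Lemma~\ref{lemma-hatB-zero-unbounded-rankl-give-upper-estimate-nk}, so the present argument is purely combinatorial.
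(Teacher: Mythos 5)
Your argument is genuinely different from the paper's, and it is worth spelling out the contrast. The paper deduces the statement from Lemma~\ref{lemma-I-dont-lknooow-but-it-is-somehting}: for the canonical pair $(r,s)$ with $r\in\Aset$, $s\in S_r$ representing the VASS MDP in question, that lemma directly asserts that the transitions detected have an upper estimate $n^{\max(s+r,2r)}$, so the $n^{k}$ that Lemma~\ref{lemma-hatB-zero-unbounded-rankl-give-upper-estimate-nk} would report is always subsumed by the tighter exponent read off from the canonical pair. You instead stay entirely on the combinatorial side: you use that $R^{C_1,\dots,C_l}_{k-l}$ is determined by the VASS MDP, invoke the piecewise-constant structure of the family $\A^{C_1,\dots,C_m}_{n,T_{n+1}}$ (breaks at $\Aset$ in the first index, at $S_r$ in the second), and show the same $R$-set is already visible at $k'=\max(s+r,2r)$ via $l'=r$, so $n^{k'}$ is obtained there. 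The paper's route buys the exponent formula in one step from the analytic lemma; yours avoids invoking that exponent formula, relying only on the weaker fact that an application at exponent $k'$ reports $n^{k'}$, and compensates with careful bookkeeping about where $R$-sets first appear. The bookkeeping itself (that $r\le l$, $s\le k-l$, $\max(s,r)\le k-l$, and $(s,\max(s,r)]$ is break-free) all checks out, including the $2l\le k$ step from $l\le\lfloor k/2\rfloor$.

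The one genuine gap is your parenthetical dismissal of $r=0$, i.e.\ $\Aset\cap\{1,\dots,l\}=\emptyset$. You assert that in that case the \hyperref[fig-systems]{(II)}-optimal ranking is ``constant on each relevant strongly connected piece so that $R$ is empty.'' Constancy of $\bz$ on each MEC does not make $R$ empty: $R$ as defined in the proof of Lemma~\ref{lemma-hatB-zero-unbounded-rankl-give-upper-estimate-nk} consists of all transitions of $\A^{C_1,\dots,C_l}_{k-l,T_{k-l+1}}$ whose effect on the ranking is nonzero, and transitions of $T_{k-l+1}$ running between distinct MECs can strictly decrease $\bz$ and hence land in $R$ even when there are no counters. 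The iff-condition with components is then vacuous for such $t$ (they lie in no component), but the second bullet of Lemma~\ref{lemma-hatB-zero-unbounded-rankl-give-upper-estimate-nk} still attaches an $n^k$ estimate to them, and your construction $l'=r=0$ is not a valid parameter ($l'\ge1$ is required). So the degenerate case is not disposed of by ``$R=\emptyset$''; one must instead observe that such inter-MEC transitions are classified by the other branch of the algorithm (Lemma~\ref{lemma-upper-bound-T} on $\A_{k,\hat T_{k+1}}$ via Lemma~\ref{lemma:dichotomy}), or fall back to the paper's route and apply Lemma~\ref{lemma-I-dont-lknooow-but-it-is-somehting} with $r=0$ directly. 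As written, your claim about this case is an unjustified assertion rather than an argument.
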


\begin{lemma}\label{lemma-I-dont-lknooow-but-it-is-somehting}
	Let \(\by^{C_1,\dots,C_r}_{s,T_{s+1}},\bz^{C_1,\dots,C_r}_{s,T_{s+1}} \) be a maximal solution of \hyperref[fig-systems]{(II)} for \(\A^{C_1,\dots,C_r}_{s,T_{s+1}} \). Let \(rank^{C_1,\dots,C_r}_{s,T_{s+1}} \) be the resulting ranking function defined by \(\by^{C_1,\dots,C_r}_{s,T_{s+1}},\bz^{C_1,\dots,C_r}_{s,T_{s+1}} \). Let \(\by \) be a component of \(\A^{C_1,\dots,C_r}_{s,T_{s+1}}\). If \(\hat{\by} \) is not zero-bounded on \(rank^{C_1,\dots,C_l}_{k-l,T_{k-l+1}} \) then \(\calP_{\tilde{\A}}[\M_\by] \) has an upper asymptotic estimate of \(n^{k}\) for \(k=\max(s+r,2r)\).
\end{lemma}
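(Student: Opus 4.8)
The plan is to adapt the rank--function and supermartingale machinery already used in the proof of Lemma~\ref{lemma-linear-upper-bound} (and in the second bullet of Lemma~\ref{lemma-hatB-zero-unbounded-rankl-give-upper-estimate-nk}), exploiting the fact that \(\hat{\by}\) is by construction either zero-bounded or zero-unbounded on every quantity, so the hypothesis ``\(\hat{\by}\) is not zero-bounded on \(rank^{C_1,\dots,C_r}_{s,T_{s+1}}\)'' forces \(\hat{\by}\) to be \emph{zero-unbounded} on that rank. Intuitively: iterating \(\hat{\by}\) makes the rank perform a zero-mean random walk with strictly positive per-step variance, the rank is pinned to a bounded interval, and the adversary has only a limited ``budget'' for pushing the rank back up, so the walk must fall to \(0\) --- causing termination --- after at most roughly \(n^{\max(s+r,2r)}\) steps.

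Concretely, I would fix a strategy \(\sigma\) of \(\A\) (equivalently a pointing strategy of \(\tilde{\A}\)) and \(\epsilon>0\), and extend \(\rho:=rank^{C_1,\dots,C_r}_{s,T_{s+1}}\) to configurations of \(\A\) exactly as in the proof of Lemma~\ref{lemma-linear-upper-bound}: in \(p\bv\) a counter \(c\in C_j\) gets the weight \(\by^{C_1,\dots,C_r}_{s,T_{s+1}}(c_p)\) of the local copy \(c_p\) of \(c\) in \(\A_s\) living in the MEC of \(\A_{s-j}\) containing \(p\). Let \(E_1\) be the event that every already-classified counter \(c\in C_j\), \(j\le r\), stays below \(n^{j+\epsilon_1}\) and every already-classified transition \(t\in T_i\setminus T_{i+1}\), \(i\le s\), fires at most \(n^{i+\epsilon_1}\) times; since \(\A\) is classified up to \(k-1\) and \(r\le\lfloor k/2\rfloor<k\), \(s\le k-r\le k-1\), we have \(\prob^{\sigma}_{p\vec{n}}[E_1]\to 1\), and as in Lemma~\ref{lemma-linear-upper-bound} I would replace \(\rho\) by a version frozen once \(E_1\) is violated so the martingale arguments below are rigorous and transfer to \(\rho\) on \(E_1\). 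Two facts are then needed. First, \emph{confinement and drift budget}: along \(T_{s+1}\)-transitions \(\rho\) is a supermartingale (the constraint system~\hyperref[fig-systems]{(II)} property), so only transitions outside \(T_{s+1}\) can increase \(\rho\) in expectation; a single firing of \(t\in T_i\setminus T_{i+1}\) changes \(\rho\) by a constant plus a weight-change term affecting only counters \(c\in C_j\) with \(j\le\min(s-i,r)\), hence by at most \(O(n^{\min(s-i,r)+\epsilon_1})\le O(n^{r+\epsilon_1})\), and conditioned on \(E_1\) there are at most \(n^{i+\epsilon_1}\) such firings, so the total upward (and downward) movement of \(\rho\) caused by transitions outside \(T_{s+1}\) is \(\sum_i n^{i+\min(s-i,r)+\epsilon_1}=\sum_i n^{\min(s,i+r)+\epsilon_1}=O(n^{s+\epsilon_2})\); moreover on \(E_1\), \(\rho\in[0,c_0 n^{r+\epsilon_1}]\) (it is \(\ge 0\) until termination and depends only on \(C_1,\dots,C_r\)-counters). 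Second, \emph{variance accumulation}: since \(\hat{\by}\) is zero-unbounded on \(\rho\), one iteration of \(\hat{\by}\) changes \(\rho\) by a random amount with mean \(0\) and a fixed variance \(v>0\); as \(\A_{\hat{\by}}\) is a strongly connected Markov chain and using the one-dimensional VASS MDP analysis of \cite{AKCONCUR23} (as in the sketch of Lemma~\ref{lemma-hatB-zero-unbounded-rankl-give-upper-estimate-nk}), the quadratic variation \([\rho]_T\) accumulated over the steps that point to \(\M_{\by}\) is at least \(c\cdot\calP_{\tilde{\A}}[\M_{\by}]\) for a constant \(c>0\) (the \(co-\hat{\by}\) part only adds a non-positive-mean deterministic drift, by the supermartingale property, so it cannot spoil this lower bound).

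I would then combine the two via the identity \(\rho_T^2=\rho_0^2+2\int_0^T\rho_{t^-}\,d\rho_t+[\rho]_T\) and optional stopping on the frozen process: the martingale part of \(d\rho\) integrates to \(0\), the downward compensator \(A\) contributes \(-2\,\E[\int\rho\,dA]\le 0\) because \(\rho\ge 0\), and the jumps at transitions outside \(T_{s+1}\) contribute at most \(\rho_{\max}\cdot(\text{total upward movement})+(\max\text{jump})\cdot(\text{total movement})=O(n^{s+r+\epsilon_2})\); since \(\E[\rho_T^2]\le(c_0 n^{r+\epsilon_1})^2\), this yields \(c\cdot\E[\,[\rho]_T\,]\le O(n^{2r+\epsilon_3})+O(n^{s+r+\epsilon_3})=O(n^{\max(2r,s+r)+\epsilon_3})\) on \(E_1\), hence by Markov's inequality and \(\prob[E_1]\to1\) we get \(\limsup_n\prob^{\sigma}_{p\vec{n}}[\calP_{\tilde{\A}}[\M_{\by}]\ge n^{\max(s+r,2r)+\epsilon}]=0\), i.e.\ \(n^{\max(s+r,2r)}\) is an upper asymptotic estimate of \(\calP_{\tilde{\A}}[\M_{\by}]\). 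The hard part will be the bookkeeping: making the extension of \(\rho\) to \(\A\) and the per-firing rank-increase estimate precise in the presence of the many local copies of counters in \(\A_s\) --- this is exactly the technical core of Lemma~\ref{lemma-linear-upper-bound} --- and making the frozen-supermartingale/quadratic-variation inequality watertight, including truncating the a priori unbounded length of individual \(\hat{\by}\)-iterations as in Lemma~\ref{lemma-rand-walk-double-exp-zero} so that all the expectations above are finite and controlled.
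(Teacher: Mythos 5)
Your proposal takes a genuinely different route from the paper. The paper argues by contradiction: it chops the computation into \emph{segments}, where a segment ends once the cumulative rank–effect of $\B$-pointing steps hits $\pm n^{r+\epsilon'}$ (or the segment exceeds $n^{2r+\epsilon''}$ steps), applies optional stopping plus the one-dimensional analysis of~\cite{AKCONCUR23} to show at least $n^{k+\epsilon}$ pointings give $\gtrsim n^{k-2r}$ segments of which (via a Chernoff bound for binomials) a constant fraction are ``decreasing,'' and then observes that each decreasing segment forces $\gtrsim n^r$ of upward compensation from non-$\B$ steps, exceeding the $O(n^{s+\epsilon})$ outside budget when $k\ge\max(s+r,2r)+\epsilon$. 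You replace all of this with a single second-moment/quadratic-variation argument on $\rho^2$. The high-level inputs match exactly --- confinement $\rho\in[0,O(n^{r+\epsilon})]$, outside-transition budget $O(n^{s+\epsilon})$, positive variance of $\hat{\by}$-iterations on $\rho$, and the frozen-supermartingale trick --- but the combining step is a cleaner single application of the discrete identity $\rho_T^2=\rho_0^2+2\sum\rho_{i-1}\Delta\rho_i+[\rho]_T$.

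However, the crucial step is miscomputed and, as written, goes the wrong way. You want an \emph{upper} bound on $\E[[\rho]_T]$, which after $\E[\rho_T^2]=O(n^{2r+\epsilon})$ requires a \emph{lower} bound on $\E[\sum\rho_{i-1}\Delta\rho_i]$, i.e.\ an \emph{upper} bound on the compensator term $\E[\sum\rho_{i-1}\Delta A_i]$. Your statement ``$-2\E[\int\rho\,dA]\le 0$ because $\rho\ge0$'' is the useless direction; what you actually need is $\E[\sum\rho_{i-1}\Delta A_i]\le\rho_{\max}\cdot\E[A_T^{T_{s+1}}]$ together with the nontrivial claim $\E[A_T^{T_{s+1}}]=O(n^{\max(r,s)+\epsilon})$. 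That last claim does hold, but only after another supermartingale/budget argument on $\rho$ itself: from $\rho_T-\rho_0=M_T-A_T$ (frozen version) and optional stopping, $\E[A_T]\le\E[\rho_0]+O(n^{s+\epsilon})=O(n^{\max(r,s)+\epsilon})$, using both $\E[\rho_T]\ge 0$ and the budget bound on the outside-compensator part. This is the step where the $\max(2r,s+r)$ exponent actually emerges, and it is entirely absent from your sketch. Without it, the chain $\E[[\rho]_T]\le O(n^{k+\epsilon})$ does not follow.

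A smaller but also genuine point: ``$[\rho]_T$ accumulated over $\M_\by$-pointing steps is at least $c\cdot\calP_{\tilde\A}[\M_\by]$'' is not almost sure. It only holds with high probability on the event (part of the paper's $R_{\epsilon_1}$) that, while pointing to $\M_\by$'s MEC, every transition of that MEC is taken at least once every $n^{\epsilon_1}$ steps; this costs you an extra $n^{\epsilon_1}$ factor, which you should track before applying Markov's inequality. With the compensator fix and this factor spelled out, your argument should close, but as submitted the key inequality is the wrong direction and the missing drift-budget step is exactly where the exponent $\max(s+r,2r)$ has to be earned.
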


\begin{proof}
	Assume towards contradiction this does not hold. Let \(\calP_{\tilde{\A}}[\M_\by] \) not have an upper asymptotic estimate of  \(n^k\), while \(\hat{\by} \) is not zero-bounded on \(rank^{C_1,\dots,C_r}_{s,T_{s+1}}\). 
	
	Let us fix \(\epsilon>0 \), and let us define technical constants \(0<\epsilonr_{1},\epsilonr_{2},\dots  \). As their exact values are not important we leave the assignment of their exact values to Table~\ref{Table-eps-section-finalpart} at the end of the section where we also show that our assignment satisfies all the assumptions we make on \(\epsilonr_{1},\epsilonr_{2},\dots\) thorough this section.
	
	Let \(R_{\epsilonr_{1}}\) be the set of all computations \(\alpha\) on \(\A \) for which all of the following holds:
	

	\begin{itemize}
		\item each transition \(t\) with upper estimate \(n^i\) for \(i\in \{1,\dots,k \}  \) of \(\calT_\A[t]\) appears at most \(n^{i+\epsilonr_{1}} \) times, that is \(\calT_\A[t](\alpha)\leq n^{i+\epsilonr_{1}}\);
		\item each counter \(c \) with upper estimate \(n^i\) for \(i\in \{1,\dots,k \} \) of \(\calC_\A[c]\) never exceeds \(n^{i+\epsilonr_{1}} \), that is \(\calC_\A[c](\alpha)\leq n^{i+\epsilonr_{1}} \);
		\item  let \(\pi_{..n^{k+1}} \) be the pointing computation on \(\tilde{\A} \) corresponding to \(\alpha_{..n^{k+1}} \). For a pointing pair \((\M,p)\in \tilde{\A} \) let \(\pi_{..n^{k+1}} ^\M \) be the corresponding computation in \(\M \) produced by the steps of \(\pi_{..n^{k+1}} \) pointing at \((\M,p)\). Then for each \((\M,p)\in \tilde{\A} \)  it holds either that  \(\length(\pi_{..n^{k+1}} ^\M)\leq n^{\epsilonr_{1}}  \) or that \(\pi_{..n^{k+1}} ^\M \) reaches a MEC  of \(\M \) within at most \(n^{\epsilonr_{1}}\) steps. Additionally, if \(\pi_{..n^{k+1}} \) reaches a MEC \(\MEC\) of \(\M\) then \(\pi_{..n^{k+1}} \)  contains no sub-computation, that is also a computation on \(\MEC \), of length \(n^{\epsilonr_{1}}\) that does not contain some transition \(t\) of \(\MEC\).

	\end{itemize}  Note that for every strategy \(\sigma\) and each initial state \(p\)  it holds \(\lim_{n\rightarrow\infty} \prob_{p\vec{n}}^\sigma[R_{\epsilonr_{1}}]=1 \). 	The limitation on counters and transitions gives this limit straight from the definition of upper asymptotic estimates. Whereas for the last point, the expected time to reach a MEC is constant in every VASS Markov chain, and let \(t=(p,\bu,q)\) be a transition of \(B\), then every step of \(\pi_{..n^{k+1}}^B \) the probability of not visiting \(p\) in the next \(n^{\epsilonr_{1}}\) steps is simialrly upper bounded by \(l^{-n^{\epsilonr_{1}}} \) for some constant \(l\), this is due to there being a non-zero, bounded from below probability of reaching \(p\) every at most constant number of steps. And every time \(p\) is reached, there is a constant probability \(P(t)\) that the next transition will be \(t \). Thus the probability of there being no \(t \) for \(n^{\epsilonr_{1}} \) steps when iterating \(B \) is upper bounded by \(g^{n^{\epsilonr_{1}}} \) for some \(g<1\).  Let \(Y_n \) be the random variable denoting the number of sub-paths generated on \(B \), within the first \(n^{k+1} \) steps, that are of length \(\lfloor n^{\epsilonr_{1}} \rfloor\) and that do not contain the transition \(t\). Then it holds \(\E_{p\vec{n}}^\sigma(Y_n)\leq n^{k+1} g^{n^{\epsilonr_{1}}} \), and thus from Markov inequality \(\prob_{p\vec{n}}^\sigma[Y_n\geq 1]\leq n^{k+1} g^{n^{\epsilonr_{1}}} \) which goes to \(0\) as \(n \) goes to infinity.

	Note that the only difference between \(\A_{{s},T_{s+1}} \) and \(\A_{1,T_{s+1}} \) is that \(\A_{s,T_{s+1}} \) contains local copies of counters from \(C_1,\dots,C_{s-1} \). We can thus extend \(rank^{C_1,\dots,C_r}_{s,T_{s+1}}\) onto \(\A_{{1},T_{s+1}} \) by defining \(rank^{C_1,\dots,C_r}_{{1},T_{s+1}}(p\bv_{{1},T_{s+1}})=\bz^{C_1,\dots,C_r}_{s,T_{s+1}}(p)+ \sum_{c\in \countersset_s} \by^{C_1,\dots,C_r}_{s,T_{s+1}}(c)\bv_{{1},T_{s+1}}^p(c)  \) where \(\countersset_s \) is are the coutners of \(\A_{{s},T_{s+1}} \),  and \(\bv_{{1},T_{s+1}}^p(c)=\bv_{{1},T_{s+1}}(c) \) if \(c\notin C_1\cup\dots\cup C_{s-1} \), and for \(c\in C_1\cup\dots\cup C_{s-1} \) we put  \(\bv_{{1},T_{s+1}}^p(c)=\bv_{{1},T_{s+1}}(c') \) if \(c \) is the local copy of \(c'\) in the MEC of \(\A_{s} \) containing \(p\), and \(\bv_{{1},T_{s+1}}^p(c)=0 \) otherwise. Notice that every transition of \(\A_{{1},T_{s+1}} \) has the same effect on \(rank^{C_1,\dots,C_r}_{{1},T_{s+1}}\) as the same transition on \(rank^{C_1,\dots,C_r}_{{s},T_{s+1}} \) in \(\A_{{s},T_{s+1}} \).

	Thus the only transitions that can increase \(rank^{C_1,\dots,C_r}_{{1},T_{s+1}}\) in \(\A\) are either those from from  \(T_1\setminus T_{s+1} \), or those \(t=(p,\bu,q)\) where \(p\in Q_p \) and the average effect of a single computational step from \(p\) on \(rank^{C_1,\dots,C_r}_{{1},T_{s+1}}\) is at most \(0\). 
	But for each transition \(t_i\in T_i\setminus T_{i+1} \) for \(i\leq s \)   the maximal increase of \(rank^{C_1,\dots,C_r}_{{1},T_{s+1}} \) from single iteration of \(t_i\) is at most a constant \(u\) from the effect of the transition itself on the counters, plus at most a constant multiple of the sum of all of the counters from \(C_1\cup \dots \cup C_{s-i}\) from the change in the weight of these counters in \(rank^{C_1,\dots,C_r}_{{1},T_{s+1}}\). And since \(t_i\) appears in any \(\alpha\in R_{\epsilonr_{1}} \) at most \(n^{i+\epsilonr_{1}} \) times, the maximal increase of \(rank^{C_1,\dots,C_r}_{{1},T_{s+1}} \) from all occurrences of \(t_i\) along \(\alpha\in R_{\epsilonr_{1}}  \) is upper bounded by  \(n^{i+\epsilonr_{1}}\cdot (u+n^{s-i+\epsilonr_{2}}) \) assuming 
	\begin{equation}\label{eq-epsbound-ojbfsdvugolyibvuotyib}
		\epsilonr_{2}>\epsilonr_{1}
	\end{equation}

	Thus along the entire computation on \(\A \), regardless of the initial state or the strategy chosen, the maximal increase of \(rank^{C_1,\dots,C_r}_{1,T_{s+1}}\) from the transitions from \(T_1\setminus T_{s+1}\) conditioned on \(R_{\epsilonr_{1}} \) is upper bounded by \[\sum_{i=1}^{s}|T_i\setminus T_{i+1}|\cdot n^{i+\epsilonr_{1}}\cdot (u+n^{s-i+\epsilonr_{2}})=u\cdot \sum_{i=1}^{s}|T_i\setminus T_{i+1}|\cdot n^{i+\epsilonr_{1}}+\sum_{i=1}^{s}n^{s+\epsilonr_{1}+\epsilonr_{2}}\leq n^{s+\epsilonr_{3}} \]  for all sufficiently large \(n\) assuming \begin{equation}\label{eq-epsbound-ojbvugolyibvuotyib}
		\epsilonr_{3}>\epsilonr_{1}+\epsilonr_{2}
	\end{equation} Remember that for each computation \(\alpha \) on \(\A\) we can assign to each step along \(\alpha \) a unique pointing pair \((\M,\stateVASSone)\in \tilde{\A} \) that this step points to (see Section~\ref{section-additional-definitions}). We say that the step points to a MEC \(\MEC\) of \(\A_\sigma \) for \(\sigma\in \stratsMD{\A} \) if the step points at a pointing pair \((\M,p) \) such that\(\MEC\) is entirely contained in \(\M \) and the current state of \(\M \) along the corresponding pointing computation is a state from \(\MEC \). 
	
	For any computation in \(R_{\epsilonr_{1}} \) the total effect on \(rank^{C_1,\dots,C_r}_{1,T_{s+1}}\) of all steps that take a transition \(t_i\in T_{i}\setminus T_{i+1} \) with \(s<i \) and that point at a given MEC \(\MEC\) of \(\A_\sigma \) where \(\sigma\in\stratsMD{\A} \), such that \(\MEC \) contains a transition from \(T_{1}\setminus T_{s+1} \), is at most \(u\cdot n^{s+2\cdot \epsilonr_{1}}\) for some constant \(u\). This is since to the effect of \(t_i\) on \(rank^{C_1,\dots,C_r}_{1,T_{s+1}}\) can be upper bounded by a constant, and the maximal number of times we can point at some \(\MEC \) which contains a transition from \(T_{1}\setminus T_{s+1}\), conditioned on \(R_{\epsilonr_{1}}\), is \(n^{\epsilonr_{1}}\cdot n^{s+\epsilonr_{1}} \), hence the maximal number of steps that take \( t_i\) and point at this \(\MEC \) is also \(n^{\epsilonr_{1}}\cdot n^{s+\epsilonr_{1}} \) (and there are only constantly many such MECs \(\MEC\)). This is due to every \( n^{\epsilonr_{1}}\) pointings at such \(\MEC \) every transition of \(\MEC \) has to appear at least once in \(R_{\epsilonr_{1}} \) and only steps taking a transition from \(\MEC \) can point at \(\MEC \). Thus the total effect of such pointings in \(R_{\epsilonr_{1}} \)  is upper bounded by \( u\cdot |T|\cdot |MECs|\cdot n^{s+2\cdot \epsilonr_{1}} \), where \(|MECs|  \) is the number of all MECs in \(\A \). But as \(u\cdot |T|\cdot |MECs|\) is a constant for a given \(\A \), it holds that this value is upper bounded for all sufficiently large  \(n\) by \(n^{s+\epsilonr_{3}} \) assuming \begin{equation}\label{eq-epsbound-nbvtuioqvdfdfdfd}
		\epsilonr_{3}>2\cdot \epsilonr_{1}
	\end{equation} Notice also that for every single configuration of any computation from \(R_{\epsilonr_{1}} \) the value of \(rank^{C_1,\dots,C_r}_{1,T_{s+1}}\)  is upper bounded by \(n^{r+\epsilonr_{3}} \) for all sufficiently large \(n\). 
	
	Let \(\B \) be the set of all MECs \(\MEC \) of \(\A_\sigma \) where \(\sigma\in\stratsMD{\A_{1,T_{s-1}} } \) for which the corresponding component is zero-unbounded on \(rank^{C_1,\dots,C_r}_{1,T_{s-1}} \). Note that the MEC corresponding to \(\by \) is in \(\B \).

	Since we assume \(\calP_{\tilde{\A}}[\M_\by] \) does not have an upper asymptotic estimate of  \(n^k\), there exists \(a>0 \), \(\epsilon>0 \), initial state \(p\), a pointing strategy \(\sigma\), and an infinite set \(N\subseteq \mathbb{N} \) such that \(\prob_{p\vec{n}}^\sigma[\pointcomplex_{\tilde{\A}}(\M_\by)>n^{k+\epsilon}]\geq 2\cdot a \) for all \(n\in N \). Given a pointing computation \(\pi\) on \(\tilde{\A} \) let \(\calP(\B)(\pi) \) denote the number of steps, of the  computation on \(\A \) corresponding to \(\pi \), that point at a MEC from \(\B \). Note that for all sufficiently large \(n\in N\) it holds \(\prob_{p\vec{n}}^\sigma[\calP(\B)\geq n^{k+\epsilon} \textit{ and } R_{\epsilonr_{1}}]\geq a \).

	Given a computation \(\alpha \) on \(\A\) we divide \(\alpha \) into segments as follows:
	\begin{itemize}
		\item The beginning of \(\alpha \) is in the first segment;
		\item Let \(m \) be the number of steps in the current segment that point at a MEC from \(\B\) and let \(b \) be the total effect of these steps on \(rank^{C_1,\dots,C_r}_{1,T_{s+1}}\). The segment lasts until at least one of the following conditions is satisfied: 
		\begin{itemize}
			\item \(b\geq n^{r+\epsilonr_{4}} \) OR,
			\item \(b\leq -n^{r+\epsilonr_{4}} \) OR,
			\item \(m\geq n^{2\cdot r+\epsilonr_{5}} \);
		\end{itemize}
		AND the next MEC pointed to is in \(\B \) (i.e. the segment ends at the moment at least one of the above conditions is satisfied and the next MEC being pointed at is in \(\B\). The moment a segment ends the next segment begins. The first step of the new segment points to a MEC from \(\B \)).
	\end{itemize} 
	
	Let \(\#_{segments}(\alpha) \) denote the number of segments on \(\alpha \). Since for all sufficiently large \(n\in N \) we have \(\prob_{p\vec{n}}^\sigma[\calP(\B)\geq n^{k+\epsilon} \textit{ and } R_{\epsilonr_{1}}]\geq a \) and each segment can point to \(\B \) at most \(n^{2\cdot r+\epsilonr_{5}}\) times, it holds that \[\prob_{p\vec{n}}^\sigma[\#_{segments} \geq \frac{n^{k+\epsilon}}{n^{2r+\epsilonr_{5}}} \textit{ and } R_{\epsilonr_{1}}]=\prob_{p\vec{n}}^\sigma[\#_{segments} \geq n^{k-2\cdot r+\epsilon-\epsilonr_{5}} \textit{ and } R_{\epsilonr_{1}}]\geq a \]


	Assume there were no limit on the length of \(m \) in the segment definition and let \(\tau \) be the stopping time in the given segment such that \(|b|\geq |n^{r+\epsilonr_{4}}| \), then since the expected effect on \(rank_{1,T_{s+1}}^{C_1,\dots,C_r}\) of each of the positioning counting towards \(b\)   is \(0\), from the optional stopping theorem we would obtain \(\prob_{p\vec{n}}^\sigma[b\geq n^{r+\epsilonr_{4}}]\cdot (n^{r+\epsilonr_{4}}+u)+\prob_{p\vec{n}}^\sigma[b\leq -n^{r+\epsilonr_{4}}]\cdot (-n^{r+\epsilonr_{4}})\geq 0 \). 
	Thus it would hold \[\frac{\prob_{p\vec{n}}^\sigma[b\geq n^{r+\epsilonr_{4}}]}{\prob_{p\vec{n}}^\sigma[b\leq -n^{r+\epsilonr_{4}}]}\geq \frac{n^{r+\epsilonr_{4}}}{(n^{r+\epsilonr_{4}}+u)}  \]
	Since both these probabilities would sum to \(1\) it would hold \(\prob_{p\vec{n}}^\sigma[b\leq -n^{l+\epsilonr_{4}}]\geq \frac{1}{2}-\kappa_n \) where \(\lim_{n\rightarrow\infty}\kappa_n=0 \). 
	
	Notice that any computation that only points to MECs from \(\B \) can be seen as a computation on a 1-dim VASS MDP with the only counter corresponding to \(rank_{1,T_{s+1}}^{C_1,\dots,C_r}\), and that contains only zero-unbounded MECs. If we were to consider the initial counter value for this computation to be set to \(n^{r+\epsilonr_{4}} \), then from results about 1-dim VASS MDPs  from \cite{AKCONCUR23} we have an upper asymptotic estimate \(n^{2\cdot r+2\cdot \epsilonr_{4}} \) on the number of steps before this counter becomes negative. Therefore it holds for each single segment that \(\lim_{n\rightarrow\infty}\prob_{p\vec{n}}^\sigma[m\geq n^{2\cdot r+2\cdot \epsilonr_{4}+\epsilonr_{6}}]=0 \). Thus assuming \begin{equation}\label{eq--epsbound-bcfyhcdqqqqg}
		2\cdot \epsilonr_{4}+\epsilonr_{6}<\epsilonr_{5} 
	\end{equation}	 	 for any constant \(a'>0 \) it holds for all sufficiently large \(n\) that \(\prob_{p\vec{n}}^\sigma[m\geq n^{2r+\epsilonr_{5}}]\leq a' \).

	We call a segment decreasing if  \(b\leq -n^{r+\epsilonr_{4}} \) in the given segment.  From the above, the probability that a segment is decreasing can be lower bounded by \(\frac{1}{2}-\kappa_n-\prob_{p\vec{n}}^\sigma[m\geq n^{2\cdot r+\epsilonr_{5}}] \). If a segment is not decreasing then we call it non-decreasing.
	
	Let \(\#_{non-decreasing}(\alpha) \) and \(\#_{decreasing}(\alpha) \) denote the number of non-decreasing and decreasing segments along the computation \(\alpha \), respectively.
	Let \(LONG \) be the set of all computations \(\alpha \) on \(\A \) such that \(\alpha \) contains exactly \(\lfloor n^{k-2\cdot r+\epsilon-\epsilonr_{5}} \rfloor \) segments, and let \(LONGER \) be the set of all computations that have a prefix in \(LONG \). Let \(BAD\subseteq LONG \) be the set of all computations  \(\alpha\in LONG \) such that \(\#_{decreasing}(\alpha)\leq n^{k-2\cdot r+\epsilon-\epsilonr_{5}-\epsilonr_{7}} \). For each \(\alpha\in LONG \) it holds \(\#_{decreasing}(\alpha)\leq n^{k-2r+\epsilon-\epsilonr_{5}-\epsilonr_{7}} \) iff \(\#_{non-decreasing}(\alpha)> \lfloor n^{k-2\cdot r+\epsilon-\epsilonr_{5}}\rfloor - n^{k-2\cdot r+\epsilon-\epsilonr_{5}-\epsilonr_{7}}\geq \frac{2}{3}\cdot \lfloor n^{k-2\cdot r+\epsilon-\epsilonr_{5}}\rfloor \) (here the last inequality holds for all sufficiently large \(n\)). 	Notice that whether a segment is decreasing or not is independent of all the other segments along \(\alpha\in LONG \), therefore this can be viewed as a binomial distribution with probability of success (i.e. being decreasing) being at least \(\frac{1}{2}-p_n \) where \(p_n=\kappa_n+\prob_{p\vec{n}}^\sigma[m\geq n^{2\cdot r+\epsilonr_{5}}] \) and the number of trials being \(\lfloor n^{k-2\cdot r+\epsilon-\epsilonr_{5}} \rfloor\). Thus from the Chernoff bound for Binomial distribution (Theorem~1 of \cite{Arratia1989}) it holds  for all sufficiently large \(n\)
	\begin{gather*}
		\prob_{p\vec{n}}^{\sigma}[\#_{decreasing}\leq n^{k-2\cdot r+\epsilon-\epsilonr_{5}-\epsilonr_{7}}\mid LONG]
		=\\		\prob_{p\vec{n}}^{\sigma}[\#_{non-decreasing}\geq \lfloor n^{k-2\cdot r+\epsilon-\epsilonr_{5}}\rfloor - n^{k-2\cdot r+\epsilon-\epsilonr_{5}-\epsilonr_{7}}\mid LONG] 	
		\leq \\
		\prob_{p\vec{n}}^{\sigma}[\#_{non-decreasing}\geq  \frac{2}{3}\cdot \lfloor n^{k-2\cdot r+\epsilon-\epsilonr_{5}}\rfloor\mid LONG] 	
		\leq\\
		\exp\big( -  \lfloor n^{k-2\cdot r+\epsilon-\epsilonr_{5}}\rfloor\cdot  (\frac{2}{3}\cdot  \log\frac{\frac{2}{3}}{\frac{1}{2}+p_n}+(1-\frac{2}{3})\cdot \log \frac{1-\frac{2}{3}}{\frac{1}{2}-p_n}   )  \big)
		\leq 
		\exp\big( -  \lfloor n^{k-2\cdot r+\epsilon-\epsilonr_{5}-\epsilonr_{8}}\rfloor  \big)
	\end{gather*}
	where the last inequality follows from \(\log\frac{1}{\frac{1}{2}+p_n}>\frac{2}{3}\cdot  \log\frac{\frac{2}{3}}{\frac{1}{2}+p_n}+(1-\frac{2}{3})\cdot \log \frac{1-\frac{2}{3}}{\frac{1}{2}-p_n}       >0 \) which holds from \cite{Arratia1989}.

	As  \(\lim_{n\rightarrow\infty}\exp\big( -  \lfloor n^{k-2\cdot r+\epsilon-\epsilonr_{5}-\epsilonr_{8}}\rfloor  \big)=0 \) it holds for all sufficiently large \(n\) that \(\prob_{p\vec{n}}^{\sigma}[\#_{decreasing}\leq  n^{k-2\cdot r+\epsilon-\epsilonr_{5}-\epsilonr_{7}}\mid LONG] \leq \frac{a}{2} \). But since \(\prob_{p\vec{n}}^\sigma[LONGER  \textit{ and } R_{\epsilonr_{1}}]\geq a \) it must hold \(\prob_{p\vec{n}}^\sigma[\#_{decreasing}> n^{k-2\cdot r+\epsilon-\epsilonr_{5}-\epsilonr_{7}}  \textit{ and } R_{\epsilonr_{1}}]\geq a-\frac{a}{2}= \frac{a}{2} \). But notice that each decreasing segment must increase the value of \(rank^{C_1,\dots,C_r}_{1,T_{s-1}}\) by at least \(n^{r+\epsilonr_{9}} \)  with the steps that do not point at a MEC from \(\B \), since the section starts with \(rank^{C_1,\dots,C_r}_{1,T_{s-1}} \) being at most \(n^{r+\epsilonr_{3}} \) and  the effect of \(b\) on \(rank^{C_1,\dots,C_r}_{1,T_{s-1}} \) is at most \(-n^{r+\epsilonr_{4}} \), thus for \begin{equation}\label{eq-epsbound-kbhvjgcuivhkvcivs}
	\epsilonr_{4}>\epsilonr_{3}
	\end{equation} in order for the ranking function to remain positive it must be increased by at least \(n^{r+\epsilonr_{9}}\leq n^{r+\epsilonr_{4}}-n^{r+\epsilonr_{3}}  \) where we assume \begin{equation}\label{eq-epsbound-njbhioopihworbwo}
	\epsilonr_{9}<\epsilonr_{4}
	\end{equation} as the value of \(rank^{C_1,\dots,C_r}_{1,T_{s-1}} \) cannot be negative without a negative counter implies termination in \(\A\).
	
	 The steps pointing to a MEC of \(\A_\sigma \) where \(\sigma\in\stratsMD{\A_{1,T_{s-1}} } \) corresponding to a component whose effect on \(rank^{C_1,\dots,C_r}_{1,T_{s-1}}\) is zero-bounded can ever change \(rank^{C_1,\dots,C_r}_{1,T_{s-1}}\) by at most a constant in total. The steps pointing to a MEC of \(\A_\sigma \) where \(\sigma\in\stratsMD{\A_{1,T_{s-1}} } \) whose effect on \(rank^{C_1,\dots,C_r}_{1,T_{s-1}}\) is decreasing clearly cannot increase \(rank^{C_1,\dots,C_r}_{1,T_{s-1}}\) by more that \(n^\epsilon \) with high enough probability (i.e., the probability goes to \(0\) as \(n\rightarrow\infty\)). The maximal change of  \(rank^{C_1,\dots,C_r}_{1,T_{s-1}}\) by steps that do not point at any MEC of \(\A_\sigma \) where \(\sigma\in\stratsMD{\A} \) is upper bounded by \(u\cdot |MECs|\cdot n^{\epsilonr_1} \) for any computation from \(E_{\epsilonr_1} \). As no component of no \(\A_{1,T_{s+1}}\) is increasing on \(rank^{C_1,\dots,C_r}_{1,T_{s-1}}\) (see Section~\ref{sec-systems}) this leaves only steps pointing to a MEC of \(\A_\sigma \) where \(\sigma\in\stratsMD{\A} \) that contains a transition from \(T_1\setminus T_{s} \), but we already established that the maximal possible  total increase of \(rank^{C_1,\dots,C_r}_{1,T_{s-1}} \) using these transitions conditioned on \(R_{\epsilonr_{1}} \) is upper bounded by \(n^{s+\epsilonr_{3}}\). 
	 
	 Hence let \(INCREASE(\alpha) \) be the increase of \(rank^{C_1,\dots,C_r}_{1,T_{s-1}} \) from the steps pointing to a MEC not in \(\B \), then it holds \[
	 	\lim_{n\rightarrow\infty}\prob_{p\vec{n}}^{\sigma}[INCREASE\geq u + n^\epsilon + u\cdot|MECs|\cdot n^{\epsilonr_{1}}+n^{s+\epsilonr_{3}}\mid R_{\epsilonr_{1}}] =0
	 \]

	  But since as discussed above it holds \(INCREASE(\alpha)\geq   \#_{decreasing}(\alpha)\cdot n^{r+\epsilonr_{9}} \) we have for all sufficiently large \(n\in N\) that
	  \begin{gather*}
	  	\prob_{p\vec{n}}^\sigma[INCREASE\geq \#_{decreasing}\cdot  n^{r+\epsilonr_{9}} \textit{ and } R_{\epsilonr_{1}}] 
	  	\geq\\
	  	\prob_{p\vec{n}}^\sigma[INCREASE\geq n^{k-2\cdot r+\epsilon-\epsilonr_{5}-\epsilonr_{7}}\cdot n^{r+\epsilonr_{9}} \textit{ and } R_{\epsilonr_{1}}] 
	  	=\\
	  	\prob_{p\vec{n}}^\sigma[INCREASE\geq n^{k- r+\epsilon-\epsilonr_{5}-\epsilonr_{7}+\epsilonr_{9}} \textit{ and } R_{\epsilonr_{1}}] 
	  	\geq \frac{a}{2}
	  \end{gather*} 
	  
	  Thus if it were to holds  \(n^{k- r+\epsilon-\epsilonr_{5}-\epsilonr_{7}+\epsilonr_{9}}\geq u + n^\epsilon + u\cdot|MECs|\cdot n^{\epsilonr_{1}}+n^{s+\epsilonr_{3}} \) for all sufficiently large \(n\) we would have arrived at a contradiction. Remember that \(k=\max(s+r,2r) \), let us consider the two possibilities separately.
	  
	  If \(k=s+r \) then we can write \begin{align*}
	  	n^{k- r+\epsilon-\epsilonr_{5}-\epsilonr_{7}+\epsilonr_{9}}
	  	\geq
	  	 u + n^\epsilon + u\cdot|MECs|\cdot n^{\epsilonr_{1}}+n^{s+\epsilonr_{3}} 
	  	 \\
	  	 n^{s+r- r+\epsilon-\epsilonr_{5}-\epsilonr_{7}+\epsilonr_{9}}
	  	 \geq
	  	 u + n^\epsilon + u\cdot|MECs|\cdot n^{\epsilonr_{1}}+n^{s+\epsilonr_{3}} 
	  	 \\
	  	 n^{s+\epsilon-\epsilonr_{5}-\epsilonr_{7}+\epsilonr_{9}}
	  	 \geq
	  	 u + n^\epsilon + u\cdot|MECs|\cdot n^{\epsilonr_{1}}+n^{s+\epsilonr_{3}} 
	  \end{align*}
	  which holds for all sufficiently large \(n\) assuming  \begin{equation}\label{eq-epsbound-jbvhucivkhgckutfliygkc}
	  	\epsilon<\epsilon-\epsilonr_{5}-\epsilonr_{7}+\epsilonr_{9}
	  \end{equation}
	  \begin{equation}\label{eq-epsbound-jbvhucivkfdhgckutfliygkc}
	  	\epsilonr_1<\epsilon-\epsilonr_{5}-\epsilonr_{7}+\epsilonr_{9}
	  \end{equation}
	  \begin{equation}\label{eq-epsbound-jbfdssvhucivkhgckutfliygkc}
	  	\epsilonr_3<\epsilon-\epsilonr_{5}-\epsilonr_{7}+\epsilonr_{9}
	  \end{equation}

	   If \(k=2r \) then we can write \begin{align*}
	  	n^{k- r+\epsilon-\epsilonr_{5}-\epsilonr_{7}+\epsilonr_{9}}
	  	\geq
	  	u + n^\epsilon + u\cdot|MECs|\cdot n^{\epsilonr_{1}}+n^{s+\epsilonr_{3}} 
	  	\\
	  	n^{2r- r+\epsilon-\epsilonr_{5}-\epsilonr_{7}+\epsilonr_{9}}
	  	\geq
	  	u + n^\epsilon + u\cdot|MECs|\cdot n^{\epsilonr_{1}}+n^{s+\epsilonr_{3}} 
	  	\\
	  	n^{r+\epsilon-\epsilonr_{5}-\epsilonr_{7}+\epsilonr_{9}}
	  	\geq
	  	u + n^\epsilon + u\cdot|MECs|\cdot n^{\epsilonr_{1}}+n^{s+\epsilonr_{3}} 
	  \end{align*}
	 as in this case it holds \(r>s \), the above holds  for all sufficiently large \(n\) for the same assumptions as in the previous case.
	  
It remains to show there exist values for \(\epsilon_1,\epsilon_2,\dots \) that satisfy all of our assumptions. We do this in Table~\ref{Table-eps-section-finalpart}.

\begin{table*}[h]
	\caption{Values of \(\epsilon_1,\epsilon_2,\dots \) for Section~\ref{section-proof-lemma-hatB-zero-unbounded-rankl-give-upper-estimate-nk}}
	\centering
		\begin{tabular}{|l|| c c| c|} 
			\hline
			\(\epsilon\) assignment &  \multicolumn{2}{|c|}{restrictions}  & After substitution \\ 
			\hline\hline
			\(\epsilonr_1=\nicefrac{\min(\frac{1}{3},\epsilon)}{1000} \) & \(0<\epsilon_1,\epsilon_2,\dots  \) & &  \\ 
			\hline
			\(\epsilonr_2=\nicefrac{\min(\frac{1}{3},\epsilon)}{100} \) & \(\epsilonr_{2}>\epsilonr_{1} \) & \eqref{eq-epsbound-ojbfsdvugolyibvuotyib} & \(\nicefrac{\min(\frac{1}{3},\epsilon)}{100}>\nicefrac{\min(\frac{1}{3},\epsilon)}{1000} \) \\
			\hline
			\(\epsilon_{3}=\nicefrac{\min(\frac{1}{3},\epsilon)}{40} \) & 	\(\epsilonr_{3}>\epsilonr_{1}+\epsilonr_{2}\) & \eqref{eq-epsbound-ojbvugolyibvuotyib} & \(\nicefrac{\min(\frac{1}{3},\epsilon)}{40}>\nicefrac{\min(\frac{1}{3},\epsilon)}{1000}+\nicefrac{\min(\frac{1}{3},\epsilon)}{100}\) \\
			\hline
			\(\epsilon_{4}=\nicefrac{\min(\frac{1}{3},\epsilon)}{25} \) & \(\epsilonr_{3}>2\cdot \epsilonr_{1}\) & \eqref{eq-epsbound-nbvtuioqvdfdfdfd} & \(\nicefrac{\min(\frac{1}{3},\epsilon)}{40}>2\cdot \nicefrac{\min(\frac{1}{3},\epsilon)}{1000}\) \\
			\hline
			\(\epsilon_{5}=\nicefrac{\min(\frac{1}{3},\epsilon)}{10} \) & \(2\cdot \epsilonr_{4}+\epsilonr_{6}<\epsilonr_{5}  \) & \eqref{eq--epsbound-bcfyhcdqqqqg} & \(2\cdot \nicefrac{\min(\frac{1}{3},\epsilon)}{25}+\nicefrac{\min(\frac{1}{3},\epsilon)}{100}<\nicefrac{\min(\frac{1}{3},\epsilon)}{10}  \)  \\ 
			\hline
			\(\epsilon_{6}=\nicefrac{\min(\frac{1}{3},\epsilon)}{100} \) & \(\epsilonr_{4}>\epsilonr_{3}\) & \eqref{eq-epsbound-kbhvjgcuivhkvcivs} & \(\nicefrac{\min(\frac{1}{3},\epsilon)}{25}>\nicefrac{\min(\frac{1}{3},\epsilon)}{40} \) \\
			\hline
		\(\epsilon_{7}=\nicefrac{\min(\frac{1}{3},\epsilon)}{10} \)	& \(	\epsilonr_{9}<\epsilonr_{4} \) & \eqref{eq-epsbound-njbhioopihworbwo} & \(\nicefrac{\min(\frac{1}{3},\epsilon)}{100}<\nicefrac{\min(\frac{1}{3},\epsilon)}{25} \) \\
			\hline
		\(\epsilon_{8}=\epsilon \)	& \(	\epsilon<\epsilon-\epsilonr_{5}-\epsilonr_{7}+\epsilonr_{9} \) & \eqref{eq-epsbound-jbvhucivkhgckutfliygkc} & \(\epsilon<\epsilon-\nicefrac{\min(\frac{1}{3},\epsilon)}{10}-\nicefrac{\min(\frac{1}{3},\epsilon)}{10}+\epsilon \) \\
			\hline
		\(\epsilon_{9}=\epsilon \)	& \(	\epsilonr_1<\epsilon-\epsilonr_{5}-\epsilonr_{7}+\epsilonr_{9}\) & \eqref{eq-epsbound-jbvhucivkfdhgckutfliygkc} & \(\nicefrac{\min(\frac{1}{3},\epsilon)}{1000}<\epsilon-\nicefrac{\min(\frac{1}{3},\epsilon)}{10}-\nicefrac{\min(\frac{1}{3},\epsilon)}{10}+\epsilon \) \\
			\cline{2-4}
				& \(	\epsilonr_3<\epsilon-\epsilonr_{5}-\epsilonr_{7}+\epsilonr_{9}\) & \eqref{eq-epsbound-jbfdssvhucivkhgckutfliygkc} & \(\nicefrac{\min(\frac{1}{3},\epsilon)}{40}<\epsilon-\nicefrac{\min(\frac{1}{3},\epsilon)}{10}-\nicefrac{\min(\frac{1}{3},\epsilon)}{10}+\epsilon\) \\
			\hline
		\end{tabular}
		\par
		\label{Table-eps-section-finalpart}	
\end{table*}

	Thus Lemma~\ref{lemma-I-dont-lknooow-but-it-is-somehting} holds.
\end{proof}

\end{document}